\definecolor{mygrey}{cmyk}{0, 0, 0, 0.07}          
\newcommand*\widefbox[1]{
\colorbox{mygrey}{\hspace{1em}#1\hspace{1em}}} 
\newcommand*\greybox[1]{
\colorbox{mygrey}{\hspace{0.5em}#1\hspace{0.5em}}} 
\newcommand*\widebox[1]{
\fbox{\hspace{1em}#1\hspace{1em}}} 
\newcommand*\smallerbox[1]{
\fbox{\hspace{0.5em}#1\hspace{0.5em}}}
\newtheoremstyle{plain2}
{}
{}
{\itshape}
{}
{{\sffamily}\bfseries}
{.}
{\newline}
{}
\newtheoremstyle{definition2}
{}
{}
{\itshape}
{}
{{\sffamily}\bfseries}
{.}
{\newline}
{}
\newtheoremstyle{remark2}
{}
{}
{\slshape}
{}
{{\sffamily}\bfseries}
{.}
{\newline}
{}
\numberwithin{equation}{section}  
\theoremstyle{plain2}
\newtheorem{thm}{Theorem}[section] 
\newtheorem{lem}[thm]{Lemma}
\newtheorem{prop}[thm]{Proposition}
\newtheorem{cor}[thm]{Corollary}
\theoremstyle{definition2}
\newtheorem{defn}[thm]{Definition}
\theoremstyle{remark2}
\newtheorem{rem}[thm]{Remark}
\theoremstyle{plain}
\newcommand{\p}[0]{\partial}
\DeclareMathOperator{\grad}{\nabla}
\newcommand{\gradx}[0]{\grad_x}
\newcommand{\grady}[0]{\grad_y}
\let \div \relax
\DeclareMathOperator{\div}{div}
\newcommand{\divx}[0]{\div_x }
\newcommand{\divy}[0]{\div_y}
\newcommand{\ud}{\ \mathrm{d}}
\newcommand{\weak}{\longrightharpoonup\,}
\newcommand{\two}{\overset{\text{\tiny 2}}{\longrightharpoonup\,}}
\newcommand{\longrightharpoonup}{\mathrel{\relbar\joinrel\rightharpoonup}}
\renewcommand{\O}{\Omega}
\newcommand{\nx}[2]{\left\lVert#1\right\rVert_\text{\tiny{$#2$}}}
\newcommand{\nnormalx}[2]{\lVert#1\rVert_\text{\tiny{$#2$}}}
\newcommand{\eps}{\varepsilon}
\newcommand{\xe}{\frac{x}{\eps}}
\newcommand{\Oe}{\Omega^\varepsilon}
\newcommand{\ue}{u^\eps}
\newcommand{\pe}{p^\eps}
\newcommand{\R}{\mathbb{R}}
\newcommand{\N}{\mathbb{N}}
\newcommand{\Z}{\mathbb{Z}}
\newcommand{\C}{\mathcal{C}}
\newcommand\norm[1]{\lVert#1\rVert}
\DeclareMathOperator{\Id}{Id}
\DeclareMathOperator{\tr}{tr}
\DeclareMathOperator{\range}{im}
\DeclareMathOperator{\curl}{curl}  
\DeclareMathOperator{\Curl}{Curl} 
\DeclareMathOperator{\curlt}{\widetilde{curl}} 
\DeclareMathOperator{\Curlt}{\widetilde{Curl}} 
\newcommand{\ZBL}{Z_{\mathrm{BL}}}
\newcommand{\ueff}{u^\mathrm{eff}}
\newcommand{\peff}{p^\mathrm{eff}}
\newcommand{\Ulim}{\mathcal{U}^\mathrm{lim}}
\newcommand{\Plim}{\mathcal{P}^\mathrm{lim}}
\newcommand{\Cbl}{C^\mathrm{bl}}
\newcommand{\Cblo}{C^\mathrm{bl}_\omega}
\newcommand{\bbl}{\beta^{\mathrm{bl}}}
\newcommand{\gbl}{\gamma^{\mathrm{bl}}}
\newcommand{\mubl}{\mu^{\mathrm{bl}}}
\newcommand{\lbl}{\lambda^{\mathrm{bl}}}
\newcommand{\kbl}{\kappa^{\mathrm{bl}}}
\newcommand{\Cblg}{C^\text{bl}_\gamma}
\newcommand{\Cblmu}{C^\text{bl}_\mu}
\newcommand{\Cbll}{C^\text{bl}_\lambda}
\newcommand{\Cblk}{C^\text{bl}_\kappa}
\newcommand{\obl}{\omega^{\mathrm{bl}}}
\newcommand{\bble}{\beta^{\mathrm{bl},\eps}}
\newcommand{\oble}{\omega^{\mathrm{bl},\eps}}
\newcommand{\Cqb}{C^Q_\beta}
\newcommand{\Qeb}{Q^\eps_\beta}
\newcommand{\Cqg}{C^Q_\gamma}
\newcommand{\Qeg}{Q^\eps_\gamma}
\newcommand{\Cql}{C^Q_\lambda}
\newcommand{\Qel}{Q^\eps_\lambda}
\newcommand{\bblo}{\beta^{\mathrm{bl},1}}
\newcommand{\bblz}{\beta^{\mathrm{bl},2}}
\newcommand{\oblo}{\omega^{\mathrm{bl},1}}
\newcommand{\oblz}{\omega^{\mathrm{bl},2}}
\newcommand{\Cbli}{C^{\mathrm{bl},i}}
\newcommand{\Cbloi}{C^{\mathrm{bl},i}_\omega}
\title{On the Beavers-Joseph-Saffman boundary condition for curved interfaces}
\author{Sören Dobberschütz\footnote{Nano-Science Center, University of Copenhagen, Universitetsparken 5, 2100 København, Denmark.
\mbox{E-Mail}:~\texttt{sdobber@nano.ku.dk}}}
\begin{document}
\maketitle

\begin{abstract}
\emph{This document is an extended version of the results presented in S. Dobberschütz: Effective Behaviour of a Free Fluid in Contact with a Flow in a Curved Porous Medium, SIAM Journal on Applied Mathematics, 2015.}

The appropriate boundary condition between an unconfined incompressible viscous fluid and a porous medium is given by the law of Beavers and Joseph. The latter has been justified both experimentally and mathematically, using the method of periodic homogenisation. However, all results so far deal only with the case of a planar boundary. In this work, we consider the case of a curved, macroscopically periodic boundary. By using a coordinate transformation, we obtain a description of the flow in a domain with a planar boundary, for which we derive the effective behaviour: The effective velocity is continuous in normal direction. Tangential to the interface, a slip occurs. Additionally, a pressure jump occurs. The magnitude of the slip velocity as well as the jump in pressure can be determined with the help of a generalised boundary layer function. The results indicate the validity of a generalised law of Beavers and Joseph, where the geometry of the interface has an influence on the slip and jump constants.
\end{abstract}


\tableofcontents

\section{Introduction}

A now classical result in the theory of homogenization states that, starting with the Stokes or Navier-Stokes equation, the effective fluid flow in a porous medium is given by Darcy's law (see the works of Tartar in \cite{sanpal}, Allaire in \cite{hor_homog} and Mikeli\'{c} \cite{mi_nsgrain}). When dealing with porous bodies inside another fluid, the boundary condition coupling the free fluid flow and the Darcy flow at the porous-liquid interface is of great interest.
%
However, due to the different nature of the governing equations, the derivation of a `natural' boundary condition is difficult: While the equation for the Darcy velocity consists of a second order equation for the pressure and a first order equation for the velocity, the system of equations governing the free fluid velocity (e.g.\ the Stokes or Navier-Stokes equation) is of second order for the velocity and of first order for the pressure. \\
For an incompressible fluid, the flow in the direction normal to the interface has to be continuous due to mass conservation. However, additional conditions at the interface are not clearly available.
%


From a mechanical point of view, Beavers and Joseph \cite{bejo_bc} concluded by practical experiments that a jump in the effective velocity appears in tangential direction. This jump is given by
\begin{equation}
\label{eq:jumpcond}
 \alpha {K^{-\frac{1}{2}}} (v_F - v_D)\cdot \tau   =    (\grad v_F \nu) \cdot \tau,
\end{equation}
where $v_F$ denotes the velocity of the free fluid, $v_D$ denotes the effective Darcy velocity in the porous medium and $K$ is the permeability of the porous medium. The factor $\alpha$ is the so-called \emph{slip coefficient} which has to be determined experimentally. Moreover, $\nu$ and $\tau$ are the unit normal and tangential vector with respect to the interface separating the porous medium and the free fluid. The Darcy velocity in the absence of outer forces for given fluid viscosity $\mu$ is given by
\[
v_D=-\frac{1}{\mu} K \grad p,
\]
where $p$ denotes the pressure. Note that the condition mentioned above gives a relation between the velocity of the free fluid at the interface and the \emph{effective} velocity inside the porous medium -- it does not impose a condition on the actual fluid velocity inside the porous medium at the interface. Figure \ref{fig:jumpbc} contains a schematic illustration.


Later, Saffman used a statistical model to derive the boundary condition of Beavers and Joseph. In \cite{sa_bcpor}, he argued that $v_D\cdot \tau$ is of lesser order than the other terms and arrived at a jump given by
\begin{equation}
v_F\cdot \tau = \frac{1}{\alpha} K^\frac{1}{2} (\grad v_F \nu) \cdot \tau + \mathscr{O}(K). \label{eq:bjjump}
\end{equation}
Other boundary conditions were proposed as well: Ochoa-Tapia and Whitaker for example used the REV-method to obtain that the velocity and pressure as well as the normal stress are continuous over the porous-liquid interface, but a jump appears in the tangential stress in the form
\[
\bigl(\grad \langle v_D \rangle \nu - \grad \langle v_F \rangle\nu \bigr) \cdot \tau = \beta K^{-\frac{1}{2}} \langle v_D \rangle \cdot \tau  .
\]
Here $\langle v_F \rangle$ denotes the averaged free fluid velocity, which is given by a Stokes equation, and $\langle v_D \rangle$ is the averaged velocity in the porous medium, which in this case fulfills a Darcy law with Brinkman correction,
\[
\langle v_D \rangle = -\frac{1}{\mu} K \bigl(\grad \langle p \rangle - \mu_B\Delta \langle v_D \rangle\bigr).
\]
$\mu_B$ is a known constant, and the dimensionless factor $\beta$ has to be determined experimentally. For details see \cite{ochtapwhi_momtransf1} and \cite{ochtapwhi_momtransf2}.

However, a rigorous mathematical derivation of the effective fluid behavior at the boundary was not available until J\"ager and Mikeli\'{c} applied the theory of homogenization to the problem. \\
In \cite{jami_bc-fluidpor} they developed a mathematical boundary layer together with several corrector terms, which allowed them to justify a jump boundary condition. The main tool was the construction of several `boundary layer functions': These functions have a given value at the interface and decay exponentially outside it. They can be used to correct the influence of spurious terms at the boundary, stemming from the contributions of other functions to the fluid velocity and pressure. 

In \cite{jami_ibc-bjs}, this theory was applied to give a mathematical proof of the Saffman modification of the boundary condition of Beavers and Joseph (see also Section 4 of the Chapter  ``Homogenization Theory and Applications to Filtration through Porous Media''  in \cite{esfami_filtr} for a more comprehensible, simplified version of the proofs), yielding the condition
\[
\eps\  (\grad v_F \nu) \cdot \tau = \alpha v_F \cdot \tau +\mathscr{O}(\eps^2)
\]
where $\alpha=-\frac{1}{\eps C_D}$ can be calculated explicitely. The constant $C_D$ stems from a boundary layer problem for the Stokes equation, cf.\ \cite{jami_ibc-bjs}. 
Numerical simulations of the boundary layer functions can be found in  \cite{jamine_lamvisc} .

These results suffer from several drawbacks: First, only a planar boundary in the form of a line or a plane is considered (this also applies to the results of Beavers, Joseph and Saffman). Therefore, the effect of a possible curvature of the interface is not known. Second, the external force on the fluid, appearing as a right hand side in the Navier-Stokes equation, had to be $0$. This issue was adressed in the recent paper \cite{mik_effpress}, together with the derivation of the next corrector for the pressure.

\begin{figure}[t]
\vspace{0.5cm}
\centering
\includegraphics[height=0.5\textwidth]{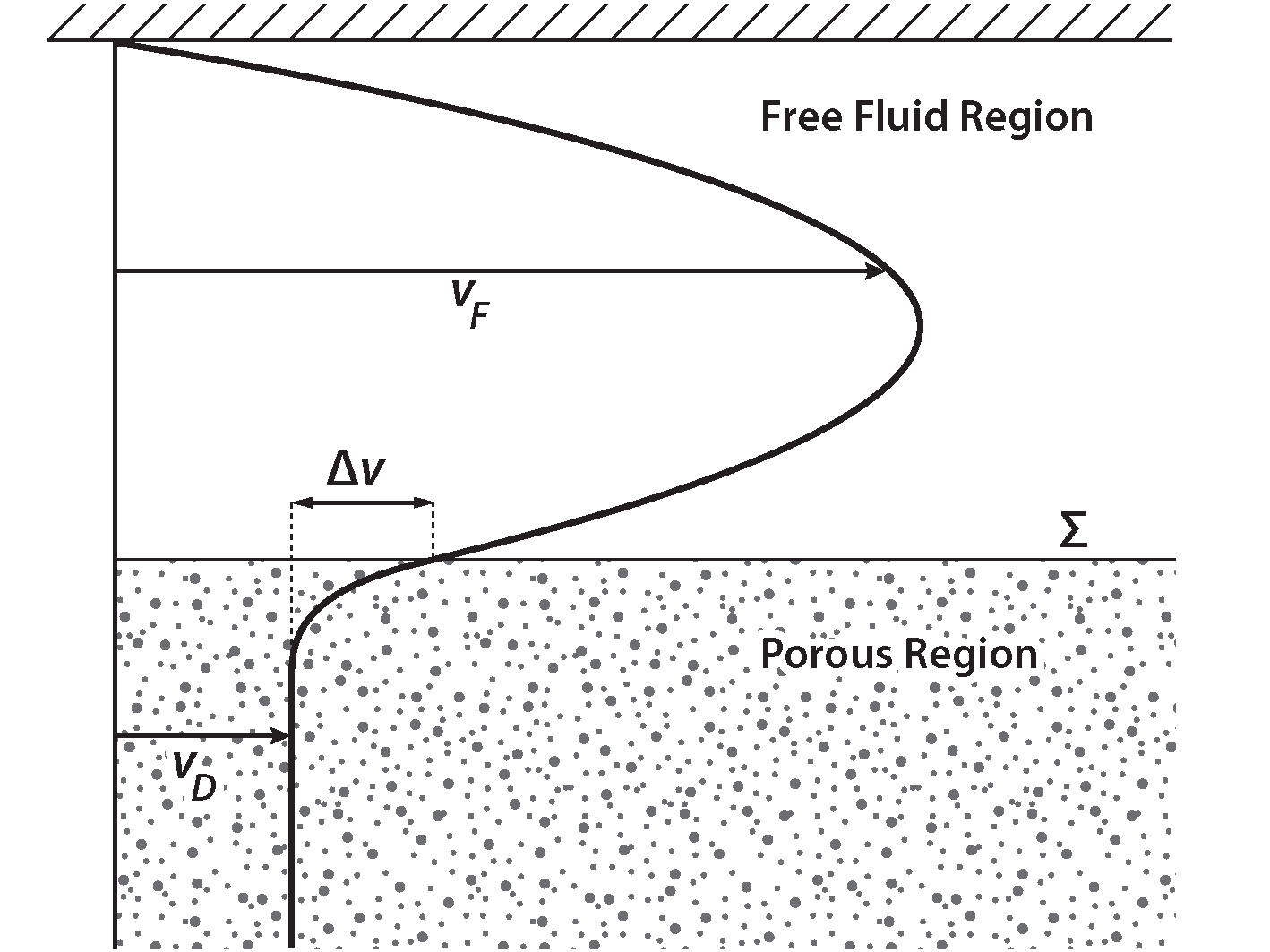}  
\caption[Illustration of the fluid velocity around a porous-liquid interface.]{Schematic illustration of the velocity profile for a horizontal flow in a domain consisting of an impermeable upper boundary (with no-slip condition), a free fluid part and a porous region. $v_F$ denotes the velocity in the free fluid domain, whereas $v_D$ is the effective Darcy velocity. The quantity $\Delta v = v_F|_\Sigma-v_D$ corresponds to the jump across the interface as discussed in Equation \eqref{eq:jumpcond}.}
\label{fig:jumpbc}
\end{figure}

Generalizations of the boundary layers in \cite{jami_bc-fluidpor} were developed by  Neuss-Radu in \cite{radu_decay}. However, applications only treat reaction-diffusion systems without flow, and explicit results can only be obtained in the case of a layered medium, see \cite{radu_bbehav}.\\
The main problem which makes the treatment of general settings infeasible is the loss of exponential decay of the boundary layer functions (cf.\ Section \ref{sec:boundary}): With the generalized definition, Neuss-Radu was able to show in \cite{radu_decay} that an exponential stabilization is not possible in a general setting. However, all available tools for the treatment of these problems depend on this type of decay\footnote{Maria Neuss-Radu, private communications.}.

%

%

In this work, we use the approach developed in \cite{do_dipl} and \cite{do_trapr} for providing a generalization of the law of Beavers and Joseph for curved interfaces. We closely follow \cite{mik_effpress} for investigating the effective behaviour of a free fluid in contact with flow in a curved porous medium.   
The main idea is to transform a reference geometry with a straight interface to a domain with a curved interface. It is assumed that the porous part in the reference geometry consists of a periodic array of  scaled reference cells and that the flow in the transformed geometry is governed by the stationary Stokes equation. Therefore, one obtains a set of transformed differential equations in the reference configuration. Boundary layer functions for these equations are constructed such that -- due to the straight transformed boundary -- their exponential decay can be assured. 
The difference to \cite{radu_decay} and \cite{radu_bbehav} is that in theses works, a periodic geometry was intersected by a curved interface. In this work, a periodic geometry with planar interface is transformed to give the geometry in which the fluid flow takes place. These results have been announced in \cite{do_stda} and \cite{sd_effflow}. In comparison to the latter paper, we will present the transformed equations in a more general formulation, hopefully facilitating generalizations.

\section{The Problem on the Microscale}

\subsection{Description of the Flow using Coordinate Transformations}

In this section we describe the main geometrical setting which is used throughout this work. 
Let $L,K,h>0$. Then $\O:=(0,L)\times(-K,h)$ is a rectangular domain in $\R^2$ (later corresponding to the transformed domain) with parts $\O_1:=(0,L)\times(0,h)$ (later the transformed free fluid domain), $\O_2:=(0,L)\times(-K,0)$ (the transformed porous medium) and $\Sigma=(0,L)\times \{0 \}$ (later the transformed interface).

\begin{figure}[t]
\centering
\includegraphics[width=0.8\textwidth]{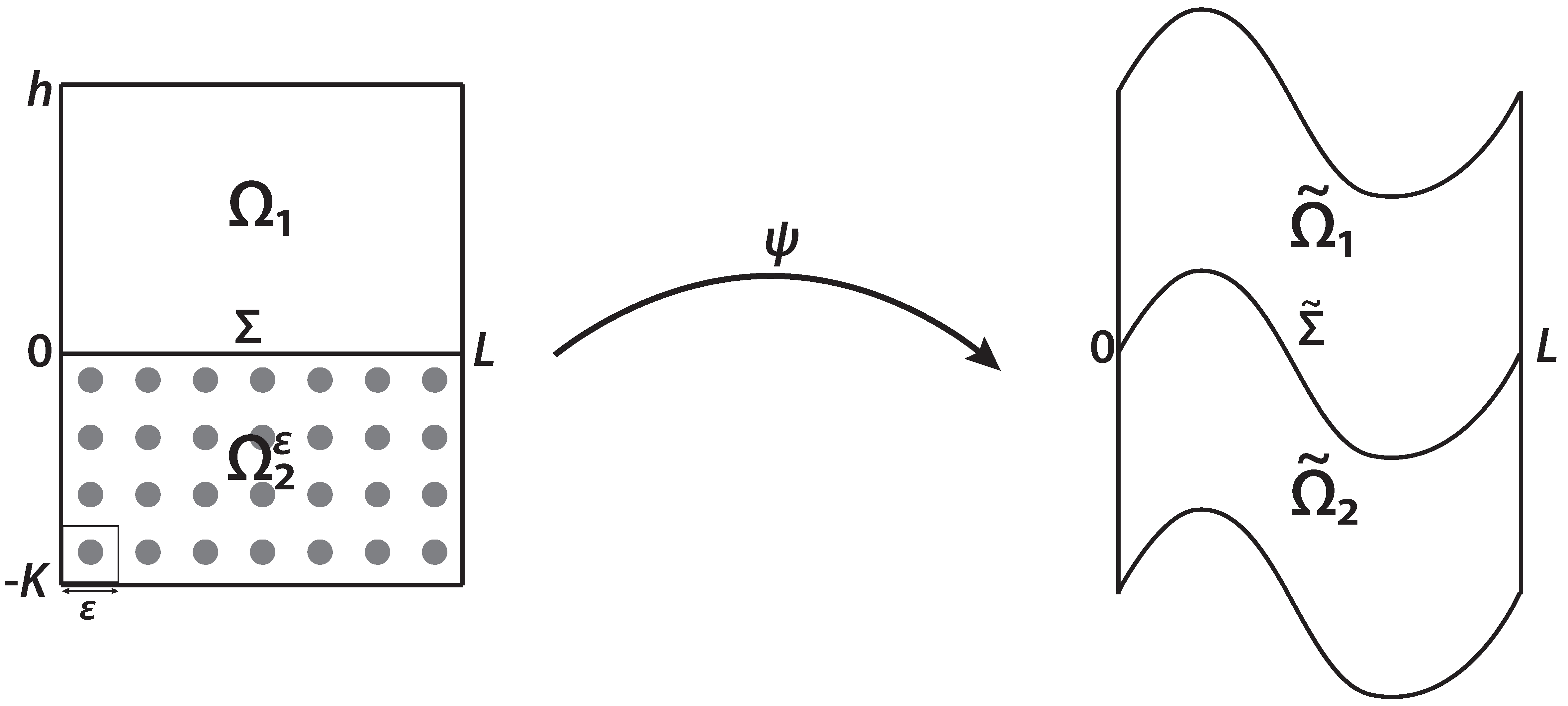}
\caption{Illustration of the coordinate transformation and the periodic domain.}
\label{fig:gebietloecher}
\end{figure}

Let ${g}\in \mathcal{C}^\infty(\R)$ be a given function such that ${g}(x+L)={g}(x)$ for all $x\in\R$. We consider $g$ to describe a periodic curved structure in our domain of interest. Define the coordinate transformation 
\begin{gather*}
\psi:{\O}\longrightarrow \tilde{\O}\\
\begin{pmatrix}
z_1\\
z_2
\end{pmatrix}
\longmapsto
\begin{pmatrix}
x_1\\
x_2
\end{pmatrix}
=
\begin{pmatrix}
z_1\\
z_2 + {g}(z_1)
\end{pmatrix}.
\end{gather*}
such that $\tilde{\O}=\psi(\O)$, $\tilde{\O}_1:=\psi(\O_1)$, $\tilde{\O}_2:=\psi(\O_2)$ and $\tilde{\Sigma}:= \psi(\Sigma)=\{ (x,g(x))| x\in (0,L) \}$. We are interested in the behavior of a fluid flowing through the curved channel $\tilde{\O}$, where $\tilde{\O}_1$ represents a domain with a free fluid flow, and $\tilde{\O}_2$ is a porous medium. We are especially interested in the behavior of the fluid at the curved boundary $\tilde{\Sigma}$. See Figure~\ref{fig:gebietloecher} for an illustration.

Let there be given a solid inclusion $\tilde{\O}_S \subset\subset \tilde{\O}_2$. We will later use a sequence of such inclusions to create a porous medium via homogenization theory. For a given volume force $\tilde{f}\in L^2(\tilde{\O})$, we assume that a mathematical description of the fluid is given by the steady state Stokes equation with no slip condition on the boundary of the solid inclusion and on the outer walls
\begin{subequations}
\label{eq:stokes}
\begin{alignat}{2}
-\mu\Delta_x \tilde{u}(x) +  \gradx \tilde{p}(x)&= \tilde{f}(x)&\quad &\text{ in $\tilde{\O}\backslash \overline{\tilde{\O}_S}$} \\
\divx(\tilde{u}(x))&=0 &&\text{ in $\tilde{\O}\backslash \overline{\tilde{\O}_S}$}\\
\tilde{u}(x)&=0 && \text { on $\p \tilde{\O}_S \cup \p \tilde{\O}\backslash( \{ x_1=0  \} \cup \{ x_1=L \} ) $} \\
\tilde{u}, \tilde{p} & \text{ are $L$-periodic in $x_1$}&&
\end{alignat}
\end{subequations}
Here $\mu>0$ denotes the dynamic viscosity; we will set $\mu=1$ in the sequel. We are looking for a velocity field $\tilde{u} \in H^1(\tilde{\O})^2$ and a pressure $ \tilde{p} \in L^2(\tilde{\O})/\R$. The Stokes equation is an approximation of the full Navier-Stokes equation which is valid for low Reynolds number flows. Using the transformation rules for the differential operators (see Appendix~\ref{sec:coordtrans}), we obtain the following equation for the transformed quantities ${u}(z)=\tilde{u}(\psi(z))$, ${p}(z)=\tilde{p}(\psi(z))$ and $f(z)=\tilde{f}(\psi(z))$ in the rectangular domain $\O$:
\begin{subequations}
\label{eq:trst}
\begin{empheq}[box=\widebox]{align}
-\div_z(F^{-1}(z)F^{-T}(z) &\grad_z {u}(z)) + F^{-T}(z) \grad_z {p}(z)= {f}(z) \text{ in ${\O\backslash \overline{\O_S}}$} \\
\div_z(F^{-1}(z){u}(z))=0 &\text{ in ${\O\backslash\overline{\O_S}}$}\\
{u}(z)=0 & \text{ on $\p {\O_S} \cup   \p\Omega\backslash  ( \{ z_1=0  \} \cup \{ z_1=L \}   )  $} \\
{u}, {p}& \text{ are $L$-periodic in $z_1$}
\end{empheq}
\end{subequations}
Here $\O_S:= \psi^{-1}(\tilde{\O}_S)$ is the transformed solid inclusion, and $F$ is defined as the Jacobian matrix of $\psi$ given by
\begin{equation}
\label{eq:F}
F(z)=
\begin{bmatrix}
1 & 0 \\
{g}'(z_1) & 1
\end{bmatrix}.
\end{equation}
Since $\det F =1$, $\psi$ is a volume preserving $\mathcal{C}^\infty$-coordinate transformation. In this connection, please note that we define the gradient of a vector field column-wise, i.e.\ $\grad u$ is the transpose of the Jacobian matrix of $u$. Defining $\grad u $ row-wise leads to slightly different transformed equations. Now, the crucial assumption is  that $\O_S$ is given as an $\eps$-periodic structure. This is described in the next subsection.

\begin{figure}[t]
\centering
\includegraphics[width=0.8\textwidth]{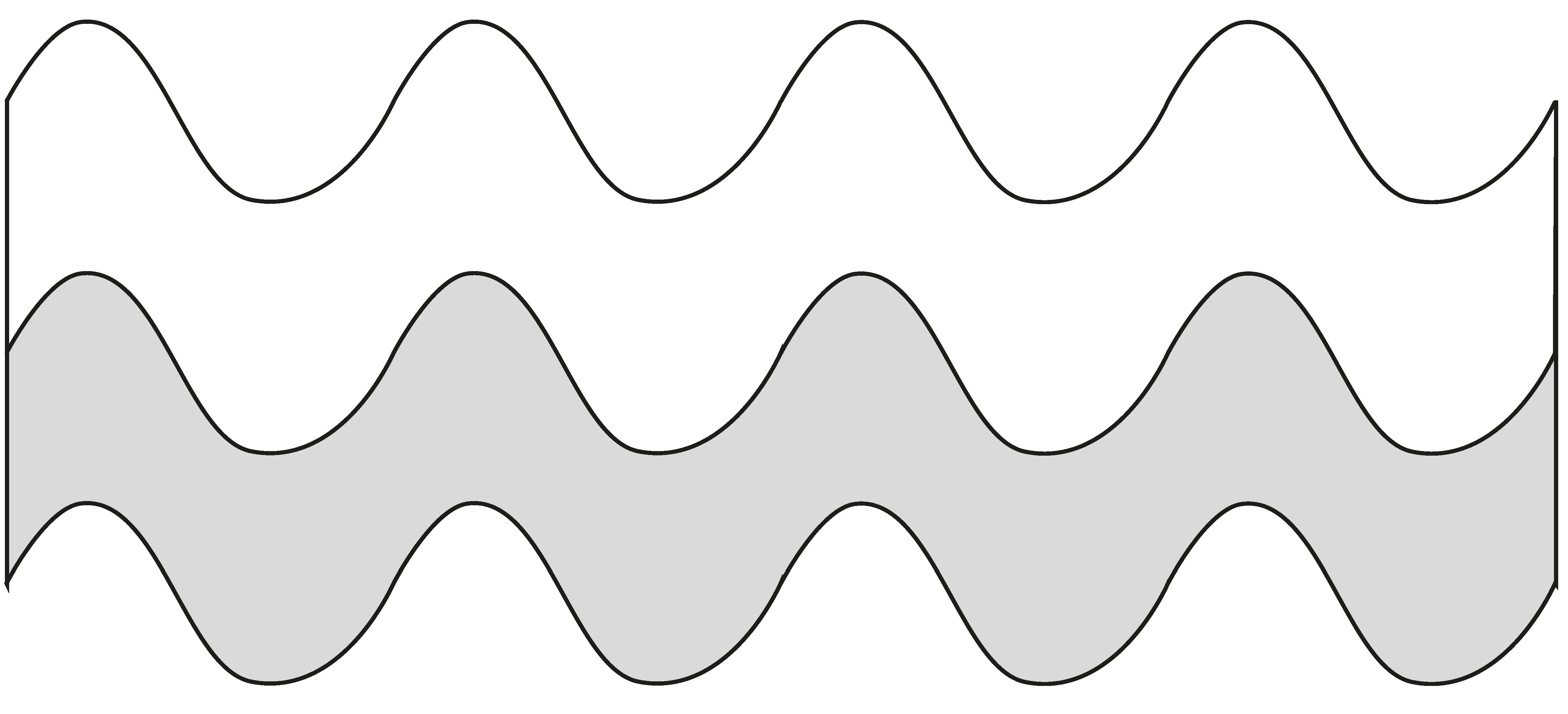}
\caption{Example of a periodic domain to which our methods can be applied. The upper part corresponds to the free fluid domain, and the lower (gray) part to a porous medium.}
\label{fig:domain}
\end{figure}

\subsection{The $\eps$-periodic Problem}
\label{sec:auxgeom}

\begin{figure}[t]
\centering
\includegraphics[width=0.37\textwidth]{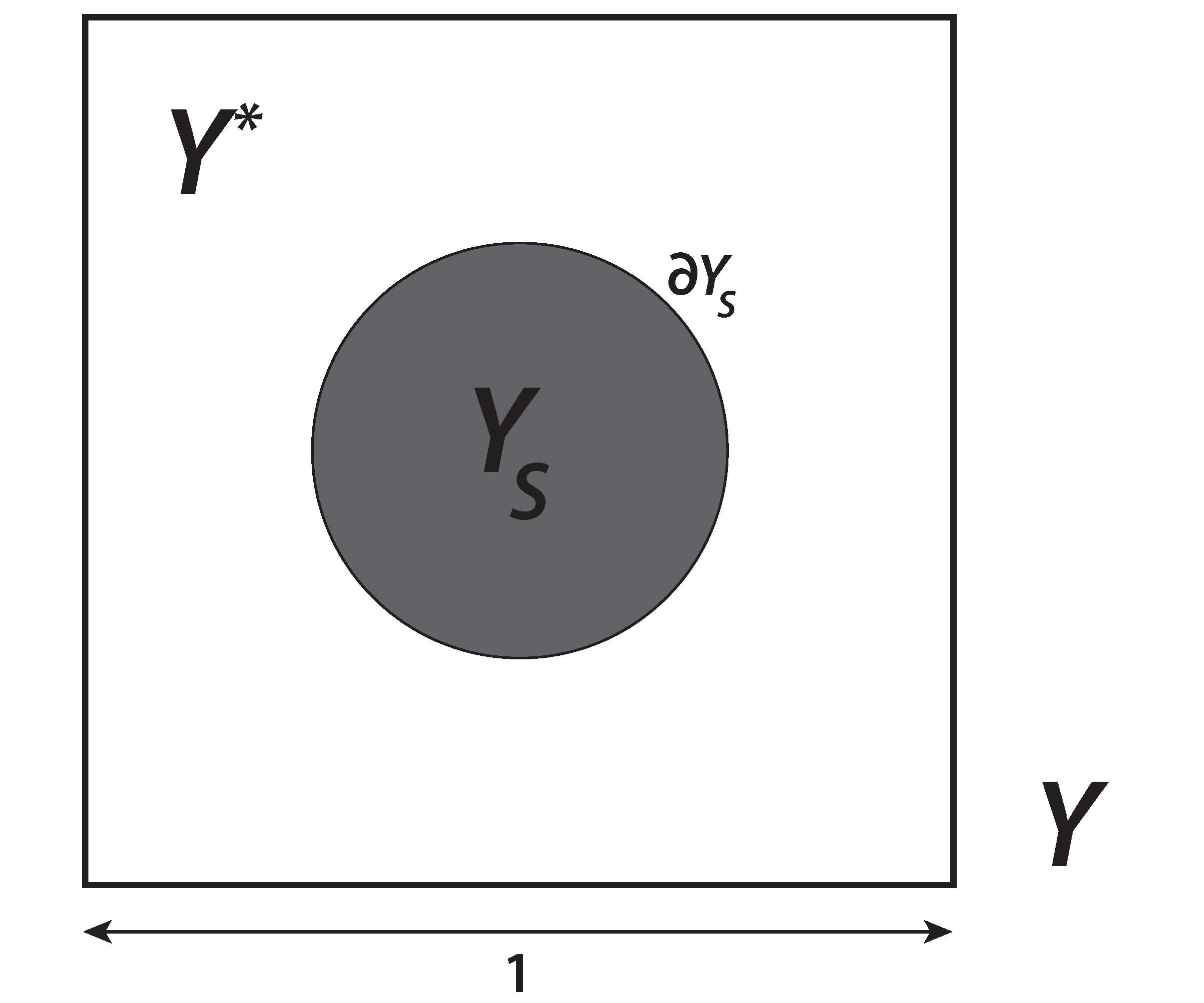}
\caption[The reference cell.]{The reference cell, consisting of the solid part $Y_S$ with boundary $\p Y_S$, and the fluid part $Y^*$.}
\label{fig:referencecell}
\end{figure}

We assume an $\eps$-periodic geometry in $\O_2$: Define a reference cell as
\[
Y:=[0,1]^2,
\]
containing a connected open set $Y_S$ (corresponding to the solid part of the cell). Its boundary $\p Y_S$ is assumed to be of class $\mathcal{C}^\infty$ with $\p Y_S \cap \p Y = \emptyset$. Let \[
Y^*:=Y\backslash \overline{Y_S}
\]
be the fluid part of the reference cell. 
For given $\eps>0$ such that $\frac{L}{\eps}\in\N$, let $\chi$  be the characteristic function of $Y^*$, extended by periodicity to the whole $\R^2$. Set $\chi^\eps(x):=\chi(\xe)$ and define the fluid part of the porous medium as
\[
\Oe_2 = \{  x\in \O_2\ |\ \chi^\eps(x)=1 \}.
\]
The fluid domain is then given by
\[
\Oe= \O_1 \cup \Sigma \cup \Oe_2.
\]

In order to be able to obtain the effective fluid behavior near $\Sigma$, we have to define a number of so-called boundary layer problems, see Section \ref{sec:boundary}. To this end, we introduce the following setting: We consider the domain $[0,1]\times \R$ subdivided as follows:
\[
Z^+= [0,1] \times (0,\infty) 
\]
corresponds to the free fluid region, whereas the union of translated reference cells
\[
Z^- =\bigcup_{k=1}^\infty \{  Y^* - \binom{0}{k} \} \backslash S
\]
is considered to be the void space in the porous part. Here
\[
S = [0,1] \times \{ 0 \} 
\]
denotes the interface between $Z^+$ and $Z^-$. Finally, let 
\[
Z= Z^+ \cup Z^-
\]
and
\[
\ZBL = Z^+ \cup S \cup  Z^- 
\]
be the fluid domain without and with interface. For illustrations, the reader is referred to Figure~\ref{fig:gebietloecher}, 
Figure \ref{fig:referencecell} 
and Figure \ref{fig:boundarylayer}.

\begin{figure}[htb]
\centering
\includegraphics[width=0.1775\textwidth]{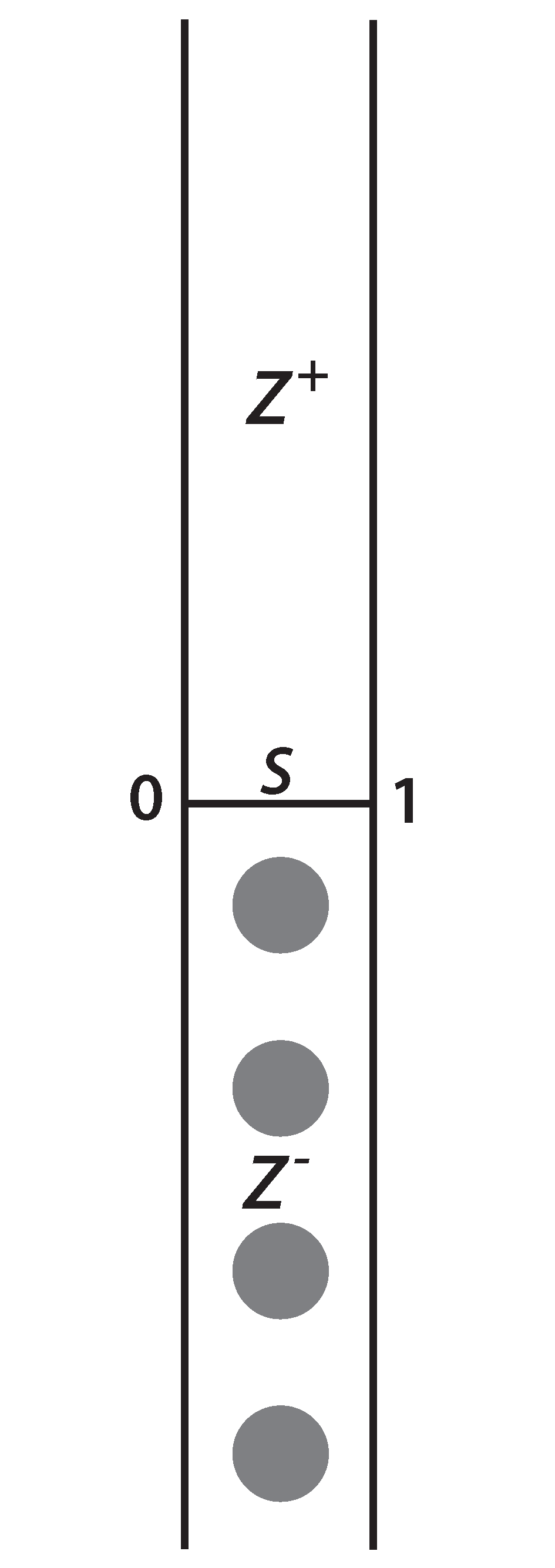}
\caption{The boundary layer strip $\ZBL$.}
\label{fig:boundarylayer}
\end{figure}

Using the constructions given above, we are interested in the limit behavior of the problem: For $\eps>0$, find a velocity $\ue$ and a pressure $\pe$ such that
\begin{subequations}
\label{eq:trsteps}
\begin{empheq}[box=\widefbox]{align}
-\div(F^{-1}F^{-T} \grad \ue) + F^{-T} \grad \pe= {f} &\text{ in $\Oe$} \label{seq:trstepsmain} \\
\div(F^{-1}\ue)=0 &\text{ in $\Oe$}\\
\ue=0 & \text{ on $\p \Oe \backslash ( \{ x_1=0  \} \cup \{ x_1=L \}   )  $} \\
\ue, \pe & \text{ are $L$-periodic in $x_1$}
\end{empheq}
\end{subequations}
Note that here and in the sequel,  we use the variable $x$ to designate points in $\O$, while $y$ is used for the strip $Z$ or the reference cell $Y$. There exists a solution $(\ue,\pe)\in H^1(\Oe)^2 \times L^2(\Oe)/\R$ of the problem above, see Appendix~\ref{sec:trst}.

\section{The effective behavior of a fluid at a curved porous medium}

The effective velocity field in the free fluid domain $\O_1$ is given by: Find a velocity $u^\text{eff}$ and a pressure $p^\text{eff}$ such that
\begin{subequations}
\label{eq:effo1}
\begin{alignat}{2}
-\div(F^{-1} F^{-T} \grad u^\text{eff}) + F^{-T} \grad p^\text{eff} &= f &\quad& \text{in } \O_1 \\
\div(F^{-1} u^\text{eff})&=0&&\text{in } \O_1 \\
\int_{\O_1} p^\text{eff} \ud x &=0\\
u^\text{eff}&=0 && \text{on } (0,L)\times \{ h\} \\
u^\text{eff}&, p^\text{eff} && \text{are $L$-periodic in $x_1$}\\
u^\text{eff}&= -\eps C^\text{bl} && \text{on } \Sigma \label{seq:bjcond}
\end{alignat}
\end{subequations}
$C^\text{bl}$ is the decay function of the boundary layer function $\bbl$ defined in Section \ref{sec:mainblfunc}. It holds $\Cbl\cdot F^{-T}e_2=0$ (see \eqref{eq:expstabvelocity}). In transformed tangential direction, a slip velocity $-\eps \Cbl\cdot F e_1$ occurs. Its magnitude can be calculated using the result in Lemma \ref{lem:expstabveloc} as
\begin{equation}
\label{eq:bj}
u^\text{eff}(x)\cdot F(x)e_1 = -\eps \Cbl(x)\cdot F(x) e_1 = - \eps\int_0^1 \beta^\mathrm{bl}(x,y_1,+0) \cdot F(x) e_1 \ud y_1.
\end{equation}
The effective Darcy pressure in $\O_2$ is given by
\begin{subequations}
\label{eq:effo2}
\begin{alignat}{2}
\div(F^{-1} A (f-F^{-T}\grad \tilde{p}^\text{eff}  )) &=0& \quad & \text{in } \O_2 \\
A (f-F^{-T}\grad \tilde{p}^\text{eff}  )\cdot F^{-T}e_2 &=0 && \text{on } (0,L) \times \{ -K\} \\
\tilde{p}^\text{eff} &= {p}^\text{eff} + C^\text{bl}_\omega && \text{on } \Sigma \\
&\tilde{p}^\text{eff}& & \text{ is $L$-periodic in $x_1$}
\end{alignat}
\end{subequations}
$A$ is given in \eqref{eq:A}, and $\Cblo$ is the pressure stabilization function defined in Section \ref{sec:mainblfunc}, given by
\begin{equation*}
\Cblo(x_1) = \int_0^1 \obl(x_1,y_1,+0) \ud y_1.
\end{equation*}
Define the effective mass flow rates in transformed tangential direction as
\begin{gather*}
M^\mathrm{eff}:= \int_{\O_1} u^\mathrm{eff} \cdot Fe_1, \ud x \qquad
M^\eps:=\int_{\O_1} \ue \cdot Fe_1 \ud x.
\end{gather*}
We obtain the following estimates:
\begin{thm}
Let $f\in \C^\infty(\bar{\O})$ be $L$-periodic in the first variable. For $(\ue,\pe)$ as defined in \eqref{eq:trsteps} and $(u^\mathrm{eff}, p^\mathrm{eff})$ defined in \eqref{eq:effo1} the estimates
\begin{gather*}
\nx{\ue - u^\mathrm{eff}}{L^2(\O_1)^2} + |M^\eps-M^\mathrm{eff}| \leq C\eps^\frac{3}{2} \\
\nx{\ue-u^\mathrm{eff}}{H^\frac{1}{2}(\O_1)^2} + \nx{\pe-p^\mathrm{eff}}{L^1(\O_1)} + \nx{\grad(\ue - u^\mathrm{eff})}{L^1(\O_1)^4} \leq C \eps \\
\nx{|x_2|^\frac{1}{2}\grad(\ue - u^\mathrm{eff})  }{L^2(\O_1)^4} + \nx{|x_2|^\frac{1}{2}(\pe - p^\mathrm{eff})}{L^2(\O_1)} \leq C \eps
\end{gather*}
hold.
\end{thm}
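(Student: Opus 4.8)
The plan is to follow the boundary-layer correction method of J\"ager--Mikeli\'{c} in the form used in \cite{mik_effpress}, applied to the transformed system \eqref{eq:trsteps}. First I would construct an approximation of $(\ue,\pe)$ by superposing the effective fields with boundary-layer correctors. On the free-fluid side I would take
\[
u_{\mathrm{app}} = u^{\mathrm{eff}} + \eps\,\bbl(x,x/\eps), \qquad p_{\mathrm{app}} = p^{\mathrm{eff}} + \obl(x,x/\eps),
\]
together with the analogous Darcy approximation and corrector on $\O_2$. The slip condition \eqref{seq:bjcond}, $u^{\mathrm{eff}}=-\eps\Cbl$ on $\Sigma$, is chosen precisely so that the interface trace of $\eps\bbl$ is cancelled (recall $\Cbl(x)=\int_0^1\bbl(x,y_1,+0)\ud y_1$ from \eqref{eq:bj}), and the exponential decay of $\bbl,\obl$ away from $S$ --- available because the transformed interface is straight, cf.\ Section~\ref{sec:boundary} --- confines the correction to a layer of thickness $O(\eps)$ around $\Sigma$.

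Next I would substitute $(u_{\mathrm{app}},p_{\mathrm{app}})$ into \eqref{eq:trsteps}. By construction the leading-order terms cancel; the residual then stems from the slow $x$-dependence of $\bbl,\obl$, from the $O(\eps)$ mismatch of the effective equations \eqref{eq:effo1}--\eqref{eq:effo2}, and from the fact that $\div(F^{-1}u_{\mathrm{app}})$ does not vanish exactly. I would repair this divergence defect with a Bogovskii-type operator adapted to the $F^{-1}$-weighted divergence, so that the error $(U^\eps,P^\eps):=(\ue-u_{\mathrm{app}},\pe-p_{\mathrm{app}})$ solves the transformed Stokes system with a genuinely divergence-free velocity and a small right-hand side whose support is essentially the $O(\eps)$-layer.

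The core is the weighted energy estimate for $(U^\eps,P^\eps)$. Testing the momentum balance with $U^\eps$ and using that $F^{-1}F^{-T}$ is uniformly positive definite (since $g\in\C^\infty$ has bounded derivatives) gives coercivity of $\nx{\grad U^\eps}{L^2}$; the pressure drops out thanks to the corrected constraint, and the $\Sigma$-terms are controlled by the decay of $\bbl$. A change of variables $\xi=x_2/\eps$ combined with the exponential decay converts the weight $|x_2|^{1/2}$ into a gain of $\eps^{1/2}$, yielding the weighted bounds $\nx{|x_2|^{1/2}\grad(\ue-u^{\mathrm{eff}})}{L^2}\le C\eps$ and $\nx{|x_2|^{1/2}(\pe-p^{\mathrm{eff}})}{L^2}\le C\eps$; the same layer-width computation produces $\nx{\grad(\ue-u^{\mathrm{eff}})}{L^1}\le C\eps$ and $\nx{\pe-p^{\mathrm{eff}}}{L^1}\le C\eps$. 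Since $\eps\bbl(x,x/\eps)$ has $L^2$-norm $O(\eps^{3/2})$ over the layer, combining this with either a duality argument or a refined energy estimate exploiting that $U^\eps$ vanishes on $\Sigma$ gives $\nx{\ue-u^{\mathrm{eff}}}{L^2}\le C\eps^{3/2}$ and, via $Fe_1\in L^\infty$, the mass-flow bound $|M^\eps-M^{\mathrm{eff}}|\le C\eps^{3/2}$. Finally the $H^{1/2}$ estimate follows by interpolating $\nx{\cdot}{L^2}=O(\eps^{3/2})$ against the unweighted $\nx{\grad\cdot}{L^2}=O(\eps^{1/2})$, which returns exactly $O(\eps)$.

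The main obstacle I expect is twofold. The pressure estimates are the delicate part: the pressure is controlled only through the momentum equation, so bounding $\nx{\pe-p^{\mathrm{eff}}}{L^1}$ and its weighted $L^2$ norm requires an inf-sup/Bogovskii construction compatible with the $F$-weighted divergence, together with a verification that $\obl$ stabilizes correctly across $\Sigma$ (with jump $\Cblo$). Genuinely new relative to the flat case is that the curvature forces every boundary-layer cell problem to carry the $x$-dependent coefficients $F^{-1}F^{-T}$ and $F^{-T}$; I must confirm that these generalized boundary layers still decay exponentially (Lemma~\ref{lem:expstabveloc} and \eqref{eq:expstabvelocity}) and that their slow $x$-modulation does not spoil the sharp $O(\eps)$ and $O(\eps^{3/2})$ orders. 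This is precisely where the straightness of the transformed interface is essential, in contrast to the non-decay phenomenon of \cite{radu_decay}.
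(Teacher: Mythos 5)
Your overall strategy (boundary-layer correctors in the transformed geometry, an energy estimate, a duality/very-weak-solutions step, interpolation for $H^{1/2}$, and a layer-width computation for the weighted and $L^1$ norms) is the same as the paper's, but there is a genuine gap that blocks the claimed rates: a single corrector $(\bbl,\obl)$ is not enough. After subtracting $u^0$ (the solution of \eqref{eq:pi0}, extended by zero) and adding $\bble-\eps\Cbl H(x_2)$ plus the counterflow, the residual in the weak formulation still contains the terms $\int_{\Oe}\divx\bigl(F^{-1}F^{-T}\grady\bbl(x,\xe)\bigr)\phi\ud x$ and $\int_{\Oe}F^{-T}\gradx\bigl(\obl(x,\xe)-H(x_2)\Cblo\bigr)\cdot\phi\ud x$, both of which are only $\mathscr{O}(\eps^{1/2})$ (they are $O(1)$ pointwise on a layer of width $\eps$). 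These are exactly the "slow $x$-dependence" terms you mention, but they cannot simply be absorbed: they cap the energy estimate at $\nx{\grad\mathcal{U}^\eps}{L^2(\Oe)^4}\le C\eps^{1/2}$ and hence the trace at $\nx{\mathcal{U}^\eps}{L^2(\Sigma)^2}\le C\eps$, which after the duality step yields only $\nx{\ue-\ueff}{L^2(\O_1)^2}\le C\eps$ and, by interpolation, $H^{1/2}$-error $\eps^{3/4}$ rather than $\eps$. The paper removes them with a second generation of boundary-layer problems, $(\gbl,\mubl)$ with right-hand side $\divx(F^{-1}F^{-T}\grady\bbl)$ and $(\lbl,\kbl)$ with right-hand side $F^{-T}\gradx(\obl-H\Cblo)$, together with their counterflows and additional divergence correctors $Q_\beta,Q_\gamma,Q_\lambda$; only then does one get $\nx{\grad\mathcal{U}^\eps}{L^2}\le C\eps$ and the $\eps^{3/2}$ trace bound on $\Sigma$ that drives the first estimate of the theorem. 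Your plan is missing this entire second correction stage.

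Two further points need repair. First, the role of $\Cbl$: the corrector $\bbl$ is introduced to cancel the normal-stress jump $F^{-1}F^{-T}\grad u^0e_2$ on $\Sigma$ created by extending $u^0$ by zero, and it stabilizes to the \emph{nonzero} constant $\Cbl$ as $y_2\to+\infty$; the slip condition $\ueff=-\eps\Cbl$ arises because $\ueff=u^0-\eps u^\sigma$ with $u^\sigma=\Cbl$ on $\Sigma$ the counterflow compensating this far-field value. Consequently $\eps\bbl(x,\xe)$ is \emph{not} confined to an $O(\eps)$ layer and has $L^2(\O_1)$-norm $O(\eps)$, not $O(\eps^{3/2})$; only $\bble-\eps\Cbl H(x_2)$ has the layer structure you use. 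Second, the error $U^\eps$ does not vanish on $\Sigma$, so the "refined energy estimate exploiting that $U^\eps$ vanishes on $\Sigma$" is not available; the paper instead bounds $\nx{\ue-\ueff}{L^2(\O_1)^2}$ by the $L^2(\Sigma)$-norm of the fully corrected error via the very-weak-solution estimate \eqref{eq:vwsestim}, which again requires the second correction stage to reach $O(\eps^{3/2})$. Your divergence repair also has to respect the perforations in $\Oe_2$ (where a naive Bogovskii constant degenerates like $\eps^{-1}$); the paper uses exponentially decaying correctors on the strip together with the restriction operator of Proposition~\ref{prop:restop} for this.
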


In the porous medium $\O_2$, we arrive at the following results:
\begin{thm}
\label{thm:pormed}
For the effective pressure in the porous medium defined by \eqref{eq:effo2}, we have for all $\delta >0$
\begin{gather}
\frac{1}{\eps^2} \ue - A(f- F^{-T}\grad \tilde{p}^\mathrm{eff}) \weak 0 \text{ in } L^2((0,L)\times (-H, -\delta)) \notag \\
\pe - \tilde{p}^\mathrm{eff} \weak 0 \text{ in } L^2(\O_2)\notag \\
\nx{\pe - \tilde{p}^\mathrm{eff}}{H^{-\frac{1}{2}}(\Sigma)} \leq C \eps^\frac{1}{2}\label{eq:sigmaestim1}
\end{gather}
\end{thm}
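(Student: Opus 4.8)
The plan is to read \eqref{eq:trsteps} restricted to $\O_2$ as a homogenization problem for the transformed Stokes system and to derive Darcy's law in the spirit of the energy method of Tartar and the two-scale approach of Allaire and Mikeli\'c, following \cite{mik_effpress}; the qualitative bulk convergences are then upgraded to the quantitative interface bound \eqref{eq:sigmaestim1} using the boundary layer functions of Section \ref{sec:boundary}. The first two statements are the ``bulk'' results; the third, at $\Sigma$, is the genuinely new difficulty.

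First I would establish the a priori estimates. Testing the weak form of \eqref{eq:trsteps} with $\ue$, the pressure term drops out since $(F^{-T}\grad\pe)\cdot\ue=\grad\pe\cdot(F^{-1}\ue)$ pairs against $\div(F^{-1}\ue)=0$; using the uniform coercivity of $F^{-1}F^{-T}$ (valid because $g'$ is bounded and $\det F=1$, so $(F^{-1}F^{-T}\grad\ue):\grad\ue=|F^{-T}\grad\ue|^2\geq c|\grad\ue|^2$) together with the Poincar\'e inequality on the perforated cells, whose constant scales like $\eps$, yields
\begin{equation*}
\nx{\grad \ue}{L^2(\O_2)^4} \leq C\eps, \qquad \nx{\ue}{L^2(\O_2)^2} \leq C\eps^2 .
\end{equation*}
Hence $v^\eps:=\eps^{-2}\ue$ is bounded in $L^2(\O_2)^2$. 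For the pressure I would use a restriction operator $R^\eps$ mapping $H^1_0(\O_2)^2$ into the perforated space, adapted to the constraint $\div(F^{-1}\cdot)=0$, with the standard bounds $\nx{R^\eps w}{L^2}+\eps\nx{\grad R^\eps w}{L^2}\leq C(\nx{w}{L^2}+\eps\nx{\grad w}{L^2})$; pairing the equation against $R^\eps w$ gives a uniform $L^2(\O_2)/\R$ bound on the extended pressure.

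Next I would identify the two-scale limits: along a subsequence $v^\eps$ two-scale converges to some $v_0(x,y)$ supported in $Y^*$ and $\pe$ converges to a limit $p_0(x)$ independent of $y$. Inserting oscillating test functions $\varphi(x)\phi(x/\eps)$ and passing to the two-scale limit gives the cell problem
\begin{equation*}
-\div_y\!\bigl(F^{-1}F^{-T}\grad_y v_0\bigr) + F^{-T}\grad_y p_1 = f - F^{-T}\grad_x p_0 \quad\text{in } Y^*,
\end{equation*}
with $v_0,p_1$ periodic, $\div_y(F^{-1}v_0)=0$ and $v_0=0$ on $\p Y_S$, where $F=F(x_1)$ enters only as a parameter. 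By linearity $v_0$ is affine in $f-F^{-T}\grad_x p_0$, and averaging gives the transformed Darcy law $q:=\int_{Y^*} v_0\ud y = A(f-F^{-T}\grad_x p_0)$ with $A$ the permeability of \eqref{eq:A}; the order-$\eps$ continuity equation yields $\div(F^{-1}q)=0$, and together with the no-flux and matching conditions this shows $p_0$ solves \eqref{eq:effo2}, so by uniqueness $p_0=\tilde p^\mathrm{eff}$. This proves the first two convergences — the velocity one only away from $\Sigma$, which is why the strip $(0,L)\times(-H,-\delta)$ with $\delta>0$ appears: within $O(\eps)$ of $\Sigma$ the slip boundary layer perturbs $\eps^{-2}\ue$ and must be excised, while it has decayed exponentially in the interior.

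The hard part is the quantitative bound \eqref{eq:sigmaestim1}. I would argue by duality: for $\phi\in H^{1/2}(\Sigma)$, extend it to $\Phi\in H^1(\O_2)^2$ with controlled norm and prescribed transformed normal trace $\Phi\cdot F^{-T}e_2=\phi$ on $\Sigma$, and rewrite $\langle\pe-\tilde p^\mathrm{eff},\phi\rangle_\Sigma$ as a volume integral via Green's formula applied to the difference of \eqref{eq:trsteps} and \eqref{eq:effo2}. The resulting bulk terms are controlled by the velocity error $\ue-\eps^2 q$ and its gradient, while the remaining interface contribution is precisely what the pressure stabilization function $\Cblo$ of Section \ref{sec:boundary} is designed to cancel, using the exponential decay of $\obl$ away from $S$ (available only because the transformed interface is straight). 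The factor $\eps^{1/2}$ emerges from estimating the boundary-layer remainder in $L^2$ over the transition region of width $O(\eps)$. The main obstacle is therefore not the bulk homogenization, which is classical once the transformed operators are handled, but the careful matching of the microscopic pressure to $\tilde p^\mathrm{eff}+\Cblo$ at the straightened interface and the trace interpolation that converts the near-interface velocity bounds into the stated $H^{-1/2}$ estimate.
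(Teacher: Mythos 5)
Your overall strategy (two--scale homogenization in the bulk plus boundary--layer matching at $\Sigma$) points in the right direction, but there are two genuine gaps. First, the a priori estimates you start from do not follow from the direct energy argument you describe. Testing \eqref{eq:trsteps} with $\ue$ over the \emph{whole} domain $\Oe$ only gives $\nx{\grad \ue}{L^2(\Oe)^4}\leq C$ (the free--fluid part $\O_1$ contributes at order one), hence $\nx{\ue}{L^2(\Oe_2)^2}\leq C\eps$ by the perforated Poincar\'e inequality; and if you instead restrict to $\O_2$ you have no homogeneous boundary condition on $\Sigma$, since $\ue|_\Sigma$ is the trace of the free--fluid solution. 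The paper's uncorrected bounds (Proposition~\ref{prop:basicestim}) are only $\nx{\grad\ue-\grad u^0}{L^2(\Oe)^4}\leq C\eps^{1/2}$ and $\nx{\ue}{L^2(\Oe_2)^2}\leq C\eps^{3/2}$ — the interface integrals $\int_\Sigma F^{-1}F^{-T}\grad u^0 e_2\cdot\phi$ and $\int_\Sigma F^{-1}[\tilde\pi^0]_\Sigma\,\phi\cdot e_2$ cap the rate at $\eps^{1/2}$ for the gradient. The rates $\mathscr{O}(\eps)$ and $\mathscr{O}(\eps^2)$ that make $\eps^{-2}\ue$ two--scale compact are obtained in the paper only for the \emph{corrected} quantities $\tilde{\mathcal{U}}^\eps,\tilde{\mathcal{P}}^\eps$, after the boundary--layer functions $\bbl,\gbl,\lbl$, the counterflows, and the divergence corrections of Section~4 have removed those interface terms. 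So the boundary layers are not an add-on for the third statement; they are a prerequisite for the compactness you need in the first two.

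Second, identifying the two--scale limit pressure as $\tilde p^{\mathrm{eff}}$ ``by uniqueness'' begs the question. The interior two--scale limit only yields that $p_0$ satisfies $\div(F^{-1}A(f-F^{-T}\grad p_0))=0$ in $\O_2$ with the no--flux and periodicity conditions; the Dirichlet datum $p_0=p^{\mathrm{eff}}+\Cblo$ on $\Sigma$, which is what distinguishes $\tilde p^{\mathrm{eff}}$ from every other solution of the Darcy equation and is the substance of the theorem, cannot be read off from weak convergence in $L^2(\O_2)$. In the paper this datum enters through the construction: $\tilde\pi^0$ is \emph{defined} with trace $\pi^0+\Cblo$, the requirement $[\tilde\pi^0]_\Sigma=-\Cblo$ is forced by the need to cancel the $\int_\Sigma F^{-1}\Cblo e_2\cdot\phi\,\ud\sigma$ term in the weak formulation of $\tilde{\mathcal{P}}^\eps$, and only then does one conclude $\Plim=0$ and $\pe-\tilde\pi^0\weak 0$, followed by the comparison $\nx{\tilde p^{\mathrm{eff}}-\tilde\pi^0}{H^{k-1}(\O_2)}\leq C\nx{p^{\mathrm{eff}}-\pi^0}{H^k(\O_1)}\leq C\eps$. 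Your duality argument for \eqref{eq:sigmaestim1} is in the right spirit (the paper also tests against $w\in H^1(\Oe_2)$ and splits $\pe-\tilde\pi^0=(\pe-\tilde p^{\mathrm{eff}})+(\tilde p^{\mathrm{eff}}-\tilde\pi^0)$), but as written it relies on velocity--error bounds of order $\eps^2$ that you have not actually established, for the reason given above.
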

\begin{thm}
\label{thm:sigma}
At the interface $\Sigma$, it holds
\begin{gather}
\frac{1}{\eps}(\ue - u^\mathrm{eff}) \weak 0 \quad \text{in } L^2(\Sigma) \label{eq:sigmaestim2} \\
\nx{\ue - u^\mathrm{eff}}{H^{-\frac{1}{2}}(\Sigma)} \leq C \eps^\frac{4}{3}. \notag
\end{gather}
\end{thm}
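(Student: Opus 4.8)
The plan is to exploit one structural observation. Since $\O_1$ contains no solid inclusions, both $\ue$ and $\ueff$ satisfy the \emph{same} transformed Stokes system, with the same coefficients $F^{-1}F^{-T}$, $F^{-T}$ and the same right-hand side $f$, throughout $\O_1$. Hence $w:=\ue-\ueff$ and $q:=\pe-\peff$ solve the \emph{homogeneous} system $-\div(F^{-1}F^{-T}\grad w)+F^{-T}\grad q=0$, $\div(F^{-1}w)=0$ in $\O_1$, with $w=0$ on $(0,L)\times\{h\}$, $L$-periodic in $x_1$, and unknown Dirichlet trace $w|_\Sigma$. The only inputs I need on $w$ are the estimates from the first theorem of this section, namely $\nx{w}{L^2(\O_1)^2}\leq C\eps^{3/2}$, $\nx{\grad w}{L^1(\O_1)^4}\leq C\eps$ and $\nx{|x_2|^{1/2}\grad w}{L^2(\O_1)^4}\leq C\eps$. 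The homogeneity is what rules out thin, non-oscillatory concentration of $w$ near $\Sigma$ and thereby makes the trace controllable by interior norms.

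For the weak convergence $\tfrac1\eps w\weak0$ in $L^2(\Sigma)$, I would use the boundary-layer representation of $\ue$ near $\Sigma$ from Section~\ref{sec:boundary}. There $\ue\approx-\eps\bbl(\cdot,\cdot/\eps)$, where the interface value $\bbl(x,y_1,+0)$ has $y_1$-average exactly $\Cbl(x)=\int_0^1\bbl(x,y_1,+0)\ud y_1$; this is precisely the quantity appearing in the effective boundary condition $\ueff|_\Sigma=-\eps\Cbl$. Consequently $\tfrac1\eps w|_\Sigma$ equals, up to a remainder controlled by the energy estimates and by the exponential decay of the boundary layer, the mean-zero rapidly oscillating family $-\bigl(\bbl(\cdot,\cdot/\eps,+0)-\Cbl(\cdot)\bigr)$. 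This family is bounded in $L^2(\Sigma)$ and, being mean-zero and oscillating at scale $\eps$, converges weakly to $0$ by the Riemann--Lebesgue lemma, while the remainder converges strongly to $0$; together these give the claim.

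For the bound $\nx{w}{H^{-1/2}(\Sigma)}\leq C\eps^{4/3}$ I would argue by duality: $\nx{w}{H^{-1/2}(\Sigma)}=\sup\{\langle w,\phi\rangle_\Sigma:\nx{\phi}{H^{1/2}(\Sigma)}\leq1\}$. Given such a $\phi$, I extend it to $\Phi\in H^1(\O_1)^2$ with $\Phi=0$ on the top wall and $\Phi$ supported in a strip $\{0<x_2<\delta\}$, then write $\langle w,\phi\rangle_\Sigma=-\int_{\O_1}\partial_{x_2}(w\cdot\Phi)=-\int_{\O_1}(\partial_{x_2}w)\cdot\Phi-\int_{\O_1}w\cdot\partial_{x_2}\Phi$. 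The second integral is estimated by $\nx{w}{L^2(\O_1)^2}\,\nx{\partial_{x_2}\Phi}{L^2}\lesssim\eps^{3/2}\nx{\partial_{x_2}\Phi}{L^2}$; the first, which is the delicate one, by the weighted- and $L^1$-gradient estimates, using the constraint $\div(F^{-1}w)=0$ to trade the normal derivative of the normal component for tangential derivatives, and a Hardy-type inequality to absorb the weight $|x_2|^{1/2}$ against $\Phi$. Since the strip width $\delta$ enters the two contributions with opposite monotonicity, I would finally optimise over $\delta$; balancing the terms produces the exponent $\tfrac43$.

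The hard part will be this $H^{-1/2}$ estimate, for two reasons. First, only a \emph{weighted} $L^2$ gradient bound (together with an $L^1$ gradient bound) is available, not $\nx{\grad w}{L^2(\O_1)}$, so the term $\int_{\O_1}(\partial_{x_2}w)\cdot\Phi$ cannot be treated by a plain Cauchy--Schwarz; it forces the splitting of $w$ into normal and tangential components, the use of incompressibility, and a careful Hardy inequality near $\Sigma$. Second, one must choose both the extension $\Phi$ and the cutoff width $\delta$ so as to balance the $\eps^{3/2}$ and $\eps$ contributions and extract precisely $\eps^{4/3}$, rather than the cruder $\eps$ that an unoptimised extension would yield. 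By contrast, the weak-convergence part is comparatively routine once the boundary-layer trace representation and its decay are invoked.
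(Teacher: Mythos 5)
Your treatment of the weak convergence \eqref{eq:sigmaestim2} is essentially the paper's own argument: decompose $\ue-\ueff$ on $\Sigma$ into a remainder that is $o(\eps)$ in $L^2(\Sigma)$ plus $-\eps\bigl(\bbl(\cdot,\cdot/\eps,+0)-\Cbl\bigr)$, and use that the latter is a bounded, mean-zero, $\eps$-oscillating family. That part is fine.

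The $H^{-\frac{1}{2}}(\Sigma)$ estimate is where the proposal breaks down. Your bulk duality argument never uses the oscillatory cancellation of the trace, and the interior norms you feed into it cannot produce anything better than $O(\eps)$, which is already the trivial bound $\nx{\ue-\ueff}{H^{-\frac{1}{2}}(\Sigma)}\le\nx{\ue-\ueff}{L^2(\Sigma)}\le C\eps$. Concretely: in $\int_{\O_1}(\p_{x_2}w)\cdot\Phi$ the weighted bound $\nx{|x_2|^{\frac{1}{2}}\grad w}{L^2(\O_1)^4}\le C\eps$ must be paired with $\nx{|x_2|^{-\frac{1}{2}}\Phi}{L^2}$, and since $\Phi|_\Sigma=\phi\not\equiv 0$ the integral $\int_0^\delta t^{-1}|\Phi(x_1,t)|^2\ud t$ diverges; Hardy's inequality is unavailable precisely because neither $\Phi$ nor $w$ vanishes on $\Sigma$. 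The alternative pairing $\nx{\grad w}{L^1(\O_1)^4}\,\nx{\Phi}{L^\infty}$ fails as well, since $H^{\frac{1}{2}}(\Sigma)$ is the critical case and does not embed into $L^\infty$ on a one-dimensional interface. Even granting the most favourable admissible pairing, $\nx{\grad w}{L^2(\O_1)^4}\nx{\Phi}{L^2(\mathrm{strip})}\le C\eps^{\frac{1}{2}}\delta^{\frac{1}{2}}$ against $C\eps^{\frac{3}{2}}\delta^{-\frac{1}{2}}$, the optimisation gives $\delta=\eps$ and a total of $C\eps$ — not $C\eps^{\frac{4}{3}}$. The exponent $\frac{4}{3}$ does not come from balancing a strip width; in the paper it comes from a genuinely different mechanism: one shows $\nx{\bble-\eps\Cbl}{H^{-s}(\Sigma)}\le C\eps^2$ for a suitable $s>\frac{1}{2}$ by pairing against a test function $v$, exploiting that $\bbl(x,\cdot,+0)-\Cbl(x)$ has mean value zero over each periodicity cell, and performing a Taylor expansion cell by cell to gain one extra power of $\eps$ (this requires $v$ smooth enough, e.g.\ $H^{s}\hookrightarrow\C^1$); combining this with $\nx{\mathcal{U}^\eps}{L^2(\Sigma)^2}\le C\eps^{\frac{3}{2}}$ gives $\nx{\ue-\ueff}{H^{-s}(\Sigma)}\le C\eps^2$, and interpolating with the $L^2(\Sigma)$ bound of order $\eps$ yields the claimed negative-norm rate. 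You would need to import exactly the cell-wise cancellation argument you already invoked for the weak-convergence part — quantified through a negative Sobolev norm of higher order and then interpolated down to $H^{-\frac{1}{2}}$ — to close this step.
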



These results show that the following behavior of a free fluid in contact with a flow in a curved porous medium can be expected for low Reynolds number flows:
\begin{itemize}
\item In the free fluid domain, the velocity and pressure are given by the Stokes equation.
\item In the porous medium, the flow is pressure driven and given by Darcy's law.
\item At the interface, a slip-condition occurs. The velocity is given with the help of the decay function $C^\mathrm{bl}(x)$ of an auxiliary boundary layer problem. This depends on parameters from the geometry of the interface. 
\item In transformed normal direction $C^\mathrm{bl}(x)\cdot F^{-T}(x)e_2=0$ holds, which is an approximation of continuity of the velocity in that direction. In tangential direction, a jump between the velocities and pressures occurs.  
Looking at the form of the boundary condition, we see that the generalized boundary condition of Beavers and Joseph has to incorporate effects stemming from the geometry of the fluid-porous interface. The estimates \eqref{eq:sigmaestim1} and \eqref{eq:sigmaestim2} corresponds to a generalized rigorous version of \eqref{eq:bjjump}. 
In \cite{jami_ibc-bjs} and \cite{mik_effpress}, the jump boundary condition is given by $\ueff_1 + \eps \Cbl_1 \frac{\p \ueff_1}{\p x_2}=0$, whereas in our case it reads $u^\text{eff} +\eps C^\text{bl}=0$. The missing velocity gradient appears in the equations for $\beta^{\mathrm{bl}}$, see Section~\ref{sec:mainblfunc}. 
\end{itemize}

\section{Derivation of the general law of Beavers and Joseph}

In this section we carry out the steps that are necessary to derive the generalized boundary condition of Beavers and Joseph \eqref{eq:bj} as well as the theorems given above. We will successively correct the velocity $\ue$ and the pressure $\pe$ given by \eqref{eq:trsteps} with the help of auxiliary functions. This will give us insight into the effective behavior, while at the same time introducing problems with $\div(F^{-1}\ue)$. Therefore we have to correct this term as well.

\subsection{Correction of the velocity}

\subsubsection{Elimination of the forces}

We start by eliminating the right hand side of \eqref{seq:trstepsmain} in $\O_1$. Let $u^0, \pi^0$ be a solution of
\begin{subequations}
\label{eq:pi0}
\begin{empheq}[box=\widebox]{align}
-\div(F^{-1}F^{-T} \grad u^0) + F^{-T} \grad \pi^0= {f} &\text{ in $\O_1$} \\
\div(F^{-1} u^0)=0 &\text{ in $\O_1$}\\
u^0=0 & \text{ on $\p \O_1 \backslash ( \{ x_1=0  \} \cup \{ x_1=L \}   )  $} \\
u^0, \pi^0 & \text{ are $L$-periodic in $x_1$}
\end{empheq}
\end{subequations}
There exists a unique solution $u^0\in H^1(\O_1)^2$, $\pi^0\in L^2(\O_1/\R)$ by the results for the transfomed Stokes equation. By regularity results (see e.g.\ \cite{temam}), this solution is smooth for smooth $f$. We extend the velocity $u^0$ by $0$ in $\O_2$ and the pressure in a smooth manner to a pressure $\tilde{\pi}^0$ defined in all of $\O$. As it will turn out, $\tilde{\pi}^0$ will be given by the Darcy pressure in $\O_2$. Details of this extension procedure will be given below (see Section~\ref{sec:proof}).

To obtain estimates for $\ue-u^0$, we need the following Lemma from \cite{jami_bc-fluidpor}, which is a variant of the Poincar\'e inequality:
\begin{lem}
\label{lem:poincare}
Let $\phi\in \{ g\in H^1(\Oe_2) | g=0 \text{ on } \p \Oe_2 \backslash \p \O_2 \}$. Then
\begin{gather*}
\nx{\phi}{L^2(\Sigma)} \leq C \sqrt{\eps} \nx{\grad \phi}{L^2(\Oe_2)^2} \qquad \text{as well as} \qquad
\nx{\phi}{L^2(\Oe_2)} \leq C \eps\nx{\grad \phi}{L^2(\Oe_2)^2} .
\end{gather*}
\end{lem}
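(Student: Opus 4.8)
The plan is to reduce both estimates to the fixed reference cell $Y^*$ by rescaling, exploiting that $\phi$ vanishes on the scaled solid boundaries. First I would establish two inequalities on $Y^*$ for the space $V := \{ v \in H^1(Y^*) \mid v = 0 \text{ on } \p Y_S \}$: a Poincaré inequality $\nx{v}{L^2(Y^*)} \leq C_1 \nx{\grad v}{L^2(Y^*)^2}$ and a trace inequality $\nx{v}{L^2(\gamma)} \leq C_2 \nx{\grad v}{L^2(Y^*)^2}$, where $\gamma = [0,1]\times\{1\}$ is the top edge of $Y$, which lies in the fluid part since $\p Y_S \cap \p Y = \emptyset$. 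Both hold because $v$ vanishes on $\p Y_S$, a fixed curve of positive surface measure: this rules out nonzero constants, so the seminorm $\nx{\grad\cdot}{L^2(Y^*)^2}$ is an equivalent norm on $V$, and composing the continuous trace operator $H^1(Y^*)\to L^2(\gamma)$ with this equivalence yields $C_2$. Crucially, $C_1$ and $C_2$ depend only on the geometry of $Y^*$, not on $\eps$.

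Next I would rescale. Writing $\Oe_2$ as the disjoint union of the $\eps$-scaled translated cells that comprise it and substituting $x = \eps y$, a function $\phi$ restricted to one cell corresponds to $v(y) := \phi(\eps y) \in V$ (the condition $\phi = 0$ on $\p \Oe_2 \backslash \p \O_2$ becomes $v = 0$ on $\p Y_S$). The change of variables gives $\nx{\phi}{L^2(\text{cell})} = \eps\nx{v}{L^2(Y^*)}$ and $\nx{\grad\phi}{L^2(\text{cell})^2} = \nx{\grad v}{L^2(Y^*)^2}$, since the area factor $\eps^2$ cancels the $\eps^{-2}$ coming from the two derivatives, while on the one-dimensional top edge the arc length scales by $\eps$, so $\nx{\phi}{L^2(\eps\gamma)} = \sqrt{\eps}\,\nx{v}{L^2(\gamma)}$. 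Combined with $C_1,C_2$ this produces the per-cell bounds $\nx{\phi}{L^2(\text{cell})} \leq C_1\eps\nx{\grad\phi}{L^2(\text{cell})^2}$ and $\nx{\phi}{L^2(\eps\gamma)} \leq C_2\sqrt{\eps}\,\nx{\grad\phi}{L^2(\text{cell})^2}$.

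Finally I would sum. Squaring the first per-cell bound and adding over all cells of $\Oe_2$ gives $\nx{\phi}{L^2(\Oe_2)}^2 \leq C_1^2\eps^2\nx{\grad\phi}{L^2(\Oe_2)^2}^2$, which is the second asserted estimate. For the trace, only the top layer of cells, occupying $(0,L)\times(-\eps,0)$, meets $\Sigma$, and by construction of $\ZBL$ the top edges of these cells tile $\Sigma$ exactly; squaring and summing the trace bound over this layer yields
\[
\nx{\phi}{L^2(\Sigma)}^2 \leq C_2^2\,\eps \sum_{\text{top layer}} \nx{\grad\phi}{L^2(\text{cell})^2}^2 \leq C_2^2\,\eps\,\nx{\grad\phi}{L^2(\Oe_2)^2}^2,
\]
giving the first estimate with constant of order $\sqrt{\eps}$.

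I expect the main obstacle to be purely the bookkeeping at the boundary rather than any analytic difficulty: the summation is clean only if $\O_2$ is tiled exactly by $\eps$-cells, which holds along $\Sigma$ and laterally because $L/\eps\in\N$, but the bottom wall $(0,L)\times\{-K\}$ may produce truncated cells. These are absorbed by applying the same reference-cell Poincaré inequality to the finitely many shapes of truncated cells, whose constants are again uniform in $\eps$; and since the trace estimate uses only the top layer, the bottom truncation does not enter it at all. The one genuine hypothesis carrying the whole argument is the uniform vanishing of $\phi$ on the scaled holes together with $\p Y_S \cap \p Y = \emptyset$, which is precisely what keeps the constants independent of $\eps$.
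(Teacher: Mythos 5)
The paper does not prove this lemma at all --- it is quoted verbatim from \cite{jami_bc-fluidpor} --- so there is nothing internal to compare against; your argument is precisely the standard rescaling proof used in that reference (reference-cell Poincar\'e and trace inequalities for functions vanishing on $\p Y_S$, the scalings $\nnormalx{\phi}{L^2(\text{cell})}=\eps\nnormalx{v}{L^2(Y^*)}$, $\nnormalx{\grad\phi}{L^2(\text{cell})^2}=\nnormalx{\grad v}{L^2(Y^*)^2}$, $\nnormalx{\phi}{L^2(\eps\gamma)}=\sqrt{\eps}\,\nnormalx{v}{L^2(\gamma)}$ in dimension two, then summation over cells), and all of these computations are correct. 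The only imprecise point is your treatment of the bottom wall: since only $L/\eps\in\N$ is assumed, the truncation height of the bottom row varies continuously with $\eps$, so there are not ``finitely many shapes'' of truncated cells, and worse, a thin truncated strip may contain no portion of the hole at all, in which case the reference-cell Poincar\'e inequality does not apply to it in isolation. The standard repair is to pair each truncated bottom cell with the full cell directly above it; the union contains an entire copy of $Y_S$ on whose boundary $\phi$ vanishes, and the Poincar\'e constant for this family of domains is uniform in the truncation parameter. With that adjustment (which does not affect the trace estimate on $\Sigma$, as you note), the proof is complete.
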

Define the space $W^\eps=\{g\in H^1(\Oe)^2\ |\ g=0 \text{ on } \p \Oe \backslash(\{ x_1=0 \} \cup \{x_1=L \}), g \text{ is $L$-periodic in $x_1$} \}$; then by subtraction the weak formulations, one easily sees that $\ue-u^0$ satisfies the variational equation
\begin{equation}
\begin{gathered}
\int_{\Oe} F^{-T}(\grad\ue - \grad u^0):F^{-T}\grad \phi \ud x - \int_{\Oe} (\pe - \pi^0) \div(F^{-1}\phi) \ud x \\ = \int_{\Sigma} F^{-1} F^{-T} \grad u^0 e_2 \cdot \phi \ud \sigma_x 
 - \int_\Sigma F^{-1}[\tilde{\pi}^0]_\Sigma \phi \cdot e_2 \ud \sigma_x + \int_{\Oe_2}(f-F^{-T}  \grad \tilde{\pi}^0)\phi \ud x
\end{gathered} \label{eq:velocweak}
\end{equation}
for all $\phi \in W^\eps$. Here $[q]_\Sigma:=q|_{\O_1}- q|_{\O_2}$ denotes the jump of the function $q$ across the boundary~$\Sigma$.
\begin{prop}
\label{prop:basicestim}
We have the estimate
\begin{gather*}
\sqrt{\eps}\nx{\grad \ue - \grad u^0}{L^2(\Oe)^4} + \frac{1}{\sqrt{\eps}} \nx{\ue}{L^2(\Oe_2)^2} + \nx{\ue}{L^2(\Sigma)} \leq C \eps.
\end{gather*}
\end{prop}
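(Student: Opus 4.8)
The plan is to use $v^\eps := \ue - u^0$ itself as the test function in the variational identity \eqref{eq:velocweak}. The first thing to verify is admissibility: since the boundary condition in \eqref{eq:pi0} forces $u^0 = 0$ on all of $\p\O_1\backslash(\{x_1=0\}\cup\{x_1=L\})$, and in particular $u^0 = 0$ on $\Sigma$, the extension of $u^0$ by zero into $\O_2$ lies in $H^1(\Oe)^2$ and vanishes on $\p\Oe\backslash(\{x_1=0\}\cup\{x_1=L\})$. Hence $u^0\in W^\eps$ and, together with $\ue\in W^\eps$, also $v^\eps\in W^\eps$.

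Next I would insert $\phi = v^\eps$ into \eqref{eq:velocweak}. On the left-hand side the pressure term drops out: by \eqref{eq:trsteps} one has $\div(F^{-1}\ue)=0$ in $\Oe$, while $\div(F^{-1}u^0)=0$ in $\O_1$ and $u^0\equiv 0$ in $\O_2$, so $\div(F^{-1}v^\eps)=0$ a.e.\ in $\Oe$ and $\int_{\Oe}(\pe-\pi^0)\div(F^{-1}v^\eps)\ud x = 0$. The remaining quadratic term is $\int_{\Oe}|F^{-T}\grad v^\eps|^2\ud x$, and since $F$ from \eqref{eq:F} and $F^{-1}$ are uniformly bounded in $z$ with $\det F = 1$, the associated form is uniformly coercive, so $\int_{\Oe}|F^{-T}\grad v^\eps|^2\ud x \ge c\,\nx{\grad v^\eps}{L^2(\Oe)^4}^2$.

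Then I estimate the right-hand side. On $\Sigma$ we have $v^\eps = \ue$ because $u^0=0$ there, so the two surface integrals are bounded by $C\nx{\ue}{L^2(\Sigma)}$: the integrands $F^{-1}F^{-T}\grad u^0 e_2$ and $F^{-1}[\tilde\pi^0]_\Sigma e_2$ are uniformly bounded on $\Sigma$, since $u^0$ (hence $\grad u^0$) is smooth up to $\Sigma$ by elliptic regularity and $\tilde\pi^0$ is smooth up to $\Sigma$ from both sides. In $\Oe_2$ one also has $v^\eps = \ue$, so the volume integral is bounded by $C\nx{\ue}{L^2(\Oe_2)^2}$. The decisive step is to apply Lemma \ref{lem:poincare} to $\ue|_{\Oe_2}$, which is legitimate because $\ue = 0$ on $\p\Oe_2\backslash\p\O_2$ (the grain boundaries): this yields $\nx{\ue}{L^2(\Sigma)}\le C\sqrt\eps\,\nx{\grad\ue}{L^2(\Oe_2)^2}$ and $\nx{\ue}{L^2(\Oe_2)^2}\le C\eps\,\nx{\grad\ue}{L^2(\Oe_2)^2}$, and since $u^0\equiv 0$ in $\O_2$ these right-hand norms equal $\nx{\grad v^\eps}{L^2(\Oe_2)^4}\le\nx{\grad v^\eps}{L^2(\Oe)^4}$.

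Combining, the full right-hand side is bounded by $C\sqrt\eps\,\nx{\grad v^\eps}{L^2(\Oe)^4}$, so after dividing by $\nx{\grad v^\eps}{L^2(\Oe)^4}$ the coercive lower bound gives $\nx{\grad v^\eps}{L^2(\Oe)^4}\le C\sqrt\eps$, which is the first claimed term. Feeding this back into the two Poincaré inequalities immediately produces $\nx{\ue}{L^2(\Oe_2)^2}\le C\eps^{3/2}$ and $\nx{\ue}{L^2(\Sigma)}\le C\eps$, the remaining two terms. The argument is essentially routine energy estimation; the points requiring genuine care are checking that $v^\eps$ is an admissible test function (which hinges entirely on $u^0 = 0$ on $\Sigma$) and matching each boundary and volume contribution with the correct power of $\eps$ supplied by Lemma \ref{lem:poincare}, so that everything is \emph{absorbed} into the coercive term rather than merely bounded.
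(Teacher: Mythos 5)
Your proof is correct and follows essentially the same route as the paper: test \eqref{eq:velocweak} with $\ue-u^0$, use coercivity of the transformed form, bound the interface and volume terms via Lemma \ref{lem:poincare}, and feed the resulting gradient bound back into the Poincar\'e inequalities. The only cosmetic differences are that you divide by $\nx{\grad(\ue-u^0)}{L^2(\Oe)^4}$ where the paper absorbs via Young's inequality with a small $\delta$, and that you make the admissibility of the test function explicit.
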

\begin{proof}
Choose $\phi= \ue - u^0$ in \eqref{eq:velocweak}. Since for $u,v\in H^1(\Oe)$ the form $a(u,v):=\int_{\Oe} F^{-T}\grad u:F^{-T} \grad v \ud x$ is bounded and coercive, we obtain
\begin{align*}
\nx{\grad \ue - \grad u^0}{L^2(\Oe)^4}^2 &\leq C \int_{\Oe} |F^{-T}(\grad \ue - \grad u^0)|^2 \ud x\\
&\leq \biggl|\int_{\Sigma} F^{-1} F^{-T} \grad u^0 e_2 \cdot (\ue - u^0) \ud \sigma_x \biggr|
 +\biggl| \int_\Sigma F^{-1}[\tilde{\pi}^0]_\Sigma (\ue - u^0) \cdot e_2 \ud \sigma_x\biggr| \\
 & \qquad +\biggl| \int_{\Oe_2}(f-F^{-T}  \grad \tilde{\pi}^0)(\ue - u^0) \ud x\biggr|.
\end{align*}
Since $\grad u^0$ and $[\tilde{\pi}^0]_\Sigma$ are smooth, the first two terms on the right hand side are bounded by 
\begin{gather*}
C \nx{\ue - u^0}{L^2(\Sigma)}\leq C\sqrt{\eps} \nx{\grad \ue - \grad u^0}{L^2(\Oe_2)^4} \leq C(\delta) \eps + C \delta \nx{\grad \ue - \grad u^0}{L^2(\Oe_2)^4}^2.
\end{gather*}
The function $(f-F^{-T}  \grad \tilde{\pi}^0)$ is smooth as well by elliptic regularity results for the Darcy pressure, so the last term is bounded by
\begin{gather*}
C \nx{\ue-u^0}{L^2(\Oe_2)^2} \leq C \eps \nx{\grad \ue - \grad u^0}{L^2(\Oe_2)^4} \leq C(\delta) \eps^2 + C  \delta \nx{\grad \ue - \grad u^0}{L^2(\Oe_2)^4}^2.
\end{gather*}
Choosing $\delta$ small enough we arrive at $\nx{\grad \ue - \grad u^0}{L^2(\Oe)^4}^2 \leq C \eps$. Moreover, we have
\begin{gather*}
\frac{1}{\eps}\nx{\ue}{L^2(\Oe_2)^2}^2 \leq C \eps\nx{\grad \ue}{L^2(\Oe_2)^4}^2 = C \eps \nx{\grad \ue- \grad u^0}{L^2(\Oe_2)^4}^2 \leq C\eps^2
\end{gather*}
as well as
\begin{gather*}
\nx{\ue}{L^2(\Sigma)^2}^2 \leq C \eps\nx{\grad \ue- \grad u_0}{L^2(\Oe_2)^4}^2 \leq C \eps^2.
\end{gather*}
This finishes the proof.
\end{proof}

In order to derive estimates for the pressure and $\ue-u^0$ in $\O_1$, we use the theory of very weak solutions for the transformed Stokes equations, see Appendix~\ref{sec:vws}.
\begin{lem}
It holds
\[
\sqrt{\eps} \nx{\pe - \pi^0}{L^2(\O_1)} + \nx{\ue - u^0}{L^2(\O_1)^2} \leq C\eps.
\]
\end{lem}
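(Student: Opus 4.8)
The plan is to treat the pair $(w,q):=(\ue-u^0,\pe-\pi^0)$ as the solution of a transformed Stokes system on the pure fluid domain $\O_1$ and to read off the two bounds from the boundary data alone. Subtracting \eqref{eq:pi0} from \eqref{eq:trsteps}, and using that $\O_1\subset\Oe$ contains no inclusions, one sees that $w$ and $q$ satisfy the \emph{homogeneous} system $-\div(F^{-1}F^{-T}\grad w)+F^{-T}\grad q=0$ and $\div(F^{-1}w)=0$ in $\O_1$, together with $w=0$ on $(0,L)\times\{h\}$, $L$-periodicity in $x_1$, and Dirichlet data $w=\ue$ on $\Sigma$ (since $u^0=0$ there). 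Moreover $w\in H^1(\O_1)^2$, and by Proposition~\ref{prop:basicestim} the data is small: $\nx{\ue}{L^2(\Sigma)}\le C\eps$ and $\nx{\grad w}{L^2(\O_1)^4}\le C\sqrt\eps$.

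For the velocity I would run the duality (very weak solution) argument, since the energy bound $\nx{\grad w}{L^2(\O_1)}\le C\sqrt\eps$ yields only $\nx{w}{L^2(\O_1)}\le C\sqrt\eps$ through Poincaré, which is one power of $\eps$ too weak. Given arbitrary $g\in L^2(\O_1)^2$, I solve the adjoint transformed Stokes problem for $(\varphi,\theta)$ with right-hand side $g$, homogeneous Dirichlet data on $\p\O_1\backslash(\{x_1=0\}\cup\{x_1=L\})$ (in particular $\varphi=0$ on $\Sigma$) and $L$-periodicity; by the regularity theory of Appendix~\ref{sec:vws} this is $H^2\times H^1$-regular with $\nx{\varphi}{H^2(\O_1)^2}+\nx{\theta}{H^1(\O_1)}\le C\nx{g}{L^2(\O_1)}$. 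Pairing $w$ against $g$ and integrating by parts twice, all interior contributions cancel because $w$ solves the homogeneous system, the terms on $(0,L)\times\{h\}$ drop out ($w=\varphi=0$), and the lateral contributions cancel by periodicity. Only the $\Sigma$-traces survive, of the form $\int_\Sigma(F^{-1}F^{-T}\grad\varphi\,e_2)\cdot\ue\ud\sigma_x+\int_\Sigma\theta\,(F^{-1}\ue)\cdot e_2\ud\sigma_x$; the terms carrying $\varphi$ or $q$ on $\Sigma$ vanish since $\varphi=0$ and $\div(F^{-1}w)=0$ there, and the $\grad w$-trace term vanishes since it is multiplied by $\varphi|_\Sigma=0$. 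Using the trace embeddings $H^2(\O_1)\hookrightarrow H^{3/2}(\Sigma)$ and $H^1(\O_1)\hookrightarrow H^{1/2}(\Sigma)\hookrightarrow L^2(\Sigma)$, these are bounded by $C\nx{\ue}{L^2(\Sigma)}\nx{g}{L^2(\O_1)}$, so taking the supremum over $g$ gives $\nx{w}{L^2(\O_1)^2}\le C\nx{\ue}{L^2(\Sigma)}\le C\eps$.

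For the pressure I would avoid duality and instead use the inf-sup (Ne\v{c}as/Bogovski\u{\i}) mechanism together with the gradient bound already in hand. Given $s\in L^2(\O_1)/\R$, choose $\psi\in H^1_0(\O_1)^2$ with $\div(F^{-1}\psi)=s$ and $\nx{\psi}{H^1(\O_1)^2}\le C\nx{s}{L^2(\O_1)}$; testing the weak form of the homogeneous system for $w$ with $\psi$ (which has no boundary contribution, as $\psi=0$ on all of $\p\O_1$) gives $\int_{\O_1}q\,s=\int_{\O_1}F^{-T}\grad w:F^{-T}\grad\psi$, whence $\nx{q}{L^2(\O_1)/\R}\le C\nx{\grad w}{L^2(\O_1)^4}\le C\sqrt\eps$ by Proposition~\ref{prop:basicestim}. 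After fixing the additive constant by the normalization inherited from \eqref{eq:effo1}, this is exactly $\sqrt\eps\,\nx{\pe-\pi^0}{L^2(\O_1)}\le C\eps$, and adding the two bounds proves the Lemma.

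The main obstacle is the duality step for the velocity: one must secure $H^2\times H^1$-regularity of the adjoint transformed Stokes problem \emph{up to} the interface $\Sigma$ and near the corners of the rectangle $\O_1$, where the Dirichlet condition meets the periodicity condition, and then justify the double integration by parts so that precisely the $\Sigma$-traces appear and are controlled in $L^2(\Sigma)$. This regularity-and-trace bookkeeping is exactly what the very weak solution framework of Appendix~\ref{sec:vws} is designed to supply; granting it, the remaining computations are routine.
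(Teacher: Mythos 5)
Your proposal is correct and follows essentially the same route as the paper: the velocity bound is obtained by viewing $\ue-u^0$ as a very weak solution with $G_1=G_2=0$ and boundary datum $\xi=\ue|_{\Sigma_T}$ and invoking the estimate \eqref{eq:vwsestim} (your duality argument is just that estimate unpacked), while the pressure bound follows from the Ne\v{c}as/inf-sup mechanism applied to the $\mathscr{O}(\sqrt{\eps})$ gradient bound of Proposition~\ref{prop:basicestim}. The only cosmetic difference is that the paper cites the very weak solution framework of Appendix~\ref{sec:vws} as a black box rather than re-deriving the adjoint regularity and trace bookkeeping.
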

\begin{proof}
$\ue-u^0$ is a very weak solution with $G_1=G_2=0$, $\xi=\ue|_{\Sigma_T}$, $\Sigma_T= \Sigma \cup \{ x_2=h\}$. Using the estimate \eqref{eq:vwsestim} from the theory of very weak solutions of the transformed Stokes equation, we obtain
\[
\nx{\ue - u^0}{L^2(\O_1)^2} \leq C \nx{\xi}{L^2(\Sigma_T)} = C \nx{\ue}{L^2(\Sigma)} \leq C\eps
\]
by the previous Proposition. Using the Ne\v{c}as inequality yields
\begin{gather*}
\sqrt{\eps} \nx{\pe - \pi^0}{L^2(\O_1)} \leq C \sqrt{\eps}( \nx{\ue - u^0}{L^2(\O_1)^2} + \nx{\grad \ue - \grad u^0}{L^2(\O_1)^4}    ) \leq C \eps.\qedhere
\end{gather*}
\end{proof}

\subsubsection{Continuity of the traces}
\label{sec:mainblfunc}

Looking at the right hand side of equation~\eqref{eq:velocweak} and the proof of Proposition~\ref{prop:basicestim}, one can see that the expression $\int_{\Oe_2}(f-F^{-T}  \grad \tilde{\pi}^0)\phi \ud x$ allows estimates for $\nx{\grad \ue - \grad u^0}{L^2(\Oe)^4}$ on the order of $\eps$, whereas the other two integrals only allow for estimates on the order of $\eps^\frac{1}{2}$. Later we will choose $[\tilde{\pi}^0]_\Sigma$ to allow for better estimates. This leaves us with the expression $\int_{\Sigma} F^{-1} F^{-T} \grad u^0 e_2 \cdot \phi \ud \sigma_x$, which we are going to correct next: Construct the following parameter dependent boundary layer functions $(\bbl, \obl)$ satisfying
\begin{empheq}[box=\smallerbox]{align*}
-\divy(F^{-1}(x)F^{-T}(x) \grady \bbl(x,y)) + F^{-T}(x)\grady\obl(x,y)=0 & \quad \text{ in $\O\times Z$} \\
\divy(F^{-1}(x)\bbl(x,y))=0 & \quad\text{ in $\O\times Z$} \\
[ \bbl(x,y) ]_S= 0 & \quad\text{ on $\O\times S$} \\
[(F^{-1}(x)F^{-T}(x)\grady \bbl(x,y)-F^{-1}(x)\obl(x,y))e_2]_S&\\ =F^{-1}(x)F^{-T}(x)\grad u^0(x)e_2  & \quad\text{ on $\O\times S$} \\
\bbl(x,y)=0 &\quad\text{ on } \O \times {\textstyle\bigcup_{k=1}^\infty \{  \p Y_S - \binom{0}{k} \} }\\
\bbl(x,\cdot), \obl(x,\cdot) \text{ are $1$-periodic in $y_1$}
\end{empheq}
and define $\bble(x)= \eps\bbl(x, \frac{x}{\eps}) $ as well as $\oble(x)=\obl(x,\frac{x}{\eps})$.

By the theory of the boundary layer functions (see Appendix~\ref{sec:boundary}), there exists decay functions $\Cbl: \R^2 \longrightarrow\R^2$, $\Cblo: \R^2 \longrightarrow \R$ such that
\begin{gather*}
\frac{1}{\eps}\nx{\bble-\eps\Cbl(x)H(x_2)}{L^q(\O)^2} + \nx{\oble-\Cblo(x)H(x_2)}{L^q(\O)} + \nx{\grad_y \bble}{L^q(\O)^4} \\
+ \nx{\gradx(\bbl(x,\xe)   - H(x_2) \Cbl(x)  )}{L^q(\O)^4} + \nx{\divx(F^{-1}F^{-T}  \grady \bbl(x,\xe) )}{L^q(\O)^2}
 \leq C \eps^\frac{1}{q}
\end{gather*}
Here $H$ denotes the Heaviside function. 
This correction introduces problems due to the stabilization towards $\Cbl$. Therefore we define the following counterflow:
\begin{empheq}[box=\widebox]{align*}
-\div(F^{-1}(x)F^{-T}(x) \grad u^{\sigma}(x) ) + F^{-T}(x) \grad \pi^{\sigma}(x) =0 &\quad \text{ in $\O_1$} \\
\div(F^{-1}(x)  u^{\sigma}(x)) =0 & \quad\text{ in $\O_1$} \\
 u^{\sigma}(x)= 0 &\quad\text { on $(0,L) \times \{ h\}$}\\
 u^{\sigma}(x)= \Cbl(x) &\quad\text { on $\Sigma$}\\
u^{\sigma}, \pi^{\sigma}\text{ are $L$-periodic in $x_1$}
\end{empheq}
Since $\int_\Sigma \Cbl(x)\cdot F^{-T}(x)e_2 \ud x=0$ (see Lemma \ref{lem:constexpl}), there exists a unique velocity $u^\sigma$ and a pressure $\pi^\sigma$, unique up to constants. Consider $u^0 - ( \bble  - \eps \Cbl H(x_2)) + \eps u^\sigma H(x_2)$ as a first macroscopic approximation. Define the error between this approximation and the microscopic velocity (similarly for the pressure) as
\begin{gather*}
\mathcal{U}^\eps= \ue - u^0 + ( \bble  - \eps \Cbl H(x_2)) + \eps u^\sigma H(x_2), \\
\mathcal{P}^\eps=\pe - \pi^0 H(x_2) - \tilde{\pi}^0 H(-x_2) + ( \oble - \Cblo H(x_2) ) + \eps \pi^\sigma H(x_2).
\end{gather*}
Then we have
\begin{align*}
\int_{\Oe} & F^{-T}\grad \mathcal{U}^\eps : F^{-T} \grad \phi \ud x - \int_{\Oe} \mathcal{P}^\eps \div( F^{-1}\phi) \ud x \\
& =  -\int_\Sigma F^{-1} [\tilde{\pi}^0]_\Sigma e_2 \cdot \phi \ud \sigma  - \int_\Sigma F^{-1} \Cblo e_2 \cdot \phi \ud \sigma+ \int_{\Oe_2} (f-F^{-T}\grad \tilde{\pi}^0) \phi \ud x\\
& \quad + \int_{\Oe} F^{-T} \gradx(\bble - \eps \Cbl H(x_2)) :F^{-T} \grad \phi \ud x - \int_{\Oe} \divx(F^{-1}F^{-T}\grady \bbl(x,\xe)) \phi \ud x \displaybreak[0] \\
& \quad+ \eps \int_\Sigma F^{-1} F^{-T} \grad u^\sigma e_2 \cdot \phi \ud \sigma  + \int_{\Oe} F^{-T} \gradx( \oble - H(x_2) \Cblo )\cdot \phi \ud x + \eps \int_{\Sigma} F^{-1}\pi^\sigma e_2 \cdot \phi \ud \sigma  \displaybreak[0]\\
& \quad + \int_{\{ x_2=h \}}  F^{-1} F^{-T} \gradx(\bble-\eps \Cbl H(x_2))e_2 \cdot \phi  \ud \sigma 
 \\
&\quad   - \int_{\{ x_2=-H\}} F^{-1} F^{-T} \gradx(\bble-\eps \Cbl H(x_2))e_2 \ud \sigma. 
\end{align*}
We will later eliminate the first two terms on the right hand side by requiring that $[\tilde{\pi}^0]_\Sigma = -\Cblo$ (denoted as $\mathrm{(P)}$, see below). Due to the exponential decay of the boundary layer functions, the last two terms can be chosen arbitrarily small (denoted $\mathrm{(BL)}$). The terms on the right hand side can be estimated by the following orders of $\eps$ with respect to $\grad \mathcal{U}^\eps$ (the signs of the terms are kept in order to facilitate the allocation of the terms, and $\backslash\backslash$ denotes a line break):
\begin{align*}
\int_{\Oe} & F^{-T}\grad \mathcal{U}^\eps : F^{-T} \grad \phi \ud x - \int_{\Oe} \mathcal{P}^\eps \div( F^{-1}\phi) \ud x  = - \mathrm{(P)} - \mathrm{(P)} + \mathscr{O}(\eps) \\
& \quad +\mathscr{O}(\eps^\frac{3}{2}) - \mathscr{O}(\eps^\frac{1}{2}) \quad \backslash\backslash \quad
 +\mathscr{O}(\eps^\frac{3}{2}) + \mathscr{O}(\eps^\frac{1}{2}) + \mathscr{O}(\eps^\frac{3}{2}) \quad \backslash\backslash \quad+ \mathrm{(BL)} 
 \quad \backslash\backslash \quad - \mathrm{(BL)}. 
\end{align*}
Now $\grad\mathcal{U}^\eps$ is of order $\eps^\frac{1}{2}$, and $\mathcal{U}^\eps$ is of order $\eps^\frac{3}{2}$ in $\Oe_2$. For the proof of the main result, we need $\grad\mathcal{U}^\eps$ to be on the order of $\eps$, and $\mathcal{U}^\eps$ should be on the order of $\eps^2$. 
We therefore need to correct the velocity again.

\subsubsection{Second correction of the velocity}

Define the following boundary layer function:
\begin{empheq}[box=\smallerbox]{align*}
-\divy(F^{-1}(x)F^{-T}(x) & \grady \gbl(x,y)) + F^{-T}(x)\grady\mubl(x,y) \\ &  =\divx( F^{-1}(x)F^{-T}(x) \grady \bbl(x,y) )  \quad \text{ in $\O\times Z$} \\
\divy(F^{-1}(x)\gbl(x,y))&=0  \quad\text{ in $\O\times Z$} \\
[ \gbl(x,y) ]_S&= 0  \quad\text{ on $\O\times S$} \\
[(F^{-1}(x)(F^{-T}(x)\grady \gbl(x,y)-\mubl(x,y)))e_2]_S& =0 \quad\text{ on $\O\times S$} \\
\gbl(x,y)&=0 \quad\text{ on } \O \times {\textstyle\bigcup_{k=1}^\infty \{  \p Y_S - \binom{0}{k} \} }\\
\gbl(x,\cdot), \mubl(x,\cdot) \text{ are $1$-periodic in $y_1$}
\end{empheq}
The decay functions are denoted by $\Cblg, \Cblmu$. Define the corresponding counterflow
\begin{empheq}[box=\widebox]{align*}
-\div(F^{-1}(x)F^{-T}(x) \grad b(x) ) + F^{-T}(x) \grad q(x) =0 &\quad \text{ in $\O_1$} \\
\div(F^{-1}(x)  b(x)) =0 & \quad\text{ in $\O_1$} \\
 b(x)= 0 &\quad\text { on $(0,L) \times \{ h\}$}\\
 b(x)= \Cblg(x) &\quad\text { on $\Sigma$}\\
b, q\text{ are $L$-periodic in $x_1$}
\end{empheq}

\subsubsection{Correction of the pressure}

For the correction of the pressure, define the following boundary layer function
\begin{empheq}[box=\smallerbox]{align*}
-\divy(F^{-1}(x)F^{-T}(x)  \grady \lbl(x,y)) + F^{-T}(x)&\grady\kbl(x,y) \\   = F^{-T}(x) \gradx( \obl(x,y) - H(x_2) \Cblo(x) ) &\qquad \quad \! \text{ in $\O\times Z$} \\
\divy(F^{-1}(x)\lbl(x,y))&=0  \quad\text{ in $\O\times Z$} \\
[ \lbl(x,y) ]_S&= 0  \quad\text{ on $\O\times S$} \\
[(F^{-1}(x)(F^{-T}(x)\grady \lbl(x,y)-\kbl(x,y)))e_2]_S& =0 \quad\text{ on $\O\times S$} \\
\lbl(x,y)&=0 \quad\text{ on } \O \times {\textstyle\bigcup_{k=1}^\infty \{  \p Y_S - \binom{0}{k} \} }\\
\lbl(x,\cdot), \kbl(x,\cdot) \text{ are $1$-periodic in $y_1$}
\end{empheq}
The decay functions are denoted by $\Cbll, \Cblk$. Define the corresponding counterflow
\begin{empheq}[box=\widebox]{align*}
-\div(F^{-1}(x)F^{-T}(x) \grad d(x) ) + F^{-T}(x) \grad l(x) =0 &\quad \text{ in $\O_1$} \\
\div(F^{-1}(x)  d(x)) =0 & \quad\text{ in $\O_1$} \\
 d(x)= 0 &\quad\text { on $(0,L) \times \{ h\}$}\\
 d(x)= \Cbll(x) &\quad\text { on $\Sigma$}\\
d, l\text{ are $L$-periodic in $x_1$}
\end{empheq}
We extend our approximation to $u^\eps$ and $p^\eps$ by adjusting $\mathcal{U}^\eps$ and $\mathcal{P}^\eps$. Define
\begin{gather*}
\tilde{\mathcal{U}}^\eps = \mathcal{U}^\eps + (\eps^2\gbl(x,\xe) - \eps^2H(x_2)\Cblg) - (\eps^2 \lbl(x,\xe) - \eps^2H(x_2) \Cbll) + \eps^2H(x_2) b - \eps^2H(x_2)d \\
\tilde{\mathcal{P}}^\eps = \mathcal{P}^\eps + (\eps\mubl(x,\xe) - \eps H(x_2)\Cblmu) - (\eps  \kbl(x,\xe) - \eps H(x_2) \Cblk) + \eps^2 H(x_2) q - \eps^2 H(x_2) l 
\end{gather*}
The weak formulation is now given by
\begin{align*}
\int_{\Oe} & F^{-T}\grad \tilde{\mathcal{U}}^\eps:F^{-T}\grad \phi \ud x - \int_{\Oe} \tilde{\mathcal{P}}^\eps \div(F^{-1}\phi) \ud x \\
& =  -\int_\Sigma F^{-1} [\tilde{\pi}^0]_\Sigma e_2 \cdot \phi \ud \sigma  - \int_\Sigma F^{-1} \Cblo e_2 \cdot \phi \ud \sigma+ \int_{\Oe_2} (f-F^{-T}\grad \tilde{\pi}^0) \phi \ud x\\
& \quad + \int_{\Oe} F^{-T} \gradx(\bble - \eps \Cbl H(x_2)) :F^{-T} \grad \phi \ud x + \eps \int_{\Sigma} F^{-1}\pi^\sigma e_2 \cdot \phi \ud \sigma \displaybreak[0] \\
& \quad+ \eps \int_\Sigma F^{-1} F^{-T} \grad u^\sigma e_2 \cdot \phi \ud \sigma \displaybreak[0] \\
& \quad + \int_{\Oe}\eps^2 F^{-T} \gradx( \gbl(x,\xe) - H(x_2) \Cblg  ) : F^{-T} \grad \phi\ud x - \int_{\Oe} \eps \divx( F^{-1}F^{-T} \grady \gbl(x,\xe) ) \phi \ud x \\
& \quad + \int_{\Oe}\eps^2 F^{-T} \gradx( \lbl(x,\xe) - H(x_2) \Cbll  ) : F^{-T} \grad \phi\ud x - \int_{\Oe} \eps \divx( F^{-1}F^{-T} \grady \lbl(x,\xe) ) \phi \ud x \\
& \quad  - \eps^2\int_{\Sigma} F^{-1}F^{-T} \grad( b-d ) e_2 \cdot \phi \ud \sigma + \int_{\Oe} \eps F^{-T} \gradx( \mubl(x,\xe)  - H(x_2) \Cblmu ) \phi \ud x\\
& \quad + \int_{\Sigma} \eps F^{-1} \Cblmu e_2 \cdot \phi \ud \sigma - \int_{\Oe} \eps F^{-T} \gradx( \kbl(x,\xe)  - H(x_2) \Cblk ) \phi \ud x \\
& \quad- \int_{\Sigma} \eps F^{-1} \Cblk e_2 \cdot \phi \ud \sigma + \int_\Sigma \eps^2 F^{-1}(q-l) e_2 \cdot \phi \ud x + \mathrm{(B)}.
\end{align*}
Here $\mathrm{(B)}$ designates terms stemming from the boundary layer functions at the outer boundary of $\O$. Again, due to their exponential decay, they can be chosen arbitrarily small. Similar to the calculations above, we arrive at the following estimates of $\grad \tilde{\mathcal{U}}^\eps$ in terms of $\eps$:
\begin{align*}
\int_{\Oe} &  F^{-T}\grad \tilde{\mathcal{U}}^\eps:F^{-T}\grad \phi \ud x  - \int_{\Oe} \tilde{\mathcal{P}}^\eps \div(F^{-1}\phi) \ud x  =- \mathrm{(P)} - \mathrm{(P)} + \mathscr{O}(\eps) \\
& \quad +\mathscr{O}(\eps^\frac{3}{2}) + \mathscr{O}(\eps^\frac{3}{2})
 \quad \backslash\backslash \quad +\mathscr{O}(\eps^\frac{3}{2}) 
 \quad \backslash\backslash \quad+\mathscr{O}(\eps^\frac{5}{2}) - \mathscr{O}(\eps^\frac{3}{2})   
 \quad \backslash\backslash \quad + \mathscr{O}(\eps^\frac{5}{2})  - \mathscr{O}(\eps^\frac{3}{2}) \\
& \quad - \mathscr{O}(\eps^\frac{5}{2}) + \mathscr{O}(\eps^\frac{3}{2})  
\quad \backslash\backslash \quad + \mathscr{O}(\eps^\frac{3}{2})  - \mathscr{O}(\eps^\frac{3}{2}) 
\quad \backslash\backslash \quad- \mathscr{O}(\eps^\frac{3}{2})  + \mathscr{O}(\eps^\frac{5}{2})  + \mathrm{(BL)}.
\end{align*}
Thus, one arrives at an approximation $\grad \tilde{\mathcal{U}}^\eps \approx \mathscr{O}(\eps)$, $\tilde{\mathcal{U}}^\eps \approx\mathscr{O}(\eps^2) $ in $\Oe_2$ and $\tilde{\mathcal{U}}^\eps |_{\Sigma} \approx \mathscr{O}(\eps^\frac{3}{2})$. In order to prove these estimates rigorously, we need to correct the divergence of $\tilde{\mathcal{U}}^\eps$ next.

\subsection{Correction of the divergence}

\subsubsection{Compressibility due to the boundary layer functions}

A calculation shows that
\begin{gather*}
\div(F^{-1}\tilde{\mathcal{U}}^\eps)(x) = \eps \divx[ F^{-1}( \bbl(x,\xe) - H(x_2)\Cbl(x) )   ] %
 + \eps^2 \divx[ F^{-1}( \gbl(x,\xe) - H(x_2) \Cblg(x) ) ]\\ - \eps^2\divx[ F^{-1}(\lbl(x,\xe) - H(x_2)\Cbll(x))  ]
\end{gather*}
We construct the following convergence of the divergence: Let $Q_\beta$ satisfy
\begin{empheq}[box=\smallerbox]{alignat*=2}
\divy(F^{-1}(x)Q_\beta(x,y))&= \divx\Bigl(F^{-1}(x) \bigl[ \bbl(x,y) -H(y_2)\Cbl(x) \bigr]\Bigr) & \quad&\text{ in $\Omega\times Z$} \\
Q_\beta(x,y)&=0  &&\text{ on } \Omega\times {\textstyle\bigcup_{k=1}^\infty \{   Y_S - \binom{0}{k} \} }\\
[Q_\beta]_S (x,y)& =   \Cqb(x)     &&\text{ on $\Omega\times S$} \\
Q_\beta(x,y) &\text{ is $1$-periodic in $y_1$}
\end{empheq}
where
\[
\Cqb(x)=F(x) \int_{\ZBL} \divx\Bigl( F^{-1}(x)(\bbl(x,y) - H(y_2 )\Cbl(x) ) \Bigr)e_2 \ud y  .
\]
Set $\Qeb(x):= \eps^2 Q_\beta(x,\xe)$. Similarly, we define $Q_\gamma$ and $Q_\lambda$
\begin{empheq}[box=\smallerbox]{alignat*=2}
\divy(F^{-1}(x)Q_\gamma(x,y))&= \divx\Bigl(F^{-1}(x) \bigl[ \gbl(x,y) -H(y_2)\Cblg(x) \bigr]\Bigr) & \quad&\text{ in $\Omega\times Z$} \\
Q_\gamma(x,y)&=0  &&\text{ on } \Omega\times {\textstyle\bigcup_{k=1}^\infty \{  Y_S - \binom{0}{k} \} }\\
[Q_\gamma]_S (x,y)& =   \Cqg(x)     &&\text{ on $\Omega\times S$} \\
Q_\gamma(x,y) &\text{ is $1$-periodic in $y_1$}
\end{empheq}
as well as
\begin{empheq}[box=\smallerbox]{alignat*=2}
\divy(F^{-1}(x)Q_\lambda(x,y))&= \divx\Bigl(F^{-1}(x) \bigl[ \lbl(x,y) -H(y_2)\Cbll(x) \bigr]\Bigr) & \quad&\text{ in $\Omega\times Z$} \\
Q_\lambda(x,y)&=0  &&\text{ on } \Omega\times {\textstyle\bigcup_{k=1}^\infty \{   Y_S - \binom{0}{k} \} }\\
[Q_\lambda]_S (x,y)& =   \Cql(x)     &&\text{ on $\Omega\times S$} \\
Q_\lambda(x,y) &\text{ is $1$-periodic in $y_1$}
\end{empheq}
with decay functions
\begin{align*}
\Cqg(x)&=F(x) \int_{\ZBL} \divx\Bigl( F^{-1}(x)(\gbl(x,y) - H(y_2 )\Cblg(x) ) \Bigr)e_2 \ud y  \\
\Cql(x)&=F(x) \int_{\ZBL} \divx\Bigl( F^{-1}(x)(\lbl(x,y) - H(y_2 )\Cbll(x) ) \Bigr)e_2 \ud y .
\end{align*}
Again, define $\Qeg(x)=\eps^3 Q_\gamma(x,\xe)$ and $\Qel(x)=\eps^3 Q_\lambda(x,\xe)$. The tools and techniques for proving existence and exponential decay for these type of problems can be found in Appendix~\ref{sec:divcorr}.

The jump across the boundary of the functions above is corrected with the help of the following counterflows: For $i\in \{ \beta, \gamma, \lambda \}$ define $(u^{Q_i}, \pi^{Q_i})$ as the solution of
\begin{empheq}[box=\widebox]{align*}
-\div(F^{-1}(x)F^{-T}(x) \grad u^{Q_i}(x) ) + F^{-T}(x) \grad \pi^{Q_i}(x) =0 &\quad \text{ in $\O_1$} \\
\div(F^{-1}(x)  u^{Q_i(x)}) =0 & \quad\text{ in $\O_1$} \\
 u^{Q_i}(x)= 0 &\quad\text { on $(0,L) \times \{ h\}$}\\
 u^{Q_i}(x)= C^Q_i(x) &\quad\text { on $\Sigma$}\\
u^{Q_i}, \pi^{Q_i}\text{ are $L$-periodic in $x_1$}
\end{empheq}
These problems have a solution if $\int_\Sigma C^Q_i \cdot F^{-T}e_2 \ud \sigma =0$. For $C^Q_\beta$ we obtain
\begin{align*}
\int_\Sigma C^Q_\beta \cdot F^{-T}e_2 \ud \sigma &= \int_\Sigma \int_{\ZBL} \divx\Bigl( F^{-1}(x)(\bbl(x,y) - H(y_2 )\Cbl(x) ) \Bigr) \ud y \ud \sigma_x \\
& = \int_\Sigma \divx\Bigl( F^{-1}(x) [\int_{\ZBL} \bbl(x,y) - H(y_2 )\Cbl(x) \ud y] \Bigr) \ud \sigma_x \\
& = \int_{\Sigma} \frac{\p}{ \p x_1} \Bigl( F^{-1}(x) [\int_{\ZBL} \bbl(x,y) - H(y_2 )\Cbl(x) \ud y] \Bigr) \cdot e_1 \\
& \qquad + \frac{\p}{ \p x_2} \Bigl( F^{-1}(x) [\int_{\ZBL} \bbl(x,y) - H(y_2 )\Cbl(x) \ud y] \Bigr) \cdot e_2 \ud \sigma_x.
\end{align*}
Now the first term on the right hand side vanishes due to the periodic boundary funcion $g$, which leads to periodic boundary conditions in $x$ for $F$, $\bbl$ and $\Cbl$. The second term vanishes, since the latter functions do not depend on $x_2$. The argument for the remaining counterflows is analogous.

Now define
\begin{align*}
\mathcal{U}^\eps_* &= \tilde{\mathcal{U}}^\eps - \Qeb - \Qeg + \Qel + \eps^2 u^{Q_\beta} + \eps^3 u^{Q_\gamma} - \eps^3 u^{Q_\lambda} \\
\mathcal{P}^\eps_* &= \tilde{\mathcal{P}}^\eps + \eps^2 \pi^{Q_\beta} + \eps^3 \pi^{Q_\gamma} - \eps^3 \pi^{Q_\lambda} ,
\end{align*}
which leads to 
\[
\div(F^{-1}(x) \mathcal{U}^\eps_*(x)) = - \eps^2 \divx(F^{-1}(x)Q_\beta(x,\xe)) - \eps^3 \divx(F^{-1}(x)Q_\gamma(x,\xe)) + \eps^3 \divx(F^{-1}(x)Q_\lambda(x,\xe))
\]

\subsubsection{Compressibility due to the auxiliary functions}

We correct the divergence even further by defining the functions $\phi^{1,\eps}_i$ and $\phi^{2,\eps}_i$ via
\begin{empheq}[box=\smallerbox]{alignat*=2}
\div(F^{-1}(x)\phi^{1,\eps}_i(x))&= - \eps_i \divx(F^{-1}(x) Q_i(x,\xe)) & \quad&\text{ in $\Omega_1$} \\
\phi^{1,\eps}_i(x)  &=0  &&\text{ on } (0,L) \times \{ h\}\\
\phi^{1,\eps}_i(x)& =   \frac{F(x)}{|\Sigma|} \int_{\O_1} \eps_i \divx( F^{-1}(x) Q_i(x,\xe))e_2 \ud x    &&\text{ on $\Sigma$} \\
\phi^{1,\eps}_i &\text{ is $L$-periodic in $x_1$}
\end{empheq}
and
\begin{empheq}[box=\smallerbox]{alignat*=2}
\div(F^{-1}(x)\phi^{2,\eps}_i(x))&= - \eps_i \divx(F^{-1}(x) Q_i(x,\xe)) & \quad&\text{ in $\Omega_2$} \\
\phi^{2,\eps}_i(x)  &=0  &&\text{ on } (0,L) \times \{ -K\}\\
\phi^{2,\eps}_i(x)& =   -\frac{F(x)}{|\Sigma|} \int_{\O_2} \eps_i \divx( F^{-1}(x) Q_i(x,\xe))e_2  \ud x    &&\text{ on $\Sigma$} \\
\phi^{2,\eps}_i &\text{ is $L$-periodic in $x_1$}
\end{empheq}
Here $Q_i$ is extended by $0$ into $\bigcup_{k=1}^\infty \{  Y_S - \binom{0}{k} \}$, $i\in \{ \beta, \gamma, \lambda \}$, and
\[
\eps_i:=\begin{cases}
\eps^2 & \text{for } i= \beta \\
\eps^3 & \text{for } i= \gamma \\
-\eps^3 & \text{for } i= \lambda 
\end{cases}.
\]
It holds $\nnormalx{\phi^{1,\eps}_i}{H^1(\O_1)} \leq C|\eps_i|$, $\nnormalx{\phi^{2,\eps}_i}{H^1(\O_2)} \leq C|\eps_i|$ and $\phi^{1,\eps}_i = -\phi^{2,\eps}_i$ on $\Sigma$. 


We define the final correction by making use of the following restriction operator (see \cite{do_dipl} for details):
\begin{prop}
\label{prop:restop}
Let $\bar{H}=\{ u\in H^1(\O_2)^2: u=0 \text{ on } (0,L)\times \{ -H\}, u\text{ is $L$ periodic in the first variable} \}$ and set $\bar{H}^\eps=\{  u\in \bar{H}: u=0 \text{ on } \p\Oe_2 \backslash \p \O_2 \}$. 
There exists a linear restriction operator
$\mathcal{R}^\eps: \bar{H} \longrightarrow \bar{H}^\eps$ 
such that for $w\in \bar{H}$:
\begin{gather*}
w=0 \text{ on } \p\Oe\backslash\p\O \quad \Longrightarrow \quad \mathcal{R}^\eps w = w|_{\Oe} \\
\div(F^{-1}w)=0 \text{ in } \O \quad \Longrightarrow \quad \div(F^{-1}\mathcal{R}^\eps w)=0 \text{ in } \Oe
\end{gather*}
and
\[
\nx{\mathcal{R}^\eps w}{L^2(\Oe)^2} + \eps\nx{\grad(\mathcal{R}^\eps w)}{L^2(\Oe)^4} \leq C (  \nx{ w}{L^2(\O)^2} + \eps\nx{\grad w}{L^2(\O)^4}  ).
\]
\end{prop}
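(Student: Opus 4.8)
The plan is to reduce the weighted problem to the classical (unweighted) Tartar--Allaire restriction operator by conjugating with the Jacobian $F$. Concretely, I would first invoke the standard construction on the \emph{exactly} $\eps$-periodically perforated reference domain: there is a linear operator $\mathcal{R}^\eps_0:\bar H\to\bar H^\eps$ satisfying the restriction property $v=0$ on $\p\Oe\backslash\p\O \Rightarrow \mathcal{R}^\eps_0 v = v|_{\Oe}$, the \emph{ordinary} divergence preservation $\div v = 0 \text{ in }\O \Rightarrow \div(\mathcal{R}^\eps_0 v)=0 \text{ in }\Oe$, and the scaling estimate $\nx{\mathcal{R}^\eps_0 v}{L^2(\Oe)^2} + \eps\nx{\grad(\mathcal{R}^\eps_0 v)}{L^2(\Oe)^4} \le C(\nx{v}{L^2(\O)^2}+\eps\nx{\grad v}{L^2(\O)^4})$. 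This is exactly the setting of \cite{do_dipl} (and of Tartar's lemma as used by Allaire \cite{hor_homog} and by J\"ager--Mikeli\'c \cite{jami_bc-fluidpor}), and it applies here because in the reference domain the holes $\bigcup_{k}\{Y_S-\binom{0}{k}\}$ form a genuinely periodic array with $\p Y_S\cap\p Y=\emptyset$.

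I would then \emph{define} the required operator by conjugation,
\[
\mathcal{R}^\eps w := F\,\mathcal{R}^\eps_0\!\left(F^{-1}w\right),
\]
and verify the four assertions directly. Since $g\in\C^\infty(\R)$ is $L$-periodic, the matrices $F$ and $F^{-1}$ are smooth, $L$-periodic in $x_1$, and uniformly bounded together with $\grad F$ (indeed $F$ depends on $x_1$ only and $\det F=1$); hence $v:=F^{-1}w\in\bar H$ whenever $w\in\bar H$, so the composition is well defined. The divergence property is then immediate from $F^{-1}F=\Id$: one has $F^{-1}\mathcal{R}^\eps w = \mathcal{R}^\eps_0(F^{-1}w)$, so that $\div(F^{-1}\mathcal{R}^\eps w)=\div(\mathcal{R}^\eps_0(F^{-1}w))$, which vanishes as soon as $\div(F^{-1}w)=\div v=0$. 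The restriction property follows similarly: if $w=0$ on $\p\Oe\backslash\p\O$ then $v=0$ there, whence $\mathcal{R}^\eps_0 v = v|_{\Oe}$ and therefore $\mathcal{R}^\eps w = F\,v|_{\Oe}=w|_{\Oe}$. Vanishing on $\p\Oe\backslash\p\O$ and $L$-periodicity are inherited from $\mathcal{R}^\eps_0 v$, since $F$ is bounded and $L$-periodic.

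For the norm bound I would apply the product rule $\grad(F\,\mathcal{R}^\eps_0 v)=(\grad F)\,\mathcal{R}^\eps_0 v + F\,\grad(\mathcal{R}^\eps_0 v)$ together with the boundedness of $F$ and $\grad F$; the extra contribution $\eps\nx{(\grad F)\mathcal{R}^\eps_0 v}{L^2(\Oe)^4}\le C\eps\nx{\mathcal{R}^\eps_0 v}{L^2(\Oe)^2}$ is harmless and is absorbed. Combining this with $\nx{F^{-1}w}{L^2(\O)^2}\le C\nx{w}{L^2(\O)^2}$ and $\nx{\grad(F^{-1}w)}{L^2(\O)^4}\le C(\nx{w}{L^2(\O)^2}+\nx{\grad w}{L^2(\O)^4})$, and using $\eps\le C$, yields the claimed estimate with a constant independent of $\eps$.

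The main obstacle is not the conjugation, which is routine, but the construction and the $\eps$-uniformity of the underlying operator $\mathcal{R}^\eps_0$. I would build it cell by cell: on the reference fluid cell $Y^*$ one solves an auxiliary Stokes (Bogovski\u\i-type) problem producing, from a given field, a corrected field that vanishes on $\p Y_S$, agrees with the original near $\p Y$ so that adjacent cells glue, and carries the same total flux $\int_{Y^*}\div(\cdot)$; rescaling to each $\eps$-cell of $\Oe_2$ and summing defines $\mathcal{R}^\eps_0$. The delicate points are the exact cell-wise flux balance that upgrades ``equal integral of the divergence'' to pointwise divergence-freeness on divergence-free inputs, the factor $\eps^{-1}$ generated by rescaling the gradient (precisely compensated by the weight $\eps$ in the estimate), and the treatment of the cells adjacent to $\Sigma$ and to $(0,L)\times\{-K\}$; all of these are handled as in \cite{do_dipl}.
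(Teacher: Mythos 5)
Your proposal is correct, and it takes a different (and more economical) route than the paper, which does not prove the proposition at all but defers the construction to \cite{do_dipl} and to Tartar's appendix in \cite{sanpal}, where the operator is built cell by cell directly for the \emph{transformed} divergence $\div(F^{-1}\cdot)$. Your conjugation $\mathcal{R}^\eps w := F\,\mathcal{R}^\eps_0(F^{-1}w)$ reduces everything to the classical unweighted Tartar--Allaire operator on the genuinely $\eps$-periodic reference perforation, and all four asserted properties follow by the routine checks you give: $F$, $F^{-1}$, $\grad F$ are smooth, bounded and $L$-periodic (they depend on $x_1$ only, via $g'$ and $g''$), so membership in $\bar H$, the restriction property, the divergence implication via $F^{-1}\mathcal{R}^\eps w = \mathcal{R}^\eps_0(F^{-1}w)$, and the scaled estimate (absorbing the zeroth-order term $\eps\nnormalx{(\grad F)\mathcal{R}^\eps_0 v}{L^2}$ for bounded $\eps$) all go through. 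What your route buys is that one never has to redo the cell-wise Bogovski\u{\i}/Stokes corrections for the operator $\div(F^{-1}\cdot)$; what the paper's cited direct construction buys is that the explicit identity \eqref{eq:restrop} is obtained in the transformed variables from the outset. Note, however, that your operator satisfies \eqref{eq:restrop} as well, since $\div(F^{-1}\mathcal{R}^\eps\phi)=\div(\mathcal{R}^\eps_0(F^{-1}\phi))$ and Tartar's explicit formula applied to $F^{-1}\phi$ gives exactly the right-hand side of \eqref{eq:restrop}; this is worth stating explicitly, because the proposition is used later only through that identity. The only remaining (standard) care point, which you correctly flag, is that the classical $\mathcal{R}^\eps_0$ must be taken in the variant adapted to the boundary conditions here --- $L$-periodicity in $x_1$, zero trace only on the bottom, and the row of cells adjacent to $\Sigma$ --- together with its $\eps$-uniform estimate; this is the content of the cited references and is not reproved in the paper either.
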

Using an explicit characterization of $\mathcal{R}^\eps$ (see the Appendix of \cite{sanpal} by Luc Tartar), one arrives at the following identity 
\begin{equation}
\label{eq:restrop}
\div(F^{-1} \mathcal{R}^\eps \phi) = \div (F^{-1}\phi) + \sum_{\substack{k\in \{ h\in \Z^2;\\ \eps(Y_S + k) \subset \O_2 \}}}  \frac{\chi_{\eps(Y^*+k)}}{\eps|Y^*|}  \int_{\eps(Y_S+k)} \div(F^{-1}\phi) \ud x.
\end{equation}

Now define
\begin{align*}
\mathcal{U}^\eps_0&= \tilde{U}^\eps_* - H(x_2)\sum_{i\in \{\beta, \gamma, \lambda\}} \phi^{1,\eps}_i  - H(-x_2) \sum_{i\in \{\beta, \gamma, \lambda\}} \mathcal{R}^\eps\phi^{2,\eps}_i \\
\mathcal{P}^\eps_0& = \tilde{ \mathcal{P}}^\eps_*.
\end{align*}
Since $\divx(F^{-1}(x) Q_i(x,\xe)) =0$ in $\eps(Y_S + k)$, $k\in \Z^2$, equation \eqref{eq:restrop} leads to $\div(F^{-1}\mathcal{U}^\eps_0)=0$, and we can use $\mathcal{U}^\eps_0$ as a test function. 
%
%
We look at the weak formulation
\[
\int_{\Oe} F^{-T}\grad {\mathcal{U}}^\eps_0:F^{-T}\grad \phi \ud x - \int_{\Oe} {\mathcal{P}}^\eps_0 \div(F^{-1}\phi) \ud x.
\]
By calculating the right hand side of this equality and inserting $\phi=\mathcal{U}^\eps_0$, we obtain (similar to the proof of Proposition~\ref{prop:basicestim}) $\nnormalx{\grad \mathcal{U}^\eps_0}{L^2(\Oe)^4}\leq C\eps$, $\nnormalx{\mathcal{U}^\eps_0}{L^2(\Oe_2)^2} \leq C\eps^2$ and $\nnormalx{\mathcal{U}^\eps_0}{L^2(\Sigma)^2} \leq C \eps^\frac{3}{2}$. Similarly, one arrives at $\nnormalx{\mathcal{P}^\eps_0}{H^{-1}(\Oe)}\leq C \eps$.

\begin{cor}
Let $\tilde{\pi}^0$ be a smooth function satisfying $[\tilde{\pi}^0]_\Sigma=-\Cblo$. Then the estimates
\begin{align*}
\nx{\mathcal{U}^\eps}{H^\frac{1}{2}(\O_1)^2} + \eps^\frac{1}{2}\nx{\mathcal{P}^\eps}{L^2(\O_1)} \leq C \eps^\frac{3}{2}
\end{align*}
hold.
\end{cor}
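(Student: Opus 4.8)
The plan is to derive the bound for the first-order quantities $\mathcal{U}^\eps,\mathcal{P}^\eps$ from the sharp estimates already available for the fully corrected pair $\mathcal{U}^\eps_0,\mathcal{P}^\eps_0$, namely $\nx{\grad\mathcal{U}^\eps_0}{L^2(\Oe)^4}\le C\eps$, $\nx{\mathcal{U}^\eps_0}{L^2(\Sigma)^2}\le C\eps^{3/2}$ and $\nx{\mathcal{P}^\eps_0}{H^{-1}(\Oe)}\le C\eps$. Since the $\eps$-periodic structure lives only in $\O_2$, one has $\Oe\cap\O_1=\O_1$, so each of these bounds restricts to $\O_1$ unchanged. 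The route is therefore: estimate the explicit difference between the starred and unstarred quantities, and bound $\mathcal{U}^\eps_0$ in the stronger $H^{1/2}(\O_1)$-norm.

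First I would record that, by their definitions, the differences $\mathcal{U}^\eps-\mathcal{U}^\eps_0$ and $\mathcal{P}^\eps-\mathcal{P}^\eps_0$ are explicit finite sums of the higher-order correctors: the second boundary layers $\gbl,\lbl$ together with their decay functions and the counterflows $b,d$; the divergence correctors $\Qeb,\Qeg,\Qel$ with their counterflows; and the auxiliary functions $\phi^{1,\eps}_i$. Using the boundary-layer $L^q$-estimates of Section~\ref{sec:mainblfunc} (each giving a factor $\eps^{1/q}$), the bound $\nnormalx{\phi^{1,\eps}_i}{H^1(\O_1)}\le C|\eps_i|$, and the smoothness of the fixed counterflows and decay functions, I expect every summand to be of order $\eps^2$ in $H^{1/2}(\O_1)$ for the velocity and of order $\eps$ in $L^2(\O_1)$ for the pressure. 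For instance $\eps^2\gbl(x,\xe)$ is of size $\eps^{5/2}$ in $L^2(\O_1)$ and $\eps^{3/2}$ in $H^1(\O_1)$, hence $\eps^2$ in the interpolation space $H^{1/2}(\O_1)$, while $\eps H(x_2)\Cblmu$ is of size $\eps$ in $L^2(\O_1)$. This would give $\nx{\mathcal{U}^\eps-\mathcal{U}^\eps_0}{H^{1/2}(\O_1)^2}\le C\eps^2$ and $\nx{\mathcal{P}^\eps-\mathcal{P}^\eps_0}{L^2(\O_1)}\le C\eps$, reducing the claim to bounding $\mathcal{U}^\eps_0$ in $H^{1/2}(\O_1)$ and $\mathcal{P}^\eps_0$ in $L^2(\O_1)$.

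For the latter I would view $(\mathcal{U}^\eps_0,\mathcal{P}^\eps_0)$ as a very weak solution of the transformed Stokes system on $\O_1$ — recall $\div(F^{-1}\mathcal{U}^\eps_0)=0$ by the restriction-operator identity \eqref{eq:restrop} — with boundary trace $\xi=\mathcal{U}^\eps_0|_{\Sigma_T}$ (exponentially small on $\{x_2=h\}$ and of order $\eps^{3/2}$ in $L^2(\Sigma)$) and right-hand side equal to the residual of the corrected weak formulation. The decisive observation is that in $\O_1$ this residual is already of order $\eps^{3/2}$: the single $\mathscr{O}(\eps)$ contribution $\int_{\Oe_2}(f-F^{-T}\grad\tilde{\pi}^0)\phi$ is supported in the porous part $\Oe_2$ and hence enters $\O_1$ only through the $\Sigma$-trace, whereas every genuinely interior term carries the extra power of $\eps$ visible in the order bookkeeping preceding the corollary. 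Invoking the very weak solution theory of Appendix~\ref{sec:vws} (estimate \eqref{eq:vwsestim}), which gains half a derivative relative to the boundary data and so maps $L^2(\Sigma)$ data into an $H^{1/2}(\O_1)$ solution of the same order, then yields $\nx{\mathcal{U}^\eps_0}{H^{1/2}(\O_1)^2}\le C\eps^{3/2}$. For the pressure I would argue exactly as in the lemma following Proposition~\ref{prop:basicestim}, using the Ne\v{c}as inequality to bound $\nx{\mathcal{P}^\eps_0}{L^2(\O_1)}$ by $\nx{\grad\mathcal{U}^\eps_0}{L^2(\O_1)^4}$ plus the interior residual, both of order $\eps$; hence $\eps^{1/2}\nx{\mathcal{P}^\eps_0}{L^2(\O_1)}\le C\eps^{3/2}$. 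Adding the corrector estimates of the previous step proves the corollary.

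I expect the genuine obstacle to be the half-derivative regularity step, not the bookkeeping. A naive interpolation is too weak: combining $\nx{\mathcal{U}^\eps_0}{H^1(\O_1)^2}\le C\eps$ with the best available $L^2(\O_1)$-bound (of order $\eps^{3/2}$) only produces $\eps^{5/4}$, so the gain to $\eps^{3/2}$ must come from the transposition estimate sending $L^2$ boundary data directly into $H^{1/2}$ of the domain. The delicate point underlying this is that the uncorrected force $\divx(F^{-1}F^{-T}\grady\bbl(x,\xe))$, which is only of order $\eps^{1/2}$ and is exactly what forbids running the argument on the unstarred $\mathcal{U}^\eps$ alone, has been absorbed into the $\gbl$-corrector; confirming that the interior residual of $\mathcal{U}^\eps_0$ therefore really sits at order $\eps^{3/2}$ is what makes the passage through $\mathcal{U}^\eps_0$, rather than through $\mathcal{U}^\eps$ directly, indispensable.
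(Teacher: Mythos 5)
Your route is essentially the paper's: the paper's entire proof of this corollary is the single remark that ``the result follows as in \cite{mik_effpress} by transforming the equations in $\O_1$ back to a Stokes system in $\tilde{\O}_1$,'' and the content of that citation is exactly the transposition (very weak solution) argument you describe -- reduce to $\mathcal{U}^\eps_0$, feed the $\mathscr{O}(\eps^{3/2})$ trace on $\Sigma$ into the very weak solution theory, and recover the pressure via Ne\v{c}as. One caveat: the estimate \eqref{eq:vwsestim} actually proven in Appendix~\ref{sec:vws} controls only $\nx{B}{L^2(\O_1)^2}$, not the $H^{1/2}(\O_1)$ norm, so the half-derivative gain you correctly identify as the crux is not supplied by the paper's own appendix; it is precisely the stronger transposition estimate for the Stokes system with $L^2$ boundary data that the citation to \cite{mik_effpress} (after transforming back to $\tilde{\O}_1$) is meant to provide. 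With that attribution corrected, your bookkeeping of the corrector differences and of the interior residual matches the paper's order count.
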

\begin{proof}
The result follows as in \cite{mik_effpress} by transforming the equations in $\O_1$ back to a Stokes system in $\tilde{\O}_1$.
\end{proof}

\subsection{Some Estimates for the Corrected Velocity and Pressure}

Note that $\mathcal{U}^\eps$ differs from $\mathcal{U}^\eps_0$ by $\mathscr{O}(\eps^2)$ in the $L^2$-norm, and by $\mathscr{O}(\eps)$ in the $H^1$-norm. This yields the estimates
\begin{empheq}[box=\widefbox]{gather*}
\eps \nx{\grad \mathcal{P}^\eps}{H^{-1}(\Oe)^2} + \eps\nx{\grad\mathcal{U}^\eps}{L^2(\Oe)^4} + \nx{\mathcal{U}^\eps}{L^2(\Oe_2)^2} + \eps^\frac{1}{2}\nx{\mathcal{U}^\eps}{L^2(\Sigma)^2} \leq C \eps^2
\end{empheq}
as well as
\begin{empheq}[box=\widefbox]{gather*}
\eps \nx{\mathcal{P}^\eps}{L^2(\O_1)} + \nx{\mathcal{U}^\eps}{H^\frac{1}{2}(\O_1)^2} \leq C \eps^\frac{3}{2}.
\end{empheq}
We now collect some more estimates. They are analogous to what is known in the literature \cite{mik_effpress}, and the proofs follow along the same lines.
\begin{lem}
\[\nx{\ue-\ueff}{L^2(\O_1)^2} \leq C \eps^\frac{3}{2}\]
\end{lem}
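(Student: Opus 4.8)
The plan is to identify the effective velocity explicitly with functions that have already been constructed, and then to read off the desired bound from the estimates established above. First I would compare the system \eqref{eq:effo1} defining $(\ueff,p^\mathrm{eff})$ with the system \eqref{eq:pi0} for $(u^0,\pi^0)$ and with the counterflow system defining $(u^\sigma,\pi^\sigma)$. The combination $u^0-\eps u^\sigma$ satisfies the transformed Stokes equations in $\O_1$ with right-hand side $f$, is divergence-free with respect to $F^{-1}$, vanishes on $(0,L)\times\{h\}$, and on $\Sigma$ equals $0-\eps\Cbl=-\eps\Cbl$, which is precisely the boundary datum \eqref{seq:bjcond}. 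Since these are exactly the equations and boundary data defining $\ueff$, uniqueness of the velocity for the transformed Stokes problem (see Appendix~\ref{sec:trst}) yields
\[
\ueff = u^0 - \eps u^\sigma \qquad \text{in } \O_1,
\]
the pressure normalisation in \eqref{eq:effo1} being irrelevant for the velocity field.

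Next I would restrict the definition of $\mathcal{U}^\eps$ to $\O_1$, where $H(x_2)=1$, so that $\mathcal{U}^\eps = \ue - u^0 + (\bble-\eps\Cbl) + \eps u^\sigma$. Combining this with the identity just derived, the $u^0$- and $\eps u^\sigma$-contributions cancel and we obtain
\[
\ue - \ueff = \mathcal{U}^\eps - (\bble - \eps\Cbl) \qquad \text{in } \O_1.
\]
The estimate then follows from the triangle inequality applied to the two terms on the right-hand side, each of which is already controlled.

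For the first term, the Corollary above gives $\nx{\mathcal{U}^\eps}{H^\frac{1}{2}(\O_1)^2}\leq C\eps^\frac{3}{2}$, hence $\nx{\mathcal{U}^\eps}{L^2(\O_1)^2}\leq C\eps^\frac{3}{2}$ by the embedding $H^\frac{1}{2}\hookrightarrow L^2$. For the second term, I would invoke the boundary layer decay estimate from Section~\ref{sec:mainblfunc} with $q=2$, whose first summand reads $\frac{1}{\eps}\nx{\bble-\eps\Cbl H(x_2)}{L^2(\O)^2}\leq C\eps^\frac{1}{2}$; restricting to $\O_1$ gives $\nx{\bble-\eps\Cbl}{L^2(\O_1)^2}\leq C\eps^\frac{3}{2}$. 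Adding the two bounds completes the proof.

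Since the genuinely analytic work—the coercivity argument of Proposition~\ref{prop:basicestim}, the construction and exponential decay of the boundary layer functions, and the successive divergence corrections culminating in the $\mathcal{U}^\eps$ estimate—has all been carried out beforehand, there is no real obstacle left in this Lemma. The only care required is organisational: correctly recognising the effective velocity as $u^0-\eps u^\sigma$ so that the cancellation occurs, and matching each surviving summand of $\ue-\ueff$ to the appropriate previously established estimate.
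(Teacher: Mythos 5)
Your proposal is correct. The preparatory steps coincide with the paper's: identifying $\ueff=u^0-\eps u^\sigma$ by uniqueness for the transformed Stokes problem and hence $\ue-\ueff=\mathcal{U}^\eps-(\bble-\eps\Cbl)$ on $\O_1$ is exactly the first line of the paper's proof. The final estimation, however, goes a genuinely different way. The paper treats $\ue-\ueff$ as a very weak solution of the homogeneous transformed Stokes system in $\O_1$ and applies the trace estimate \eqref{eq:vwsestim}, bounding $\nx{\ue-\ueff}{L^2(\O_1)^2}$ by $\nx{\mathcal{U}^\eps-(\bble-\eps H(x_2)\Cbl)}{L^2(\Sigma)^2}$. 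You instead estimate the two summands directly in the interior: $\nx{\mathcal{U}^\eps}{L^2(\O_1)^2}\leq C\nx{\mathcal{U}^\eps}{H^{1/2}(\O_1)^2}\leq C\eps^{3/2}$ from the Corollary, and $\nx{\bble-\eps\Cbl}{L^2(\O_1)^2}\leq C\eps^{3/2}$ from the $q=2$ boundary layer decay estimate, the extra $\eps^{1/2}$ coming from the $\mathscr{O}(\eps)$-thin effective support of the layer. Both routes ultimately rest on the same machinery (the Corollary is itself proved via the transformed Stokes theory in $\O_1$), but your version has a concrete advantage: it never needs the oscillating trace $\eps(\bbl(x,\xe)-\Cbl(x))$ in $L^2(\Sigma)$, where that quantity is only of order $\eps$ (it is $\eps$ times a mean-zero, order-one periodic oscillation), whereas the paper's one-line chain asserts an $\eps^{3/2}$ bound for precisely that trace; making the paper's argument airtight would require exploiting the oscillation of the boundary datum inside the very weak solution estimate, which is exactly the cancellation your interior $L^2$ bound captures for free. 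So your argument is, if anything, the cleaner way to reach $C\eps^{3/2}$.
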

\begin{proof}
Note that $\ue- \ueff = \mathcal{U}^\eps-(\bble- \eps H(x_2) \Cbl)$. By the theory of very weak solutions we have
\[
\nx{\ue-\ueff}{L^2(\O_1)^2} \leq C \nx{\mathcal{U}^\eps-(\bble- \eps H(x_2) \Cbl)}{L^2(\Sigma)^2} \leq C \eps^\frac{3}{2}. \qedhere 
\]
\end{proof}

\begin{lem}
\[ \sqrt{\eps}\nx{\grad(\ue-\ueff  )}{L^2(\O_1)^2} +\nx{\grad(\ue-\ueff  )}{L^1(\O_1)^2} \leq C\eps, \]
\end{lem}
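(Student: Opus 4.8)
The plan is to carry over the decomposition already used in the previous lemma: in $\O_1$ one has $\ue-\ueff = \mathcal{U}^\eps - (\bble - \eps H(x_2)\Cbl)$. Since $x_2>0$ throughout $\O_1$, the Heaviside factor equals $1$ there, and the chain rule $\grad\bble = \eps\gradx\bbl(x,\xe) + \grady\bbl(x,\xe)$ gives, in $\O_1$,
\[
\grad(\ue-\ueff) = \grad\mathcal{U}^\eps - \eps\gradx\bigl(\bbl(x,\xe) - H(x_2)\Cbl(x)\bigr) - \grady\bbl(x,\xe).
\]
It therefore suffices to bound these three pieces separately, in the weighted $L^2$-norm and in the $L^1$-norm over $\O_1$, and to add the results.

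First I would treat $\grad\mathcal{U}^\eps$: since $\O_1$ contains no holes, the boxed estimate $\eps\nx{\grad\mathcal{U}^\eps}{L^2(\Oe)^4}\leq C\eps^2$ yields $\nx{\grad\mathcal{U}^\eps}{L^2(\O_1)^4}\leq C\eps$, whence $\sqrt{\eps}\nx{\grad\mathcal{U}^\eps}{L^2(\O_1)^4}\leq C\eps^\frac{3}{2}$, while Cauchy--Schwarz on the bounded set $\O_1$ gives $\nx{\grad\mathcal{U}^\eps}{L^1(\O_1)^4}\leq C\nx{\grad\mathcal{U}^\eps}{L^2(\O_1)^4}\leq C\eps$. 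For the two boundary-layer pieces I invoke the decay estimates of Section~\ref{sec:boundary} with the exponent $q$ chosen to match the target norm. With $q=2$ both $\nx{\gradx(\bbl(x,\xe)-H(x_2)\Cbl)}{L^2(\O)^4}$ and $\nx{\grady\bbl(x,\xe)}{L^2(\O)^4}$ are $\mathscr{O}(\eps^\frac{1}{2})$, so their combined contribution to $\sqrt{\eps}\nx{\grad(\ue-\ueff)}{L^2(\O_1)^4}$ is at most $\sqrt{\eps}\,(\eps\cdot C\eps^\frac{1}{2}+C\eps^\frac{1}{2})\leq C\eps$. With $q=1$ the same two quantities are $\mathscr{O}(\eps)$, so their contribution to $\nx{\grad(\ue-\ueff)}{L^1(\O_1)^4}$ is bounded by $\eps\cdot C\eps + C\eps\leq C\eps$. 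Summing the three pieces yields the asserted bound.

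The computation is essentially mechanical, and all of the higher correction layers ($\gbl$, $\lbl$, the counterflows and the divergence-correcting functions $Q_i$) are already absorbed into the boxed estimate for $\mathcal{U}^\eps$, so nothing beyond the leading boundary-layer decay has to be unpacked. The one point to watch is the scaling of the fast gradient $\grady\bbl(x,\xe)$: it gains no power of $\eps$ from the $\eps$-prefactor and is only $\mathscr{O}(\eps^\frac{1}{q})$ in $L^q$. This is precisely why the statement carries the weight $\sqrt{\eps}$ on the $L^2$-gradient --- the term $\grady\bbl(x,\xe)$ forces $\nx{\grad(\ue-\ueff)}{L^2(\O_1)^4}$ to be merely $\mathscr{O}(\eps^\frac{1}{2})$, so only the weighted norm reaches $\mathscr{O}(\eps)$, whereas in $L^1$ the same term already lands at $\mathscr{O}(\eps)$ without any weight.
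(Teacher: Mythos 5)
Your proposal is correct and follows essentially the same route as the paper: the same decomposition $\ue-\ueff = \mathcal{U}^\eps - (\bble - \eps H(x_2)\Cbl)$, the boxed bound $\nx{\grad\mathcal{U}^\eps}{L^2(\Oe)^4}\leq C\eps$, and the boundary-layer decay estimates with $q=2$ and $q=1$ respectively. The paper simply quotes $\nx{\grad(\bble-\eps H(x_2)\Cbl)}{L^2(\O_1)}\leq C\eps^{\frac{1}{2}}$ and $\nx{\grad(\bble-\eps H(x_2)\Cbl)}{L^1(\O_1)}\leq C\eps$ without unpacking the chain rule; your explicit separation of the slow gradient $\eps\gradx(\bbl(x,\xe)-H(x_2)\Cbl)$ from the fast gradient $\grady\bbl(x,\xe)$, and your remark that the latter is what forces the $\sqrt{\eps}$ weight on the $L^2$ part, is exactly the content hidden in those two citations.
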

\begin{proof}
First note that
\[
\nx{\grad(\ue-\ueff  )}{L^2(\O_1)^2}  \leq \nx{\grad \mathcal{U}^\eps}{L^2(\O_1)^2} + \nx{\grad(   \bble- \eps H(x_2) \Cbl)}{L^2(\O_1)^2} \leq C\eps + C \eps^\frac{1}{2} \leq C \eps^\frac{1}{2} .
\]
The second assertion follows via
\[
\nx{\grad(\ue-\ueff  )}{L^1(\O_1)^2}  \leq \nx{\grad \mathcal{U}^\eps}{L^2(\O_1)^2} + \nx{\grad(   \bble- \eps H(x_2) \Cbl)}{L^1(\O_1)^2} \leq  C \eps. \qedhere
\]
\end{proof}

\begin{lem}
\[\nx{\ue-\ueff}{H^\frac{1}{2}(\O_1)^2} \leq C \eps\]
\end{lem}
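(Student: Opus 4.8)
The plan is to obtain this fractional-order estimate by interpolation between the two preceding lemmas. The space $H^\frac{1}{2}(\O_1)^2$ is the interpolation space halfway between $L^2(\O_1)^2 = H^0$ and $H^1(\O_1)^2$, so on the bounded Lipschitz domain $\O_1$ the standard interpolation inequality
\[
\nx{v}{H^\frac{1}{2}(\O_1)^2} \leq C\, \nx{v}{L^2(\O_1)^2}^\frac{1}{2}\, \nx{v}{H^1(\O_1)^2}^\frac{1}{2}
\]
holds for every $v\in H^1(\O_1)^2$. I would apply this with $v = \ue - \ueff$.

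First I would record the two inputs needed. From the earlier lemma we have directly $\nx{\ue-\ueff}{L^2(\O_1)^2} \leq C\eps^\frac{3}{2}$. For the full first-order norm, I combine this same $L^2$ bound with the gradient estimate $\nx{\grad(\ue-\ueff)}{L^2(\O_1)^2} \leq C\eps^\frac{1}{2}$, which follows from the second lemma by dividing its $\sqrt{\eps}$-weighted bound by $\sqrt{\eps}$. Since the gradient contribution dominates the $L^2$ contribution, this gives $\nx{\ue-\ueff}{H^1(\O_1)^2} \leq C\eps^\frac{1}{2}$. Substituting both bounds into the interpolation inequality then yields
\[
\nx{\ue-\ueff}{H^\frac{1}{2}(\O_1)^2} \leq C\, (\eps^\frac{3}{2})^\frac{1}{2} (\eps^\frac{1}{2})^\frac{1}{2} = C\, \eps^\frac{3}{4}\eps^\frac{1}{4} = C\eps,
\]
which is the claim.

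The argument is essentially routine once the interpolation inequality is in place; the only point requiring minor care is justifying that inequality on $\O_1$, for which it suffices that $\O_1$ is a bounded Lipschitz domain so that $H^\frac{1}{2}(\O_1) = [L^2(\O_1), H^1(\O_1)]_\frac{1}{2}$ with the usual norm estimate. Because this step runs parallel to the corresponding estimate in \cite{mik_effpress}, I expect no genuine obstacle beyond keeping track of the powers of $\eps$.
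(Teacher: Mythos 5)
Your proposal is correct and follows essentially the same route as the paper: the paper's proof is exactly the interpolation inequality $\nx{\ue-\ueff}{H^\frac{1}{2}(\O_1)^2} \leq C \nx{\ue-\ueff}{L^2(\O_1)^2}^\frac{1}{2}\nx{\grad(\ue-\ueff)}{L^2(\O_1)^2}^\frac{1}{2}$ combined with the bounds $C\eps^\frac{3}{2}$ and $C\eps^\frac{1}{2}$ from the two preceding lemmas. Your additional remark that the gradient term dominates in the $H^1$ norm is a harmless elaboration of the same computation.
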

\begin{proof}
By interpolation we obtain
\begin{gather*}
\nx{\ue-\ueff}{H^\frac{1}{2}(\O_1)^2} \leq C \nx{\ue-\ueff}{L^2(\O_1)^2}^\frac{1}{2}\nx{\grad(\ue-\ueff)}{L^2(\O_1)^2}^\frac{1}{2} \leq C \eps^{\frac{3}{2}\cdot \frac{1}{2}}\eps^{\frac{1}{2}\cdot \frac{1}{2}} \leq C \eps. \qedhere
\end{gather*}
\end{proof}

\begin{lem}
\[\nx{|x_2|^\frac{1}{2}\grad(\ue-\ueff)}{L^2(\O_1)^2} + \nx{|x_2|^\frac{1}{2}\grad(\pe-\peff)}{L^2(\O_1)} \leq C \eps \]
\end{lem}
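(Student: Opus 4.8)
The plan is to exploit the fact that the weight $|x_2|^{\frac12}$ vanishes precisely on the interface $\Sigma=\{x_2=0\}$, where the boundary-layer part of the error is concentrated on the scale $\eps$; this will buy one extra half-power of $\eps$ compared with the unweighted estimates of the previous lemmas. I start from the decomposition already used above, namely $\ue-\ueff=\mathcal{U}^\eps-(\bble-\eps H(x_2)\Cbl)$ in $\O_1$, together with the analogous pressure decomposition $\pe-\peff=\mathcal{P}^\eps-(\oble-\Cblo H(x_2))-\eps\pi^\sigma$ in $\O_1$ (here $\peff=\pi^0$, so the remaining terms in the definition of $\mathcal{P}^\eps$ drop out on $\O_1$). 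Differentiating the velocity identity and applying the two-scale chain rule, I split the boundary-layer gradient into a slow-gradient part and a fast-gradient part, $\grad(\bble-\eps H(x_2)\Cbl)=\eps\,\gradx(\bbl(x,\xe)-H(x_2)\Cbl(x))+(\grady\bbl)(x,\xe)$, the distributional jump on $\Sigma$ coming from differentiating $H(x_2)\Cbl$ being lower-dimensional and invisible to the bulk $L^2(\O_1)$ norm.

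The three harmless contributions are controlled simply by bounding the weight: since $|x_2|^{\frac12}\le\sqrt{h}$ on $\O_1$, the corrector obeys $\nx{|x_2|^{1/2}\grad\mathcal{U}^\eps}{L^2(\O_1)^4}\le\sqrt{h}\,\nx{\grad\mathcal{U}^\eps}{L^2(\Oe)^4}\le C\eps$ by the boxed estimate, while the slow-gradient boundary-layer part gives $\nx{|x_2|^{1/2}\,\eps\,\gradx(\bbl(x,\xe)-H(x_2)\Cbl)}{L^2(\O_1)^4}\le\sqrt h\,\eps\cdot C\eps^{1/2}=C\eps^{3/2}$ by the $q=2$ boundary-layer estimate. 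The decisive term is the fast gradient $(\grady\bbl)(x,\xe)$, whose unweighted $L^2$-norm is only of order $\eps^{1/2}$. Here I use the exponential stabilization of Appendix~\ref{sec:boundary}: because $\bbl(x,y)\to\Cbl(x)$ exponentially as $y_2\to+\infty$, one has $|(\grady\bbl)(x,\xe)|\le Ce^{-\gamma x_2/\eps}$, so that
\[
\nx{|x_2|^{1/2}(\grady\bbl)(x,\xe)}{L^2(\O_1)^4}^2\le C\int_0^L\!\!\int_0^h x_2\,e^{-2\gamma x_2/\eps}\ud x_2\ud x_1\le C\eps^2\int_0^\infty s\,e^{-2\gamma s}\ud s\le C\eps^2,
\]
which supplies the missing factor and yields a bound of order $\eps$. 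Collecting the three parts proves the weighted velocity-gradient estimate.

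The pressure term is treated identically. The corrector satisfies $\nx{|x_2|^{1/2}\mathcal{P}^\eps}{L^2(\O_1)}\le\sqrt h\,\nx{\mathcal{P}^\eps}{L^2(\O_1)}\le C\eps$ by the Corollary, the smooth counterflow term $\eps\pi^\sigma$ is trivially of order $\eps$, and the pressure boundary layer $\oble-\Cblo H(x_2)$, whose unweighted $L^2$-norm is of order $\eps^{1/2}$, is again upgraded to order $\eps$ by the same weighted exponential-decay computation, now applied to the exponentially stabilizing remainder $\obl(x,y)-\Cblo(x)$. Adding these contributions gives the claimed bound of order $\eps$, consistent with the third estimate of the main Theorem at the top.

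The one genuinely delicate step is making the weighted exponential-decay bound rigorous. The appendix provides the stabilization not as the pointwise estimate used above but (typically) as a cell-wise $H^1$-decay of the form $\nnormalx{\bbl-\Cbl}{H^1}\le Ce^{-\gamma k}$ on the $k$-th periodicity cell. To obtain the display I would partition $\O_1$ into horizontal strips of height $\eps$ corresponding to the cell layers $k\ge0$, use $|x_2|\le C\eps k$ on the $k$-th strip together with $\nx{(\grady\bbl)(\cdot,\cdot/\eps)}{L^2(\text{strip}_k)}^2\le C\eps\,e^{-2\gamma k}$, and sum the convergent series $\sum_{k\ge0}(\eps k)\,\eps\,e^{-2\gamma k}=C\eps^2$. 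Care is needed to ensure that the correction $H(x_2)\Cbl$ (resp.\ $H(x_2)\Cblo$) is subtracted consistently, so that only the exponentially decaying remainder carries the weight, and that the interface jump produced when differentiating the Heaviside factor does not enter the bulk norm.
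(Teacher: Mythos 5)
Your treatment of the velocity term is correct and is essentially the paper's argument: split $\grad(\ue-\ueff)$ into the corrector gradient (bounded, after multiplying by the bounded weight, by the unweighted estimate $\nx{\grad\mathcal{U}^\eps}{L^2(\Oe)^4}\leq C\eps$) and the boundary-layer gradient, and upgrade the latter from $\mathscr{O}(\eps^{1/2})$ to $\mathscr{O}(\eps)$ by pairing the weight $|x_2|$ against the exponential decay of $\grady\bbl$ in $y_2=x_2/\eps$. Your strip-by-strip summation is in fact a cleaner justification than the paper's one-line H\"older bound, and your caveats about the cell-wise form of the decay and the Heaviside jump are well placed.

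The pressure term, however, contains a genuine gap: the statement asks for $\nx{|x_2|^{1/2}\grad(\pe-\peff)}{L^2(\O_1)}$, the weighted norm of the pressure \emph{gradient}, whereas you estimate $\nx{|x_2|^{1/2}(\pe-\peff)}{L^2(\O_1)}$. Your strategy of simply bounding the weight by $\sqrt{h}$ and invoking $\nx{\mathcal{P}^\eps}{L^2(\O_1)}\leq C\eps^{1/2}$ cannot be transplanted to the gradient, because no unweighted $L^2(\O_1)$ bound on $\grad\mathcal{P}^\eps$ is available anywhere in the construction (only $\nx{\grad\mathcal{P}^\eps}{H^{-1}(\Oe)^2}\leq C\eps$). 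The weight is not a convenience here but the essential ingredient: the paper obtains the gradient bound from a weighted regularity estimate for the Stokes system (Proposition 16 in \cite{mik_effpress}), which controls $\nx{|x_2|^{1/2}\grad p}{L^2(\O_1)}$ for a Stokes solution with $L^2(\Sigma)$ boundary data by the $L^2(\Sigma)$-norm of the velocity trace, giving $\nx{|x_2|^{1/2}\grad(\pe-\peff)}{L^2(\O_1)}\leq C\nx{\ue-\ueff}{L^2(\Sigma)}+C\eps\leq C\eps$. To be fair, the quantity you actually prove is the one appearing in the third estimate of the main theorem, so your computation is not wasted; but as a proof of the lemma as stated it is incomplete without this (or an equivalent) weighted interior-regularity result for the transformed Stokes problem.
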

\begin{proof}
We have 
\begin{gather*}
\nx{|x_2|^\frac{1}{2}\grad(\ue-\ueff)}{L^2(\O_1)^2} \leq C \eps+ \nx{|x_2|^\frac{1}{2}\grad\mathcal{U}^\eps_0}{L^2(\O_1)^2} + \nx{|x_2|^\frac{1}{2}\grad(   \bble- \eps H(x_2) \Cbl)}{L^2(\O_1)^2} \\
\leq C \eps +C\nx{\mathcal{U}^\eps_0}{L^2(\Sigma)^2} + \nx{|x_2|^\frac{1}{2}}{L^\infty(\O_1)} \nx{\grad(   \bble- \eps H(x_2) \Cbl)}{L^1(\O_1)} 
\leq C \eps.
\end{gather*}
For the second part, use Proposition 16 in \cite{mik_effpress} to obtain
\[
\nx{|x_2|^\frac{1}{2}\grad(\pe-\peff)}{L^2(\O_1)} \leq C \nx{\ue - \ueff}{L^2(\Sigma)} + C\eps \leq C \eps \qedhere
\]
\end{proof}

\begin{lem}
\[ \nx{\pe-\peff}{L^1(\O_1)} \leq C \eps \]
\end{lem}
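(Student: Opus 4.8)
The plan is to deduce the $L^1$-bound directly from the weighted gradient estimate of the preceding lemma, rather than from the bound $\nx{\mathcal{P}^\eps}{L^2(\O_1)}\leq C\eps^{1/2}$, which on a bounded domain would only give the suboptimal order $\eps^{1/2}$ in $L^1$. Set $\Phi:=\pe-\peff$; the previous lemma yields $\nx{|x_2|^{1/2}\grad\Phi}{L^2(\O_1)}\leq C\eps$. The entire difficulty is that the weight $|x_2|^{1/2}$ degenerates on $\Sigma=\{x_2=0\}$, so the naive step $\int_{\O_1}|\Phi|\leq\nx{|x_2|^{-1/2}}{L^2(\O_1)}\nx{|x_2|^{1/2}\Phi}{L^2(\O_1)}$ is forbidden, because $|x_2|^{-1/2}$ fails to be square integrable near $\Sigma$. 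The remedy is to anchor $\Phi$ at the top boundary $\{x_2=h\}$, where the weight is bounded below, and to exploit an order-of-integration trick that absorbs the degeneracy near $\Sigma$.

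Concretely, for a.e.\ fixed $x_1$ the function $\Phi(x_1,\cdot)$ is absolutely continuous with trace $\Phi(x_1,h)$, and exchanging the order of integration gives the slice estimate
\[
\int_0^h|\Phi(x_1,x_2)-\Phi(x_1,h)|\ud x_2\leq\int_0^h\!\int_{x_2}^h|\p_{x_2}\Phi(x_1,s)|\ud s\ud x_2=\int_0^h s\,|\p_{x_2}\Phi(x_1,s)|\ud s.
\]
Writing $s=(s^{1/2})^2$ and applying Cauchy--Schwarz in $s$ bounds the right-hand side by $C\bigl(\int_0^h s\,|\p_{x_2}\Phi(x_1,s)|^2\ud s\bigr)^{1/2}$; here the factor $s$ produced by the inner integration is exactly what compensates the degenerate weight. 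Integrating in $x_1$ and using Cauchy--Schwarz once more yields $\nx{\Phi-\Phi(\cdot,h)}{L^1(\O_1)}\leq C\nx{|x_2|^{1/2}\p_{x_2}\Phi}{L^2(\O_1)}\leq C\eps$.

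It remains to control the top trace $\Phi(\cdot,h)$, a function of $x_1$ alone, up to one additive constant. On the strip $(0,L)\times(h/2,h)$ the weight is bounded away from zero, so the same lemma gives $\nx{\grad\Phi}{L^2((0,L)\times(h/2,h))}\leq C\eps$. For the vertical average $\bar\Phi(x_1):=\tfrac{2}{h}\int_{h/2}^h\Phi(x_1,x_2)\ud x_2$, the Poincaré--Wirtinger inequality in $x_1$ (using $L$-periodicity) bounds $\bar\Phi$ minus its mean $c$ by $C\nx{\p_{x_1}\Phi}{L^2((0,L)\times(h/2,h))}\leq C\eps$ in $L^1((0,L))$, while $\nx{\Phi(\cdot,h)-\bar\Phi}{L^1((0,L))}\leq C\nx{\p_{x_2}\Phi}{L^1((0,L)\times(h/2,h))}\leq C\eps$ by the slice argument on this non-degenerate strip. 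Combining the three bounds gives $\nx{\Phi-c}{L^1(\O_1)}\leq C\eps$. Finally, normalising the free constant in $\pe\in L^2(\Oe)/\R$ so that $\int_{\O_1}\pe=0$ (recall $\int_{\O_1}\peff=0$) forces $|c|\,|\O_1|=|\int_{\O_1}(\Phi-c)|\leq\nx{\Phi-c}{L^1(\O_1)}\leq C\eps$, whence $\nx{\pe-\peff}{L^1(\O_1)}\leq\nx{\Phi-c}{L^1(\O_1)}+|c|\,|\O_1|\leq C\eps$.

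The main obstacle is precisely the degeneracy of $|x_2|^{1/2}$ on $\Sigma$: it prevents recovering $\Phi$ from its gradient by a direct weighted Cauchy--Schwarz and forces both the anchoring at $\{x_2=h\}$ and the order-of-integration identity. As an alternative, since $\psi$ is volume preserving and the pressure transforms as a scalar, one may transport $\pe-\peff$ back to $\tilde\O_1$, where it becomes the pressure error of a genuine Stokes system, and invoke the corresponding $L^1$-estimate of \cite{mik_effpress} verbatim; the argument above is the intrinsic counterpart of that reduction.
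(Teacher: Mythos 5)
Your slice-and-anchor argument is internally coherent and genuinely different from the paper's proof, but it rests on a premise that fails: the order-$\eps$ bound $\nx{|x_2|^{1/2}\grad(\pe-\peff)}{L^2(\O_1)}\leq C\eps$. Although the preceding lemma is printed with a $\grad$ in the pressure term, this is a typo: the main theorem states the corresponding estimate as $\nx{|x_2|^{1/2}(\pe-\peff)}{L^2(\O_1)}\leq C\eps$ (no gradient), and only that version can hold. Indeed, in $\O_1$ one has $\pe-\peff=\mathcal{P}^\eps+(\pi^0-\peff)-(\oble-\Cblo)-\eps\pi^\sigma$, and the boundary-layer pressure $\oble(x)=\obl(x,\frac{x}{\eps})$ carries the gradient $\eps^{-1}\grady\obl(x,\frac{x}{\eps})$, of size $\eps^{-1}$ on a strip of width $\mathscr{O}(\eps)$ above $\Sigma$; substituting $x_2=\eps y_2$ gives
\[
\int_{\O_1}|x_2|\,\Bigl|\tfrac{1}{\eps}\grady\obl(x,\tfrac{x}{\eps})\Bigr|^2\ud x=\int_0^L\int_0^{h/\eps}y_2\,|\grady\obl|^2\ud y_2\ud x_1=\mathscr{O}(1),
\]
so $\nx{|x_2|^{1/2}\grad(\pe-\peff)}{L^2(\O_1)}$ is of order $1$, not $\eps$, unless $\obl$ is constant in $Z^+$. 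With only the correct (gradient-free) weighted bound your argument cannot start, and even a direct splitting at $\{x_2=\eps\}$ only yields $C\eps|\ln\eps|^{1/2}$, since $|x_2|^{-1/2}\notin L^2$ near $\Sigma$ --- the very obstruction you identified yourself.

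The paper instead works with the corrector decomposition directly: $\pe-\peff$ equals $\mathcal{P}^\eps$ minus the boundary-layer pressure $(\oble-\Cblo)$ minus $\eps\pi^\sigma$, up to the $\mathscr{O}(\eps)$ discrepancy between $\pi^0$ and $\peff$. The remainder is controlled through $\nx{\mathcal{P}^\eps_0}{L^2(\O_1)}\leq C\eps$, while $\oble-\Cblo H(x_2)$ has amplitude $\mathscr{O}(1)$ but decays exponentially on the scale $\eps$, hence $\nx{\oble-\Cblo H(x_2)}{L^1(\O_1)}\leq C\eps$ even though its $L^2$-norm is only $\mathscr{O}(\eps^{1/2})$ --- this is precisely why the lemma is formulated in $L^1$. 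To salvage your route you would have to subtract the boundary-layer pressure first and apply your Hardy-type argument only to the smooth remainder, at which point you are essentially reproducing the paper's decomposition. (Your normalisation $\int_{\O_1}\pe\ud x=0$ is fine, and indeed necessary for the statement to be well posed; your closing remark about transporting back to $\tilde{\O}_1$ and citing \cite{mik_effpress} ``verbatim'' does not work either, since that reference treats a flat interface.)
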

\begin{proof}
\begin{gather*}
\nx{\pe-\peff}{L^1(\O_1)} \leq \nx{\mathcal{P}^\eps - ( \obl-H(x_2)\Cblo ) - \eps H(x_2)\pi^\sigma }{L^1(\O_1)} \\
\leq \nx{\mathcal{P}^\eps_0}{L^2(\O_1)} + \nx{\obl-H(x_2)\Cblo}{L^1(\O_1)} + C\eps 
\leq C \eps \qedhere
\end{gather*}
\end{proof}

\begin{lem}
\[  |M^\eps - M^\mathrm{eff}| \leq C \eps^\frac{3}{2} \]
\end{lem}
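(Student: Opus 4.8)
The plan is to reduce the statement directly to the $L^2$-estimate for $\ue - \ueff$ established in the preceding lemma. By linearity of the integral and the definitions of $M^\eps$ and $M^\mathrm{eff}$, one has $M^\eps - M^\mathrm{eff} = \int_{\O_1}(\ue - \ueff)\cdot Fe_1 \ud x$, so the whole task is to control this single integral.

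First I would observe that the weight $Fe_1$ is harmless. From the explicit form \eqref{eq:F} we have $Fe_1 = (1, g'(x_1))^\top$, and since $g\in \C^\infty(\R)$ is $L$-periodic, $g'$ is bounded on $[0,L]$; hence $\nx{Fe_1}{L^\infty(\O_1)^2}\leq C$ and in particular $\nx{Fe_1}{L^2(\O_1)^2}\leq C$, with a constant depending only on $\O_1$ and $g$. Then I would apply the Cauchy--Schwarz inequality together with the preceding lemma to obtain
\[
|M^\eps - M^\mathrm{eff}| = \Bigl| \int_{\O_1}(\ue - \ueff)\cdot Fe_1 \ud x \Bigr| \leq \nx{\ue - \ueff}{L^2(\O_1)^2}\, \nx{Fe_1}{L^2(\O_1)^2} \leq C\eps^\frac{3}{2},
\]
which is exactly the claim.

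I expect no genuine obstacle in this particular step: the target rate $\eps^{3/2}$ already coincides with the rate of the $L^2$-velocity estimate, so no cancellation in the mass flux needs to be exploited and the direct bound suffices. The entire substance of the argument is thus inherited from the $L^2$-estimate for $\ue - \ueff$, which in turn rests on the very weak solution theory for the transformed Stokes equations and on the trace bound $\nx{\mathcal{U}^\eps - (\bble - \eps H(x_2)\Cbl)}{L^2(\Sigma)^2}\leq C\eps^{3/2}$ used there. Were a rate strictly better than $\eps^{3/2}$ required, one would instead test the difference equation against a field constant in $x_2$ and exploit the divergence structure of $F^{-1}(\ue - \ueff)$; but since the desired order matches the velocity estimate, that refinement is unnecessary here.
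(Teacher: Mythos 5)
Your argument is exactly the paper's: write $M^\eps - M^\mathrm{eff}$ as $\int_{\O_1}(\ue-\ueff)\cdot Fe_1\ud x$, use boundedness of $Fe_1$ (Cauchy--Schwarz) and the preceding $L^2$-estimate $\nx{\ue-\ueff}{L^2(\O_1)^2}\leq C\eps^{3/2}$. The proposal is correct and coincides with the paper's proof.
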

\begin{proof}
\begin{gather*}
|M^\eps - M^\mathrm{eff}| = | \int_{\O_1} (\ue - \ueff)\cdot F e_1 \ud x |\\
\leq C\nx{\ue - \ueff}{L^2(\O_1)} \leq C \eps^\frac{3}{2} \qedhere
\end{gather*}
\end{proof}

\begin{lem}
\begin{gather*}
\frac{1}{\eps}(\ue - u^\mathrm{eff}) \weak 0 \quad \text{in } L^2(\Sigma) \\
\nx{\ue - u^\mathrm{eff}}{H^{-\frac{1}{2}}(\Sigma)} \leq C \eps^\frac{9}{5}.
\end{gather*}
\end{lem}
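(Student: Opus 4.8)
The plan is to read both assertions off the representation
\[
(\ue-\ueff)\big|_\Sigma=\mathcal{U}^\eps\big|_\Sigma-\eps\bigl(\bbl(x,\tfrac{x_1}{\eps},+0)-\Cbl(x)\bigr),
\]
which follows from the already established identity $\ue-\ueff=\mathcal{U}^\eps-(\bble-\eps H(x_2)\Cbl)$ together with $\bble|_\Sigma=\eps\bbl(x,\tfrac{x_1}{\eps},+0)$ and $H(x_2)\to1$ from the $\O_1$-side. This decomposition separates the corrector $\mathcal{U}^\eps$, which is controlled in $L^2(\Sigma)$ by $\nx{\mathcal{U}^\eps}{L^2(\Sigma)^2}\leq C\eps^{3/2}$, from the rapidly oscillating boundary-layer trace. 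The structural fact I use repeatedly is that the latter has vanishing cell average: since $\Cbl(x)=\int_0^1\bbl(x,y_1,+0)\ud y_1$ by Lemma~\ref{lem:expstabveloc}, the $1$-periodic function $w(x,y_1):=\bbl(x,y_1,+0)-\Cbl(x)$ satisfies $\int_0^1 w(x,y_1)\ud y_1=0$.

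For the weak convergence I divide by $\eps$, obtaining $\tfrac1\eps(\ue-\ueff)|_\Sigma=\tfrac1\eps\mathcal{U}^\eps|_\Sigma-w(x,\tfrac{x_1}{\eps})$. The first term tends to $0$ strongly in $L^2(\Sigma)$ because $\tfrac1\eps\nx{\mathcal{U}^\eps}{L^2(\Sigma)^2}\leq C\eps^{1/2}$. The second term is a periodically oscillating function of zero cell average, so by the classical weak-convergence lemma for such functions it converges weakly in $L^2(\Sigma)$ to its mean $\int_0^1 w(x,y_1)\ud y_1=0$. Adding the two gives $\tfrac1\eps(\ue-\ueff)\weak0$ in $L^2(\Sigma)$.

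For the negative-norm bound I argue by duality, estimating $\int_\Sigma(\ue-\ueff)\cdot\phi\ud\sigma$ for $\phi\in H^{1/2}(\Sigma)^2$ and taking the supremum over $\nx{\phi}{H^{1/2}(\Sigma)}\leq1$. The corrector contributes at most $\nx{\mathcal{U}^\eps}{L^2(\Sigma)^2}\nx{\phi}{L^2(\Sigma)}\leq C\eps^{3/2}$ through the embedding $L^2(\Sigma)\hookrightarrow H^{-1/2}(\Sigma)$. For the oscillating term I exploit the zero average: writing $w=\p_{y_1}W$ with $W$ the bounded $1$-periodic primitive and integrating by parts in $x_1$ moves a derivative onto $\phi$ while producing a factor $\eps$; equivalently one interpolates the $L^2(\Sigma)$ bound (of order $\eps$) with the $H^{-1}(\Sigma)$ bound (of order $\eps^2$, again via $W$), giving the preliminary estimate $C\eps^{3/2}$. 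To push the estimate to the stated power I refine the duality by extending $\phi$ to a field $\Phi_\delta\in H^1(\O_1)^2$ localized in the strip $\{0<x_2<\delta\}$ with $\nx{\Phi_\delta}{L^2(\O_1)^2}\leq C\delta^{1/2}$ and $\nx{\grad\Phi_\delta}{L^2(\O_1)^4}\leq C\delta^{-1/2}$, rewriting $\int_\Sigma(\ue-\ueff)\cdot\phi$ as the bulk integral $-\int_{\O_1}\p_{x_2}\bigl((\ue-\ueff)\cdot\Phi_\delta\bigr)\ud x$, and inserting the interior estimates $\nx{\grad\mathcal{U}^\eps}{L^2(\Oe)^4}\leq C\eps$ and $\nx{\mathcal{U}^\eps}{L^2(\Oe)^2}\leq C\eps^{3/2}$ together with the exponential concentration of $\bble-\eps H(x_2)\Cbl$ in a layer of width $\mathscr{O}(\eps)$ about $\Sigma$. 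Optimising $\delta$ against the competing contributions yields the asserted power of $\eps$; the remaining lower-order correctors $\mathcal{U}^\eps-\mathcal{U}^\eps_0$ and the $Q$- and $\phi^{i,\eps}$-terms enter at strictly higher order and are absorbed as in \cite{mik_effpress}.

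I expect the negative-norm estimate of the oscillating boundary-layer trace to be the main obstacle. A crude triangle inequality does not improve on $\eps^{3/2}$, because the trace is of size $\eps$ in $L^2(\Sigma)$; extracting a better exponent forces one to use \emph{simultaneously} the zero cell average (gain from oscillation), the exponential decay (confinement of the boundary layer to a strip of width $\eps$), the interior gradient bound $\nx{\grad\mathcal{U}^\eps}{L^2(\Oe)^4}\leq C\eps$, and the mere $H^{1/2}$-regularity of admissible test functions, and then to balance the scale $\delta$ so that no single one of these effects dominates. Making these ingredients cooperate — rather than letting the weakest control the estimate — is the delicate point, and is where I would follow the refined bookkeeping of \cite{mik_effpress}.
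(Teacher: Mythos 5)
The first assertion is proved exactly as in the paper: the same decomposition $(\ue-\ueff)|_\Sigma=\mathcal{U}^\eps|_\Sigma-\eps\bigl(\bbl(x,\tfrac{x_1}{\eps},+0)-\Cbl(x)\bigr)$, strong convergence of $\eps^{-1}\mathcal{U}^\eps|_\Sigma$ from $\nx{\mathcal{U}^\eps}{L^2(\Sigma)^2}\leq C\eps^{3/2}$, and weak convergence of the zero-mean oscillating trace. That part is fine.

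For the negative-norm estimate there is a genuine gap. Your ``preliminary'' bound of $C\eps^{3/2}$ (interpolating the $L^2(\Sigma)$ bound $O(\eps)$ of the oscillating term against its $H^{-1}(\Sigma)$ bound $O(\eps^2)$, obtained from the periodic primitive) is correct, but the entire content of the lemma is precisely the improvement beyond $\eps^{3/2}$, and that improvement is the step you do not carry out: you assert that extending $\phi$ to a strip of width $\delta$, rewriting the boundary pairing as a bulk integral, and ``optimising $\delta$'' yields the stated power, without exhibiting the competing terms or the optimal $\delta$. This is not a routine omission. If one writes $\int_\Sigma(\ue-\ueff)\cdot\phi=-\int_{\O_1}\p_{x_2}\bigl((\ue-\ueff)\cdot\Phi_\delta\bigr)$ and estimates naively with $\nx{\Phi_\delta}{L^2}\leq C\delta^{1/2}$, $\nx{\p_{x_2}\Phi_\delta}{L^2}\leq C\delta^{-1/2}$ together with $\nx{\grad(\ue-\ueff)}{L^2(\O_1)^4}\leq C\eps^{1/2}$ and $\nx{\ue-\ueff}{L^2(\O_1)^2}\leq C\eps^{3/2}$, the optimum is $\delta\sim\eps$ with value $O(\eps)$ --- worse than $\eps^{3/2}$, because the bulk rewriting discards the $y_1$-oscillation that produced the gain in the first place. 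Recombining the oscillation gain with the strip localisation is exactly the nontrivial bookkeeping, and nothing in the proposal shows it closes at $\eps^{9/5}$. Moreover, no mechanism you describe improves the contribution of $\mathcal{U}^\eps|_\Sigma$ beyond its $L^2(\Sigma)$ size $C\eps^{3/2}$, which by itself caps your argument at $\eps^{3/2}$ in $H^{-1/2}(\Sigma)$.

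The paper's route is different at this point: it tests $\bbl(\cdot,\tfrac{\cdot}{\eps},+0)-\Cbl$ against $v\in H^{8/5}_\#(\Sigma)$, decomposes $\Sigma$ into cells of length $\eps$, uses the vanishing cell average to subtract the value at a point $y_i^*$ in each cell, and Taylor-expands, which requires pointwise control of $v'$ and hence the embedding $H^{8/5}(\R)\hookrightarrow\C^1(\R)$ (this is why the exponent $\tfrac{8}{5}>\tfrac32$ appears). This yields $\nx{\bble-\eps\Cbl}{H^{-8/5}(\Sigma)}\leq C\eps^2$, and the final bound follows by interpolating this negative-norm estimate against the $L^2(\Sigma)$ bound. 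If you want to repair your argument, the cell-wise Taylor computation (or an equivalent two-fold integration by parts against the periodic primitive, which only needs $v\in H^1$) is the missing ingredient; the strip extension is not.
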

\begin{proof}
It holds $\ue - u^\mathrm{eff}= \mathcal{U}^\eps - (\bble-\eps \Cbl)$ on $\Sigma$. Since $\nx{\eps^{-1}\mathcal{U}^\eps}{L^2(\Sigma)} \leq C \eps$ and $\bbl(\cdot, \frac{\cdot}{\eps})\weak \int_0^1 \bbl(\cdot,y_1,+0) \ud y_1= \Cbl(\cdot)$ in $L^2(\Sigma)$ (cf.\ \cite{amo_osc} or the remarks in \cite{al_2s2}), we obtain the first result.

Now let $v\in H^\frac{8}{5}_\#(\Sigma)$. To begin with, we have
\begin{align*}
\langle  \bbl(\cdot, \frac{\cdot}{\eps})  &  - \Cbl, v \rangle_{\text{\tiny$H^{-\frac{8}{5}}_\#(\Sigma),H^\frac{8}{5}_\#(\Sigma)$}} = \int_0^1 (\bbl(x_1,+0, \frac{x_1}{\eps},+0) - \Cbl(x_1,+0)  )v(x_1) \ud x_1 \\
& =  \eps \int_0^\frac{1}{\eps} (\bbl(\eps y_1,+0, y_1,+0)  - \Cbl(\eps y_1, +0)  )v(\eps y_1) \ud y_1\\
& = \eps \sum_{i\in I^\eps} \int_{Y'_i} (\bbl(\eps y_1,+0, y_1,+0)  - \Cbl(\eps y_1, +0)  )v(\eps y_1) \ud y_1,
\end{align*}
where $I^\eps \subset \N_0$ is some finite index set depending on $\eps$, and $Y_i$ is an interval of the form $[k,k+1)$ for some $k\in \N_0$. For each $i\in I^\eps$, choose a $y_i^*\in Y_i$. Since $\bbl(y_i^*,+0, \cdot, +0)- \Cbl(\cdot, +0)$ has mean value $0$ over $\Sigma$, we have that $\int_{Y'_i} (\bbl(\eps y_i^*,+0, y_1,+0)  - \Cbl(\eps y_i^*, +0)  )\cdot v(\eps y_i^*) \ud y_1 =0$. Thus the expression above is equal to
\begin{gather*}
\eps \sum_{i\in I^\eps} \int_{Y'_i} (\bbl(\eps y_1,+0, y_1,+0)  - \Cbl(\eps y_1, +0)  )v(\eps y_1) \\ \qquad \qquad \qquad -  (\bbl(\eps y_i^*,+0, y_1,+0)  - \Cbl(\eps y_i^*, +0)  )v(\eps y_i^*)\ud y_1.
\end{gather*}
By Taylor series expansion, there exists a $\xi_i^*\in Y'_i$ such that the above expression is equal to
\begin{gather*}
\eps \sum_{i\in I^\eps} \int_{Y'_i}  \frac{\p}{\p x_1}\Bigl[(\bbl(\eps \xi_i^*,+0, y_1,+0)  - \Cbl(\eps \xi_i^*, +0)  )v(\eps \xi_i^*)\Bigr] \eps(y_1-y_i^*  )\ud y_1\\
= \eps \sum_{i\in I^\eps} \int_{Y'_i}  \frac{\p}{\p x_1}\Bigl[(\bbl(\eps \xi_i^*,+0, y_1,+0)  - \Cbl(\eps \xi_i^*, +0)  )v(\eps \xi_i^*)\Bigr] \eps y_1 \ud y_1
\end{gather*}
by the vanishing mean value of $\bbl-\Cbl$. Now transformation of the integral and rearrangement yields
\begin{gather*}
\eps \sum_{i\in I^\eps} \int_{\eps Y'_i} \Bigl[    \frac{\p}{\p x_1}[\bbl( \xi_i^*,+0, \frac{x_1}{\eps},+0)  - \Cbl( \xi_i^*, +0)  ] v(\xi_i^*)  \\
\qquad \qquad \qquad +    v'( \xi_i^*)\bbl( \xi_i^*,+0, \frac{x_1}{\eps},+0)  - \Cbl( \xi_i^*, +0)\Bigr] x_1 \ud x_1
\end{gather*}
By the embedding $H^\frac{8}{5}(\R) \hookrightarrow \C^1(\R)$ as well as by the boundedness of the terms containing $\bbl-\Cbl$ and $x_1$, the integral can be estimated by $\eps C \nx{v}{H^\frac{8}{5}(\eps Y_i) }$, which upon summation gives a bound of $\eps C \nx{v}{H^\frac{8}{5}(\Sigma)}$ for the whole term. This implies that $\nx{\bble - \eps \Cbl}{H^{-\frac{8}{5}}(\Sigma)} \leq C \eps^2$. Since $L^2(\Sigma)\hookrightarrow H^{-\frac{8}{5}}(\Sigma)$, we have that $\nx{\mathcal{U}^\eps}{H^{-\frac{8}{5}}(\Sigma)}\leq C \eps ^2$. Thus similar to the first part $\nx{\ue - u^\mathrm{eff}}{H^{-\frac{8}{5}}(\Sigma)} \leq C \eps^2$, and by interpolation we obtain
\[
\nx{\ue - u^\mathrm{eff}}{H^{-\frac{1}{2}}(\Sigma)} \leq C \nx{\ue - u^\mathrm{eff}}{L^2(\Sigma)}^\frac{1}{5} \nx{\ue - u^\mathrm{eff}}{H^{-\frac{3}{2}}(\Sigma)}^\frac{4}{5} \leq C \eps^\frac{1}{5} \eps^{2\cdot \frac{4}{5}} = C \eps^\frac{9}{5}. \qedhere
\]
\end{proof}

\subsection{Proof of Theorem~\ref{thm:pormed}}
\label{sec:proof}

We need the following auxiliary constructions: Fix $x\in \O$ and let $(w^j, \pi^j) \in H^1_\#(Y^*)^2 \times L^2(Y^*)/\R$ be a solution of the parameter dependent cell problem
\begin{subequations}
\label{subeq:cellpr}
\begin{empheq}[box=\widebox]{align}
-\divy(F^{-1}(x)F^{-T}(x)\grady w^j(x,y)) 
+ F^{-T}(x)\grady \pi^j(x,y)&=e_j && \text{in $Y^*$} \\
\divy(F^{-1}(x)w^j(x,y))&=0 &&\text{in $Y^*$}\\
w^j(x,y) &= 0&& \text{in $Y_S$}\\
w^j(x,y), \pi^j(x,y)  \text{ are $Y$-periodic in $y$.}
\end{empheq}
\end{subequations} 
Define the matrix $A$ by
\begin{empheq}[box=\widefbox]{equation}
\label{eq:A}
[A(x)]_{ji} = \int_{Y^*} w^j_i(x,y) \ud y  \quad i,j=1,2.
\end{empheq}
One can show that $A$ is symmetric and positive definite. 
The Darcy pressure is given by the following problem: Find $\tilde{\pi}^0 \in H^1(\O_2)$ such that
\begin{subequations}
\begin{alignat}{2}
\div(F^{-1} A (f-F^{-T}\grad \tilde{\pi}^0  )) &=0& \quad & \text{in } \O_2 \\
A (f-F^{-T}\grad\tilde{\pi}^0 )\cdot F^{-T}e_2 &=0 && \text{on } (0,L) \times \{ -K\} \\
\tilde{\pi}^0 &= \pi^0 + C^\text{bl}_\omega && \text{on } \Sigma \\
&\tilde{\pi}^0& & \text{ is $L$-periodic in $x_1$}
\end{alignat}
\end{subequations}
Here $\pi^0$ is given in \eqref{eq:pi0}. Using the weak formulation for $\tilde{\mathcal{U}}^\eps, \tilde{\mathcal{P}}^\eps$, one arrives at the estimates
\begin{gather*}
\nnormalx{\tilde{\mathcal{U}}^\eps}{L^2(\Oe_2)^2} \leq C \eps^2,\qquad \nnormalx{\grad \tilde{\mathcal{U}}^\eps}{L^2(\Oe_2)^4} \leq C\eps,\qquad \nnormalx{\tilde{\mathcal{P}}^\eps}{L^2(\Oe_2)} \leq C.
\end{gather*}
By the theory of two scale convergence (see \cite{ng_2s1}, \cite{al_2s2}) there exist limits $\Ulim\in L^2(\O_2;H^1_\#(Y^*))$ as well as $\Plim\in L^2(\O_2;L^2(Y^*))$ with
\begin{gather*}
\frac{\tilde{\mathcal{U}}^\eps}{\eps^2} \two \Ulim, \qquad
\frac{\grad\tilde{\mathcal{U}}^\eps}{\eps}  \two \grady \Ulim, \qquad
{\tilde{\mathcal{P}}^\eps}  \two \Plim.
\end{gather*}
Choose a test function $\phi \in \C^\infty_0(\O_2; H^1_\#(Y_F))^2$ with $\divy(F^{-1}(x)\phi)=0$ and test the weak formulation for $\tilde{\mathcal{U}}^\eps$. In the 2-scale limit, one obtains
\begin{equation}
\label{eq:Plim}
\begin{gathered}
\int_{\O_2} \int_{Y_F} F^{-T}(x) \grady \Ulim(x,y) : F^{-T}(x)\grady \phi(x,y) \ud y \ud x \\
- \int_{\O_2} \int_{Y_F} \Plim(x,y) \divx(F^{-1}(x)\phi(x,y) ) \ud y \ud x = \int_{\O_2} \int_{Y_F} (f(x) - F^{-T}(x) \grad \tilde{\pi}^0(x) ) \phi(x,y) \ud y \ud x.
\end{gathered}
\end{equation}
By choosing a test function $\phi \in \C^\infty_0(\O_2; H^1_\#(Y_F))^2$, testing with $\eps\phi(x,\xe)$ and taking the 2-scale limit, one arrives at $\int_{\O_2} \int_{Y_F} \Plim(x,y) \divy(F^{-1}(x)\phi(x,y)) =0$, thus $\Plim$ does not depend on the variable $y$. Upon a separation of scales similar to the derivation of Darcy's law (see \cite{hor_homog} and \cite{do_dipl} for the case of the transformed equations), one obtains the representation
\begin{gather*}
\Ulim(x,y) = \sum_{j=1}^2 w^j(x,y)\Bigl[ f(x) - F^{-T}(x) \grad\Bigl( \tilde{\pi}^0(x) + \Plim(x) \Bigr) \Bigr]_j \\
\int_{Y_f} \Ulim(x,y) \ud y = A(x)\Bigl[  f(x) - F^{-T}(x) \grad\Bigl( \tilde{\pi}^0(x) + \Plim(x) \Bigr)   \Bigr].
 \end{gather*}
Choosing $\tilde{\mathcal{U}}^\eps|_{\O_1}$ as a test function yields $\nnormalx{\tilde{\mathcal{U}}^\eps}{L^2(\O_1)^4} \leq C \eps^\frac{3}{2}$ and thus $\grad \frac{\tilde{\mathcal{U}}^\eps}{\eps} \longrightarrow 0$ in $L^2(\O_1)^2$. Choose a $\phi \in \C^\infty_0(\O; H^1_\#(Y_F))^2$ such that $\divy(F^{-1}(x) \phi) =0$. Similar to the above calculations, testing with $\phi(x,\xe)$ gives in the limit
\begin{gather*}
\int_{\O_2} \int_{Y_F} F^{-T}(x) \grady \Ulim(x,y) : F^{-T}(x) \grady \phi(x,y) \ud y \ud x  \\ 
- \int_{\O}\int_{Y_F} \Plim(x) \divx(F^{-1}(x) \phi(x,y)) \ud y \ud x = \int_{\O_2} \int_{Y_F} ( f(x) -F^{-T}(x)\grad \tilde{\pi}^0(x)  )\phi(x,y) \ud y \ud x.
\end{gather*}
Using \eqref{eq:Plim}, one arrives at $\int_{\O_1} \int_{Y_F} \Plim(x) \divx(F^{-1}(x) \phi(x,y)) \ud y \ud x =0 $. Upon an integration by parts, this is equivalent to
\[
-\int_{\O_1} \int_{Y_F} F^{-T}(x) \grad \Plim(x) \phi(x,y) \ud y \ud x + \int_{\Sigma} \int_{Y_F} \Plim(x) \phi(x,y) \cdot F^{-T}(x) e_2 =0.
\]
Since $\phi$ is arbitrary, we conclude that $F^{-T} \grad \Plim=0$ in $\O_1$ and $\Plim|_\Sigma =0$. Thus $\Plim =0$ and $\int_{Y_f} \Ulim(x,y) \ud y = A(x) (  f(x) - F^{-T}(x) \grad\tilde{\pi}^0(x) )$, which is Darcy's law. 

It remains to compare $\tilde{\pi}^0$ with $\tilde{p}^\mathrm{eff}$. Note that
\[
\nx{\tilde{p}^\mathrm{eff}- \tilde{\pi}^0}{H^{k-1}(\O_2)} \leq C \nx{\peff-\pi^0}{H^{k-1}(\Sigma)} \leq C\nx{\peff-\pi^0}{H^{k}(\O_1)}.
\]
After transformation to $\tilde{\O}_1$, the functions $\ueff, \peff, u^0$ and $p^0$ satisfy a Stokes equation. By standard regularity results for this equation (see \cite{temam}, \cite{mik_effpress}) we obtain $\nx{\peff-\pi^0}{H^{k}(\O_1)} \leq C \eps$ for all $k\in \N$. This gives the first statement of the theorem. Additionally, since
\begin{align*}
\tilde{\mathcal{P}}^\eps|_{\O_2} &= \pe - \tilde{\pi}^0 + \mathscr{O}(\eps^{\frac{1}{2}})\\
& = \pe - \tilde{p}^\mathrm{eff} +  \mathscr{O}(\eps^{\frac{1}{2}}) \weak 0,
\end{align*}
we have that $\pe \weak \tilde{p}^\mathrm{eff}$ in $L^2(\O_2)$, which is the second assertion.

Finally, we estimate $\nx{\pe-\pi^0}{H^{-\frac{1}{2}(\Sigma)}}$. Let $w\in H^1(\Oe_2)$ such that $w=0$ on $ \p\Oe_2 \backslash( \{ x_1=0 \}  \cup \{x_1=L \} \cup \Sigma )$ and $w$ is $L$-periodic in $x_1$. Then by trace estimates \cite{temam}
\begin{gather*}
|\langle \pe - \pi^0, w|_\Sigma \rangle_\text{\tiny{$H^{-\frac{1}{2}}, H^\frac{1}{2}$}}  | \leq |( (\pe-\tilde{\pi}^0)e_2,\grad w  )_\text{\tiny{$L^2(\Oe_2)$}}| + |(\frac{\p}{\p x_2}(  \pe-\tilde{\pi}^0),w)_\text{\tiny{$L^2(\Oe_2)$}}|
\end{gather*}
For the first term on the right hand side, notice that $\pe-\tilde{\pi}^0 = ( \pe - \tilde{p}^\mathrm{eff}) + (\tilde{p}^\mathrm{eff} - \tilde{\pi}^0) = \mathscr{O}(\eps^\frac{1}{2}) + \mathscr{O}(\eps)$. For the second term, observe that
\begin{gather*}
|(\frac{\p}{\p x_2}(  \pe-\tilde{\pi}^0),w)_\text{\tiny{$L^2(\Oe_2)$}} | \leq C \eps\nx{w}{H^1(\Oe_2)}  + \nx{\grad\mathcal{P}^\eps}{H^{-1}(\Oe_2)}\nx{w}{H^1(\Oe_2)} + | (\grad( \oble- H(x_2)\Cblo ),w)_\text{\tiny{$L^2(\Oe_2)$}}| \\
\leq C \eps \nx{w}{H^1(\Oe_2)} + C\eps^{-\frac{1}{2}} \cdot \eps \nx{\grad w}{L^2(\Oe_2)}.
\end{gather*} 
This shows that $\nx{\pe-\pi^0}{H^{-\frac{1}{2}(\Sigma)}} \leq C \sqrt{\eps} $, which implies the last statement of the theorem. \qedhere

\appendix

\section{Coordinate Transformations of Differential Operators}
\label{sec:coordtrans}

We recall the definition of coordinate transformations and some differential operators and investigate their relations: %
Let $\tilde{\O}\subset\R^n$ with $n\in \N$ be a Lipschitz domain; let $\tilde{c}:\tilde{\O}\longrightarrow \R$ be a scalar function, $\tilde{j}:\tilde{\O}\longrightarrow \R^n$ a vector field and $\tilde{M}:\tilde{\O}\longrightarrow \R^{n\times n}$ a matrix function. They are assumed to be sufficiently smooth.

\begin{defn}
\label{def:operators}
The gradient of a vector field is defined as
\[
(\grad\tilde{j})_{ik}=\frac{\p \tilde{j}_k}{\p x_i}, 
\]
for $i,k=1,\dots,n$ (i.e. $\grad \tilde{j}$ is the \emph{transpose} of the Jacobian matrix of $\tilde{j}$); the divergence of a matrix-valued function is defined column-wise, thus 
\[
(\div(\tilde{M}))_k=\sum_{i=1}^n \frac{\p \tilde{M}_{ik}}{\p x_i},
\]
for $k=1,\dots,n$; and the Laplacian of a vector field is given by
\[
\Delta \tilde{j} = \div(\grad \tilde{j}).
\]
For $n=2$ we define the two operators 
\[
\Curl(\tilde{c})=\begin{pmatrix}
                 -\frac{\p\tilde{c}}{\p x_2} \\
                 \phantom{-}\frac{\p \tilde{c}}{\p x_1}
               \end{pmatrix}
              = \begin{bmatrix}
                 0  & -1 \\
                 1 & \phantom{-}0
                \end{bmatrix}
                \grad \tilde{c}
\]
and
\[
\curl(\tilde{j})=\frac{\p \tilde{j}_2}{\p x_1} -\frac{\p \tilde{j}_1}{\p x_2}.
\]
\end{defn}

The `curl'-operators above are two-dimensional variants of the well-known curl operator describing the rotation of three-dimensional vector fields. 
We have the relations
\[
\curl\grad \tilde{c}=0 \qquad\text{and}\qquad \div\Curl \tilde{c}=0,
\]
and $\curl$ is the formal adjoint of $\Curl$
(see \cite{verfuerth} and \cite{dautlions} for details concerning these operators).

\begin{defn}
Let $\O, \tilde{\O}\subset \R^n$ be Lipschitz domains and let $\psi:\O\longrightarrow \tilde{\O}$. We call $\psi$ a \emph{regular orientation-preserving $\mathcal{C}^k$-coordinate transformation} if
\begin{enumerate}
\item $\psi$ is a $\mathcal{C}^k$-diffeomorphism, and
\item There exist $c,C >0$ such that
  \[
    c\leq \det F(z) \leq C \quad \forall z\in\O,
  \]
  where $F$ denotes the Jacobian matrix of $\psi$.
\end{enumerate}
If $\det F \equiv  1$, we call $\psi$ \emph{volume preserving}.
\end{defn}
We will indicate coordinates in $\O$ by $z= (z_1, \dots, z_n)$ and those in $\tilde{\O}$ by $x= (x_1, \dots, x_n)$. 
Define
\begin{gather*}
c(z):=\tilde{c}(\psi(z)) \\
j(z):=\tilde{j}(\psi(z)) \\
M(z):=\tilde{M}(\psi(z)). 
\end{gather*}

\begin{lem}
\label{lem:transfoperators}
Let $\psi:\O\longrightarrow \tilde{\O}$ be a $\mathcal{C}^1$-coordinate transformation. Denote by $F$ the Jacobian matrix of $\psi$, and let $J(z):=\det(F(z))$. With the notations and definitions above it holds
\begin{enumerate}
\item $F^{-T}\grad_z c = \gradx \tilde{c}$.
\item $\div_z(JF^{-1}{j})= (J\circ\psi^{-1}) \divx(\tilde{j})$.
\item $\div_z(JF^{-1}{M})= (J\circ\psi^{-1}) \divx(\tilde{M})$. 
\end{enumerate}
\end{lem}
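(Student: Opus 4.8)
The plan is to handle the three identities in order, reducing the matrix case to the vector case and the vector case to the scalar case together with one structural fact about the cofactor matrix. Throughout, each equality is read pointwise with $x=\psi(z)$, so that, e.g., $\gradx\tilde c$ abbreviates $(\gradx\tilde c)\circ\psi$. For (1) I would just differentiate $c(z)=\tilde c(\psi(z))$ by the chain rule: $\frac{\p c}{\p z_j}=\sum_i\frac{\p\tilde c}{\p x_i}\frac{\p\psi_i}{\p z_j}=\sum_i F_{ij}\frac{\p\tilde c}{\p x_i}$, that is $\grad_z c=F^T\,\gradx\tilde c$; multiplying by $F^{-T}$ gives the claim.

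For (2) I would expand $\div_z(JF^{-1}j)$ by the product rule. Writing $v:=JF^{-1}j$ with $v_k=\sum_i(JF^{-1})_{ki}\,\tilde j_i(\psi(z))$, one obtains
\[
\div_z v=\sum_i\Bigl(\sum_k\frac{\p}{\p z_k}(JF^{-1})_{ki}\Bigr)\tilde j_i(\psi)+\sum_{k,i}(JF^{-1})_{ki}\frac{\p}{\p z_k}\bigl[\tilde j_i(\psi)\bigr].
\]
In the second sum the chain rule gives $\frac{\p}{\p z_k}[\tilde j_i(\psi)]=\sum_l\frac{\p\tilde j_i}{\p x_l}F_{lk}$, and since $\sum_k F_{lk}(F^{-1})_{ki}=\delta_{li}$ it collapses to $J\sum_i\frac{\p\tilde j_i}{\p x_i}=J\,(\divx\tilde j)\circ\psi$. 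The first sum equals $\sum_i\bigl(\sum_k\frac{\p}{\p z_k}\operatorname{cof}(F)_{ik}\bigr)\tilde j_i(\psi)$, because $(JF^{-1})_{ki}=\operatorname{cof}(F)_{ik}$, and it vanishes by the Piola identity $\sum_k\frac{\p}{\p z_k}\operatorname{cof}(F)_{ik}=0$ for each $i$. Combining the two contributions gives $\div_z(JF^{-1}j)=(J\circ\psi^{-1})\,\divx\tilde j$ once both sides are read as functions of $x$.

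The step requiring care, and the main obstacle, is the Piola identity: it rests on the symmetry of the mixed second derivatives of $\psi$, so it formally wants $\psi\in\mathcal{C}^2$, whereas the lemma assumes only $\mathcal{C}^1$. In the present two-dimensional setting this is benign, since $\operatorname{cof}(F)$ has entries $\pm\p_{z_a}\psi_b$ and the relevant divergence is a difference $\p_{z_1z_2}\psi_b-\p_{z_2z_1}\psi_b=0$, a one-line verification (and $\psi$ is in fact $\mathcal{C}^\infty$ in every application, as $g\in\mathcal{C}^\infty$). To cover the bare $\mathcal{C}^1$ hypothesis I would instead argue weakly: for $\varphi\in\mathcal{C}_0^\infty(\O)$, integrate $\int_\O\div_z(JF^{-1}j)\,\varphi\ud z$ by parts, use (1) to rewrite $\grad_z\varphi=F^T(\gradx\tilde\varphi)\circ\psi$ with $\tilde\varphi=\varphi\circ\psi^{-1}$, note the algebraic identity $(F^{-1}j)^TF^T=j^T$, and apply the change of variables $\d x=J\,\d z$. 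This turns the integral into $-\int_{\tilde\O}\tilde j\cdot\gradx\tilde\varphi\ud x=\int_{\tilde\O}(\divx\tilde j)\,\tilde\varphi\ud x$, and transforming back recovers the identity without ever differentiating $F$.

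Finally, (3) follows from (2) applied column by column. Since the divergence of a matrix is taken columnwise, the $k$-th component of $\div_z(JF^{-1}M)$ is $\div_z\bigl(JF^{-1}(Me_k)\bigr)$, and the column $Me_k$ is a vector field with $(Me_k)(z)=(\tilde Me_k)(\psi(z))$. Applying (2) to $j:=Me_k$ yields the $k$-th component of $(J\circ\psi^{-1})\,\divx\tilde M$, and letting $k$ run over $1,\dots,n$ assembles the full matrix identity.
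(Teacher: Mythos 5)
Your proof is correct and takes essentially the same route as the paper, which handles (1) by the chain rule, (2) by citing the Piola transformation (Zeidler, Ch.~61), and (3) column-wise; you simply supply the explicit computation (product rule, cofactor identity, Piola identity) that the paper delegates to the reference. Your observation that the Piola identity formally requires $\mathcal{C}^2$ regularity while the lemma assumes only $\mathcal{C}^1$, together with the weak-formulation workaround (and the remark that $\psi$ is in fact $\mathcal{C}^\infty$ in every application here), is a legitimate refinement of a point the paper does not address.
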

\begin{proof}
The first assertion is a simple application of the chain rule, whereas the second one is known as the Piola-transformation (see \cite{zeidler4}, Chapter 61. Note that Zeidler defines vectors and gradients row-wise, leading to slightly different formulas.) For the matrix divergence the second statement holds column-wise.
\end{proof}

Application of this lemma yields:

\begin{lem}
\label{lem:transform}
Let $\psi$ be a volume-preserving $\mathcal{C}^1$-coordinate transformation. The operators from Definition~\ref{def:operators} transform according to
\begin{enumerate}
\item $\Delta_x(\tilde{c})  =   \div_z(F^{-1}F^{-T}\grad_z c)$.
\item $\Delta_x (\tilde{j})  =\div_z(F^{-1}F^{-T}\grad_z j)     $.
\item $  \div_x(\tilde{j}) =  \div_z(F^{-1} j)   $.
\item $ \div_x (\tilde{M}) = \div_z(F^{-1}M)   $.
\item $ \Curl_x(\tilde{c}) = \Curlt_z(c)   $, \\
      with
      \[
        \Curlt_z(c)= \begin{bmatrix}
                    0& -1 \\
                    1 & \phantom{-}0
                   \end{bmatrix}
                   F^{-T} \grad_z c.
      \]
\item $  \curl_x(\tilde{j}) = \curlt_z(j) $, \\
      with
       \[
         \curlt_z(j)=\curl_z(F^T j) . 
       \]
\end{enumerate}
\end{lem}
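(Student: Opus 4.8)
The plan is to derive all six identities from the three transformation rules collected in Lemma~\ref{lem:transfoperators}, exploiting throughout that $\psi$ is volume preserving, so that $J=\det F\equiv 1$.

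Assertions (3) and (4) require no work beyond setting $J\equiv 1$ in parts~2 and~3 of Lemma~\ref{lem:transfoperators}: the Piola identity $\div_z(JF^{-1}j)=(J\circ\psi^{-1})\divx(\tilde{j})$ collapses to $\divx(\tilde{j})=\div_z(F^{-1}j)$, and likewise column-wise for the matrix divergence. Assertions (1) and (2) then follow by factoring the Laplacian as $\Delta=\div\,\grad$. For (1), Lemma~\ref{lem:transfoperators}(1) identifies the pulled-back gradient $(\gradx\tilde{c})\circ\psi$ with the vector field $F^{-T}\grad_z c$; feeding this into the divergence rule (3) gives $\Delta_x\tilde{c}=\div_z(F^{-1}F^{-T}\grad_z c)$. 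For (2) I would apply the same identity column by column: since the gradient of a vector field is defined column-wise, $(\gradx\tilde{j})\circ\psi$ equals the matrix $F^{-T}\grad_z j$, and the matrix-divergence rule (4) closes the argument. Assertion (5) is immediate from the definition $\Curl_x\tilde{c}=\begin{bmatrix}0&-1\\1&0\end{bmatrix}\gradx\tilde{c}$ together with Lemma~\ref{lem:transfoperators}(1).

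I expect the only genuine obstacle to be assertion (6), because $\curl$ does not factor through the gradient and divergence rules. I would establish it by a direct computation. Expanding
\[
\curl_z(F^Tj)=\frac{\p}{\p z_1}(F^Tj)_2-\frac{\p}{\p z_2}(F^Tj)_1
\]
by the product rule yields two groups of terms. The terms carrying derivatives of the entries of $F$ vanish identically, since $F_{ik}=\p\psi_i/\p z_k$ and the mixed second partials of the smooth map $\psi$ commute. In the surviving terms I would substitute the chain rule $\p j_k/\p z_l=\sum_m(\p\tilde{j}_k/\p x_m)\,F_{ml}$ and collect coefficients: those of $\p_{x_1}\tilde{j}_1$ and $\p_{x_2}\tilde{j}_2$ cancel, while the coefficients of $\p_{x_1}\tilde{j}_2$ and $\p_{x_2}\tilde{j}_1$ reduce to $+\det F$ and $-\det F$. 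This is exactly where $\det F\equiv 1$ enters, leaving $\p_{x_1}\tilde{j}_2-\p_{x_2}\tilde{j}_1=(\curl_x\tilde{j})\circ\psi$, as claimed.

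As a cross-check I would keep a duality argument in reserve: transform the formal-adjoint identity relating $\Curl$ and $\curl$ to the reference domain (the volume elements agree because $\psi$ is volume preserving), rewrite the left-hand side with assertion~(5), integrate by parts in $z$, and use the elementary matrix identity $F^{-1}\begin{bmatrix}0&-1\\1&0\end{bmatrix}=\begin{bmatrix}0&-1\\1&0\end{bmatrix}F^T$, valid precisely when $\det F=1$. Since the direct computation avoids test functions and makes the role of $\det F\equiv 1$ transparent, I would present that as the main route.
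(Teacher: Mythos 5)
Your proposal is correct and follows essentially the same route as the paper: items (3) and (4) from the Piola identity with $J\equiv 1$, items (1) and (2) by factoring $\Delta=\div\grad$ and combining the gradient and divergence rules, item (5) directly from the gradient rule, and item (6) by the direct computation that the paper only alludes to. Your expansion of $\curl_z(F^Tj)$ — cancellation of the mixed second partials of $\psi$ and collapse of the remaining coefficients to $\pm\det F$ — correctly supplies the detail the paper omits.
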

\begin{proof}
For volume-preserving coordinate transformations it holds $J\equiv 1$, thus in that case  by the preceding lemma we have $\div_z(F^{-1}{j})= \divx(\tilde{j})$ and $\div_z(F^{-1}{M})=  \divx(\tilde{M})$, which gives the third and the fourth statement. The first and the second statement follow by the equalities $\Delta_x(\tilde{c})= \div_x(\gradx\tilde{c})$ and $\Delta_x (\tilde{j})=\div_x (\gradx\tilde{j})$ and application of the above results to the right hand sides.

The fifth statement follows along the same lines, whereas the sixth can be obtained by a direct calculation of the effect of the transformation on the defining equation.
\end{proof}

A simple computation shows that $\begin{bmatrix}
                    0& -1 \\
                    1 & \phantom{-}0
                   \end{bmatrix}
                   F^{-T}= F\begin{bmatrix}
                    0& -1 \\
                    1 & \phantom{-}0
                   \end{bmatrix}$; thus it holds
\[
\Curlt_z(c)=F \Curl_z(c).
\]

\begin{lem}[Transformed Differential Identities]
\label{lem:transform2}
Let $\psi$ be a volume-preserving $\mathcal{C}^1$-coordinate transformation as above. Then the following identities hold:
\begin{enumerate}
\item $\div_z(F^{-1}c)= F^{-T}\grad_z c$.
\item $\div_z(F^{-1}F^{-T}\grad_z (\div_z (F^{-1}j)) )= \div_z(F^{-1}\div_z(F^{-1}F^{-T}\grad_z j))$. 
\item $\div_z(F^{-1}\Curlt_z(c))=0$.
\item  $\div_z(F^{-1} (c j) ) = c \div_z(F^{-1}j) + F^{-T}\grad c \cdot j$.
\item $\curlt_z(F^{-T}\grad_z c) =0 $.
\item $\curlt_z (\div_z(F^{-1}F^{-T} \grad_z j ))= \div_z(F^{-1}F^{-T} \grad_z  \curlt_z(j) )$.
\item If $\div_z(F^{-1}j)=0$, then 
         \[F^{-T}\grad_z( \curlt_z(j) ) = \begin{bmatrix}
                    0& -1 \\
                    1 & \phantom{-}0
                   \end{bmatrix} \div_z(F^{-1}F^{-T} \grad_z j). \]
\item $F^{-T}\grad_z( \div_z(F^{-1}F^{-T} \grad c) )= \div_z(F^{-1}F^{-T}\grad_z(F^{-T}\grad_z c))$.
\end{enumerate}
\end{lem}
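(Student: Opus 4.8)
The plan is to prove all eight identities by a single device: each one is the pullback under $\psi$ of an elementary identity in the $x$-coordinates, and the pullback is carried out using the operator transformation rules already established in Lemma~\ref{lem:transfoperators} and Lemma~\ref{lem:transform}. Read as a dictionary, those rules say that, for a pullback $q=\tilde q\circ\psi$, the operator $F^{-T}\grad_z$ realises $\grad_x$ on scalars (Lemma~\ref{lem:transfoperators}(1)), $\div_z(F^{-1}\cdot)$ realises $\div_x$ on vectors and matrices (Lemma~\ref{lem:transform}(3),(4)), the composite $\div_z(F^{-1}F^{-T}\grad_z\cdot)$ realises $\Delta_x$ on scalars and on vectors (Lemma~\ref{lem:transform}(1),(2)), $\curlt_z$ realises $\curl_x$ (Lemma~\ref{lem:transform}(6)), and $\Curlt_z$ realises $\Curl_x$ (Lemma~\ref{lem:transform}(5)). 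In each case ``realises'' means that applying the transformed operator to $q$ produces the pullback of the corresponding $x$-operator applied to $\tilde q$.

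First I would record this dictionary and verify that the operator types match when one composes (scalar $\to$ vector $\to$ scalar, and so on). Then each assertion follows by composing dictionary entries and invoking one standard flat-space identity: statement (1) is the pullback of $\div_x(\tilde c\,\Id)=\grad_x\tilde c$; (3) of $\div_x\Curl_x\tilde c=0$; (5) of $\curl_x\grad_x\tilde c=0$; (8) of the commutation $\grad_x\Delta_x\tilde c=\Delta_x\grad_x\tilde c$; (2) of $\Delta_x\div_x\tilde j=\div_x\Delta_x\tilde j$; and (6) of $\curl_x\Delta_x\tilde j=\Delta_x\curl_x\tilde j$. For the commutation identities one only needs that constant-coefficient operators commute in the flat metric. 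Statement (4) is either the pullback of the Leibniz rule $\div_x(\tilde c\tilde j)=\tilde c\,\div_x\tilde j+\grad_x\tilde c\cdot\tilde j$ or, equally quickly, a one-line direct computation.

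To illustrate the composition step, take (8): writing $\tilde w=\grad_x\tilde c$, Lemma~\ref{lem:transfoperators}(1) gives $F^{-T}\grad_z c=\tilde w\circ\psi$, so applying $\div_z(F^{-1}F^{-T}\grad_z\cdot)$ and Lemma~\ref{lem:transform}(2) yields $\div_z(F^{-1}F^{-T}\grad_z(F^{-T}\grad_z c))=(\Delta_x\grad_x\tilde c)\circ\psi$; on the other hand $\div_z(F^{-1}F^{-T}\grad c)=(\Delta_x\tilde c)\circ\psi$, and one more application of Lemma~\ref{lem:transfoperators}(1) gives $F^{-T}\grad_z(\div_z(F^{-1}F^{-T}\grad c))=(\grad_x\Delta_x\tilde c)\circ\psi$; the two sides agree because $\grad_x\Delta_x=\Delta_x\grad_x$. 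The remaining identities are proved by the identical template, peeling operators from the inside out.

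The hard part will be statement (7), the only genuinely two-dimensional one and the only one that uses the hypothesis $\div_z(F^{-1}j)=0$, i.e.\ $\div_x\tilde j=0$. Here the underlying flat identity is not a commutation but the solenoidal relation $\grad_x(\curl_x\tilde j)=R\,\Delta_x\tilde j$ for a suitable rotation $R$, which I would obtain from the planar decomposition $\Delta_x\tilde j=\grad_x\div_x\tilde j+\Curl_x(\curl_x\tilde j)$: under $\div_x\tilde j=0$ this collapses to $\Delta_x\tilde j=\Curl_x(\curl_x\tilde j)$, and since $\Curl_x$ acting on a scalar is a rotation applied to $\grad_x$, inverting that rotation isolates $\grad_x(\curl_x\tilde j)$. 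Pulling this back with Lemma~\ref{lem:transfoperators}(1) and Lemma~\ref{lem:transform}(2),(6) turns the left-hand side into $F^{-T}\grad_z(\curlt_z j)$ and the right-hand side into the stated rotation of $\div_z(F^{-1}F^{-T}\grad_z j)$. The single delicate point is the orientation: one must pin down the sign of the rotation matrix by a short componentwise computation using $\div_x\tilde j=0$ so that it matches the matrix written in the lemma. Everything else is type-checking and bookkeeping with the dictionary; there are no estimates or analytic subtleties, and the identities hold wherever the indicated derivatives exist, classically for smooth $\psi$ as in the applications.
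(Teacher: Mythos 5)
Your proposal is correct and follows essentially the same route as the paper: each identity is obtained by pulling back an elementary flat-space identity through the transformation rules of Lemma~\ref{lem:transfoperators} and Lemma~\ref{lem:transform}, and statement (7) is reduced, exactly as in the paper, to the planar relation $\grad_x(\curl_x\tilde j)=\left[\begin{smallmatrix}0&-1\\1&\phantom{-}0\end{smallmatrix}\right]\Delta_x\tilde j$ valid under $\div_x\tilde j=0$ (the paper verifies this by a short componentwise computation using $\p\tilde j_1/\p x_1=-\p\tilde j_2/\p x_2$, which amounts to the same thing as your Helmholtz-type decomposition).
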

\begin{proof}
To obtain the first statement transform the well-known equation $\divx(\tilde{c}I)=\gradx \tilde{c}$. The second follows from $\Delta_x(\div_x \tilde{j})= \div_x (\Delta_x \tilde{j})$. Next transform $\div_x(\Curl_x(\tilde{c}))=0$ and $\div_x(\tilde{c}\tilde{j})= \tilde{c}\div_x(\tilde{j}) + \grad_x\tilde{c}\cdot\tilde{j} $. Finally observe that $\curl_x(\grad_x \tilde{c})=0$ as well as $\curl_x( \Delta_x \tilde{j} ) =  \Delta_x (\curl \tilde{j})$. 

If $\div_x(\tilde{j})=0$, a simple calculation together with the fact that in this case $\frac{\p \tilde{j}_1}{\p x_1}= - \frac{\p \tilde{j}_2}{\p x_2}$ shows that $\grad_x (\curl_x \tilde{j}) =\left[ \begin{smallmatrix}
                    0& -1 \\
                    1 & \phantom{-}0
                   \end{smallmatrix} \right] \Delta_x (\tilde{j}) $, which upon transformation yields the result.
For the last statement consider $\grad_x(\Delta_x \tilde{c})= \Delta_x( \grad_x \tilde{c})$.
\end{proof}

\begin{rem}
Let ${\nu}(x)$ be the unit normal vector at $x\in \p {\O}$. Then the corresponding transformed unit normal vector is given by
\[
\tilde{\nu}(x)=\norm{F^{-T}(x) {\nu}(x)}^{-1} \,  F^{-T}(x) {\nu}(x).
\]
If $n=2$, the unit tangential vector $\tilde{\tau}(x)$ has the direction $\left[\begin{smallmatrix}
                    0& -1 \\
                    1 & \phantom{-}0
                   \end{smallmatrix} \right] F^{-T}(x) {\nu}(x) = F(x) \left[\begin{smallmatrix}
                    0& -1 \\
                    1 & \phantom{-}0
                   \end{smallmatrix} \right ] {\nu}(x) = F(x) {\tau}(x)$, thus it holds
\[
\tilde{\tau}(x) = \norm{F(x){\tau}(x)}^{-1} \, F(x){\tau}(x).
\]
$\norm{\cdot}$ indicates the chosen norm in $\R^n$.
\end{rem}

\section{Very weak solutions of the transformed Stokes system}
\label{sec:vws}

In this appendix we develop the theory of very weak solutions for the transformed Stokes equations in $\O_1$, which has been suggested for $F=\Id$ by Conca in \cite{con_fl2}. We then derive estimates of the velocity and the pressure. We are looking for functions $(B,\beta) \in L^2(\O_1)^2\times H^{-1}(\O_1)$ as solution of
\begin{subequations}
\label{eq:vws}
\begin{empheq}[box=\widefbox]{alignat=2}
-\div(F^{-1}F^{-T} B) + F^{-T} \grad \beta &= G_1 + \div(F^{-1}G_2) & \quad&\text{ in } \O_1 \\
\div(F^{-1}B) &= 0 && \text{ in } \O_1 \\ 
B &= \xi & &\text{ on } \Sigma_T \\
B, \beta& \text{ are $L$-periodic in $x_1$.}
\end{empheq}
\end{subequations}
Here $\Sigma_T:= \Sigma \cup (0,L)\times \{ h\}$, and $\xi\in L^2(\Sigma_T)^2$, $G_1 \in L^2(\O_1)^2$ as well as $G_2 \in L^2(\O_1)^4$ are given functions. We require the compatibility condition
\[
-\int_\Sigma F^{-1}\xi \cdot e_2 \ud \sigma + \int_{\{ x_2=h\}} F^{-1} \xi \cdot e_2 \ud \sigma =0.
\]
Define the space $W_3=\{  g\in H^1(\O_1), g \text{ is $L$-periodic in $x_1$}, \int_{\O_1} g \ud x =0  \}$ and let $g\in L^2(\O_1)^2$, $u\in W_3$. The auxiliary problem
\begin{subequations}
\label{eq:vwsaux}
\begin{empheq}[box=\widebox]{alignat=2}
-\div(F^{-1}F^{-T} \phi) + F^{-T} \grad \pi &= g & \quad&\text{ in } \O_1 \\
\div(F^{-1}\phi) &= u && \text{ in } \O_1 \\ 
\phi &= 0 & &\text{ on } \Sigma_T \\
\phi, \pi \text{ are $L$-periodic} &\text{ in $x_1$.}
\end{empheq}
\end{subequations}
can be transformed to a Stokes system in $\tilde{\O}_1$ and solved on that domain. This yields the existence of a unique solution $(\phi, \pi) \in H^2(\O_1)^2 \times H^1(\O_1)/\R$ with
\begin{equation}
\label{eq:vwsestimhelp}
\nx{\phi}{H^2(\O_1)^2} + \nx{\grad \pi}{L^2(\O_1)^2} \leq C( \nx{g}{L^2(\O_1)^2} + \nx{u}{H^1(\O_1)}  ).
\end{equation}
Using integration by parts, one calculates that
\begin{gather*}
\int_{\O_1} B\cdot g \ud x - \langle \beta, u \rangle_{\text{\tiny $H^{-1},H^1$}} =  \int_{\O_1} ( G_1 + \div(F^{-1} G_2)  ) \phi \ud x - \int_{\Sigma_T} F^{-1}( F^{-T}\grad \phi - \pi I  )\nu \cdot \xi \ud \sigma
\end{gather*}
Define for $(g,u)\in L^2(\O_1)^2 \times W_3$
\[
l(g,u) = (G_1+\div(F^{-1}G_2),\phi)_\text{\tiny$\O_1$} - (F^{-1}( F^{-T}\grad \phi - \pi I  )\nu,\xi)_\text{\tiny$\Sigma_T$},
\]
where $\phi, \pi$ is a solution of \eqref{eq:vwsaux}.

\begin{defn}
$(B,\beta) \in L^2(\O_1)^2 \times H^{-1}(\O_1)$ is called a \emph{very weak solution} of the Stokes problem \eqref{eq:vws}, if for all $(g,u)\in L^2(\O_1)^2 \times W_3$ the identity
\[
(B,g)_\text{\tiny $\O_1$} - \langle \beta, u \rangle_{\text{\tiny $H^{-1},H^1$}}  = l(g,u)
\]
holds.
\end{defn}

\begin{lem}
The functional $l: L^2(\O_1)^2 \times W_3 \longrightarrow \R$ is linear and continuous.
\end{lem}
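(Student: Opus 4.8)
The plan is to read linearity off directly from the structure of $l$, and to get continuity by combining Cauchy--Schwarz and the trace theorem with the a priori bound \eqref{eq:vwsestimhelp} for the auxiliary problem \eqref{eq:vwsaux}. Linearity is immediate: because \eqref{eq:vwsaux} has a unique solution, the solution map $(g,u)\mapsto(\phi,\pi)$ is linear (if $(g,u)\mapsto(\phi,\pi)$ and $(g',u')\mapsto(\phi',\pi')$, then $(\phi+\lambda\phi',\pi+\lambda\pi')$ solves \eqref{eq:vwsaux} with data $(g+\lambda g',u+\lambda u')$ and hence equals its unique solution). Since $l(g,u)$ is the sum of a volume pairing of $\phi$ against the fixed datum $G_1+\div(F^{-1}G_2)$ and a boundary pairing of $F^{-1}(F^{-T}\grad\phi-\pi I)\nu$ against the fixed $\xi$, both linear in $(\phi,\pi)$, the composite $l$ is linear in $(g,u)$.

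For continuity I would estimate the two contributions to $l(g,u)$ separately and only at the end insert \eqref{eq:vwsestimhelp}. In the volume term the $G_1$-part is handled by Cauchy--Schwarz, $|(G_1,\phi)_{L^2(\O_1)}|\le\nx{G_1}{L^2(\O_1)^2}\nx{\phi}{L^2(\O_1)^2}$. For the $\div(F^{-1}G_2)$-part I would integrate by parts, using that $\phi$ vanishes on $\Sigma_T$ and is $L$-periodic in $x_1$ so that every boundary contribution drops out; this gives $(\div(F^{-1}G_2),\phi)_{L^2(\O_1)}=-(F^{-1}G_2,\grad\phi)_{L^2(\O_1)}$, which is bounded by $C\nx{G_2}{L^2(\O_1)^4}\nx{\grad\phi}{L^2(\O_1)^4}$ since $F^{-1}$ is bounded.

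For the boundary term I would apply Cauchy--Schwarz on $\Sigma_T$ and then control the trace norm $\nx{F^{-1}(F^{-T}\grad\phi-\pi I)\nu}{L^2(\Sigma_T)}$. Here $\phi\in H^2(\O_1)^2$ and $\pi\in H^1(\O_1)$ (taking the mean-zero representative, so that $\nx{\pi}{H^1(\O_1)}\le C\nx{\grad\pi}{L^2(\O_1)^2}$), whence the traces of $\grad\phi$ and of $\pi$ lie in $H^{1/2}(\Sigma_T)\hookrightarrow L^2(\Sigma_T)$, with norms dominated by $\nx{\phi}{H^2(\O_1)^2}$ and $\nx{\grad\pi}{L^2(\O_1)^2}$ respectively; the smoothness and boundedness of $F$ absorb the matrix factors. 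Collecting the estimates yields $|l(g,u)|\le C\bigl(\nx{\phi}{H^2(\O_1)^2}+\nx{\grad\pi}{L^2(\O_1)^2}\bigr)$ with $C$ depending only on the fixed data $G_1,G_2,\xi$ and on $F$, and \eqref{eq:vwsestimhelp} then turns the right-hand side into $C(\nx{g}{L^2(\O_1)^2}+\nx{u}{H^1(\O_1)})$, which is the asserted continuity.

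The only step that is not a routine Cauchy--Schwarz or periodicity argument is the passage to $L^2$ traces of $\grad\phi$ and $\pi$ on $\Sigma_T$: this is precisely where the $H^2\times H^1$ regularity contained in \eqref{eq:vwsestimhelp} is indispensable, and that regularity is itself inherited from elliptic regularity for the genuine Stokes system on $\tilde\O_1$ after the coordinate transformation. Everything else reduces to boundedness of $F^{-1},F^{-T}$ and the continuous embedding $H^{1/2}(\Sigma_T)\hookrightarrow L^2(\Sigma_T)$.
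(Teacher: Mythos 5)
Your proposal is correct and follows essentially the same route as the paper: linearity from the linearity of the solution operator of \eqref{eq:vwsaux}, Cauchy--Schwarz plus trace estimates for the volume and boundary pairings, and the a priori bound \eqref{eq:vwsestimhelp} to convert $\nx{\phi}{H^2(\O_1)^2}+\nx{\grad\pi}{L^2(\O_1)^2}$ into $\nx{g}{L^2(\O_1)^2}+\nx{u}{H^1(\O_1)}$. The only difference is that you spell out explicitly the integration by parts behind the $\div(F^{-1}G_2)$ pairing and the uniqueness argument for linearity, which the paper leaves implicit.
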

\begin{proof}
The solution operator to \eqref{eq:vwsaux} is linear, as is the gradient and the duality pairing. Thus, one obtains the linearity of $l$. For the continuity, we estimate 
\begin{align*}
|l(g,u)| & \leq  |(G_1+\div(F^{-1}G_2),\phi)_\text{\tiny$\O_1$}| + |(F^{-1}( F^{-T}\grad \phi - \pi I  )\nu,\xi)_\text{\tiny$\Sigma_T$}| \\
&\leq \nx{G_1}{L^2(\O_1)^2} \nx{\phi}{L^2(\O_1)^2} + \nx{G_2}{L^2(\O_1)^4} \nx{\phi}{H^1(\O_1)^2} + C\nx{\grad \phi}{L^2(\Sigma_T)^4} \nx{\xi}{L^2(\Sigma_T)^2} + C\nx{\pi}{L^2(\Sigma_T)} \nx{\xi}{L^2(\Sigma_T)^2} \\
& \leq  ( \nx{G_1}{L^2(\O_1)^2}  +  \nx{G_2}{L^2(\O_1)^4} )\nx{\phi}{H^2(\O_1)^2} + C\nx{ \phi}{H^\frac{3}{2}(\O_1)^2} \nx{\xi}{L^2(\Sigma_T)^2} + C\nx{\grad\pi}{L^2(\O_1)^2} \nx{\xi}{L^2(\Sigma_T)^2} \\
& \leq C( \nx{G_1}{L^2(\O_1)^2}  +  \nx{G_2}{L^2(\O_1)^4}  + 2\nx{\xi}{L^2(\Sigma_T)^2}) ( \nx{\phi}{H^2(\O_1)^2} +   \nx{\grad\pi}{L^2(\O_1)} ) \\
& \leq C( \nx{G_1}{L^2(\O_1)^2}  +  \nx{G_2}{L^2(\O_1)^4}  + \nx{\xi}{L^2(\Sigma_T)^2})  ( \nx{g}{L^2(\O_1)^2} + \nx{u}{H^1(\O_1)} )
\end{align*}
by \eqref{eq:vwsestimhelp}. This shows that $l$ is bounded and thus continuous.
\end{proof}

\begin{prop}
There exists a unique very weak solution of \eqref{eq:vws}.
\end{prop}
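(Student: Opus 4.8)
The plan is to read the very weak formulation as the single statement that a bounded linear functional on a Hilbert space is represented by the pair $(B,\beta)$, and to produce that representation by an appeal to the Riesz representation theorem exploiting the product structure of the test space. Set $V := L^2(\O_1)^2 \times W_3$. This is a Hilbert space: $L^2(\O_1)^2$ is standard, and $W_3$ is a closed subspace of the $L$-periodic $H^1(\O_1)$ on which, by the Poincaré--Wirtinger inequality (mean-zero functions), $\nx{\grad u}{L^2(\O_1)}$ is an equivalent norm. The preceding lemma states exactly that $l \in V^*$, so all the analytic work needed to run the argument is already in place.

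The crucial structural observation is that the left-hand side of the defining identity decouples across the two factors of $V$: testing with $(g,0)$ isolates $B$, and testing with $(0,u)$ isolates $\beta$. Concretely, I would first define $B\in L^2(\O_1)^2$ to be the Riesz representative of the bounded functional $g\mapsto l(g,0)$, so that $(B,g)_{\O_1}=l(g,0)$ for all $g\in L^2(\O_1)^2$. Next, $u\mapsto -l(0,u)$ is bounded and linear on $W_3\subset H^1(\O_1)$; extending it to the full periodic $H^1$-space (canonically, via orthogonal projection onto $W_3$) yields $\beta\in H^{-1}(\O_1)$ with $-\langle\beta,u\rangle_{H^{-1},H^1}=l(0,u)$ for all $u\in W_3$. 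Adding the two identities and invoking the linearity of $l$ gives $(B,g)_{\O_1}-\langle\beta,u\rangle_{H^{-1},H^1}=l(g,0)+l(0,u)=l(g,u)$ for every $(g,u)\in V$, so $(B,\beta)$ is a very weak solution, with a norm bound in terms of the data inherited from the continuity estimate for $l$.

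Uniqueness is immediate from the same decoupling: if $(B_1,\beta_1)$ and $(B_2,\beta_2)$ both solve the problem, subtracting the identities and testing with $(g,0)$ forces $(B_1-B_2,g)_{\O_1}=0$ for all $g\in L^2(\O_1)^2$, hence $B_1=B_2$; testing with $(0,u)$ forces $\langle\beta_1-\beta_2,u\rangle_{H^{-1},H^1}=0$ for all $u\in W_3$, hence $\beta_1=\beta_2$ as functionals on $W_3$.

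The main subtlety — and the only place the pressure-gauge structure intervenes — is the genuine identification of $\beta$ as an element of $H^{-1}(\O_1)$ together with its uniqueness there. Since the test functions run only over the mean-zero space $W_3$, the datum $-l(0,\cdot)$ determines $\beta$ a priori only modulo the annihilator of $W_3$, i.e.\ modulo constants, which is precisely the usual pressure gauge reflected in the quotient $L^2(\O_1)/\R$ appearing elsewhere in the paper. One therefore either fixes the gauge (prescribing $\beta$ on the orthogonal complement of $W_3$) or, equivalently, interprets $H^{-1}(\O_1)$ as the dual of the periodic mean-zero space, under which the restriction map is injective and $\beta$ is unique. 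Checking that the chosen extension is well defined and norm-controlled — so that the pair indeed lies in $L^2(\O_1)^2\times H^{-1}(\O_1)$ — is exactly where the closedness of $W_3$ and the Poincaré--Wirtinger estimate are used.
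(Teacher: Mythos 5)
Your proposal is correct and follows essentially the same route as the paper: the paper also applies the Riesz representation theorem on the Hilbert space $L^2(\O_1)^2\times W_3$ (using the preceding continuity lemma for $l$) to obtain a unique representative $(\tilde{B},\tilde{\beta})$ and then sets $B=\tilde{B}$, $\beta=-\tilde{\beta}$. Your additional remark on the pressure gauge — that $\beta$ is determined as an element of $H^{-1}(\O_1)$ only modulo the annihilator of $W_3$, i.e.\ modulo constants — is a point the paper glosses over by keeping $\tilde{\beta}\in W_3$, and is a welcome clarification rather than a deviation.
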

\begin{proof}
The Lemma above shows that $l\in (L^2(\O_2)^2 \times W_3)^*$. Since $L^2(\O_2)^2 \times W_3$ is a Hilbert space, the Riesz representation theorem yields the existence of a unique $\tilde{B}\in L^2(\O_1)$ and a unique $\tilde{\beta}\in W_3$ such that
\[
(\tilde{B},g)_\text{\tiny $\O_1$} + \langle \tilde{\beta}, u \rangle_{\text{\tiny $H^{-1},H^1$}}  = l(g,u) \qquad \forall g,u \in L^2(\O_1)^2 \times W_3.
\]
By choosing $B=\tilde{B}$, $\beta = - \tilde{\beta}$, we see that $(B,\beta)$ is a very weak solution of \eqref{eq:vws}.
\end{proof}

\begin{lem}
Let $(B, \beta)$ be a very weak solution of \eqref{eq:vws}. There exists a constant $C>0$ such that the estimate
\begin{empheq}[box=\widefbox]{equation}
\label{eq:vwsestim}
\nx{B}{L^2(\O_1)^2} \leq C ( \nx{G_1}{L^2(\O_1)^2}  +  \nx{G_2}{L^2(\O_1)^4}  + \nx{\xi}{L^2(\Sigma_T)^2}  )
\end{empheq}
holds.
\end{lem}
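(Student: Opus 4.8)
The plan is to estimate $\nx{B}{L^2(\O_1)^2}$ by a transposition (duality) argument, exploiting the self-duality of $L^2$ together with the continuity of $l$ that has just been established. The guiding idea is that, although the very weak formulation couples $B$ with the pressure $\beta$, the pressure contribution can be switched off entirely by choosing the second test datum $u$ to be zero. Since $L^2(\O_1)^2$ is a Hilbert space, it suffices to control the pairing $(B,g)_{\O_1}$ linearly in $\nx{g}{L^2(\O_1)^2}$; equivalently, testing against $g=B$ itself directly produces $\nx{B}{L^2(\O_1)^2}^2$ on the left-hand side.

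Concretely, I would insert the admissible pair $(g,u)=(B,0)$ into the defining identity of a very weak solution. This is legitimate because $0\in W_3$, and the duality term $\langle \beta,u\rangle_{H^{-1},H^1}$ then vanishes, leaving
\[
\nx{B}{L^2(\O_1)^2}^2 = (B,B)_{\O_1} = l(B,0).
\]
Thus the whole estimate reduces to bounding $l$ along the slice $u=0$. For this I would invoke the regularity bound \eqref{eq:vwsestimhelp} for the auxiliary problem \eqref{eq:vwsaux} with data $(g,u)=(B,0)$: its unique solution $(\phi,\pi)$ satisfies $\nx{\phi}{H^2(\O_1)^2} + \nx{\grad \pi}{L^2(\O_1)^2} \leq C\nx{B}{L^2(\O_1)^2}$, since the $\nx{u}{H^1(\O_1)}$-contribution is absent. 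Feeding this into the chain of inequalities already used for the continuity of $l$ gives
\[
|l(B,0)| \leq C\bigl(\nx{G_1}{L^2(\O_1)^2} + \nx{G_2}{L^2(\O_1)^4} + \nx{\xi}{L^2(\Sigma_T)^2}\bigr)\nx{B}{L^2(\O_1)^2}.
\]
Combining the two displays and dividing by $\nx{B}{L^2(\O_1)^2}$ yields \eqref{eq:vwsestim} at once.

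In carrying out the estimate of $l(B,0)$ I would treat the two contributions separately. The volume term $(G_1+\div(F^{-1}G_2),\phi)_{\O_1}$ is handled by $\nx{G_1}{L^2(\O_1)^2}\nx{\phi}{L^2(\O_1)^2}$ plus, after one integration by parts on the $G_2$-part (whose boundary contributions vanish because $\phi=0$ on $\Sigma_T$ and $\phi$ is $L$-periodic in $x_1$), the bound $C\nx{G_2}{L^2(\O_1)^4}\nx{\grad\phi}{L^2(\O_1)^4}$. The boundary term $(F^{-1}(F^{-T}\grad\phi - \pi I)\nu,\xi)_{\Sigma_T}$ is controlled by trace inequalities, estimating $\nx{\grad\phi}{L^2(\Sigma_T)^4}$ and $\nx{\pi}{L^2(\Sigma_T)}$ against $\nx{\phi}{H^2(\O_1)^2}$ and $\nx{\grad\pi}{L^2(\O_1)^2}$, respectively. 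Every factor on the right is then absorbed into $\nx{\phi}{H^2(\O_1)^2}+\nx{\grad\pi}{L^2(\O_1)^2}$ and bounded via \eqref{eq:vwsestimhelp}.

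The only genuinely delicate point — and the reason the whole scheme works — is that the estimate must be \emph{linear in $\nx{B}{L^2(\O_1)^2}$ without any higher-norm penalty}, which in turn forces one to have $L^2(\Sigma_T)$-traces of $\grad\phi$ and $\pi$ at one's disposal. This is precisely where the full $H^2\times H^1$-regularity of the auxiliary solution enters: it is available only because \eqref{eq:vwsaux} can be transferred, via the volume-preserving transformation $\psi$, to a genuine Stokes system on $\tilde{\O}_1$ for which standard elliptic regularity applies. No further obstruction appears; the argument is the Conca transposition method adapted to the transformed operators $F^{-1}F^{-T}\grad$ and $F^{-T}\grad$.
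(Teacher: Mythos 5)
Your proposal is correct and follows essentially the same route as the paper: test the very weak formulation with $(g,u)=(B,0)$ so that the pressure pairing drops out, then bound $l(B,0)$ using the continuity estimate (with the auxiliary regularity bound \eqref{eq:vwsestimhelp}) and absorb the resulting factor of $\nx{B}{L^2(\O_1)^2}$. The only cosmetic difference is that you divide by $\nx{B}{L^2(\O_1)^2}$ (handling $B=0$ trivially) where the paper invokes the scaled Young's inequality.
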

\begin{proof}
Choose $g=B$ and $u=0$. Using the estimates from the proof of the previous lemma together with the scaled Young's inequality yields the result.
\end{proof}

\section{Various existence theorems}

\subsection{Transformed Stokes equation}
\label{sec:trst}

In this section we prove the existence and uniqueness of the solution of Problem  \eqref{eq:trsteps} for fixed $\eps$. Basically, the approach is the same as in the functional-analytic treatment of the Stokes equation (see e.g.\ \cite{sanpal}).

By multiplying \eqref{seq:trstepsmain} with $\phi\in H^1_\text{div}(\Oe)^2$ where
\[
H^1_\text{div}(\Oe)^2:=\{w\in H^1_{0,\#}(\Oe)^2\ |\ \div(F^{-1}w)=0\},
\] 
integrating by parts and noting that 
\[
\int_{\Oe} F^{-T}(x)\grad p(x) \cdot \phi(x) \ud x =-\int_{\Oe} p(x) \div(F^{-1}(x)\phi(x)) \ud x=0,
\]
we obtain the weak formulation of Problem \eqref{eq:trsteps} in the form
\begin{empheq}[box=\greybox]{equation}
\label{eq:wtrst1}
\int_{\Oe}^{} F^{-T}(x)\grad \ue(x) : F^{-T}(x)\grad \phi(x) \ud x= \int_{\Oe} f(x)\cdot \phi(x) \ud x\quad \forall \phi\in H^1_\text{div}(\Oe)^2
\end{empheq}
Note that $H^1_\text{div}(\Oe)$ is a Banach space with respect to the norm $\nx{\grad \cdot}{L^2(\Oe)^2}$.
We need the following lemma for the estimation of the left hand side:

\begin{lem}
\label{lem:constantskfKf}
There exist constants $0<k_F<K_F$ such that for the eigenvalues $\lambda(x)$ of $F^{-1}(x)F^{-T}(x)$ holds 
\[
k_F<\lambda(x)<K_F \quad \forall x\in\O,
\]
i.e. $F^{-1}(x)F^{-T}(x)$ is symmetric and positive definite.
\end{lem}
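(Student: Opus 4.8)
The plan is to reduce the statement to an explicit computation, exploiting the very simple shear structure of $F$ in \eqref{eq:F}. Since $\det F = 1$, the matrix $F$ is invertible and I would first write down
\[
F^{-1}(z) = \begin{bmatrix} 1 & 0 \\ -g'(z_1) & 1 \end{bmatrix}, \qquad F^{-T}(z) = \begin{bmatrix} 1 & -g'(z_1) \\ 0 & 1 \end{bmatrix},
\]
so that a direct multiplication gives
\[
F^{-1}(z)F^{-T}(z) = \begin{bmatrix} 1 & -g'(z_1) \\ -g'(z_1) & 1 + g'(z_1)^2 \end{bmatrix}.
\]
This matrix is manifestly symmetric, and it is positive definite because it is of the form $(F^{-T})^T F^{-T}$ with $F^{-T}$ invertible: for every $\xi \neq 0$ one has $\xi^T F^{-1}F^{-T}\xi = |F^{-T}\xi|^2 > 0$. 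Hence the two eigenvalues $\lambda(x)$ are real and strictly positive, which already settles the qualitative part of the claim.

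Next I would pin down the eigenvalues through the trace and the determinant rather than solving the characteristic polynomial. From the explicit form, $\det(F^{-1}F^{-T}) = (\det F)^{-2} = 1$ and $\tr(F^{-1}F^{-T}) = 2 + g'(z_1)^2$. Writing $\lambda_+ \geq \lambda_- > 0$ for the two eigenvalues, these are precisely the relations $\lambda_+\lambda_- = 1$ and $\lambda_+ + \lambda_- = 2 + g'(z_1)^2$. The one genuinely necessary external input is that $g'$ be bounded: since $g \in \C^\infty(\R)$ is $L$-periodic, its derivative is continuous and $L$-periodic, hence attains a finite maximum $G := \max_{z_1 \in [0,L]} |g'(z_1)| < \infty$. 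Because $F^{-1}F^{-T}$ depends on $x$ only through $z_1 = x_1$, this yields $2 \leq \tr(F^{-1}F^{-T}) \leq 2 + G^2$ uniformly over $x \in \O$.

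Finally, the quantitative bounds drop out of these two relations. Since $\lambda_- > 0$, the larger eigenvalue obeys $\lambda_+ < \lambda_+ + \lambda_- = \tr(F^{-1}F^{-T}) \leq 2 + G^2$, and since $\lambda_+\lambda_- = 1$ the smaller one obeys $\lambda_- = 1/\lambda_+ > 1/(2+G^2)$. Setting $K_F := 2 + G^2$ and $k_F := (2+G^2)^{-1}$ then gives $0 < k_F < \lambda(x) < K_F$ for every $x \in \O$, and $k_F < K_F$ holds because $2 + G^2 \geq 2 > 1$. I do not expect any real obstacle in this argument; the only point demanding care is that the constants be \emph{uniform} in $x$, and this is guaranteed entirely by the boundedness of $g'$, which itself rests on the smoothness and $L$-periodicity of $g$. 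It is worth noting that the relation $k_F K_F = 1$ simply reflects the volume-preserving identity $\det F \equiv 1$.
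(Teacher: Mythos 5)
Your proof is correct and follows essentially the same route as the paper: an explicit computation of $F^{-1}F^{-T}$ combined with the uniform bound on $g'$ coming from smoothness and $L$-periodicity. The only cosmetic difference is that you bound the eigenvalues via the trace and the determinant ($\lambda_+\lambda_-=1$, $\lambda_++\lambda_-=2+g'(x_1)^2$), whereas the paper writes them out explicitly from the characteristic polynomial; your variant is arguably a little cleaner and makes the reciprocal relation $k_FK_F=1$ transparent.
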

\begin{proof}
A calculation of the eigenvalues $\lambda_1(x), \lambda_2(x)$ of $F^{-1}(x)F^{-T}(x)$ yields
\begin{align*}
\lambda_1(x)&=1+\frac{g'(x_1)^2}{2}+ \sqrt{g'(x_1)^2+\frac{g'(x_1)^4}{4}}, \\
\lambda_2(x)&=1+\frac{g'(x_1)^2}{2}- \sqrt{g'(x_1)^2+\frac{g'(x_1)^4}{4}}.
\end{align*}
Because of the smoothness of $g$ there exists an $M>1$ such that $|g'(x)|<M$ for all $x\in\O$. Obviously $\lambda_i(x)\leq 1+2M^2=:K_F$, $i=1,2$.

Choose a $k_F$ small enough such that $M^2+2\leq \frac{1}{k_F}$. Another calculation shows that
\[
\lambda_2(x)\geq k_F \quad\Longleftrightarrow \quad g'(x_1)^2 \leq \frac{1}{k_F}+k_F-2,
\]
which gives the desired result since $\lambda_1(x)\geq\lambda_2(x)$.
\end{proof}

\begin{prop}
\label{prop:bilin}
Let $\eps>0$ be fixed and let $F$ be given by \eqref{eq:F}. For given $f\in L^2(\O)$, the Problem \eqref{eq:wtrst1} has a unique solution $\ue\in H^1_\mathrm{div}(\Oe)^2$.
\end{prop}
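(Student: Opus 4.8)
The plan is to solve the weak problem \eqref{eq:wtrst1} by a direct application of the Lax--Milgram theorem on the Hilbert space $H^1_\mathrm{div}(\Oe)^2$, equipped with the inner product $(u,v) \mapsto \int_{\Oe} \grad u : \grad v \ud x$ associated to the norm $\nx{\grad \cdot}{L^2(\Oe)^4}$. First I would confirm that this space is genuinely a Hilbert space: the map $w \mapsto \div(F^{-1}w)$ is linear and continuous from $H^1_{0,\#}(\Oe)^2$ to $L^2(\Oe)$, so the constraint $\div(F^{-1}w)=0$ cuts out a closed subspace; moreover, because every admissible $w$ vanishes on $\p\Oe \setminus(\{x_1=0\}\cup\{x_1=L\})$, a set of positive surface measure containing the upper wall $(0,L)\times\{h\}$, the Poincaré inequality makes $\nx{\grad\cdot}{L^2(\Oe)^4}$ an equivalent norm on $H^1_{0,\#}(\Oe)^2$. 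Completeness of $H^1_\mathrm{div}(\Oe)$ in this norm is already recorded in the text.

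The core of the argument is to verify the two structural hypotheses for the bilinear form $a(u,\phi) := \int_{\Oe} F^{-T}\grad u : F^{-T}\grad \phi \ud x$. Rewriting $\int_{\Oe} |F^{-T}\grad u|^2 \ud x = \int_{\Oe} \grad u : F^{-1}F^{-T}\grad u \ud x$ and invoking the eigenvalue bounds $k_F < \lambda(x) < K_F$ of Lemma~\ref{lem:constantskfKf}, I obtain coercivity, $a(u,u) \ge k_F \nx{\grad u}{L^2(\Oe)^4}^2$, directly from the lower bound. The upper bound, together with the Cauchy--Schwarz inequality in the form $|a(u,\phi)| \le \nx{F^{-T}\grad u}{L^2(\Oe)^4}\,\nx{F^{-T}\grad\phi}{L^2(\Oe)^4}$, yields continuity, $|a(u,\phi)| \le K_F \nx{\grad u}{L^2(\Oe)^4}\nx{\grad\phi}{L^2(\Oe)^4}$. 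Bilinearity is immediate from the definition.

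It remains to check that the right-hand side defines a continuous linear functional on $H^1_\mathrm{div}(\Oe)^2$. By Cauchy--Schwarz and the Poincaré inequality noted above, $|\int_{\Oe} f\cdot\phi \ud x| \le \nx{f}{L^2(\Oe)^2}\nx{\phi}{L^2(\Oe)^2} \le C \nx{f}{L^2(\Oe)^2}\nx{\grad\phi}{L^2(\Oe)^4}$ for $f \in L^2(\O)$. With all hypotheses of Lax--Milgram in force, there exists a unique $\ue \in H^1_\mathrm{div}(\Oe)^2$ solving \eqref{eq:wtrst1}.

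I expect the only non-routine point to be the coercivity, which rests entirely on the strict lower eigenvalue bound $k_F > 0$ from Lemma~\ref{lem:constantskfKf}; this in turn depends on the boundedness of $g'$ guaranteed by the smoothness and periodicity of $g$, and would fail if the interface geometry were allowed to degenerate. Since $\eps$ is fixed throughout this statement, the (possibly $\eps$-dependent) Poincaré constant is harmless and no uniformity in $\eps$ is required at this stage.
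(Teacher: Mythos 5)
Your proposal is correct and follows essentially the same route as the paper: a direct application of Lax--Milgram, with coercivity and continuity of the bilinear form obtained from the uniform eigenvalue bounds $k_F<\lambda(x)<K_F$ of Lemma~\ref{lem:constantskfKf}, and continuity of the right-hand side from Cauchy--Schwarz and Poincar\'e. The extra remarks you add (closedness of the divergence-free subspace, the role of the no-slip portion of the boundary in the Poincar\'e inequality) are details the paper leaves implicit but do not change the argument.
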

\begin{proof}
Define for $u,v\in H^1_\mathrm{div}(\Oe)^2$ the (bi-)linear forms
\[
a(u,v)= \int_{\Oe}  F^{-T}(x)\grad u(x) : F^{-T}(x)\grad v(x) \ud x
\]
and
\[
b(v)= \int_{\Oe} f(x)\cdot v(x) \ud x.
\]
The continuity of $b$ for $f\in L^2(\O)^2$ is standard. In order to apply the lemma of Lax-Milgram, we have to show that $a$ is continuous and coercive. 
First note that as a pointwise estimate we have
\begin{align*}
F^{-T}(x)\grad v(x)&:F^{-T}(x)\grad v(x)  =   \tr(\grad v(x)^TF^{-1}(x) F^{-T}(x)\grad v(x))   \\
&= \sum_{i=1}^2 e_i^T \grad v(x)^T F^{-1}(x) F^{-T}(x)\grad v(x) e_i  \\
 & \leq \sum_{i=1}^2 \nx{\grad v(x) e_i }{2} \nx{F^{-1}(x) F^{-T}(x)}{2} \nx{\grad v(x) e_i}{2} \\
& \leq K_F \sum_{i=1}^2 e_i^T \grad v(x)^T \grad v(x) e_i \\
& = K_F \grad v(x):\grad v(x) = K_F \nx{\grad v(x)}{2}^2,
\end{align*}
with $\nx{\cdot}{2}$ being the Euclidean vector- and matrixnorm.
This gives the continuity of $a$ due to
\begin{align*}
\int_{\Oe} | & F^{-T}(x)\grad u(x) : F^{-T}(x)\grad v(x) | \ud x  \\
 &= \int_{\Oe} | \sum_{i,j=1}^2 \bigl(F^{-T}(x)\grad u(x) \bigr)_{ij}  \bigl(F^{-T}(x)\grad v(x) \bigr)_{ij} | \ud x \\
& \leq  \int_{\Oe} \Bigl( \sum_{i,j=1}^2 \bigl(F^{-T}(x)\grad u(x))_{ij} \bigr)^2 \Bigr)^\frac{1}{2} \Bigl(\sum_{i,j=1}^2 \bigl(F^{-T}(x)\grad v(x))_{ij} \bigr)^2\Bigr)^\frac{1}{2} \ud x \\
&=  \int_{\Oe} \bigl(F^{-T}(x)\grad u(x):F^{-T}(x)\grad u(x)\bigr)^\frac{1}{2}  \bigl(F^{-T}(x)\grad v(x) : F^{-T}(x)\grad v(x) \bigr)^\frac{1}{2}  \ud x\\
& \leq K_F \int_{\Oe} \nx{\grad u(x)}{2} \nx{\grad v(x)}{2} \ud x\\
& \leq K_F \nx{\grad u}{L^2(\Oe)} \nx{\grad v}{L^2(\Oe)},
\end{align*}
where the Cauchy-Schwarz inequality in $L^2$ has been used in the last step. 
For the coercivity consider
\begin{align*}
k_F\nx{\grad v}{L^2(\Oe)} & \leq  \int_{\Oe} \lambda_2(x) \grad v(x):\grad v(x) \ud x\\
&\leq  \int_{\Oe}  \sum_{i=1}^2 \lambda_i(x) e_i^T \grad v(x)^T \grad v(x) e_i \ud x\\
& \leq \int_{\Oe}  F^{-T}(x)\grad v(x) :F^{-T}(x)\grad v(x) \ud x.
\end{align*}
Now the Lax-Milgram lemma implies the proposed result.
\end{proof}
Due to \eqref{eq:wtrst1}, the solution $\ue$ fullfills
\[
-\div(F^{-1}F^{-T}\grad \ue) - f \quad \in (H^1_\text{div}(\Oe)^2)^{\perp}.
\]
We will now characterize the orthogonal complement $(H^1_\text{div}(\Oe)^2)^{\perp}$ of $H^1_\text{div}(\Oe)^2$ in order to reintroduce the pressure. We remind the reader of the following results, the proofs of which can be found in \cite{wloka}:

\begin{thm}[Generalized Trace Theorem]
Let $k\in \R, k\geq 2$, and let $\Lambda $ be a bounded domain in $\R^n$, $n\in \N$ with boundary $\p\Lambda\in \mathcal{C}^{k+1}$. There exists a continuous linear operator $\mathcal{T}:H^k(\Lambda) \longrightarrow H^{k-\frac{1}{2}}(\p\Lambda) \times H^{k-1-\frac{1}{2}}(\p\Lambda)$ with
\[
\mathcal{T}(\phi) = (\phi|_{\p\Lambda}, \frac{\p \phi}{\p \nu}|_{\p\Lambda}  ) \qquad \text{for all $\phi\in \mathcal{C}^k(\bar{\Lambda})$.}
\]
\end{thm}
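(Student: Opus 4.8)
The plan is to follow the classical two-step strategy: first establish the trace estimate on the flat model domain, the half-space $\R^n_+=\{(x',x_n):x_n>0\}$, then transport it to the curved boundary $\p\Lambda$ by local charts and a partition of unity, and finally extend the densely defined restriction operator by continuity. I would begin by defining $\mathcal{T}$ on the dense subspace $\mathcal{C}^\infty(\bar\Lambda)\subset H^k(\Lambda)$ as $\phi\mapsto(\phi|_{\p\Lambda},\p_\nu\phi|_{\p\Lambda})$, so that the whole content of the theorem reduces to the a priori bound $\nx{\phi|_{\p\Lambda}}{H^{k-1/2}(\p\Lambda)}+\nx{\p_\nu\phi|_{\p\Lambda}}{H^{k-3/2}(\p\Lambda)}\le C\nx{\phi}{H^k(\Lambda)}$ for smooth $\phi$. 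Once this holds, $\mathcal{T}$ extends uniquely to a continuous operator on all of $H^k(\Lambda)$, and by construction it still returns the genuine traces on smooth functions.

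For the model problem I would use the partial Fourier transform $\hat u(\xi',x_n)$ in the tangential variables, under which $\nx{u}{H^k(\R^n_+)}^2$ is comparable to $\int_{\R^{n-1}}\int_0^\infty(\langle\xi'\rangle^{2k}|\hat u|^2+\langle\xi'\rangle^{2k-2}|\p_{x_n}\hat u|^2+\dotsb)\,dx_n\,d\xi'$, where $\langle\xi'\rangle=(1+|\xi'|^2)^{1/2}$. The crux is the one-dimensional inequality: for $v\in H^1(0,\infty)$ and any $\lambda>0$, $|v(0)|^2\le\lambda\int_0^\infty|v|^2\,dx_n+\lambda^{-1}\int_0^\infty|v'|^2\,dx_n$, which follows from $|v(0)|^2=-\int_0^\infty\tfrac{d}{dx_n}|v|^2\,dx_n\le 2\int_0^\infty|v|\,|v'|\,dx_n$ and Young's inequality. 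Applying this to $v=\hat u(\xi',\cdot)$ with $\lambda=\langle\xi'\rangle$, multiplying by $\langle\xi'\rangle^{2k-1}$ and integrating in $\xi'$ gives precisely $\nx{u(\cdot,0)}{H^{k-1/2}(\R^{n-1})}^2\le C\nx{u}{H^k(\R^n_+)}^2$. The normal-derivative bound follows by applying the same argument to $\p_{x_n}u\in H^{k-1}(\R^n_+)$, whose trace lands in $H^{k-3/2}(\R^{n-1})$ as claimed; since all the Fourier-side weights make sense for real exponents, this covers non-integer $k$ verbatim.

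To pass to $\Lambda$ I would cover $\p\Lambda$ by finitely many boundary charts $\Phi_j:U_j\to B$ flattening $\p\Lambda$ onto a piece of $\{x_n=0\}$ and $U_j\cap\Lambda$ onto an upper half-ball. Because $\p\Lambda\in\mathcal{C}^{k+1}$, each $\Phi_j$ and its inverse are $\mathcal{C}^{k+1}$, which is exactly the regularity needed so that pullback preserves $H^k$ in the interior and $H^{k-1/2}$, $H^{k-3/2}$ on the boundary (here $k-1/2<k+1$), and so that $\p_\nu$ on $\p\Lambda$ transforms into a combination of the flat normal and tangential derivatives with $\mathcal{C}^k$ coefficients. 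With a subordinate partition of unity $\{\chi_j\}$ I write $\phi=\sum_j\chi_j\phi$ plus an interior remainder that does not meet $\p\Lambda$, transport each piece $\chi_j\phi$ to the half-space, apply the model estimate, transport back, and sum.

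The step I expect to be the main obstacle is the bookkeeping for the normal derivative under these changes of coordinates: the flat estimate controls only $\p_{x_n}$ of the transported function, whereas $\p_\nu\phi$ on $\p\Lambda$ becomes a mixture of normal and tangential derivatives in the chart, so I must separately confirm that the tangential derivatives of $\phi|_{\p\Lambda}$ are controlled at the correct order $H^{k-3/2}$ — which they are, since $\phi|_{\p\Lambda}\in H^{k-1/2}$ and tangential differentiation costs exactly one order. A secondary technical point is making the comparability of $\nx{u}{H^k(\R^n_+)}$ with its Fourier representation rigorous for non-integer $k$; for this I would rely on a bounded extension operator to $\R^n$ together with the Bessel-potential definition of $H^k(\R^n)$, or equivalently on interpolation between consecutive integer orders.
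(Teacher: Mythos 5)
The paper does not prove this theorem at all: it is recalled as a classical result, with the proof delegated to the reference \cite{wloka}. Your argument is the standard textbook proof of exactly that result --- reduction to the half-space via $\mathcal{C}^{k+1}$ boundary charts and a partition of unity, the partial Fourier transform together with the one-dimensional inequality $|v(0)|^2\leq \lambda\int_0^\infty |v|^2+\lambda^{-1}\int_0^\infty |v'|^2$ with $\lambda=\langle\xi'\rangle$, the normal-derivative bound obtained by applying the same estimate to $\p_{x_n}u\in H^{k-1}$, and density of smooth functions --- and it is correct, including your handling of the two delicate points (the decomposition of $\p_\nu$ into flat normal and tangential parts, and the Bessel-potential/interpolation treatment of non-integer $k$). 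Since the paper offers no proof to compare against, there is nothing further to reconcile; your proof is essentially the one found in the cited source.
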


\begin{thm}[Generalized Inverse Trace Theorem]
\label{thm:invtrace}
Let $\Lambda $ be a bounded domain in $\R^n$, $n\in \N$, with boundary $\p\Lambda\in \mathcal{C}^{k+1}$ with a given $k\in \R, k\geq 2$. Let $\mathcal{T}$ be defined as above.

There exists a continuous linear extension operator $\mathcal{E}:H^{k-\frac{1}{2}}(\p\Lambda) \times H^{k-1-\frac{1}{2}}(\p\Lambda) \longrightarrow  H^k(\Lambda)$ such that
\[
\mathcal{T}\circ\mathcal{E}= \mathrm{Id}.
\]
\end{thm}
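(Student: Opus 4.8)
The plan is to follow the classical route for inverse trace theorems: localize to the boundary, flatten it, construct an explicit extension on a half-space via the tangential Fourier transform, and patch the local pieces together. First I would exploit the regularity $\p\Lambda\in\C^{k+1}$ to fix a finite atlas of coordinate charts with a subordinate smooth partition of unity $(\zeta_j)$ covering $\p\Lambda$. In each chart a $\C^{k+1}$-diffeomorphism straightens $\p\Lambda$ to a piece of $\{x_n=0\}$ and carries a collar in $\Lambda$ into the half-space $\R^n_+=\{x_n>0\}$. Because these diffeomorphisms and their inverses are $\C^{k+1}$, they induce bounded isomorphisms on $H^k$ on the domain side and on $H^{k-\frac12}$, $H^{k-\frac32}$ on the boundary side, and they transform the normal derivative into a normal derivative modulo lower-order tangential terms. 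This reduces the task to extending prescribed Dirichlet and Neumann data on the model half-space.

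On $\R^n_+$ I would write the extension explicitly. Fix profiles $\phi,\psi$ in the Schwartz class on $[0,\infty)$ with $\phi(0)=1$, $\phi'(0)=0$, $\psi(0)=0$, $\psi'(0)=1$. Given $g_0\in H^{k-\frac12}(\R^{n-1})$ and $g_1\in H^{k-\frac32}(\R^{n-1})$, set, with the tangential Fourier transform $\mathcal{F}_{x'\to\xi'}$ and $\langle\xi'\rangle:=(1+|\xi'|^2)^{1/2}$,
\[
(\mathcal{E}_0(g_0,g_1))(x',x_n)=\mathcal{F}^{-1}_{\xi'\to x'}\!\Bigl[\,\hat{g_0}(\xi')\,\phi(x_n\langle\xi'\rangle)+\langle\xi'\rangle^{-1}\hat{g_1}(\xi')\,\psi(x_n\langle\xi'\rangle)\,\Bigr].
\]
The normalization of $\phi,\psi$ at the origin yields at once $\mathcal{E}_0(g_0,g_1)|_{x_n=0}=g_0$ and $\p_{x_n}\mathcal{E}_0(g_0,g_1)|_{x_n=0}=g_1$, so the trace identity $\mathcal{T}\circ\mathcal{E}_0=\mathrm{Id}$ holds on the model domain, and linearity and continuity in $(g_0,g_1)$ are manifest from the formula.

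The real work, and the step I expect to be the main obstacle, is the half-space estimate
\[
\nx{\mathcal{E}_0(g_0,g_1)}{H^k(\R^n_+)}\leq C\bigl(\nx{g_0}{H^{k-\frac12}(\R^{n-1})}+\nx{g_1}{H^{k-\frac32}(\R^{n-1})}\bigr).
\]
On the Fourier side one expresses $\nx{u}{H^k}^2$ as an integral over $\xi'$ and $x_n$ of $\langle\xi'\rangle^{2(k-m)}|\p_{x_n}^m\hat{u}|^2$ for $0\le m\le k$. Each $x_n$-derivative of $\phi(x_n\langle\xi'\rangle)$ produces a factor $\langle\xi'\rangle$, while the substitution $t=x_n\langle\xi'\rangle$ contributes a factor $\langle\xi'\rangle^{-1}$ from $\ud x_n$; tracking these powers turns every term into $\langle\xi'\rangle^{2k-1}|\hat{g_0}|^2$, respectively $\langle\xi'\rangle^{2k-3}|\hat{g_1}|^2$, multiplied by a finite $t$-integral of the profiles. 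Integrating in $\xi'$ reproduces exactly the claimed trace norms. The delicate point is the bookkeeping of these weights; for non-integer $k$ I would replace the derivative count by the Bessel-potential definition of $H^k$ and argue with the same Fourier multipliers.

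Finally I would assemble the global operator by pulling each model extension $\mathcal{E}_0$ back through the charts, applied to the localized data $\zeta_j g_0$, $\zeta_j g_1$, and summing the results. Since every local piece reproduces the correct traces and the tangential corrections from the curved boundary are of strictly lower order, the identity $\mathcal{T}\circ\mathcal{E}=\mathrm{Id}$ survives the patching, and the local continuity bounds add up to the global estimate. This is precisely the construction carried out in Wloka, to which the statement refers.
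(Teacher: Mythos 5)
Your argument is correct: the explicit Fourier-side extension with profiles normalized so that $\phi(0)=1$, $\phi'(0)=0$, $\psi(0)=0$, $\psi'(0)=1$ reproduces both traces, the weight bookkeeping in the half-space estimate yields exactly the $H^{k-\frac{1}{2}}\times H^{k-\frac{3}{2}}$ norms, and the localization and patching are standard. The paper does not prove this theorem itself but quotes it from \cite{wloka}, and your construction is essentially the one carried out there, so there is nothing to compare beyond noting the agreement.
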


\begin{lem}
\label{lem:press}
It holds
\[
(H^1_\mathrm{div}(\Oe)^2)^{\perp} = \{ F^{-T}\grad p \ | \ p\in L^2(\Oe) \}.
\]
\end{lem}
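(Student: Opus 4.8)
The plan is to prove the two inclusions separately; the reverse one is an application of the de~Rham/Ne\v{c}as theorem transported through the coordinate change $\psi$.

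\textbf{The inclusion $\supseteq$.} For $p\in L^2(\Oe)$ I regard $F^{-T}\grad p$ as acting on $\phi\in H^1_{0,\#}(\Oe)^2$ by $\langle F^{-T}\grad p,\phi\rangle = -\int_{\Oe} p\,\div(F^{-1}\phi)\ud x$, which is the distributional definition. This is justified by the product identity of Lemma~\ref{lem:transform2}(4), namely $\div(F^{-1}(p\phi)) = p\,\div(F^{-1}\phi)+F^{-T}\grad p\cdot\phi$, whose integral over $\Oe$ has vanishing left-hand side because $\phi=0$ on $\p\Oe$ away from the periodic faces while the periodic contributions cancel. If $\phi\in H^1_\mathrm{div}(\Oe)^2$, then $\div(F^{-1}\phi)=0$, so the pairing vanishes and $F^{-T}\grad p\in (H^1_\mathrm{div}(\Oe)^2)^\perp$.

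\textbf{The inclusion $\subseteq$.} I would first transport everything to the physical domain $\tilde{\Oe}=\psi(\Oe)$. Since $\psi$ is a volume-preserving $\C^\infty$-diffeomorphism, the map $\phi\mapsto\tilde\phi=\phi\circ\psi^{-1}$ is an isomorphism of $H^1_{0,\#}(\Oe)^2$ onto $H^1_{0,\#}(\tilde\Oe)^2$ (periodicity in $x_1=z_1$ and the Dirichlet part of the boundary are preserved), and by Lemma~\ref{lem:transform}(3) it sends the constraint $\div(F^{-1}\phi)=0$ to $\divx\tilde\phi=0$. Hence it identifies $H^1_\mathrm{div}(\Oe)^2$ with the ordinary solenoidal space on $\tilde\Oe$ and, dually, identifies $(H^1_\mathrm{div}(\Oe)^2)^\perp$ with the annihilator of $\{\tilde v:\divx\tilde v=0\}$ in $H^{-1}(\tilde\Oe)^2$. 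I would then invoke the classical de~Rham theorem (see \cite{temam}): a functional in $H^{-1}(\tilde\Oe)^2$ vanishing on every divergence-free field in $H^1_{0,\#}(\tilde\Oe)^2$ equals $\gradx\tilde p$ for some $\tilde p\in L^2(\tilde\Oe)$, unique up to a constant. Setting $p:=\tilde p\circ\psi\in L^2(\Oe)$ and using Lemma~\ref{lem:transform}(1), which gives $\gradx\tilde p = F^{-T}\grad_z p$, I conclude that every element of $(H^1_\mathrm{div}(\Oe)^2)^\perp$ is of the form $F^{-T}\grad p$. Together with the first inclusion this yields the claimed equality.

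\textbf{Main obstacle.} The only delicate ingredient is the de~Rham representation on $\tilde\Oe$, which is a bounded Lipschitz domain that is simultaneously perforated (by the solid inclusions) and subject to mixed periodic/Dirichlet boundary conditions. The statement rests on the closed range of the divergence operator, equivalently on a Ne\v{c}as inequality $\norm{\tilde p}_{L^2/\R}\le C\norm{\gradx\tilde p}_{H^{-1}}$ adapted to the $L$-periodic-in-$x_1$, Dirichlet-elsewhere setting. Since $\eps$ is fixed here, no uniformity in $\eps$ is required, so the standard proof of the Ne\v{c}as inequality on Lipschitz domains carries over once one works in the $L$-periodic Sobolev spaces; this is the point where the argument needs the most care.
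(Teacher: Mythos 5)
Your proof is correct, and the easy inclusion $\supseteq$ is handled exactly as in the paper (pairing against $\div(F^{-1}\phi)$ and using the divergence constraint). For the hard inclusion $\subseteq$, however, you take a genuinely different route. The paper stays entirely in the reference domain $\Oe$: it first proves that $\div(F^{-1}\cdot):H^1_0(\Oe)^2\longrightarrow L^2_0(\Oe)$ is surjective by an explicit Helmholtz-type construction $\phi=F\grad\eta+F\Curl(\theta)$ (Lemma~\ref{lem:divsurj}), identifies $-F^{-T}\grad\cdot$ as the adjoint, and then concludes via $\ker(\div(F^{-1}\cdot))^{\perp}=\range(-F^{-T}\grad\cdot)$. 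You instead transport the whole problem to $\psi(\Oe)$ through the volume-preserving diffeomorphism, invoke the classical de~Rham representation there, and pull the resulting pressure back with Lemma~\ref{lem:transform}(1). Both arguments ultimately rest on the same closed-range/Ne\v{c}as fact, and you correctly identify that the delicate point in your version is the de~Rham statement on a perforated domain with mixed periodic/Dirichlet conditions --- which is of exactly the same difficulty as the surjectivity the paper proves by hand, so nothing is gained for free. What the paper's route buys is Lemma~\ref{lem:divsurj} itself as a reusable tool: the same construction is invoked again in Propositions~\ref{prop:kappa}, \ref{prop:pressloc} and \ref{prop:thetai} on the boundary-layer strips, where a transport-and-cite argument would not be available in the same form. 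Your route is shorter and cleaner if the lemma is only needed once.
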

\begin{proof}
Define $G:=\{ F^{-T}\grad p \ | \ p\in L^2(\Oe) \}$. Let $\phi\in G, u\in H^1_\text{div}(\Oe)^2$ with $\phi=F^{-T}\grad p$. Then
\[
 \langle\phi, u\rangle_{\text{\tiny $H^{-1}(\Oe)^2,H^1(\Oe)^2$}} = \langle F^{-T}\grad p , u\rangle_{\text{\tiny $H^{-1}(\Oe)^2,H^1(\Oe)^2$}}  = - \int_{\Oe} p \div(F^{-1}u) \ud x=0,
\]
such that $\phi\in (H^1_\text{div}(\Oe)^2)^{\perp}$. Therefore $G\subset (H^1_\text{div}(\Oe)^2)^{\perp}$.

For the other inclusion we will show that $\div(F^{-1}\cdot):H^1_0(\Oe)\longrightarrow L^2_0(\Oe)$ is surjective and that $-F^{-T}\grad\cdot$ is its adjoint operator, therefore being injective from $L^2_0(\Oe)$ to $\range(-F^{-T}\grad \cdot)$. Now if $\psi \in  (H^1_\text{div}(\Oe)^2)^{\perp}$, we consider $u \in H^1_0(\Oe)$ with $\div(F^{-1}u)=0$. It holds
\[
\langle\psi,u\rangle_{\text{\tiny $H^{-1}(\Oe)^2,H^1_0(\Oe)^2$}} = 0.
\]
Since $u$ is arbitrary, 
\begin{equation*}
\psi \perp \ker(\div(F^{-1}\cdot)),
\end{equation*}
and since $\ker(\div(F^{-1}\cdot))^{\perp}=\range (-F^{-T}\grad\cdot)$ there exists a $p\in L^2(\Oe)$ with
\[
\psi=F^{-T} \grad p.
\]

The surjectivity of $\div(F^{-1}\cdot)$ is a consequence of Lemma \ref{lem:divsurj} below; and the adjointness of the operators can easily be seen from the equation
\[
\langle F^{-T}\grad p , u\rangle_{\text{\tiny $H^{-1}(\Oe)^2,H^1(\Oe)^2$}}  = - \int_{\Oe} p \div(F^{-1}u) \ud x= ({p},{\div(F^{-1}u)})_\text{\tiny $L^2(\Oe)$}. \qedhere
\]
\end{proof}
Before proving some properties of the divergence operator, we need the following lemma.
\begin{lem}
Let $\theta \in H^1(\Oe)$. Then
\begin{align*}
\Curl(\theta)\cdot \nu &= -\grad \theta \cdot \tau \quad \text{on } \p \Oe\\
\Curl(\theta)\cdot \tau &= \phantom{-}\grad \theta \cdot \nu \quad \text{on } \p\Oe.
\end{align*}
\end{lem}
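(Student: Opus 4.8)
The plan is to reduce both identities to elementary linear algebra involving the $90^\circ$ rotation matrix. First I would recall from Definition~\ref{def:operators} that in two dimensions $\Curl(\theta) = R\,\grad\theta$, where $R = \left[\begin{smallmatrix} 0 & -1 \\ 1 & \phantom{-}0 \end{smallmatrix}\right]$ denotes rotation by $90^\circ$. The two crucial properties of $R$ are that it is skew-symmetric, $R^T = -R$, and orthogonal, $R^T R = I$. Next I would record the relation between the boundary frames: in two dimensions the unit tangent $\tau$ is obtained from the outward unit normal $\nu$ by this same rotation, so that $\tau = R\nu$. This is precisely the sign convention fixed implicitly by the definition of $\Curl$ and is consistent with the remark following Lemma~\ref{lem:transform}.

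With these two facts the first identity follows by the pointwise computation
\[
\Curl(\theta)\cdot\nu = (R\,\grad\theta)\cdot\nu = \grad\theta\cdot(R^T\nu) = -\grad\theta\cdot(R\nu) = -\grad\theta\cdot\tau,
\]
where I transpose $R$ onto the second factor, use $R^T=-R$ in the third step, and $\tau=R\nu$ in the last. The second identity follows in the same manner:
\[
\Curl(\theta)\cdot\tau = (R\,\grad\theta)\cdot\tau = \grad\theta\cdot(R^T\tau) = \grad\theta\cdot(R^T R\,\nu) = \grad\theta\cdot\nu,
\]
now using $\tau=R\nu$ and then the orthogonality $R^T R = I$.

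Since these are algebraic identities holding pointwise at each boundary point, they are valid on all of $\partial\Oe$; the $H^1$-regularity of $\theta$ serves only to guarantee that the traces of $\grad\theta$ and hence of $\Curl(\theta)$ on $\partial\Oe$ are well defined in the appropriate trace space, so that the scalar products against $\nu$ and $\tau$ make sense. I expect no genuine obstacle here: the entire content is the two structural properties of $R$ together with the orientation relation $\tau = R\nu$. The only point requiring care is precisely this sign convention — one must check that the orientation $\tau = R\nu$ matches the one built into the definition of $\Curl$, since the opposite choice $\tau = -R\nu$ would flip both signs in the statement.
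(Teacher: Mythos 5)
Your proof is correct and follows essentially the same route as the paper: transpose the rotation matrix onto the normal/tangent and use the orientation relation $\tau = R\nu$ (equivalently, the paper's $R^T\nu = -\tau$). The only cosmetic difference is that you spell out the second identity via $R^TR=I$, whereas the paper dismisses it as analogous.
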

\begin{proof}
It holds
\begin{align*}
\Curl(\theta)\cdot \nu &= \begin{bmatrix}
                 0  & -1 \\
                 1 & \phantom{-}0
                \end{bmatrix}
                \grad \theta \cdot \nu 
= \grad \theta\cdot \Bigl(\begin{bmatrix}
                 0  & -1 \\
                 1 & \phantom{-}0
                \end{bmatrix}^T
                 \nu \Bigr)
= -\grad \theta\cdot \tau,
\end{align*}
since the matrix $\left[\begin{smallmatrix}
                 0  & -1 \\
                 1 & \phantom{-}0
                \end{smallmatrix}\right]^T$ corresponds to a rotation of $\frac{\pi}{2}$ 
and thus $\left[\begin{smallmatrix}
                 0  & -1 \\
                 1 & \phantom{-}0
                \end{smallmatrix}\right]^T \cdot \nu = -\tau$. The second equality follows along the same lines.
\end{proof}

Now we are ready to prove the lemma used above:

\begin{lem}
\label{lem:divsurj}
Let $G\in L^2(\Oe)$ with $\int_{\Oe} G = 0$. There exists a $\phi\in H^1_0(\Oe)^2$ with
\begin{align*}
\div(F^{-1}(x)\phi(x))=G(x) & \text{ in $\Oe$} \\
\phi(x)=0 &\text{ on $\p\Oe$}
\end{align*}
such that
\[
\nx{\phi}{H^1(\Oe)^2} \leq C\nx{G}{L^2(\Oe)}.
\]
Thus $\div(F^{-1}\cdot):H^1_0(\Oe)^2\longrightarrow L^2_0(\Oe)$ is surjective.
\end{lem}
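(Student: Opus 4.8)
The plan is to reduce the claim to the classical surjectivity of the \emph{ordinary} divergence on $\Oe$, exploiting that for fixed geometry $F$ is a smooth, bounded, and boundedly invertible matrix field with $\det F\equiv 1$. First I would invoke the standard Bogovski\u{i}/Ne\v{c}as construction (see e.g.\ \cite{temam}, \cite{sanpal}): since $\Oe$ is a bounded connected Lipschitz domain and $\int_{\Oe}G=0$, there is a $w\in H^1_0(\Oe)^2$ with $\div w=G$ in $\Oe$ and $\nx{w}{H^1(\Oe)^2}\leq C\nx{G}{L^2(\Oe)}$, where the constant may depend on $\eps$ (harmless here, as $\eps$ is fixed throughout this appendix).

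Next I would set $\phi:=Fw$. Because $F^{-1}(x)F(x)=\Id$ pointwise, one has $F^{-1}\phi=w$, hence $\div(F^{-1}\phi)=\div w=G$, which is exactly the equation required. The boundary condition is immediate: $w=0$ on $\p\Oe$ and $F$ is invertible, so $\phi=Fw=0$ on $\p\Oe$ and $\phi\in H^1_0(\Oe)^2$. For the norm bound, note from \eqref{eq:F} that the entries of $F$ are $1$ and $g'(z_1)$, and $g\in\C^\infty(\R)$ is $L$-periodic, so $F\in W^{1,\infty}(\Oe)^{2\times 2}$ with a bound depending only on $\norm{g'}_\infty$ and $\norm{g''}_\infty$; since $H^1_0$ is a module over $W^{1,\infty}\cap L^\infty$, multiplication by such a matrix is bounded on $H^1_0$ and preserves the zero trace, giving $\nx{\phi}{H^1(\Oe)^2}\leq C\nx{w}{H^1(\Oe)^2}\leq C\nx{G}{L^2(\Oe)}$. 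This furnishes the asserted bounded right inverse and establishes surjectivity of $\div(F^{-1}\cdot)$.

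The only point requiring genuine care in this route is the applicability of the classical result on the perforated domain $\Oe$: its boundary consists of the outer walls together with the smooth boundaries of the $\eps$-scaled inclusions, so one must confirm that $\Oe$ is connected and Lipschitz for each fixed $\eps$ (it is, as $\O_1$ and the perforated $\Oe_2$ are joined across $\Sigma$), whence the Bogovski\u{i} operator exists with an $\eps$-dependent constant. I expect this verification—rather than the trivial algebraic substitution—to be the main obstacle.

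Alternatively, a self-contained construction using the machinery just developed is available. One solves the transformed Poisson--Neumann problem $\div(F^{-1}F^{-T}\grad\Theta)=G$ in $\Oe$ with $(F^{-1}F^{-T}\grad\Theta)\cdot\nu=0$ on $\p\Oe$ (solvable by Lax--Milgram and the coercivity from Lemma~\ref{lem:constantskfKf}, using $\int_{\Oe}G=0$), sets $w_1:=F^{-T}\grad\Theta$ so that $\div(F^{-1}w_1)=G$ by Lemma~\ref{lem:transform} part 1, and then removes the boundary trace of $w_1$ by a transformed-divergence-free correction $\Curlt_z(c)$, which by Lemma~\ref{lem:transform2} satisfies $\div(F^{-1}\Curlt_z(c))=0$. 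The scalar $c$ is chosen via the inverse trace theorem (Theorem~\ref{thm:invtrace}), its trace and normal derivative being prescribed through the boundary identities of the preceding lemma relating $\Curl$ to $\p_\nu$ and $\p_\tau$. In this route the main difficulty is the single-valuedness of the stream function $c$ on the multiply connected $\Oe$, i.e.\ ensuring that the flux of $w_1$ through each inclusion boundary vanishes; this topological subtlety is precisely why I would prefer the direct substitution $\phi=Fw$ above.
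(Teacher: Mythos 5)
Your primary argument is correct and rests on the same key reduction as the paper's proof, namely that $\div(F^{-1}(Fw))=\div w$, so the transformed divergence problem collapses to the ordinary one; but you then \emph{cite} the classical Bogovski\u{i}/Ne\v{c}as surjectivity of $\div:H^1_0(\Oe)^2\to L^2_0(\Oe)$, whereas the paper constructs the ordinary-divergence solution by hand. Concretely, the paper's ansatz is $\phi=F\grad\eta+F\Curl(\theta)$ with $\Delta\eta=G$, $\grad\eta\cdot\nu=0$ on $\p\Oe$ (a Neumann problem for the \emph{untransformed} Laplacian, not the transformed one as in your alternative route), and $\theta\in H^2(\Oe)$ produced by the generalized inverse trace theorem with data $\theta|_{\p\Oe}=0$, $\grad\theta\cdot\nu=-\grad\eta\cdot\tau$, so that $\grad\eta+\Curl(\theta)$ vanishes on $\p\Oe$; this is precisely an explicit Bogovski\u{i}-type right inverse followed by multiplication by $F$. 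Your version buys brevity and makes the $\eps$-dependence of the constant transparent (harmless here, as you note, since $\eps$ is fixed in this appendix); the paper's version buys self-containedness and, more importantly, a construction that transfers to the unbounded boundary-layer strips $Z_k$ and $Z_{k,k+1}$ in Proposition~\ref{prop:kappa}, where a $k$-independent constant is needed and a bare citation for a generic Lipschitz domain would not obviously supply it. One remark on your closing worry: the single-valuedness of a stream function on the multiply connected $\Oe$ never arises in the paper's construction, because $\theta$ is not required to be a stream function of anything in the interior --- it is merely an $H^2$ extension of prescribed Cauchy data on each boundary component, and $\div(\Curl\theta)=0$ holds for \emph{any} such $\theta$; so your preferred route and the paper's are equally free of that obstruction.
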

\begin{proof}
We look for $\phi$ in the form
\[
\phi=F\grad\eta +  F\Curl(\theta)
\]
with $\eta$ satisfying
\begin{align*}
\Delta \eta = G & \text{ in $\Oe$} \\
\grad \eta \cdot\nu=0 & \text{ on $\p\Oe$}.
\end{align*}
By considering the weak formulation of this problem
\[
-\int_{\Oe} \grad \eta: \grad \psi = \int_{\Oe} G\cdot \psi \quad\forall \psi\in H^1_0(\Oe)/\R
\]
and using estimates similar to those derived in Propositon \ref{prop:bilin} we see that a unique solution $\eta\in H^1_0(\Oe)/\R$ exists, satisfying the estimate
$\nx{\grad \eta}{L^2(\Oe)^2} \leq C\nx{G}{L^2(\Oe)}$.

By regularity arguments one can show that 
\[
\nx{\frac{\p^2\eta}{\p x_i \p x_j}}{L^2(\Oe)} \leq C\nx{F}{L^2(\Oe)}, \quad i,j\in\{1,2\}.
\]
As for $\theta$, it should hold
\begin{align*}
\Curl(\theta)\cdot \nu = -\grad \theta\cdot \tau= - \grad \eta \cdot \nu =0 & \text{ on $\p\Oe$}\\
\Curl(\theta)\cdot \tau = \grad \theta\cdot \nu= - \grad \eta \cdot \tau \in H^\frac{1}{2}(\Oe) & \text{ on $\p\Oe$}.
\end{align*}
By the general inverse trace theorem \ref{thm:invtrace}, there exists a $\theta\in H^2(\Oe)$ with $\grad\theta\cdot \nu |_{\p\Oe} = - \grad \eta \cdot \tau $ and $\theta|_{\p\Oe}=0$ (thus especially $\grad \theta \cdot \tau =0$ on $\p\Oe$) and
\[
\nx{\theta}{H^2(\Oe)}  
\leq C \nx{\grad \eta}{H^1(\Oe)}.
\]
Now we have $\grad \eta + \Curl(\theta)=0$ on $\p\Oe$, therefore also $F(\grad \eta + \Curl(\theta))=0$ on the boundary of $\Oe.$
\end{proof}

To reintroduce the pressure, notice that by equation \eqref{eq:wtrst1}
\[
-\div(F^{-1}F^{-T}\grad\ue) - f \quad \in (H^1_\text{div}(\Oe)^2)^{\perp}.
\]
By Lemma \ref{lem:press} there exists a pressure $\pe\in L^2(\Oe)$, unique up to a constant, such that
\[
-\div(F^{-1}(x)F^{-T}(x) \grad{\ue}(x)) -f(x) = - F^{-T}(x) \grad{\pe}(x)
\] 
holds in $\Oe$. This finishes the considerations about the existence and uniqueness of the transformed Stokes equation.

We have the following regularity result:
\begin{prop}
If $f\in H^r(\O)^2$, $r\geq 0$, then $\ue\in H^{r+2}(\Oe)^2$ and $ \pe \in H^{r+1}(\Oe)$.
\end{prop}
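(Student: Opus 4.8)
The plan is to exploit the fact that, by construction, the transformed Stokes system is nothing but the image under the smooth volume-preserving diffeomorphism $\psi$ of an ordinary (constant-coefficient) Stokes system, and then to invoke classical elliptic regularity for the latter and transport it back to the reference domain $\Oe$.

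First I would read \eqref{eq:trsteps} through the transformation identities of Lemma~\ref{lem:transform}. Setting $\tilde{u}=\ue\circ\psi^{-1}$, $\tilde{p}=\pe\circ\psi^{-1}$ and $\tilde{f}=f\circ\psi^{-1}$ on $\tilde{\Oe}:=\psi(\Oe)$, the identities $\div_z(F^{-1}F^{-T}\grad_z j)=\Delta_x\tilde{j}$, $\div_z(F^{-1}j)=\div_x\tilde{j}$ and $F^{-T}\grad_z c=\grad_x\tilde{c}$ show that $(\tilde{u},\tilde{p})$ solves the homogeneous-data Stokes problem
\begin{align*}
-\Delta_x\tilde{u}+\grad_x\tilde{p}&=\tilde{f}\quad\text{in }\tilde{\Oe}, \\
\div_x\tilde{u}&=0\quad\text{in }\tilde{\Oe},
\end{align*}
with $\tilde{u}=0$ on the rigid part of $\p\tilde{\Oe}$ and $L$-periodicity in $x_1$. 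Since $\psi\in\mathcal{C}^\infty$, the Sobolev class of $f$ is inherited by $\tilde{f}$, so $\tilde{f}\in H^r(\tilde{\Oe})^2$.

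Second I would apply the standard regularity theory for the stationary Stokes equations (see \cite{temam}): on a domain whose rigid boundary is smooth enough, one has the implication $\tilde{f}\in H^r\Rightarrow\tilde{u}\in H^{r+2},\ \tilde{p}\in H^{r+1}$. The rigid boundary of $\tilde{\Oe}$ consists of the inclusion boundaries $\psi(\p\Oe_S)$ — which remain $\mathcal{C}^\infty$ because $\p Y_S$ is $\mathcal{C}^\infty$ and $\psi\in\mathcal{C}^\infty(\R)$ — together with the images of the flat outer walls, which are smooth graphs; combined with the $L$-periodicity in $x_1$ this provides a boundary that is smooth enough for every $r\geq 0$. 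Third, I would transport the result back to $\Oe$: since $\psi$ and $\psi^{-1}$ are $\mathcal{C}^\infty$-diffeomorphisms between bounded domains, composition with $\psi$ is bounded $H^s\to H^s$ for every $s\geq 0$, whence $\ue=\tilde{u}\circ\psi\in H^{r+2}(\Oe)^2$ and $\pe=\tilde{p}\circ\psi\in H^{r+1}(\Oe)$, which is the claim.

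The main obstacle will be the handling of the mixed Dirichlet/periodicity conditions in the regularity step, in particular the points where the lateral periodic boundaries $\{x_1=0\}$ and $\{x_1=L\}$ meet the flat top and bottom walls. These are only apparent corners: after the $L$-periodic extension in $x_1$ — equivalently, after identifying $x_1=0$ with $x_1=L$ and working on the resulting cylinder — the boundary is locally $\mathcal{C}^\infty$ there, so the corner singularities that would otherwise obstruct $H^{r+2}$ regularity do not arise. Because boundary regularity for the Stokes system is a local statement, it then suffices to patch together interior regularity with boundary regularity near each smooth boundary piece, and the passage from $\eps$ to the full transformed geometry introduces no new difficulty since $\eps$ is fixed here.
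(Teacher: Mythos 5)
Your argument is correct, but it takes a genuinely different route from the one the paper sketches. The paper explicitly declines to give a full proof (``which can be carried out by adapting the regularity arguments for the usual Stokes equation'') and only records the interior step, carried out \emph{intrinsically} in the transformed variables: applying $\div(F^{-1}\cdot)$ to the momentum equation and using the identity $\div_z(F^{-1}F^{-T}\grad_z(\div_z(F^{-1}j)))=\div_z(F^{-1}\div_z(F^{-1}F^{-T}\grad_z j))$ of Lemma~\ref{lem:transform2} together with $\div(F^{-1}\ue)=0$, one gets the scalar elliptic equation $\div(F^{-1}F^{-T}\grad\pe)=\div(F^{-1}f)$, hence $\pe\in H^{r+1}$ on interior subdomains, and then the momentum equation gives $\ue\in H^{r+2}$ there; the boundary case is left to the reader. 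You instead undo the coordinate change entirely, invoke the classical regularity theory of \cite{temam} on $\psi(\Oe)$, and pull the result back through the smooth volume-preserving diffeomorphism. This yields the global, up-to-the-boundary statement in one stroke, avoids redoing a priori estimates for a variable-coefficient Stokes system, and is consistent with how the paper itself handles the auxiliary problem for very weak solutions and the Corollary in Section~4 (both ``by transforming back to a Stokes system''); the price is that your argument is tied to the coefficients arising from a global diffeomorphism, whereas the intrinsic bootstrap would survive more general coefficient matrices. Your two supplementary observations --- that the periodic identification in $x_1$ removes the apparent corners where the lateral boundaries meet the walls, and that the perforation boundaries stay smooth and uniformly away from $\Sigma$ and the outer walls because $\p Y_S\cap\p Y=\emptyset$ --- are precisely the points the paper leaves implicit, and they are needed for the stated global regularity at fixed $\eps$.
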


We do not give a proof, which can be carried out by adapting the regularity arguments for the usual Stokes equation (see e.g. \cite{temam}). For the interior of the domain, one can use the following argument:

Applying $\div(F^{-1}\cdot)$ to Equation \eqref{seq:trstepsmain} gives (by the second formula of Lemma \ref{lem:transform2})
\[
\div(F^{-1}F^{-T}\grad \pe) = \div(F^{-1} f) \quad \in H^{r-1}({\Oe}') .
\]
Therefore $ \pe\in H^{r+1}({\Oe}')$, where ${\Oe}'$ is a strictly included subdomain of $\Oe$. Because of
\[
-\div(F^{-1}F^{-T}\grad\ue) = f - F^{-T}\grad\pe \quad \in H^r({\Oe}')^2,
\]
we conclude that $\ue\in H^{r+2}({\Oe}')^2$.


\subsection{Boundary Layer Functions}
\label{sec:boundary}

We define some function spaces that are used in the sequel: 
Let
\begin{align*}
V &= \Bigl\{  z\in L^2_\text{loc}(\ZBL)^2 \ |\ \grad z\in L^2(\ZBL)^4,\  z\in L^2(Z^-)^2, \\ & \qquad z=0  \text{ on } \bigcup_{k=1}^\infty \{  \p Y_S - \binom{0}{k} \}  , \text{ $z$ is $1$-periodic in $x_1$}  \Bigr\}
\end{align*}
and
\begin{align*}
V_{\div}&= \Bigl\{  z\in L^2_\text{loc}(\ZBL)^2 \ |\ \grad z\in L^2(\ZBL)^4,\  z\in L^2(Z^-)^2,\   z=0  \text{ on } \bigcup_{k=1}^\infty \{ \p Y_S - \binom{0}{k} \}  , \\ & \qquad \divy(F^{-1}(x)z(y))=0, \text{ $z$ is $1$-periodic in $x_1$}  \Bigr\}.
\end{align*}
Define $W$ as the completion of $V_{\div}$ with respect to the norm 
$\nx{z}{W}= \nx{\grad z}{L^2(\ZBL)^4}$. 
The Poincar\'e inequality in $Z^-$ reads 
$\nx{z}{L^2(Z^-)^2} \leq C \nx{\grad z}{L^2(Z^-)^4}$ for all $z\in V$.

\subsubsection{The Main Auxiliary Problem}
\label{sec:mauxprob}

For the development of a theory for the boundary layer functions, we start with a more general formulation:

Let $\gamma_1 >0$, $\sigma\in H^\frac{1}{2}(S)^2$, $\rho\in L^2(Z)^2$ and $\rho_1\in L^2(Z)^4$ be given. Assume that $e^{\gamma_1|y_2|}\rho \in L^2(\ZBL)^2 $ and $e^{\gamma_1|y_2|} \rho_1 \in L^2(\ZBL)^4$. Fix $x\in \O$ and consider the following parameter-dependent problem: Find $\zeta\in W$ such that
\begin{empheq}[box=\greybox]{equation}
\label{eq:maux}
\begin{aligned}
\int_{\ZBL} & F^{-T}(x)\grady \zeta(y):  F^{-T}(x) \grady \phi(y) \ud y= \int_{\ZBL} \rho(y) \cdot\phi(y) \ud y 
\\  &- \int_{\ZBL} \rho_1(y) : F^{-T}(x) \grady \phi(y) \ud y + \int_S F^{-1}(x) \sigma(y)\cdot\phi(y) \ud \sigma_y \quad \forall\  \phi\in W 
\end{aligned}
\end{empheq}

\begin{prop}
\label{prop:mauxexistence}
There exists a unique solution of Problem \eqref{eq:maux}.
\end{prop}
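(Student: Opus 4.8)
The plan is to recognise \eqref{eq:maux} as a variational problem on the Hilbert space $W$ and to apply the Lax--Milgram lemma. First I would set $a(\zeta,\phi) = \int_{\ZBL} F^{-T}(x)\grady\zeta : F^{-T}(x)\grady\phi \ud y$ and denote by $l(\phi)$ the entire right-hand side of \eqref{eq:maux}. Since $W$ is the completion of $V_{\div}$ under the norm $\nx{\grad\cdot}{L^2(\ZBL)^4}$, and since this is a genuine norm (if $\grady z=0$ then $z$ is constant on $Z^+$, while the Poincar\'e inequality forces $z=0$ on $Z^-$, so continuity of the trace across $S$ gives $z=0$ on all of $\ZBL$), the space $W$ equipped with $(\grady\cdot,\grady\cdot)_{L^2}$ is a Hilbert space.

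The boundedness and coercivity of $a$ follow exactly as in the proof of Proposition~\ref{prop:bilin}: Lemma~\ref{lem:constantskfKf} yields the pointwise bounds $k_F|\grady\phi|^2 \le |F^{-T}\grady\phi|^2 \le K_F|\grady\phi|^2$, so that $a(\phi,\phi)\ge k_F\nx{\phi}{W}^2$ and $|a(\zeta,\phi)|\le K_F\nx{\zeta}{W}\nx{\phi}{W}$. The core of the argument is therefore the continuity of $l$ on $W$, i.e.\ estimating each of its three terms by $C\nx{\phi}{W}$. For the trace term $\int_S F^{-1}\sigma\cdot\phi\,\ud\sigma_y$ and the part of $\int_{\ZBL}\rho\cdot\phi$ over the bounded region $Z^-$, I would use the Poincar\'e inequality in $Z^-$ together with a trace estimate taken from the $Z^-$ side, giving $\nx{\phi}{L^2(S)}\le C\nx{\phi}{W}$ and $\nx{\phi}{L^2(Z^-)^2}\le C\nx{\phi}{W}$; both terms are then controlled by $C(\nx{\sigma}{L^2(S)^2}+\nx{\rho}{L^2(Z^-)^2})\nx{\phi}{W}$. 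The term $\int_{\ZBL}\rho_1:F^{-T}\grady\phi$ is immediate from Cauchy--Schwarz and the eigenvalue bounds, since it only involves $\grady\phi$.

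The main obstacle is the contribution of $\int_{Z^+}\rho\cdot\phi$ over the unbounded strip $Z^+$, where a generic $\phi\in W$ need not lie in $L^2(Z^+)$. Here I would exploit the exponential-weight hypothesis $e^{\gamma_1|y_2|}\rho\in L^2(\ZBL)^2$. Writing $\phi(y_1,t)=\phi(y_1,0^+)+\int_0^t \partial_s\phi(y_1,s)\ud s$ and applying the Minkowski integral inequality followed by Cauchy--Schwarz in $s$ yields the cross-sectional growth bound $\nx{\phi(\cdot,t)}{L^2(0,1)}\le \nx{\phi}{L^2(S)}+\sqrt{t}\,\nx{\grady\phi}{L^2(Z^+)^4}\le C(1+\sqrt t)\nx{\phi}{W}$. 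Consequently $|\int_{Z^+}\rho\cdot\phi|\le C\nx{\phi}{W}\int_0^\infty(1+\sqrt t)\nx{\rho(\cdot,t)}{L^2(0,1)}\ud t$, and inserting the factor $e^{\gamma_1 t}e^{-\gamma_1 t}$ and using Cauchy--Schwarz in $t$ bounds this integral by $C\nx{e^{\gamma_1|y_2|}\rho}{L^2(Z^+)^2}$, which is finite by hypothesis.

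Combining the three estimates shows that $l$ is a bounded linear functional on $W$. Since $a$ is bounded and coercive on the Hilbert space $W$, the Lax--Milgram lemma then delivers a unique $\zeta\in W$ satisfying \eqref{eq:maux}, which is the assertion.
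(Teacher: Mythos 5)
Your proposal is correct and follows essentially the same route as the paper: Lax--Milgram on $W$, with boundedness and coercivity of the bilinear form inherited from Proposition~\ref{prop:bilin} via Lemma~\ref{lem:constantskfKf}, the Poincar\'e inequality on $Z^-$ (which, note, is unbounded, but the inequality holds uniformly there) handling the $Z^-$ and trace terms, and the exponential weight on $\rho$ absorbing the growth of $\phi$ in $Z^+$. The only difference is that where the paper invokes the weighted Hardy-type inequality $\nx{(1+y_2)^{-1}\phi}{L^2(Z^+)^2}\leq\nx{\grad\phi}{L^2(Z^+)}$ cited from the J\"ager--Mikeli\'c work, you derive an equivalent $\mathscr{O}(1+\sqrt{t})$ cross-sectional growth bound by an elementary slicing argument, which is a perfectly valid and self-contained substitute.
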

\begin{proof}
The result follows by application of the Lax-Milgram lemma:

Define for $u,\phi \in W$
\[
B(u,\phi)= \int_{\ZBL} F^{-T}(x) \grady \zeta(y):  F^{-T}(x) \grady \phi(y) \ud y,
\]
\[
b(\phi) = \int_{\ZBL} \rho(y)\cdot\phi(y)  \ud y - \int_{\ZBL} \rho_1(y) : F^{-T}(x) \grady \phi(y) \ud y   + \int_S F^{-1}(x) \sigma(y)\cdot \phi(y) \ud \sigma_y.
\]
The continuity and coercivity of the bilinear form $B$ in $W$ can be proved analogously to the case of the transformed Stokes equation, see Proposition \ref{prop:bilin}. 
To see that $b$ is bounded, note that
\begin{align*}
\biggl|\int_{\ZBL} \rho(y)\cdot\phi(y) \ud y \biggr|   & \leq \biggl|\int_{Z^+} \rho(y) \cdot\phi(y) \ud y \biggr| +\biggl|\int_{Z^-} \rho(y)\cdot\phi(y) \ud y\biggr| \\ & \leq\nx{(1+y_2)\rho}{L^2(Z^+)^2} \nx{(1+y_2)^{-1}\phi}{L^2(Z^+)^2} + \nx{\rho}{L^2(Z^-)^2} \nx{\phi}{L^2(Z^-)^2} \\ 
&\leq C \nx{\grad \rho}{L^2(\ZBL)^4},
\end{align*}
where we used the standard Poincar\'e inequality in $Z^-$, the fact that $|(1+y_2) \rho|\leq e^{|y_2|} |\rho| \leq \frac{1}{e^{\gamma_1}} e^{\gamma_1|y_2|} |\rho| \leq C e^{\gamma_1 |y_2|} |\rho| $ and $\nx{(1+y_2)^{-1}\phi}{L^2(Z^+)^2} \leq \nx{\grad \phi}{L^2(Z^+)}$, see \cite{jami_bc-fluidpor}.  The estimation of the remaining terms is standard.
\end{proof}

\begin{lem}
\label{lem:mauxreg}
Let $\divy(F^{-1}(x)\rho_1(y))\in L^2(\ZBL)^2$ and let $\rho, \rho_1, \sigma$ be $1$-periodic in $y_1$. Then the solution $\zeta$ of \eqref{eq:maux} is in $H^2_\mathrm{loc}(Z)^2$.
\end{lem}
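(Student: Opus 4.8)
The plan is to reintroduce a pressure and then invoke classical interior regularity for the Stokes system, exploiting the crucial fact that the coefficient matrix $F^{-1}(x)F^{-T}(x)$ is \emph{constant} in the integration variable $y$: it is frozen at the parameter $x$, and $\det F=1$.

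First I would restrict the test functions in \eqref{eq:maux} to $\phi\in W$ with compact support in $Z$, hence vanishing near $S$ and near the solid boundaries $\bigcup_{k=1}^\infty\{\p Y_S-\binom{0}{k}\}$. For such $\phi$ the surface integral over $S$ disappears, and integrating the remaining terms by parts (using $\rho_1:F^{-T}\grady\phi=(F^{-1}\rho_1):\grady\phi$) shows that $\zeta$ satisfies, in the sense of distributions on $Z$, the identity $-\divy(F^{-1}F^{-T}\grady\zeta)=\rho+\divy(F^{-1}\rho_1)$ modulo the orthogonal complement of the locally divergence-free fields. Applying the local analogue of the characterization in Lemma~\ref{lem:press}, according to which this complement consists precisely of fields of the form $F^{-T}\grady\pi$, produces a pressure $\pi\in L^2_{\mathrm{loc}}(Z)$ such that $(\zeta,\pi)$ solves the transformed Stokes system
\begin{equation*}
-\divy(F^{-1}(x)F^{-T}(x)\grady\zeta) + F^{-T}(x)\grady\pi = \rho+\divy(F^{-1}(x)\rho_1), \qquad \divy(F^{-1}(x)\zeta)=0
\end{equation*}
on $Z$. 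By hypothesis $\divy(F^{-1}\rho_1)\in L^2(\ZBL)^2$ and $\rho\in L^2$, so the right-hand side lies in $L^2_{\mathrm{loc}}(Z)$.

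Next I would flatten the constant coefficients. Since $F(x)$ is independent of $y$ and volume preserving, the linear map with constant Jacobian $F(x)$ is a volume-preserving $\mathcal{C}^\infty$ change of variables, and Lemma~\ref{lem:transform} identifies the operators $\divy(F^{-1}F^{-T}\grady\cdot)$, $F^{-T}\grady\cdot$ and $\divy(F^{-1}\cdot)$ with the ordinary Laplacian, gradient and divergence in the new coordinates. Hence the transformed pair solves a standard constant-coefficient Stokes system $-\Delta\tilde{\zeta}+\grad\tilde{\pi}=\tilde{g}$, $\div\tilde{\zeta}=0$ with $\tilde{g}\in L^2_{\mathrm{loc}}$. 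Classical interior regularity for the Stokes equations (see \cite{temam}) then gives $\tilde{\zeta}\in H^2_{\mathrm{loc}}$ and $\tilde{\pi}\in H^1_{\mathrm{loc}}$ on the image domain; transforming back through the smooth linear change of variables yields $\zeta\in H^2_{\mathrm{loc}}(Z)^2$, as claimed.

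The main point to be careful about is that regularity is asserted only on $Z=Z^+\cup Z^-$, that is, away from the interface $S$: the datum $\sigma$ on $S$ generically forces a jump in $\grady\zeta$ and in $\pi$ across $S$, so one cannot expect $H^2$ regularity across $S$, and this is exactly why the statement is phrased with $Z$ rather than $\ZBL$. Because compact subsets of $Z$ also stay away from the solid boundaries $\bigcup_{k=1}^\infty\{\p Y_S-\binom{0}{k}\}$, no boundary regularity is needed and the interior theory suffices. The only genuinely technical step is the local reconstruction of the pressure, which proceeds exactly as in Lemma~\ref{lem:press} but with test fields supported in compact subsets of $Z$.
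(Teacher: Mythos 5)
Your argument is correct, but note that the paper actually states Lemma~\ref{lem:mauxreg} without proof, so there is no in-text argument to match; the fairest comparison is with the machinery the paper develops around it. Your route — restrict to compactly supported transformed-divergence-free test fields, recover a local pressure by the De Rham/Ne\v{c}as argument, observe that $F(x)$ is constant in $y$ so a linear volume-preserving change of variables (Lemma~\ref{lem:transform}) reduces everything to the constant-coefficient Stokes system, and invoke interior regularity from \cite{temam} — is sound, and the sign bookkeeping ($-\int\rho_1:F^{-T}\grady\phi$ turning into $+\divy(F^{-1}\rho_1)$ under the column-wise divergence convention) is right, matching the strong form \eqref{subeq:stmaux1}. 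Two remarks on how this sits relative to the paper. First, you reconstruct the pressure \emph{before} proving regularity, whereas the paper places the pressure construction in Proposition~\ref{prop:pressloc} \emph{after} this lemma; since that proposition uses only the hypotheses of Lemma~\ref{lem:mauxreg} and not its conclusion, there is no circularity, and your ordering is in fact the more natural one (the paper then restates $\zeta\in H^2_{\mathrm{loc}}$, $\kappa\in H^1_{\mathrm{loc}}$ as a separate unproved lemma). Second, for its global regularity statement for the transformed Stokes system the paper sketches a different bootstrap — apply $\divy(F^{-1}\cdot)$ to the momentum equation to get a scalar elliptic equation for the pressure alone, then treat the momentum equation as an elliptic system for the velocity — which would also work here and avoids the explicit flattening step; your version buys a cleaner appeal to the classical constant-coefficient theory. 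Your closing observation that $H^2$ regularity cannot hold across $S$ (the jump datum $\sigma$ forces a discontinuity in the conormal derivative and the pressure) correctly explains why the conclusion is phrased on $Z$ rather than $\ZBL$; one could add that regularity at the lateral boundary $y_1\in\{0,1\}$ is interior regularity in disguise thanks to the assumed $1$-periodicity of $\rho$, $\rho_1$, $\sigma$, which is precisely why that hypothesis appears in the statement.
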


\begin{prop}
\label{prop:pressloc}
Under the assumptions of Lemma \ref{lem:mauxreg}, there exists a pressure field $\kappa \in L^2_\mathrm{loc}(\ZBL)$ such that
\begin{align*}
-\divy(F^{-1}(x)F^{-T}(x)&\grady \zeta(y) ) +F^{-T}(x)\grady \kappa(y)  \\ &= \rho(y) \mathop{+} \divy(F^{-1}(x) \rho_1(y)) \quad \text{ in $W'$}.
\end{align*}
\end{prop}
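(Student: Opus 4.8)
The plan is to reintroduce the pressure by a De Rham argument, in direct analogy with the treatment of the transformed Stokes equation in Lemma~\ref{lem:press}, the only genuinely new feature being that $\ZBL$ is unbounded, so $\kappa$ will only be obtained in $L^2_\mathrm{loc}$ and has to be glued together from local pieces. First I would exploit the regularity already at hand: by Lemma~\ref{lem:mauxreg} the solution satisfies $\zeta\in H^2_\mathrm{loc}(Z)^2$, and by hypothesis $\divy(F^{-1}(x)\rho_1)\in L^2(\ZBL)^2$, so the vector distribution $g:=-\divy(F^{-1}(x)F^{-T}(x)\grady\zeta)-\rho-\divy(F^{-1}(x)\rho_1)$ is a well-defined element of $L^2_\mathrm{loc}(Z)^2$ on each of the open connected sets $Z^+$ and $Z^-$. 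Testing \eqref{eq:maux} with $\phi\in\C^\infty_0$ supported in the interior of $Z^+$ (resp.\ $Z^-$) and satisfying $\divy(F^{-1}(x)\phi)=0$, the surface integral over $S$ and all boundary contributions drop out, and an integration by parts (using that, since $F^{-1}F^{-T}$ is symmetric, $F^{-T}\grady\zeta:F^{-T}\grady\phi=(F^{-1}F^{-T}\grady\zeta):\grady\phi$, and likewise $\rho_1:F^{-T}\grady\phi=(F^{-1}\rho_1):\grady\phi$) shows that $\int g\cdot\phi\,\ud y=0$ for every such $\phi$. Thus $g$ annihilates all compactly supported divergence-free test fields.

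The key simplification is that, $x\in\O$ being \emph{fixed}, the matrix $F=F(x)$ is a constant invertible matrix. Writing $\phi=F\psi$ one has $\divy(F^{-1}\phi)=\divy\psi$, so the transformed divergence-free constraint on $\phi$ is equivalent to the ordinary one on $\psi$, while $\int g\cdot\phi\,\ud y=\int(F^{T}g)\cdot\psi\,\ud y$. Hence $F^Tg$ annihilates all ordinary divergence-free test fields, and the classical De Rham theorem yields a scalar $\tilde\kappa\in L^2_\mathrm{loc}$ with $F^Tg=\grady\tilde\kappa$, that is $g=F^{-T}\grady\tilde\kappa$, on $Z^+$ and on $Z^-$ respectively. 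Setting $\kappa:=-\tilde\kappa$ then gives precisely $-\divy(F^{-1}F^{-T}\grady\zeta)+F^{-T}\grady\kappa=\rho+\divy(F^{-1}\rho_1)$ locally. Alternatively one may reproduce the closed-range argument of Lemma~\ref{lem:press} verbatim on bounded subdomains, using the local surjectivity of $\divy(F^{-1}\cdot)$ onto mean-zero $L^2$ functions established exactly as in Lemma~\ref{lem:divsurj}, so that $F^{-T}\grady$ is its injective adjoint and $g$ lies in the range of $F^{-T}\grady$.

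Finally I would pass from local to global. Exhausting $\ZBL$ by an increasing sequence of bounded Lipschitz subdomains and performing the construction above on each, the resulting local pressures agree up to additive constants on overlaps (their gradients coincide and $F^{-T}$ is invertible, so the differences are constant on each connected overlap); normalising these constants consistently along the exhaustion produces a single $\kappa\in L^2_\mathrm{loc}(\ZBL)$ realising the momentum balance as an identity of distributions, which in particular holds in $W'$. The main obstacle is precisely this passage to the unbounded strip together with the interface $S$: one must check that the local pressures glue consistently and that the pressure jump allowed across the measure-zero set $S$ is compatible with the surface term $\int_S F^{-1}(x)\sigma\cdot\phi\,\ud\sigma_y$ appearing in \eqref{eq:maux}. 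Here the decay hypotheses $e^{\gamma_1|y_2|}\rho\in L^2(\ZBL)^2$ and $e^{\gamma_1|y_2|}\rho_1\in L^2(\ZBL)^4$ guarantee that no contribution is lost as $y_2\to\pm\infty$, so the gluing indeed yields a genuine $L^2_\mathrm{loc}$ pressure.
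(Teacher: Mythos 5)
Your local construction is sound, and for the key step it takes a genuinely different (and cleaner) route than the paper. The paper does not invoke De Rham: it introduces the bounded Lipschitz domains $Z_l^*=[0,1]\times\bigl((0,l)\cup\bigcup_{k=1}^l\{Y^*-\binom{0}{k}\}\bigr)$, which exhaust $\ZBL$ and \emph{straddle} the interface $S$, proves surjectivity of $\divy(F^{-1}(x)\cdot)$ from the periodic space $W_l$ onto $L^2_0(Z_l^*)$ by an analogue of Lemma~\ref{lem:divsurj}, deduces that any functional annihilating $W$ restricts on $Z_l^*$ to an element of $\range(F^{-T}(x)\grady)$, and then glues the $p_l$ along the exhaustion exactly as you propose at the end. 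Your observation that, $x$ being a frozen parameter, $F(x)$ is a constant matrix in $y$, so that the substitution $\phi=F\psi$ converts the transformed divergence constraint into the classical one and reduces everything to the classical De Rham theorem, is a genuine shortcut the paper does not exploit; it delivers the local pressure essentially for free, whereas the paper's closed-range argument has to be rebuilt on each $Z_l^*$.

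There is, however, a real gap in your passage from the local statement to the claim ``in $W'$'', and the justification you offer for it is a non sequitur. Because you only test with fields compactly supported in the interior of $Z^+$ or of $Z^-$, you obtain two pressures, each determined up to its own additive constant, and no information about their relative normalisation across $S$; this suffices for the identity as distributions on $Z=Z^+\cup Z^-$, but elements of $W$ do not vanish on $S$, and pairing with a general $\phi\in W$ produces surface terms on $S$ involving the jump of $F^{-1}\bigl(F^{-T}(x)\grady\zeta-\kappa\bigr)e_2$ that must balance the datum $\int_S F^{-1}(x)\sigma\cdot\phi\ud\sigma_y$ of \eqref{eq:maux}. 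Your proposed remedy --- exhausting $\ZBL$ itself by Lipschitz subdomains --- is incompatible with the compactly supported De Rham argument, because the functional $\phi\mapsto\int g\cdot\phi$ does \emph{not} annihilate divergence-free fields supported in a neighbourhood of $S$ unless $\sigma=0$; this is precisely why the paper works with the spaces $W_l$ on domains straddling $S$ (whose zero-extensions lie in $W$) rather than with $\mathcal{C}^\infty_0$ fields. Finally, the appeal to the hypotheses $e^{\gamma_1|y_2|}\rho\in L^2(\ZBL)^2$, $e^{\gamma_1|y_2|}\rho_1\in L^2(\ZBL)^4$ is irrelevant to this point: they control the behaviour as $y_2\to\pm\infty$ and have no bearing whatsoever on the matching at $y_2=0$.
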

\begin{proof}
We are going to use analogues of Lemmas \ref{lem:press} and \ref{lem:divsurj} for an increasing  sequence of sets in order to show that $W^\perp =  \{ F^{-T}(x)\grady p \ | \ p\in L^2_\mathrm{loc}(\ZBL) \}$.

Define for $l\in \N$ the sets $Z_l^*=[0,1] \times (  (0,l) \cup (  \bigcup_{k=1}^l \{   Y^* - \binom{0}{k} \}  )$ and the space
\begin{align*}
W_l &=\Bigl\{  z\in H^1(Z_l^*)^2 \ | \ z=0 \text{ for } y_2=\pm l \text{ and on } \bigcup_{k=1}^l \{  \p Y_S - \binom{0}{k} \} , \\ & \qquad z \text{ is $1$-periodic in $y_1$} \Bigr\}.
\end{align*}
It is clear that $Z_l^*\subset Z_{l+1}^*$ and that each $Z_l^*$ is a Lipschitz domain. 

$\div_l(F^{-1}(x)\cdot):W_l \longrightarrow L^2_0(Z_l^*)$, $\div_l(F^{-1}(x)\cdot):=\divy(F^{-1}(x) \cdot)$ is surjective by an analogue of Lemma \ref{lem:divsurj}, thus $F^{-T}(x)\grad_l( \cdot) :=F^{-T}(x)\grady(\cdot)$ is injective from $L^2_0(Z_l^*)$ to $W_l '$.

Now let $f\in V'$ such that $\langle f, \phi\rangle_{\text{\tiny $H^{-1}(\ZBL)^2,H^1(\ZBL)^2$}}=0$ for all $\phi\in W$. Let $u\in \ker( \div_l(F^{-1}(x)\cdot) )$ be given and denote by $\tilde{u}$ the extension by $0$ outside $Z_l^*$. Since then $\divy(F^{-1}(x)\tilde{u})=0$ in $\ZBL$ we have $\langle f, \tilde{u}\rangle_{\text{\tiny $H^{-1}(\ZBL)^2,H^1(\ZBL)^2$}}=0$. By duality of the extension operation we conclude that $f|_{Z_l^*}\perp \ker(\div_l(F^{-1}(x)\cdot))$. Therefore $f|_{Z_l^*}\in\range( F^{-T}(x)\grad_l \cdot )$, and there exists a $p_l\in L^2(Z_l^*)$, unique up to a constant with $f=F^{-T}(x)\grady p_l$ in $Z_l^*$.

Since $Z_l^*\subset Z_{l+1}^*$, the difference $p_{l+1}-p_l$ is constant in $Z_l^*$ and we can choose $p_{l+1}$ in such a way that $p_{l+1}=p_l$ in $Z_l^*$. Thus $f=F^{-T}(x)\grady p$ with $p\in L^2_\mathrm{loc}(\ZBL)$.

The pressure $\kappa$ can now be obtained by observing that -- via an integration by parts of \eqref{eq:maux} --,  $\divy(F^{-1}(x)F^{-T}(x)\grady \zeta(y) ) + \rho(y) \mathop{+} \divy(F^{-1}(x) \rho_1(y)) \in W^{\perp}$.
\end{proof}

\begin{lem}
Let $\zeta$ and $\kappa$ be defined as above. Under the assumptions of Lemma \ref{lem:mauxreg} we have $\zeta \in H^2_\mathrm{loc}(Z)^2$ and $\kappa\in H^1_\mathrm{loc}(Z)$.
\end{lem}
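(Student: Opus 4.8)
The plan is to treat the first assertion as already established---$\zeta\in H^2_{\mathrm{loc}}(Z)^2$ is precisely the content of Lemma~\ref{lem:mauxreg}---and to concentrate on the pressure, which I would obtain by a bootstrapping argument from the momentum balance furnished by Proposition~\ref{prop:pressloc}. The crucial simplifying observation is that in the boundary-layer problem $x\in\O$ is only a fixed parameter, so that $F(x)$, $F^{-1}(x)$ and $F^{-T}(x)$ are \emph{constant} matrices in the integration variable $y$. I would therefore rewrite the identity of Proposition~\ref{prop:pressloc}, solving for the pressure gradient,
\[
F^{-T}(x)\grady\kappa(y) = \rho(y) + \divy(F^{-1}(x)\rho_1(y)) + \divy\bigl(F^{-1}(x)F^{-T}(x)\grady\zeta(y)\bigr),
\]
understood as an equality in $W'$, and then verify that every term on the right-hand side lies in $L^2_{\mathrm{loc}}(Z)^2$.

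For the three terms this is immediate. The first, $\rho$, lies in $L^2(\ZBL)^2$ by the standing hypothesis $e^{\gamma_1|y_2|}\rho\in L^2(\ZBL)^2$; the second, $\divy(F^{-1}(x)\rho_1)$, belongs to $L^2(\ZBL)^2$ by the explicit assumption of Lemma~\ref{lem:mauxreg}. For the third and decisive term I would use the regularity $\zeta\in H^2_{\mathrm{loc}}(Z)^2$, which gives $\grady\zeta\in H^1_{\mathrm{loc}}(Z)^4$; since the constant matrix $F^{-1}(x)F^{-T}(x)$ has smooth bounded entries (cf.\ \eqref{eq:F}), the product $F^{-1}(x)F^{-T}(x)\grady\zeta$ is again in $H^1_{\mathrm{loc}}(Z)^4$, so its $y$-divergence lies in $L^2_{\mathrm{loc}}(Z)^2$. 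Hence $F^{-T}(x)\grady\kappa\in L^2_{\mathrm{loc}}(Z)^2$, and multiplying by the invertible constant matrix $F^T(x)$ (recall $\det F\equiv 1$) yields $\grady\kappa\in L^2_{\mathrm{loc}}(Z)^2$. Combined with $\kappa\in L^2_{\mathrm{loc}}(\ZBL)$ from Proposition~\ref{prop:pressloc}, this gives $\kappa\in H^1_{\mathrm{loc}}(Z)$.

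I do not expect a genuine obstacle, as this is a routine elliptic bootstrap; the points demanding care are rather bookkeeping issues. First, the \emph{locality} is essential: the transmission condition on $S$ in \eqref{eq:maux} rules out $H^2$-regularity of $\zeta$ across the interface, so both assertions can only hold on $Z=Z^+\cup Z^-$ and not on $\ZBL$. Second, the identity of Proposition~\ref{prop:pressloc} is a priori only an equality of functionals in $W'$; to read off $\grady\kappa$ pointwise one must observe that a functional coinciding with an $L^2_{\mathrm{loc}}$-function is represented by that function, which then legitimises the computation above.
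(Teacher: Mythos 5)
The paper states this lemma without proof, so there is nothing to compare against; your argument is the natural (and surely intended) one and is correct: the $\zeta$-regularity is literally the content of Lemma~\ref{lem:mauxreg}, and $\grady\kappa\in L^2_{\mathrm{loc}}(Z)^2$ follows by reading the momentum balance of Proposition~\ref{prop:pressloc} locally, which is legitimate because the construction there identifies $F^{-T}(x)\grady\kappa$ with the remaining terms against all test functions compactly supported in the interior of each $Z_l^*$, not merely against divergence-free ones. Your cautionary remarks --- that the identity is a priori only in $W'$, and that the transmission condition on $S$ confines the regularity to $Z=Z^+\cup Z^-$ --- are exactly the right points to flag.
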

Finally, we obtain the following strong form of Problem \eqref{eq:maux}:
\begin{subequations}
\label{eq:stmaux}
\begin{empheq}[box=\widefbox]{align}
-\divy(F^{-1}(x)F^{-T}(x) \grady  \zeta(y)) + F^{-T}(x)\grady \kappa(y) &\notag \\ = \rho + \divy(F^{-1}(x)\rho_1(y)) & \text{ a.e. in $Z$} \label{subeq:stmaux1}\\
\divy(F^{-1}(x) \zeta(y))=0 & \text{ a.e. in $Z$} \\
\zeta(y_1,\pm 0) = \zeta^\pm_0 & \text{ on $S$} \\
\zeta = 0 & \text{ on } {\textstyle \bigcup_{k=1}^\infty \{  \p Y_S - {\binom{0}{k}} \} }\\
\zeta,\kappa  \text{ are $1$-periodic in $y_1$}
\end{empheq}
\end{subequations}
with known functions $\zeta^\pm_0 \in H^\frac{3}{2}(S)^2$.

\subsubsection{Exponential Decay}

Define for $k\in -\N$ the sets $Z_k=Z^-\cap ([0,1]\times [k,k+1])$ (these domains, as well as other auxiliary sets needed in the course of the derivation, are depicted in Figure \ref{fig:ZBLaux}). 

\begin{prop}
\label{prop:kappa}
Let $\bar{\rho}:=\rho+\divy(F^{-1}(x) \rho_1 ) \in L^2(Z^-)^2$ and let $\zeta$ and $\kappa$ be as above. Define
\[
r_k= \frac{1}{|Y^*|} \int_{Z_k} \kappa(y) \ud y.
\]
Then the following estimates hold:
\begin{gather*}
\nx{\kappa-r_k}{L^2(Z_k)} \leq C (  \nx{\grad \zeta}{L^2(Z_k)^4}   + \nx{\bar{\rho}}{L^2(Z_k)^2}) \\
| r_{k+1} -r_k | \leq C (  \nx{\grad \zeta}{L^2(Z_k\cup Z_{k+1}  )^4} + \nx{\bar{\rho}}{L^2(Z_k\cup Z_{k+1}  )^2}  )
\end{gather*}
\end{prop}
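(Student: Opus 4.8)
The plan is to read off an equation for the pressure gradient from the strong form \eqref{subeq:stmaux1} and then to bound $\kappa$ cellwise by a Ne\v{c}as inequality. First I would rewrite \eqref{subeq:stmaux1} on $Z^-$ in the form $F^{-T}(x)\grady\kappa = \bar{\rho} + \divy(F^{-1}(x)F^{-T}(x)\grady\zeta)$, which exhibits $F^{-T}\grady\kappa$ as the sum of the $L^2$-field $\bar\rho$ and the distributional divergence of the $L^2$-matrix field $F^{-1}F^{-T}\grady\zeta$. At fixed $x$ the matrix $F(x)$ is constant in $y$ and, by Lemma \ref{lem:constantskfKf}, uniformly bounded with bounded inverse, so multiplication by $F^{T}(x)$ is bounded on $H^{-1}(Z_k)$ and $\nx{\grady\kappa}{H^{-1}(Z_k)} \leq C(\nx{\bar{\rho}}{L^2(Z_k)^2} + \nx{\grad\zeta}{L^2(Z_k)^4})$, where I have used that $\divy$ maps $L^2$ into $H^{-1}$ and that $\bar\rho\in L^2\hookrightarrow H^{-1}$; crucially only $\grad\zeta\in L^2$ is needed here. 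Since each $Z_k$ is a rigid translate of the fluid cell $Y^*$ its volume is $|Y^*|$, so $r_k$ is precisely the mean of $\kappa$ over $Z_k$, and the Ne\v{c}as inequality $\nx{\kappa - r_k}{L^2(Z_k)} \leq C\nx{\grady\kappa}{H^{-1}(Z_k)}$ gives the first estimate.

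For the second estimate I would test the pressure equation against a field whose divergence isolates the jump $r_{k+1}-r_k$. Using the analogue of Lemma \ref{lem:divsurj} on the two-cell domain $D:=Z_k\cup Z_{k+1}$, I would construct $\psi\in H^1_0(D)^2$, $1$-periodic in $y_1$ and vanishing on the solid boundaries and on the top and bottom faces, with $\divy(F^{-1}(x)\psi) = |Y^*|^{-1}(\chi_{Z_{k+1}} - \chi_{Z_k})$; the right-hand side has vanishing mean because $|Z_k|=|Z_{k+1}|=|Y^*|$, so $\psi$ exists and obeys $\nx{\psi}{H^1(D)^2}\leq C$. Pairing $F^{-T}\grady\kappa = \bar{\rho} + \divy(F^{-1}F^{-T}\grady\zeta)$ with $\psi$ and integrating by parts twice, the boundary terms vanishing by the support of $\psi$, yields $-(r_{k+1}-r_k) = \int_D \bar\rho\cdot\psi \ud y - \int_D F^{-T}\grady\zeta : F^{-T}\grady\psi \ud y$. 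Cauchy--Schwarz together with the uniform bound on $\psi$ then gives $|r_{k+1}-r_k| \leq C(\nx{\bar{\rho}}{L^2(Z_k\cup Z_{k+1})^2} + \nx{\grad\zeta}{L^2(Z_k\cup Z_{k+1})^4})$.

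The only delicate point is the \emph{uniformity} of all constants in $k$ (and in the parameter $x$). This holds because the cells $Z_k$, and the pairs $Z_k\cup Z_{k+1}$, are mutually congruent rigid translates, so the Ne\v{c}as constant and the norm of the right inverse of $\divy(F^{-1}(x)\,\cdot)$ supplied by Lemma \ref{lem:divsurj} are independent of $k$, while the dependence on $x$ enters only through $F(x)$ and is controlled by $k_F$ and $K_F$. I expect the construction of the test field $\psi$ with a $k$-independent $H^1$-bound to be the main obstacle: it is precisely where translation invariance of the periodic geometry and the constancy of $F(x)$ in $y$ are exploited to reduce the problem on every $Z_k\cup Z_{k+1}$ to a single fixed model problem.
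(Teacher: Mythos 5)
Your proof is correct and, at bottom, runs along the same lines as the paper's. For the second estimate your argument is essentially identical to the paper's: it also constructs a vector field on $Z_{k,k+1}=Z_k\cup Z_{k+1}$ whose transformed divergence is $+1$ on $Z_k$ and $-1$ on $Z_{k+1}$ (your $\psi$ is that field rescaled by $|Y^*|^{-1}$), tests the momentum equation with it, and divides by $|Y^*|$ at the end. For the first estimate the paper argues dually: instead of invoking the Ne\v{c}as inequality $\nx{\kappa-r_k}{L^2(Z_k)}\leq C\nx{\grady\kappa}{H^{-1}(Z_k)}$, it uses the analogue of Lemma \ref{lem:divsurj} to produce a test function $\phi_k$ with $\divy(F^{-1}(x)\phi_k)=\kappa-r_k$ and $\nx{\grady\phi_k}{L^2(Z_k)^4}\leq C\nx{\kappa-r_k}{L^2(Z_k)}$, and inserts it into the weak formulation; this bounded right inverse of the divergence is exactly what is equivalent to the Ne\v{c}as inequality, so the two formulations carry the same content, the paper's being the more self-contained since it reuses a lemma already proved there. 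Your closing remark on uniformity in $k$ and in $x$ --- constants coming from a single model cell by translation invariance, with the $x$-dependence absorbed into $k_F,K_F$ --- is precisely the point the paper records when it notes that the constant in the construction of $\phi_k$ ``depends only on the geometry of $Y^*$ but not on $k$''. The only cosmetic slip is describing $\psi$ as lying in $H^1_0(D)^2$ while simultaneously calling it $1$-periodic in $y_1$; either boundary condition yields the result, but they are not the same space, and the paper works with the periodic one.
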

\begin{proof}
Define the space
\begin{align*}
V_k &=\Bigl\{  z\in H^1(Z_k)^2 \ |\ z=0 \text{ on } \p Z_k\backslash ( (\{0\} \cup  \{1 \})\times [k,k+1]  ),  z \text{ is $1$-periodic in $y_1$}  \Bigr\}.
\end{align*}
Consider Equation \eqref{subeq:stmaux1} on $Z_k$ with $\grady (\kappa - r_k)$ instead of $\grady \kappa$. By multiplication with a test function and integration by parts we obtain
\begin{align*}
\int_{Z_k} F^{-T}(x)\grady &\zeta(y): F^{-T}(x)\grady \phi(y) \ud y - \int_{Z_k} (\kappa - r_k) \divy(F^{-1}(x) \phi(y) ) \ud y \notag \\ &= \int_{Z_k} (\rho(y) + \divy(F^{-1}(x)\rho_1(y)) \cdot \phi(y) \ud y  \quad \forall \phi\in V_k. 
\end{align*}
Analogously to Lemma \ref{lem:divsurj} there exists $\phi_k \in V_k$, solution of
\[
\divy(F^{-1}(x) \phi_k(y)) = \kappa - r_k \quad \text{in $Z_k$}
\]
with
\[
\nx{\grady \phi_k}{L^2(Z_K)^4} \leq C \nx{\kappa-r_k}{L^2(Z_k)}.
\]
$C$ depends only on the geometry of $Y^*$ but not on $k$.

Inserting $\phi_k$ in the above equation and remarking that $\nx{F^{-T}(x)\grady z}{L^2} \leq C \nx{\grady z}{L^2}$ yields
\begin{align*}
\nx{\kappa-r_k}{L^2(Z_k)}^2 & \leq \nx{\bar{\rho}}{L^2(Z_k)^2}\nx{\grady \phi_k}{L^2(Z_k)^4} + C \nx{\grady \zeta}{L^2(Z_k)^4}\nx{\grady \phi_k}{L^2(Z_k)^4} \\
& \leq    C  \bigl(  \nx{\bar{\rho}}{L^2(Z_k)^2} + \nx{\grady \zeta}{L^2(Z_k)^4}  \bigr) \nx{\kappa-r_k}{L^2(Z_k)},
\end{align*}
thus the first assertion is proved.

Next, set $Z_{k,k+1}= Z_k \cup Z_{k+1} $ and consider $\phi_{k,k+1}$ satisfying
\begin{gather*}
\divy(F^{-1}(x) \phi_{k,k+1}(y)) =
\begin{cases}
1 & \text{ in $Z_k^0$} \\
-1 & \text{ in $Z_{k+1}^0$}
\end{cases}\\
\phi_{k,k+1}=0 \text{ on }(\p Z_k \cup \p Z_{k+1}) \backslash ( (\{ 0 \}\cup  \{1  \}) \times [k,k+2]  ) \\
\phi_{k,k+1} \text{ is $1$-periodic in $y_1$}
\end{gather*}
(the existence is assured since the right hand side of the first equation is in $L^2(Z_{k,k+1})$ and has mean value $0$).

Testing \eqref{subeq:stmaux1} with $\phi_{k,k+1}$ in $Z_{k,k+1}$ gives
\begin{align*}
-\int_{Z_k} \kappa(y) \ud y  &+ \int_{Z_{k+1}} \kappa(y) \ud y+ \int_{Z_{k,k+1}} F^{-T}(x)\grady \zeta(y): F^{-T}(x)\grady \phi_{k,k+1}(y) \ud y \\ & \qquad = \int_{Z_{k,k+1}} \bar{\rho}(y)\cdot\phi_{k,k+1}(y) \ud y.
\end{align*}
Note that $\nx{\phi_{k,k+1}}{L^2(Z_{k,k+1})^2} \leq C \nx{\grady \phi_{k,k+1}}{L^2(Z_{k,k+1})^4} \leq C |Z_{k,k+1}|$, thus dividing the equation by $|Y^*|$ gives the estimate
\[
| r_{k+1} -r_k | \leq C (  \nx{\grad \zeta}{L^2(Z_{k,k+1} )^4} + \nx{\bar{\rho}}{L^2(Z_{k,k+1}  )^2} ),
\]
which finishes the proof.
\end{proof}

\begin{figure}[t]
\centering
\includegraphics[width=0.8\textwidth]{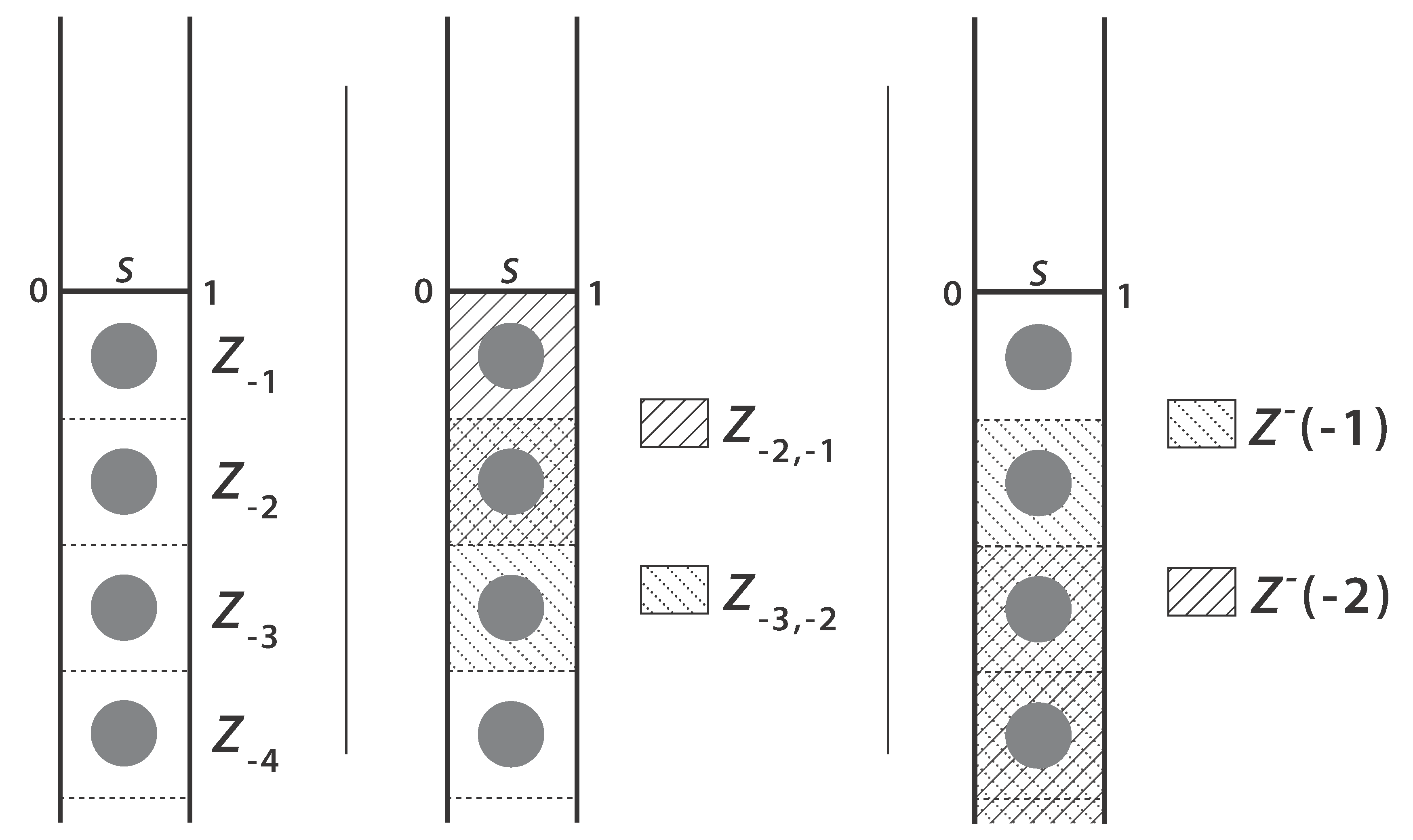}
\caption[Auxiliary domains based on the boundary layer cell $\ZBL$.]{Auxiliary domains based on the boundary layer cell $\ZBL$. {\itshape Left}: The translated reference cells $Z_k$, $k\in -\N$.      {\itshape Middle}:  The two-cell subsets $Z_{k,k+1}$, illustrated with the sets $Z_{-2,-1}$ (shaded with lines) and $Z_{-3,-2} $ (shaded with dots).    {\itshape Right}:  The unbounded strips $Z^-(k)$. In the figure the sets $Z^-(-1)$ (shaded with dots) and $Z^-(-2)$ (shaded with lines) are shown.  }
\label{fig:ZBLaux}
\end{figure}

\begin{prop}
For $k\in -\N$ choose functions $\tilde{\sigma}_k \in \mathcal{C}^\infty(\R_{\leq 0})$, $0 \leq \tilde{\sigma}_k \leq 1$ with $\tilde{\sigma}_k(z)=0$ for $z\geq k+1$ and $\tilde{\sigma}_k(z)=1$ for $z \leq k$, $z\in \R_{\geq 0}$, such that $\tilde{\sigma}_k$ and the derivative $\tilde{\sigma}'_k$ are bounded uniformly in $k$. For $y= \binom{y_1}{y_2}\in [0,1]\times(-\infty,0]$ define $\sigma_k(y):=\tilde{\sigma}_k(y_2)$.

Let $\zeta,\kappa$ be a solution of Problem \eqref{eq:stmaux}. Then it holds
\begin{align*}
\int_{Z^-} | F^{-T}&(x)\grady \zeta(y) |^2  \sigma_k(y) \ud y= \int_{Z_k} (\kappa - r_k) \zeta \cdot F^{-T}(x)\grady \sigma_k(y) \ud y  \\ &+ \int_{Z^-} \bar{\rho}(y)\cdot\zeta(y)\sigma_k(y) \ud y  - \int_{Z^-} F^{-T}(x)\grady \zeta(y) : F^{-T}(x) (\zeta(y)\otimes \grady \sigma_{k}(y)) \ud y.
\end{align*}
\end{prop}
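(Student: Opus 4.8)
The plan is to read the identity as an energy relation obtained by testing the strong form \eqref{subeq:stmaux1} with the (non-solenoidal) vector field $\zeta\sigma_k$ and integrating over the porous strip $Z^-$. Concretely, I would multiply
\[
-\divy(F^{-1}(x)F^{-T}(x)\grady\zeta)+F^{-T}(x)\grady\kappa=\bar{\rho}
\]
by $\zeta\sigma_k$ and integrate. The point of using $\zeta\sigma_k$ rather than an element of $W$ is that $\divy(F^{-1}\sigma_k\zeta)\neq 0$, so this test field is \emph{not} admissible in the weak formulation \eqref{eq:maux}; this is exactly why the pressure $\kappa$ reintroduced in Proposition~\ref{prop:pressloc} is indispensable here, and why the strong form is the right starting point.

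First I would integrate the viscous term by parts. Writing $M=F^{-1}(x)F^{-T}(x)\grady\zeta$, one has $-\int_{Z^-}\divy M\cdot\zeta\sigma_k=\int_{Z^-}M:\grady(\zeta\sigma_k)$ up to boundary contributions, all of which vanish: on the solid boundaries $\zeta=0$; on the interface $S$ one has $\sigma_k=0$ (since $k\leq-1$ forces $0\geq k+1$); and the lateral terms cancel by $y_1$-periodicity. Using the algebraic identity $(F^{-1}F^{-T}A):B=F^{-T}A:F^{-T}B$ for matrices $A,B$, together with the Leibniz rule $\grady(\zeta\sigma_k)=\sigma_k\grady\zeta+\zeta\otimes\grady\sigma_k$, the viscous term splits into $\int_{Z^-}\sigma_k|F^{-T}\grady\zeta|^2$ plus $\int_{Z^-}F^{-T}\grady\zeta:F^{-T}(\zeta\otimes\grady\sigma_k)$; since $\grady\sigma_k$ is supported in the single cell $Z_k$, the latter is in fact localized to $Z_k$.

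Next I would treat the pressure term. Rewriting $F^{-T}\grady\kappa\cdot(\zeta\sigma_k)=\grady(\kappa-r_k)\cdot F^{-1}(\zeta\sigma_k)$ (the constant $r_k$ being harmless under the gradient) and integrating by parts gives $-\int_{Z^-}(\kappa-r_k)\divy(F^{-1}\sigma_k\zeta)$. The key step is the transformed product rule, statement~4 of Lemma~\ref{lem:transform2}, namely $\divy(F^{-1}\sigma_k\zeta)=\sigma_k\divy(F^{-1}\zeta)+F^{-T}\grady\sigma_k\cdot\zeta$, combined with the incompressibility constraint $\divy(F^{-1}\zeta)=0$; this collapses the pressure contribution to $-\int_{Z_k}(\kappa-r_k)\,\zeta\cdot F^{-T}\grady\sigma_k$, again supported only in $Z_k$. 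Collecting the three pieces and solving for $\int_{Z^-}\sigma_k|F^{-T}\grady\zeta|^2$ reproduces the asserted identity, signs included.

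The main obstacle is the rigorous justification of the two integrations by parts on the \emph{unbounded} strip $Z^-$: the cutoff equals $1$ as $y_2\to-\infty$, so it does not by itself localize the computation at the far end. I would carry the argument out on the truncated domains $Z^-\cap\{y_2>-N\}$ and let $N\to\infty$. The interior integrals converge because $\grady\zeta,\zeta\in L^2(Z^-)$ (membership in $V\supset V_{\div}$) and $\bar{\rho}\in L^2(Z^-)$, so every term appearing in the final identity is finite. The flux contributions on the slices $\{y_2=-N\}$ can be sent to zero along a subsequence $N_j\to\infty$: the $L^2$-integrability of $\zeta$ and $\grady\zeta$ forbids these slices from staying bounded below, and the pressure flux is kept under control through the cell-wise estimates of Proposition~\ref{prop:kappa}. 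This truncation-and-subsequence device is the one already used in \cite{jami_bc-fluidpor}; once it is in place, the remainder of the derivation is purely algebraic.
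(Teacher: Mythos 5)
Your proposal is correct and takes essentially the same route as the paper: there, too, one tests with $\zeta$ times a cutoff, uses the two transformed product rules to split off $\int_{Z^-}|F^{-T}\grady\zeta|^2\sigma_k$ and to localize the $\grady\sigma_k$-contributions to $Z_k$ (where incompressibility kills the remaining divergence term), and subtracts $r_k$ so that Proposition~\ref{prop:kappa} controls the pressure. The only difference is the implementation of the far-field truncation: instead of a sharp cutoff at $y_2=-N$ with flux terms on slices sent to zero along a subsequence, the paper uses the compactly supported smooth cutoff $\sigma_{k,l}=\sigma_k(1-\sigma_l)$, passes to the limit $l\to-\infty$ by dominated convergence, and disposes of the pressure contribution on $Z_l$ by writing $\kappa=(\kappa-r_l)+(r_l-r_k)+r_k$ and invoking Proposition~\ref{prop:kappa} together with the Poincar\'e inequality --- which sidesteps the slice-selection argument your version needs to make the pressure flux at $y_2=-N$ rigorous.
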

\begin{proof}
Testing \eqref{subeq:stmaux1} with $\phi\in \mathcal{C}^\infty_0(\ZBL)$, $\phi=0$ on $\bigcup_{k=1}^\infty \{  \p Y_S - {\binom{0}{k}} \}$ and $\phi$ $1$-periodic in $y_1$ yields
\begin{align*}
\int_{\ZBL} F^{-T}(x) \grady \zeta(y): F^{-T}(x) &\grady \phi(y) \ud y - \int_{\ZBL} \kappa \divy(F^{-1}(x)\phi(y)) \ud y 
\int_{\ZBL}\bar{\rho}(y) \cdot \phi(y) \ud y.
\end{align*}
Define for $l\leq k-1$ the functions $\sigma_{k,l}= \sigma_k (1-\sigma_l)$. Choosing $\phi=\zeta\sigma_{k,l}$ leads to
\begin{align*}
\int_{Z^-} & | F^{-T}(x)  \grady \zeta(y) |^2\sigma_{k,l}\ud y =  \int_{Z^-} \kappa(y)\zeta(y)\cdot F^{-T}(x)\grady\sigma_{k,l}(y) \ud y 
\\ & + \int_{Z^-} \bar{\rho}(y) \cdot \zeta(y)\sigma_{k,l}(y) \ud y - \int_{Z^-} F^{-T}(x)\grady \zeta(y) : F^{-T}(x) (\zeta(y)\otimes \grady \sigma_{k,l}(y)) \ud y,
\end{align*}
where we used the fact that
\begin{gather*}
F^{-T}(x)\grady ( \zeta \sigma_{k,l})= F^{-T}(x) (\grady \zeta) \sigma_{k,l} + \zeta \otimes F^{-T}(x)\grady \sigma_{k,l},\\
\divy(F^{-1}(x) \zeta\sigma_{k,l})= \zeta\cdot F^{-T}(x)\grady \sigma_{k,l} + \sigma_{k,l}\divy(F^{-1}(x) \zeta). 
\end{gather*}

We want to pass to the limit $l\longrightarrow -\infty$ for fixed $k$. First observe that $\sigma_{k,l}\longrightarrow \sigma_k$ as well as $\grad \sigma_{k,l} \longrightarrow \grad \sigma_k$ pointwise for $l\longrightarrow -\infty$. As $|\sigma_{k,l}| \leq C$ and $|\grad \sigma_{k,l}|= |(\grad \sigma_k) (1+\sigma_l)-\sigma_k \grad\sigma_l |\leq C$ a.e. with a constant $C$, we obtain that almost everywhere
\begin{align*}
\left| \  | F^{-T}(x)\grady \zeta(y) |^2  \sigma_k(y) \right|& \leq C | F^{-T}(x)\grady \zeta(y) |^2  \\
\left|\bar{\rho}(y)\cdot\zeta(y)\sigma_{k,l}(y)\right| &\leq C \left|\bar{\rho}(y)\cdot\zeta(y)\right| \\
\left| F^{-T}(x)\grady \zeta(y) : F^{-T}(x) (\zeta(y)\otimes \grady \sigma_{k,l}(y))\right| &\leq C \left| F^{-T}(x)\grady \zeta(y) : F^{-T}(x) (\zeta(y)\otimes I)\right|
\end{align*}
where $I$ denotes the identity matrix.
Since the right hand sides are integrable, application of Lebesgue's dominated convergence theorem yields for $l\longrightarrow -\infty$
\vspace{0.1cm}
\begin{gather*}
\int_{Z^-} | F^{-T}(x) \grady \zeta(y) |^2\sigma_{k,l} \ud y \quad \longrightarrow  \quad\int_{Z^-} | F^{-T}(x) \grady \zeta(y) |^2\sigma_{k} \ud y\\
 \int_{Z^-} \bar{\rho}(y)\cdot\zeta(y)\sigma_{k,l}(y) \ud y \quad\longrightarrow\quad  \int_{Z^-} \bar{\rho}(y)\cdot\zeta(y)\sigma_{k}(y) \ud y\\
\intertext{and}
\int_{Z^-} F^{-T}(x)\grady \zeta(y) : F^{-T}(x) (\zeta(y)\otimes \grady \sigma_{k,l}(y)) \ud y \qquad \qquad \qquad\qquad
\\  \qquad\qquad\qquad\qquad\longrightarrow\quad \int_{Z^-} F^{-T}(x)\grady \zeta(y) : F^{-T}(x) (\zeta(y)\otimes \grady \sigma_{k}(y)) \ud y.
\end{gather*}
Finally we have to consider the term $\int_{Z^-} \kappa(y)\zeta(y)\cdot F^{-T}(x)\grady\sigma_{k,l}(y)$. Because of $\grad \sigma_{k,l}(y)=0$ a.e. for $y\not\in Z_k \cup Z_l$ we have
\begin{align*}
\int_{Z^-} &\kappa(y) \zeta(y)\cdot F^{-T}(x)\grady\sigma_{k,l}(y) \ud y = \int_{Z_k \cup Z_l} (\kappa(y) -r_k) \zeta(y)\cdot F^{-T}(x)\grady\sigma_{k,l}(y)\ud y \\
= &   \int_{Z_k} (\kappa(y) -r_k) \zeta(y)\cdot F^{-T}(x)\grady\sigma_{k,l}(y) \ud y + \int_{Z_l} (\kappa(y) -r_l) \zeta(y)\cdot F^{-T}(x)\grady\sigma_{k,l}(y) \ud y\\& + (r_l -r_k) \int_{Z_l} \zeta(y)\cdot F^{-T}(x)\grady\sigma_{k,l}(y) \ud y.
\end{align*}
For $l\longrightarrow - \infty$ we obtain by using Poincar\'e's inequality 
\begin{gather*}
\biggl| (r_l -r_k) \int_{Z_l} \zeta(y)\cdot F^{-T}(x)\grady\sigma_{k,l}(y) \ud y \biggr| \leq C  \nx{\grad \zeta}{L^2(Z_l)^4}  \longrightarrow 0
\end{gather*}
and
\begin{align*}
\biggl|\int_{Z_l} (\kappa(y) -r_l) \zeta(y)\cdot F^{-T}(x)\grady\sigma_{k,l}(y) \ud y \biggr| & \leq 
C (  \nx{\grad \zeta}{L^2(Z_l)^4}   + \nx{\bar{\rho}}{L^2(Z_l)^2})\nx{\grad \zeta}{L^2(Z_l)^4} \\ & \qquad  \longrightarrow 0,
\end{align*}
where we also used the preceding lemma for the last estimate.
Thus arguing similarly with Lebesgue's theorem one arrives at
\[
\lim_{l\rightarrow - \infty} \int_{Z^-} \kappa(y) \zeta(y)\cdot F^{-T}(x)\grady\sigma_{k,l}(y) \ud y = \int_{Z_k} (\kappa(y) -r_k) \zeta(y)\cdot F^{-T}(x)\grady\sigma_{k}(y)  \ud y,
\]
and the proof is complete.
\end{proof}

Define for $k\in -\N$ the sets
\[
Z^-(k):=Z^- \cap ( [0,1]\times (-\infty,k]  ).
\]

\begin{prop}
\label{prop:zeta}
Let $\bar{\rho}\in L^2(Z^-)^2$ and let $\zeta,\kappa$ be a solution of problem \eqref{eq:stmaux}. There exists a constant $C_0$ independent of $k$ such that
\[
\nx{\grad \zeta}{L^2(Z^-(k))}^2 \leq C_0^2 \nx{\bar{\rho}}{L^2(Z^-(k))^2}^2.
\]
\end{prop}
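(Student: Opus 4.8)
The plan is to feed the energy identity of the preceding Proposition into a Caccioppoli-type absorption argument, exploiting that the cut-off $\sigma_k$ equals $1$ on $Z^-(k)$ while its gradient is supported only on the single transition cell $Z_k$. Writing the identity as
\[
\int_{Z^-} |F^{-T}(x)\grady \zeta|^2 \sigma_k \ud y = I + II - III,
\]
where $I=\int_{Z_k}(\kappa-r_k)\,\zeta\cdot F^{-T}(x)\grady\sigma_k\ud y$, $II=\int_{Z^-}\bar{\rho}\cdot\zeta\,\sigma_k\ud y$ and $III=\int_{Z^-}F^{-T}(x)\grady\zeta:F^{-T}(x)(\zeta\otimes\grady\sigma_k)\ud y$, I would first bound the left-hand side from below: since $\sigma_k\equiv 1$ on $Z^-(k)$ and the integrand is nonnegative, Lemma~\ref{lem:constantskfKf} gives $\int_{Z^-}|F^{-T}(x)\grady\zeta|^2\sigma_k\ud y \ge k_F\nx{\grad\zeta}{L^2(Z^-(k))^4}^2$.

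Next I would estimate the three terms on the right. Because $\grady\sigma_k$ is bounded uniformly in $k$ and supported in $Z_k$, Cauchy--Schwarz bounds $I$ and $III$ by $C\,\nx{\zeta}{L^2(Z_k)^2}\bigl(\nx{\kappa-r_k}{L^2(Z_k)}+\nx{\grad\zeta}{L^2(Z_k)^4}\bigr)$. The pressure factor is controlled by Proposition~\ref{prop:kappa}, namely $\nx{\kappa-r_k}{L^2(Z_k)}\le C(\nx{\grad\zeta}{L^2(Z_k)^4}+\nx{\bar{\rho}}{L^2(Z_k)^2})$, and the Poincaré inequality applied cell-by-cell (each translated cell contains the solid inclusion where $\zeta=0$, so the constant is the same for every cell) yields $\nx{\zeta}{L^2(Z_k)^2}\le C\nx{\grad\zeta}{L^2(Z_k)^4}$; hence $|I|+|III|\le C(\nx{\grad\zeta}{L^2(Z_k)^4}^2+\nx{\bar{\rho}}{L^2(Z_k)^2}^2)$. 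For $II$, the support of $\sigma_k$ lies in $Z^-(k+1)$, and summing the cell-wise Poincaré inequality over all cells below level $k+1$ gives $\nx{\zeta}{L^2(Z^-(k+1))^2}\le C\nx{\grad\zeta}{L^2(Z^-(k+1))^4}$, so $|II|\le C\nx{\bar{\rho}}{L^2(Z^-(k+1))^2}\nx{\grad\zeta}{L^2(Z^-(k+1))^4}$.

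Finally I would collect the estimates and use Young's inequality to split the mixed products. The main obstacle is that the gradient terms coming from $I$, $II$ and $III$ are supported on $Z_k$ and on $Z^-(k+1)$, i.e. on the transition cell and one cell above $Z^-(k)$, so they cannot be absorbed directly into the left-hand side $k_F\nx{\grad\zeta}{L^2(Z^-(k))^4}^2$. To close the argument with a constant $C_0$ independent of $k$, I expect to set $E(k):=\nx{\grad\zeta}{L^2(Z^-(k))^4}^2$ and to derive from the above a recursive inequality of the form $E(k)\le \theta\,(E(k+1)-E(k))+C\nx{\bar{\rho}}{L^2(Z^-(k+1))^2}^2$ for a fixed $\theta>0$; since the total energy $\nx{\grad\zeta}{L^2(Z^-)^4}$ is finite because $\zeta\in W$, iterating this relation over $k$ and telescoping the differences $E(k+1)-E(k)$ controls $E(k)$ by the source in the deep region, giving the claimed bound. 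The delicate points are keeping every Poincaré and pressure constant uniform in $k$ and performing the absorption so that the transition-cell contribution does not accumulate as $k\to-\infty$.
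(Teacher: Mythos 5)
Your proposal follows essentially the same route as the paper's proof: the same lower bound $k_F\nx{\grad\zeta}{L^2(Z^-(k))^4}^2$ via the coercivity constant, the same term-by-term estimates of $I$, $II$, $III$ using Proposition~\ref{prop:kappa} and the uniform Poincar\'e inequality on the cells, Young's inequality to split the mixed products, and the same contraction recursion $a_k\leq\gamma a_{k+1}+F_k$ with $\gamma<1$ (your form $E(k)\leq\theta(E(k+1)-E(k))+C\nx{\bar{\rho}}{L^2(Z^-(k+1))^2}^2$ rearranges to exactly this), concluded by the standard iteration lemma from \cite{jami_bc-fluidpor}. This matches the paper's argument in both structure and detail.
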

\begin{proof}
We estimate the terms on the right hand side of the previous proposition separately: By the Poincar\'e inequality
\begin{align*}
\biggl| \int_{Z^-} F^{-T}(x) & \grady \zeta(y) : F^{-T}(x) (\zeta(y)\otimes \grady \sigma_{k,l}(y)) \ud y \biggr| \\ & =  \biggl| \int_{Z_k} F^{-T}(x)\grady \zeta(y) : F^{-T}(x) (\zeta(y)\otimes \grady \sigma_{k,l}(y)) \ud y \biggr| \\
& \leq C\nx{\grady \zeta}{L^2(Z_k)^4}^2 
\end{align*}
and
\begin{align*}
\biggl|\int_{Z^-} \bar{\rho}(y)\cdot\zeta(y)\sigma_k(y) \ud y \biggr| =\biggl| \int_{Z^-(k)} \bar{\rho}(y)\cdot\zeta(y)\sigma_k(y) \ud y\biggr| \leq C \nx{\grady \zeta}{L^2(Z^-(k))^4}\nx{\bar{\rho}}{L^2(Z^-(k))^2}.
\end{align*}
Using Proposition \ref{prop:kappa} gives
\begin{align*}
\biggl|\int_{Z_k} (\kappa - r_k) \zeta \cdot F^{-T}(x)\grady \sigma_k(y) \ud y \biggr| \leq C  \nx{\grady \zeta}{L^2(Z_k)^4}^2   + \nx{\bar{\rho}}{L^2(Z_k)^2}\nx{\grad \zeta}{L^2(Z_k)^4}.
\end{align*}
Because of $Z^-(k)\subset Z^-(k+1)$, $Z_k\subset Z^-(k+1)$ and Young's inequality we obtain
\begin{align*}
k_F \int_{Z^-} |\grady &\zeta(y)|^2 \sigma_k(y) \ud y  \leq \int_{Z^-} |F^{-T}(x)\grady \zeta(y)|^2 \sigma_k(y) \ud y \\ & \leq C^* \nx{\grady \zeta}{L^2(Z_k)^4}^2 + C\delta \int_{Z^-(k)} |\grady \zeta(y)|^2 \ud y + \frac{C}{\delta}\nx{\bar{\rho}}{L^2(Z^-(k+1))^2}^2
\end{align*}
for $\delta >0$. 
Next observe that 
\[
\int_{Z^-} |\grady \zeta(y)|^2 \sigma_k(y) \ud y = \int_{Z^-(k)} |\grady \zeta(y)|^2 \sigma_k(y) \ud y  + \int_{Z_k} |\grady \zeta(y)|^2 \sigma_k(y)  \ud y
\]
and 
\[
\nx{\grady \zeta}{L^2(Z_k)^4}^2  = \int_{Z^-(k+1)} |\grady \zeta(y)|^2 \ud y - \int_{Z^-(k)} |\grady \zeta(y)|^2  \ud y,
\]
thus leading to
\begin{align*}
(k_F-C\delta + C^* ) \int_{Z^-(k)}|\grady \zeta(y)|^2  \ud y \leq C^* \nx{\grady \zeta}{L^2(Z^-(k+1))^4}^2 + C_1\nx{\bar{\rho}}{L^2(Z^-(k+1))^2}^2.
\end{align*}
Choosing $\delta$ small enough such that $k_F-C\delta + C^* >0  $ and $k_F > C\delta$ gives the recursion
\[
a_k \leq \gamma a_{k+1} + F_k, \quad k\in -\N
\]
with
\begin{gather*}
a_k=\nx{\grad \zeta}{L^2(Z^-(k))}^2, \qquad \gamma= \frac{C^*}{k_F-C\delta + C^*}< 1, \\
F_k = \frac{C_1}{k_F-C\delta + C^*} \nx{\bar{\rho}}{L^2(Z^-(k+1))^2}^2.
\end{gather*}
Since $Z^-(k)\subset Z^-(k+1)$ we also have $F_k\leq F_{k+1}$.
This implies the claim as in \cite{jami_bc-fluidpor}. 
%
\end{proof}

\begin{cor}
Consider the situation as above. Then there exists a constant $\kappa_\infty$,
\[
\kappa_\infty = \lim_{k\rightarrow - \infty} \frac{1}{|Y^*|} \int_{Z_k} \kappa(y) \ud y
\]
and a constant $C^*$, independent of $k$, such that for $k\in -\N$ holds
\[
\nx{\kappa - \kappa_\infty}{L^2(Z^-(k))}^2 \leq C^*  \sum_{l=-\infty}^k \nx{\bar{\rho}}{L^2(Z^-(l+1))^2}^2.
\]
\end{cor}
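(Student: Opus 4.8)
The plan is to obtain $\kappa_\infty$ as the limit of the cell averages $r_k=\frac{1}{|Y^*|}\int_{Z_k}\kappa\,dy$ and then control the deviation of $\kappa$ from this limit cell by cell. First I would establish convergence of $(r_k)_{k\in-\N}$. The second estimate of Proposition~\ref{prop:kappa} bounds the increments by $|r_{k+1}-r_k|\leq C(\nx{\grad\zeta}{L^2(Z_{k,k+1})^4}+\nx{\bar\rho}{L^2(Z_{k,k+1})^2})$, and since $Z_{k,k+1}\subset Z^-(k+2)$, Proposition~\ref{prop:zeta} converts the gradient term into a $\bar\rho$-term, yielding $|r_{k+1}-r_k|\leq C\,s_{k+2}$ with $s_j:=\nx{\bar\rho}{L^2(Z^-(j))^2}$ and $C$ independent of $k$. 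The standing assumptions of Section~\ref{sec:mauxprob} ($e^{\gamma_1|y_2|}\rho$, $e^{\gamma_1|y_2|}\rho_1\in L^2$) force the monotone quantities $s_j$ to decay geometrically as $j\to-\infty$, so the increments are summable, $(r_k)$ is Cauchy and $\kappa_\infty:=\lim_{k\to-\infty}r_k$ exists. Telescoping then gives $|r_l-\kappa_\infty|\leq\sum_{j\leq l-1}|r_{j+1}-r_j|\leq C\sum_{i\leq l+1}s_i$.

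For the quantitative estimate I would decompose $Z^-(k)=\bigcup_{l\leq k-1}Z_l$ and, on each cell, split $\kappa-\kappa_\infty=(\kappa-r_l)+(r_l-\kappa_\infty)$, so that
\[ \nx{\kappa-\kappa_\infty}{L^2(Z^-(k))}^2=\sum_{l=-\infty}^{k-1}\nx{\kappa-\kappa_\infty}{L^2(Z_l)}^2\leq 2\sum_{l=-\infty}^{k-1}\Bigl(\nx{\kappa-r_l}{L^2(Z_l)}^2+|Y^*|\,|r_l-\kappa_\infty|^2\Bigr). \]
The first piece is immediate: by the first estimate of Proposition~\ref{prop:kappa}, $\nx{\kappa-r_l}{L^2(Z_l)}^2\leq C(\nx{\grad\zeta}{L^2(Z_l)^4}^2+\nx{\bar\rho}{L^2(Z_l)^2}^2)$, and summing over $l\leq k-1$ collapses the cell norms into the tail norms $C(\nx{\grad\zeta}{L^2(Z^-(k))^4}^2+\nx{\bar\rho}{L^2(Z^-(k))^2}^2)\leq C(C_0^2+1)s_k^2$ via Proposition~\ref{prop:zeta}; this is already dominated by the single term $\nx{\bar\rho}{L^2(Z^-(k+1))^2}^2$ of the target sum. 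The second piece is estimated with the telescoping bound for $|r_l-\kappa_\infty|$ from the first paragraph.

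The crux — and the step I expect to be the main obstacle — is the summation of $\sum_{l\leq k-1}|r_l-\kappa_\infty|^2$ against the target $\sum_{l\leq k}\nx{\bar\rho}{L^2(Z^-(l+1))^2}^2$. Inserting $|r_l-\kappa_\infty|\leq C\sum_{i\leq l+1}s_i$ produces a discrete Hardy-type inequality $\sum_{l}\bigl(\sum_{i\leq l+1}s_i\bigr)^2\leq C\sum_{l}s_{l+1}^2$, which is false for arbitrary monotone sequences and holds here only because geometric decay makes each tail sum comparable to its top term, $\sum_{i\leq l+1}s_i\leq C\,s_{l+1}$; the difficulty is precisely that this requires that a cell dominate the tail beneath it, a consequence of the exponential weighting of $\bar\rho$ rather than of $\bar\rho\in L^2$ alone. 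This reduces the whole estimate to geometric-series bounds and identifies $\kappa_\infty$ as the quantity of the first paragraph, with the final constant $C^*$ depending only on $\gamma_1$, $k_F$, $K_F$ and the geometry of $Y^*$, but not on $k$. Keeping every constant uniform in $k$ throughout — so that the recursion closes in the same manner as the argument of \cite{jami_bc-fluidpor} invoked at the end of Proposition~\ref{prop:zeta} — is the point demanding the most care.
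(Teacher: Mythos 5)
Your proposal is correct and follows essentially the same route as the paper: the increment bound of Proposition~\ref{prop:kappa} combined with Proposition~\ref{prop:zeta} shows that $(r_k)$ is Cauchy, and the cell-wise splitting $\kappa-\kappa_\infty=(\kappa-r_l)+(r_l-\kappa_\infty)$ followed by summation over the cells $Z_l$ is exactly the paper's argument. The ``crux'' you single out --- that the telescoped tail sums must be comparable to their top terms, which uses the exponential decay of $\bar{\rho}$ rather than $\bar{\rho}\in L^2$ alone --- is also present, though passed over silently, in the paper's step bounding $\sum_{l=0}^\infty\nx{\grad\zeta}{L^2(Z_{m-(l+1),m-l})^4}$ by $2\nx{\grad\zeta}{L^2(Z^-(m+1))^4}$, so your extra care there is warranted but does not change the method.
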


\begin{proof}
Proposition \ref{prop:kappa} yields
\begin{gather*}
\nx{\kappa-r_k}{L^2(Z_k)} \leq C (  \nx{\grad \zeta}{L^2(Z_k)^4}   + \nx{\bar{\rho}}{L^2(Z_k)^2}) \\
| r_{k+1} -r_k | \leq C (  \nx{\grad \zeta}{L^2(Z_{k,k+1} )^4} + \nx{\bar{\rho}}{L^2(Z_{k,k+1} )^2}  ).
\end{gather*}
We show that $r_k$ is a Cauchy sequence in $\R$, thus providing the existence of $\kappa_\infty$: By the triangle inequality it holds for $k\in -\N$, $l \leq 0$
\begin{align*}
|r_{k+l} - r_k| & \leq \sum_{j=l}^{1} |r_{k+j}-r_{k+j-1}| \\
 & \leq \sum_{j=l}^{1} C  (  \nx{\grad \zeta}{L^2(Z_{k+j-1,k+j} )^4} + \nx{\bar{\rho}}{L^2(Z_{k+j-1,k+j} )^2}  ) \\
& \leq C  (  \nx{\grad \zeta}{L^2(Z^-(k-1) )^4} + \nx{\bar{\rho}}{L^2(Z^-(k-1) )^2}  ),
\end{align*} 
where the last constant is independent of $k$ and $l$. Since the last term converges to $0$ for $k\rightarrow -\infty$, we obtain the desired result.


Next observe that
\begin{align*}
 \nx{\kappa - \kappa_\infty}{L^2(Z_m)} 
& \leq     \nx{\kappa - r_m}{L^2(Z_m)} + \nx{r_m-\kappa_\infty}{L^2(Z_m)}   \\
& \leq C (  \nx{\grad \zeta}{L^2(Z_m)^4}   + \nx{\bar{\rho}}{L^2(Z_m)^2})  + |Y^*|\ |r_m+\kappa_\infty|  \\
& \leq  C (  \nx{\grad \zeta}{L^2(Z^-(m+1))^4}   + \nx{\bar{\rho}}{L^2(Z^-(m+1))^2}) \\ & \quad +  |Y^*|\lim_{j\rightarrow \infty}\biggl(  \sum_{l=0}^j  |r_{m-l} - r_{m-(l+1)}| + |r_{l-(j+1)}-\kappa_\infty| \biggr)    \\
&\leq   C  \nx{\bar{\rho}}{L^2(Z^-(m+1))^2}   + \sum_{l=0}^\infty  C ( \nx{\grad \zeta}{L^2(Z_{m-(l+1),m-l} )^4} + \nx{\bar{\rho}}{L^2(Z_{m-(l+1),m-l} )^2}  )\\
& \leq  C  \nx{\bar{\rho}}{L^2(Z^-(m+1))^2}   +2   C ( \nx{\grad \zeta}{L^2(Z^-(m+1) )^4} + \nx{\bar{\rho}}{L^2(Z^-(m+1) )^2}  )   \\
& \leq C \nx{\bar{\rho}}{L^2(Z^-(m+1))^2},
\end{align*}
where we used the above inequalites and Proposition \ref{prop:zeta}. Furthermore, note that $\lim_{j\rightarrow \infty}  |r_{l-(j+1)}-\kappa_\infty|  =0$. 
Thus by
\begin{align*}
\nx{\kappa - \kappa_\infty}{L^2(Z^-(k))}^2 & \leq \sum_{m=-\infty}^k \nx{\kappa - \kappa_\infty}{L^2(Z_m)} \\
& \leq C^* \sum_{m=-\infty}^k \nx{\bar{\rho}}{L^2(Z^-(m+1))^2}
\end{align*}
the second assertion holds. 
\end{proof}

Finally we are able to get a result on the decay of the solutions $\zeta,\kappa$ in the porous part $Z^-$ of $\ZBL$:

\begin{cor}
Assume that $e^{\gamma_1 |y_2|}\bar{\rho} \in L^2(\ZBL)^2$ for a $\gamma_1>0$. Then there exists a $\beta>0$ such that for the solution  $\zeta,\kappa$ of Problem \eqref{eq:stmaux} holds
\begin{empheq}[box=\widefbox]{align*}
\nx{\grad \zeta}{L^2(Z^-(k))^4} & \leq C e^{-\beta |k|}\\
\nx{\zeta}{L^2(Z^-(k))^2} & \leq C e^{-\beta |k|}\\
\nx{\kappa- \kappa_\infty}{L^2(Z^-(k))\phantom{2}} & \leq C e^{-\beta |k|}
\end{empheq}
\end{cor}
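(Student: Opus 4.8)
The plan is to transfer the exponential decay of the data $\bar\rho$ directly to $\zeta$, $\grad\zeta$ and $\kappa-\kappa_\infty$ by feeding the weighted integrability hypothesis into the a priori tail estimates already established in Proposition~\ref{prop:zeta} and in the Corollary immediately preceding the statement; no new variational argument is needed. The starting observation is that the exponential weight controls the tail of $\bar\rho$ on $Z^-(k)$: since every point of $Z^-(k)=Z^-\cap([0,1]\times(-\infty,k])$ satisfies $y_2\le k<0$, and hence $|y_2|\ge |k|$, I would estimate
\[
\nx{\bar\rho}{L^2(Z^-(k))^2}^2=\int_{Z^-(k)}|\bar\rho(y)|^2\ud y\le e^{-2\gamma_1|k|}\int_{Z^-(k)}e^{2\gamma_1|y_2|}|\bar\rho(y)|^2\ud y\le e^{-2\gamma_1|k|}\nx{e^{\gamma_1|y_2|}\bar\rho}{L^2(\ZBL)^2}^2,
\]
so that $\nx{\bar\rho}{L^2(Z^-(k))^2}\le C\,e^{-\gamma_1|k|}$ with $C$ independent of $k$. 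Throughout I would set $\beta=\gamma_1$.

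Inserting this bound into Proposition~\ref{prop:zeta}, which reads $\nx{\grad\zeta}{L^2(Z^-(k))^4}^2\le C_0^2\,\nx{\bar\rho}{L^2(Z^-(k))^2}^2$, yields at once $\nx{\grad\zeta}{L^2(Z^-(k))^4}\le C\,e^{-\beta|k|}$, the first assertion. For the second, I would invoke the Poincaré inequality on $Z^-(k)$: since $Z^-(k)$ is a union of translated reference cells, each carrying a hole $Y_S-\binom{0}{j}$ on whose boundary $\zeta$ vanishes, the cell-wise Poincaré inequality holds with a constant depending only on the geometry of $Y^*$, hence uniformly in $k$ after summation over the cells. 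This gives $\nx{\zeta}{L^2(Z^-(k))^2}\le C\nx{\grad\zeta}{L^2(Z^-(k))^4}\le C\,e^{-\beta|k|}$, the second estimate.

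For the pressure I would feed the same tail bound into the Corollary above, namely $\nx{\kappa-\kappa_\infty}{L^2(Z^-(k))}^2\le C^*\sum_{l=-\infty}^{k}\nx{\bar\rho}{L^2(Z^-(l+1))^2}^2$. Bounding each summand by $C\,e^{-2\gamma_1|l+1|}=C\,e^{2\gamma_1(l+1)}$ (valid since $l+1\le k+1\le 0$) and summing the geometric series gives $\sum_{l=-\infty}^{k}e^{2\gamma_1(l+1)}=\frac{e^{2\gamma_1(k+1)}}{1-e^{-2\gamma_1}}=C'\,e^{-2\gamma_1|k|}$, whence $\nx{\kappa-\kappa_\infty}{L^2(Z^-(k))}\le C\,e^{-\beta|k|}$, the third estimate. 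The argument is essentially routine once the earlier propositions are in hand; the only two points demanding care are the uniformity in $k$ of the Poincaré constant on the semi-infinite strips $Z^-(k)$ and the bookkeeping of the geometric summation, both of which I expect to be straightforward rather than genuinely obstructive.
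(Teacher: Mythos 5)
Your proposal is correct and follows essentially the same route as the paper: the exponential weight gives a tail bound $\nx{\bar{\rho}}{L^2(Z^-(k))^2}\leq Ce^{-\gamma_1|k|}$, which is then fed into Proposition~\ref{prop:zeta} for $\grad\zeta$, combined with the Poincar\'e inequality for $\zeta$, and inserted into the summed estimate of the preceding corollary for $\kappa-\kappa_\infty$. Your direct derivation of the tail bound (pulling $e^{-2\gamma_1|k|}$ out of the weighted integral, rather than summing cell-wise contributions as a geometric series as the paper does) is a minor, slightly cleaner variant of the same argument.
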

\begin{proof}
By the assumption on $\bar{\rho}$ note that $\nx{\bar{\rho}}{L^2(Z_k)^2}\leq C e^{-\gamma_1 |k|}$. Therefore
\begin{align*}
\nx{\bar{\rho}}{L^2(Z^-(l))^2} & \leq C \sum_{k=-\infty}^l e^{-\gamma_1 |k|} = C e^{-\gamma_1 |l|} \sum_{k=-\infty}^0 (e^{\gamma_1})^{-|k|} \\
& =  \frac{C}{1-e^{\gamma_1}} e^{-\gamma_1|l|},
\end{align*}
where we used the formula for the geometric series for $e^{\gamma_1} > 1$.
Using the same argument once again, one obtains
\[
\sum_{l=-\infty}^k \nx{\bar{\rho}}{L^2(Z^-(l))} \leq C e^{-\gamma_1 |k|} ,
\]
which gives the first and the last assertion. The second one follows due to Poincar\'e's inequality.
\end{proof}

In order to deal with the behavior of $\zeta$ and $\kappa$ in $Z^+$, we are going to use the theory for the exponential decay of solutions of elliptic problems, developed by Landis/Panasenko and Ole\u{\i}nik/Iosif'jan, see \cite{lapa_phragmlind} and \cite{olio_behavinf}.

\begin{thm}[Exponential Decay]
\label{thm:expdecay}
Let the geometry be given as above. In $Z^+$ consider the elliptic equation
\[
-\divy(F(y)\grady u(y)) = f(y)
\]
with a given matrix function $F\in L^\infty(Z^+)^4$ satisfying the following ellipticity condition: Let there exist constants $c_1, C_1>0$ and $M>0$ ($M= 1$ in case $F$ is symmetric) such that for all $\eta, \xi \in \R^2$
\begin{gather*}
c_1 |\xi|^2 \leq \xi^T F(y) \xi \leq C_1 |\xi|^2 \\
| \eta^T F(y) \xi | \leq M ( \eta^T F(y) \eta  )^\frac{1}{2}  (\xi^T F(y) \xi )^\frac{1}{2}.
\end{gather*}
Assume further periodic boundary conditions on $(\{0 \} \cup \{1 \} \times \R_{\geq 0})$ and Dirichlet and/or Neumann conditions on $S$ such that there exists a solution $u$ with $\grad u \in L^2(Z^+) $. Let there exist constants $q,Q >0$ such that $Qe^{q y_2} f \in L^2(Z^+)$.

Then there exist constants $q_1,Q_1>0$ and $C_u$ such that
\begin{align*}
\nx{\grad u}{L^2(Z^+ \cap ( [0,1] \times [k, \infty) ))^2} &\leq Q_1 e^{-q_1 k} \\
\nx{u-C_u}{L^2(Z^+ \cap ( [0,1] \times [k, \infty) ))} &\leq Q_1 e^{-q_1 k}.
\end{align*}
Furthermore, there exists $y^* >0$ with
\[
| u(y) - C_u  | \leq Q_1 e^{-q_1 y_2} \quad \text{for } y_2 > y^*.
\]
\end{thm}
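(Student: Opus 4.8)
The plan is to establish the result by a Saint--Venant energy-decay argument of the type developed by Landis--Panasenko and Ole\u{\i}nik--Iosif'jan, mirroring the recursion already obtained in Proposition \ref{prop:zeta} for the porous part $Z^-$, but now propagating towards $y_2\to+\infty$. For $k\in\N$ write $Z^+(k):=Z^+\cap([0,1]\times[k,\infty))$ and introduce the tail energy
\[
E(k):=\int_{Z^+(k)} (\grady u(y))^{T}F(y)\grady u(y)\ud y,
\]
which by the lower ellipticity bound is comparable to $\|\grady u\|_{L^2(Z^+(k))^2}^2$ and is finite since $\grady u\in L^2(Z^+)$. The goal is a geometric recursion $E(k+1)\le\gamma E(k)+F_k$ with contraction factor $\gamma<1$ and exponentially small $F_k$.

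First I would test the weak formulation of $-\divy(F\grady u)=f$ against $\phi=(u-m_k)\eta_k$, where $\eta_k$ is a Lipschitz cut-off equal to $0$ for $y_2\le k$ and $1$ for $y_2\ge k+1$ with $|\eta_k'|\le C$, and $m_k$ is the mean value of $u$ over the slab $[0,1]\times[k,k+1]$. Expanding $\grady\phi=\eta_k\grady u+(u-m_k)\eta_k'e_2$ and using the lower ellipticity bound produces $c_1E(k+1)$ from the principal term; the cross term, supported in the slab, is dominated by the anisotropic Cauchy--Schwarz inequality furnished by the constant $M$ together with the Poincar\'e--Wirtinger inequality $\|u-m_k\|_{L^2(\mathrm{slab})}\le C\|\grady u\|_{L^2(\mathrm{slab})^2}$, hence by $C\,(E(k)-E(k+1))$. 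The forcing term is absorbed using $\|e^{q y_2}f\|_{L^2(Z^+)}<\infty$ and an elementary weighted estimate for $u-m_k$ on the tail, so that it contributes only a summable $F_k$. Rearranging yields the recursion, and iterating it exactly as in the closing step of Proposition \ref{prop:zeta} gives $E(k)\le C e^{-2q_1 k}$ for some $q_1>0$. The Poincar\'e--Wirtinger inequality upgrades this to control of $\|u-m_k\|_{L^2(Z^+(k))}$, and bounding $|m_{k+1}-m_k|$ by $\|\grady u\|_{L^2}$ over two adjacent slabs shows that the slab means form a Cauchy sequence, so $m_k\to C_u$ and the stated $L^2$ tail estimates for $\grady u$ and for $u-C_u$ follow.

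For the pointwise bound I would pass from the $L^2$ estimate to an $L^\infty$ estimate by local elliptic regularity: on each unit cell $u-C_u$ solves the same equation, so a Caccioppoli/De Giorgi--Nash--Moser interior bound (or, if $F$ is smooth enough, the two-dimensional Sobolev embedding $H^2\hookrightarrow L^\infty$) gives $\sup_{[0,1]\times[k+1,k+2]}|u-C_u|\le C\|u-C_u\|_{L^2([0,1]\times[k,k+3])}\le C e^{-q_1 k}$, which yields the claimed pointwise decay for $y_2>y^*$.

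The main obstacle is obtaining a clean recursion with genuine contraction factor $\gamma<1$ in the presence of the non-symmetric coefficient matrix $F$: the cross term must be controlled using precisely the anisotropic bound $|\eta^T F\xi|\le M(\eta^T F\eta)^{1/2}(\xi^T F\xi)^{1/2}$, and the forcing $f$, although exponentially decaying, has to be handled so as to remain a lower-order, summable perturbation rather than spoiling the contraction. A secondary technical point is isolating the additive constant $C_u$ --- the kernel of the operator under periodic/Neumann conditions consists of constants --- which requires the slab-mean bookkeeping above and, for the final pointwise statement, sufficient regularity of $F$ to invoke the interior estimate.
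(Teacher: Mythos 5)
Your proposal is correct in substance, but it takes a different (much more self-contained) route than the paper: the paper's proof of this theorem is essentially a citation --- the two $L^2$ tail estimates are obtained by invoking Theorem~10 of Ole\u{\i}nik/Iosif'jan \cite{olio_behavinf}, and the pointwise bound follows by first lifting to $u\in H^2(Z^+)$, using the two-dimensional embedding $H^2\hookrightarrow \mathcal{C}^0$ to get a continuous representative, and then applying Theorem~2 of Landis/Panasenko \cite{lapa_phragmlind}. What you do is re-derive the content of those cited results by the Saint--Venant tail-energy recursion $E(k+1)\le\gamma E(k)+F_k$, exactly parallel to the argument the paper itself carries out for the porous side in Proposition~\ref{prop:zeta}; this buys a self-contained proof that makes visible where the structure conditions on $F$ (the constant $M$ in the anisotropic Cauchy--Schwarz bound) actually enter, at the cost of length. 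Two points deserve care if you write this out. First, the forcing term $\int_{Z^+(k)} f\,(u-m_k)\eta_k$ lives on an unbounded tail, so the plain Poincar\'e--Wirtinger inequality on a slab does not suffice; you need the weighted (Hardy-type) inequality $\lVert (1+y_2-k)^{-1}(u-m_k)\rVert_{L^2(Z^+(k))}\le C\lVert\grad u\rVert_{L^2(Z^+(k))}$ paired with the exponential weight on $f$ --- this is the same inequality the paper already uses in Proposition~\ref{prop:mauxexistence}, and your phrase ``elementary weighted estimate'' should be made precise in exactly this way. Second, the local $L^\infty$ bound on each unit cell must also carry the inhomogeneity, i.e.\ $\sup_{\text{cell}}|u-C_u|\le C(\lVert u-C_u\rVert_{L^2(\text{cell}')}+\lVert f\rVert_{L^2(\text{cell}')})$, which is harmless since $f$ decays exponentially; and if you take the $H^2\hookrightarrow L^\infty$ shortcut instead of De Giorgi--Nash--Moser you need more regularity of $F$ than the stated $L^\infty$, so for the theorem as formulated the Moser route is the right one. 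With these two refinements your argument is complete and independent of the external references.
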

\begin{proof}
Theorem 10 of \cite{olio_behavinf}  gives the first two estimates.

Due to the lifting property of elliptic operators, we obtain a solution $u\in H^2(Z^+)$; and because of the embedding $H^2(Z^+) \hookrightarrow \mathcal{C}^0(Z^+)$ there exists a continuous representative. Therefore we can apply Theorem 2 in \cite{lapa_phragmlind} in order to get the pointwise estimate.
\end{proof}

\begin{prop}
Assume that $(\zeta, \kappa)$ is a solution of Problem \eqref{eq:stmaux} with $e^{\gamma_1 y_2}\rho \in H^1(Z^+)^2$, $e^{\gamma_1 y_2}\rho_1\in H^1(Z^+)^4$ and $e^{\gamma_1 y_2}\divy(F^{-1}(x)\rho_1)\in H^1(Z^+)^2$.

There exist $\beta>0$, $y^*>0$, a vector $C_\zeta \in \R^2$ and a constant $C_\kappa$ such that
\begin{empheq}[box=\widefbox]{align*}
\nx{\grad \zeta}{L^2(Z^+ \cap ( [0,1] \times [k, \infty) ))^4}  & \leq C e^{-\beta k} \\
\nx{ \zeta - C_\zeta}{L^2(Z^+ \cap ( [0,1] \times [k, \infty) ))^2}  & \leq C e^{-\beta k} \\
\nx{\kappa - C_\kappa}{L^2(Z^+ \cap ( [0,1] \times [k, \infty)\phantom{^2} ))}  & \leq C e^{-\beta k} 
\end{empheq}
{and}
\begin{empheq}[box=\widefbox]{align*}
|\zeta(y) - C_\zeta | & \leq C e^{-\beta y_2} \quad\text{ for } y_2> y^*\\
|\kappa(y) - C_\kappa | & \leq C e^{-\beta y_2} \quad\text{ for } y_2> y^*
\end{empheq}
\end{prop}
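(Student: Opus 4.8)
The plan is to exploit the two–dimensional structure of the Stokes system in $Z^+$ and reduce the decay question to the scalar Exponential Decay Theorem~\ref{thm:expdecay}, which cannot be applied to the coupled velocity–pressure system directly. The device for eliminating the pressure is the transformed curl: since $\curlt_y(F^{-T}(x)\grady\kappa)=0$ by Lemma~\ref{lem:transform2}(5), taking $\curlt_y$ of the momentum equation~\eqref{subeq:stmaux1} and invoking the commutation identity Lemma~\ref{lem:transform2}(6) produces a \emph{closed, pressure-free} scalar equation for the transformed vorticity $w:=\curlt_y(\zeta)$,
\[
-\divy\!\bigl(F^{-1}(x)F^{-T}(x)\grady w\bigr)=\curlt_y\!\bigl(\rho+\divy(F^{-1}(x)\rho_1)\bigr)=:f_w \quad\text{in }Z^+.
\]
This is exactly of the form treated by Theorem~\ref{thm:expdecay}, with the symmetric, uniformly positive definite matrix $F^{-1}F^{-T}$ (Lemma~\ref{lem:constantskfKf}), so its ellipticity hypotheses hold with $M=1$.

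Next I would verify the remaining hypotheses of Theorem~\ref{thm:expdecay} for $w$. The datum $f_w$ contains one derivative of $\rho$ and of $\divy(F^{-1}\rho_1)$ together with the smooth (hence bounded) entries of $F$; the weighted assumptions $e^{\gamma_1 y_2}\rho\in H^1(Z^+)^2$ and $e^{\gamma_1 y_2}\divy(F^{-1}\rho_1)\in H^1(Z^+)^2$ therefore yield $e^{\gamma_1 y_2}f_w\in L^2(Z^+)$, as required. I also need $\grady w\in L^2(Z^+)$; this is the one genuinely technical input, and I would obtain it from elliptic (Stokes) regularity up to the interface $S$ and on the strips $[0,1]\times(1,\infty)$, combined with the global bound $\grady\zeta\in L^2(\ZBL)$ coming from $\zeta\in W$ and with $\zeta\in H^2_{\mathrm{loc}}$ (Lemma~\ref{lem:mauxreg}). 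Theorem~\ref{thm:expdecay} then furnishes a constant $C_w$ and a $\beta>0$ with exponential decay of $\grady w$ and of $w-C_w$ in $L^2$ over the strips $Z^+\cap([0,1]\times[k,\infty))$.

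The velocity and pressure estimates are then recovered by feeding this decay back through the differential identities. Because $\divy(F^{-1}\zeta)=0$, Lemma~\ref{lem:transform2}(7) gives $\divy(F^{-1}F^{-T}\grady\zeta)=\left[\begin{smallmatrix}0&1\\-1&0\end{smallmatrix}\right]F^{-T}\grady w$, whose right-hand side already decays exponentially by the previous step. Read componentwise, each $\zeta_i$ solves $\divy(F^{-1}F^{-T}\grady\zeta_i)=G_i$ with exponentially small $G_i$ and $\grady\zeta_i\in L^2(Z^+)$, so a second application of Theorem~\ref{thm:expdecay} produces $C_\zeta\in\R^2$ together with the decay of $\grady\zeta$ and of $\zeta-C_\zeta$. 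For the pressure I would return to~\eqref{subeq:stmaux1} in the form $F^{-T}\grady\kappa=\divy(F^{-1}F^{-T}\grady\zeta)+\rho+\divy(F^{-1}\rho_1)$; every term on the right is now exponentially small, hence so is $\grady\kappa$, and integrating in $y_2$ (using $1$-periodicity in $y_1$ to fix the transverse constant) yields $C_\kappa$ and the decay of $\kappa-C_\kappa$. The pointwise bounds follow from the pointwise part of Theorem~\ref{thm:expdecay} (Theorem 2 of~\cite{lapa_phragmlind}) after bootstrapping $\zeta,\kappa$ into $H^2_{\mathrm{loc}}(Z^+)$ and using $H^2\hookrightarrow\mathcal{C}^0$.

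The main obstacle I anticipate is the circularity between velocity and pressure decay: one cannot show $\grady\kappa$ decays without first controlling $\divy(F^{-1}F^{-T}\grady\zeta)$, yet the latter a priori contains $\grady\kappa$. The vorticity equation is precisely what severs this loop, being pressure-free, so the crux of the argument is the clean verification that $\grady w\in L^2(Z^+)$ and that $f_w$ lies in the correct exponentially weighted space; once these are in place, the two further invocations of Theorem~\ref{thm:expdecay} and the reconstruction of $(\zeta,\kappa)$ are routine.
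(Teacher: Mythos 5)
Your proposal is correct and follows essentially the same route as the paper: eliminate the pressure by taking the transformed curl of the momentum equation, apply the scalar exponential-decay theorem to the vorticity $\curlt_y(\zeta)$, convert its decay into an exponentially small right-hand side for $\divy(F^{-1}F^{-T}\grady\zeta)$ via Lemma~\ref{lem:transform2}(7), and apply the decay theorem again to $\zeta$. The only (immaterial) difference is the pressure step, where the paper takes the transformed divergence of the momentum equation and invokes Theorem~\ref{thm:expdecay} a third time rather than reading off the decay of $\grady\kappa$ directly; your explicit attention to verifying $\grady(\curlt_y\zeta)\in L^2(Z^+)$ is a point the paper passes over in silence.
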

\begin{proof}
Set $\xi=\curlt(\zeta)$. By taking the $\curlt$ of Equation \eqref{eq:stmaux} we obtain due to Lemma~\ref{lem:transform2}:
\begin{gather*}
\divy(F^{-1}(x)F^{-T}(x) \grady \xi(y))= - \curlt \Bigl(\rho(y) + \divy\bigl(F^{-1}(x)\rho_1(y)\bigr) \Bigr) \quad \text{ in } Z^+ .
\end{gather*}

The right hand side decays exponentially, thus by the preceding theorem we obtain
\begin{gather*}
\nx{\grad (\curlt \zeta)}{L^2(Z^+ \cap ( [0,1] \times [k, \infty) ))^2}   \leq C e^{-\beta k} \\
\nx{\curlt \zeta - C_C}{L^2(Z^+ \cap ( [0,1] \times [k, \infty) ))}   \leq C e^{-\beta k}  
\end{gather*}
with some constants $\beta>0$ and  $C_C $.

Using the 7th assertion of the transformation lemma \ref{lem:transform2} we see that
\begin{align*}
F^{-T}(x) \grady(\xi(y)) = F^{-T}(x) \grady\bigl(\curlt \zeta(y)\bigr) = \begin{bmatrix}
                    0& -1 \\
                    1 & \phantom{-}0
                   \end{bmatrix} \divy\bigl(F^{-1}(x) F^{-T}(x) \grady \zeta(y) \bigr).
\end{align*}
Therefore
\[
\divy\bigl(F^{-1}(x) F^{-T}(x) \grady \zeta(y) \bigr) = h(y) \quad \text{ in } Z^+,
\]
$h$ being a known function with $e^{\beta y_2} h\in L^2(Z^+)$. Theorem \ref{thm:expdecay} now shows that the first two asserted inequalities about the decay of $\zeta$ and $\grad\zeta$ hold.

By taking the transformed divergence of Equation \eqref{subeq:stmaux1} one obtains
\[
\divy\bigl(F^{-1}(x)F^{-T}(x) \grady \kappa(y)\bigr)= - \divy\Bigl(F^{-1}(x) \bigl[\rho(y) + \divy(F^{-1}(x)\rho_1(y) )\bigr]\Bigr) \quad \text{ in } Z^+.
\]
Again, the right hand side decays exponentially, and the estimate for $\kappa$ is proved. The remaining two inequalities follow easily.
\end{proof}

At the end of this section, we want to obtain some information about the constant $C_\zeta$:
\begin{lem}
\label{lem:blfunc}
For the solution $\zeta$ of Problem \eqref{eq:stmaux} it holds for all $z<0$
\[
\int_0^1 \zeta(y_1,z) \cdot F^{-T}(x)e_2 \ud y_1=\int_0^1 \zeta(y_1,0) \cdot F^{-T}(x)e_2\ud y_1 =0 ,
\]
and $\int_S \zeta \cdot F^{-T}(x)e_2 \ud \sigma_y =0 $.
\end{lem}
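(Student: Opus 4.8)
The plan is to recognise the integrand as the second component of the transformed field $w:=F^{-1}(x)\zeta$ and to read the incompressibility constraint as a flux-conservation law. For fixed $x$ the matrix $F^{-1}(x)$ is constant in $y$, and
\[
\zeta(y)\cdot F^{-T}(x)e_2 = e_2^{T}F^{-1}(x)\zeta(y) = \bigl(F^{-1}(x)\zeta(y)\bigr)_2 = w_2(y),
\]
so the quantity to control is the mean normal flux $Q(z):=\int_0^1 w_2(y_1,z)\ud y_1$, where $\zeta$ (hence $w$) is extended by $0$ into the inclusions, consistently with $\zeta=0$ on $\bigcup_{k=1}^\infty\{\p Y_S-\binom{0}{k}\}$. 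By \eqref{eq:stmaux} we have $\divy w=\divy(F^{-1}(x)\zeta)=0$ in $Z$.

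First I would show that $Q$ is independent of $z$ on $(-\infty,0)$. Fixing $z_1<z_2<0$ and integrating $\divy w=0$ over the fluid region $Z^-\cap([0,1]\times(z_1,z_2))$, the divergence theorem splits the boundary flux into four pieces: the lateral faces $\{y_1=0\}$ and $\{y_1=1\}$ cancel by $1$-periodicity of $\zeta$ in $y_1$; the solid boundaries contribute nothing since $w=F^{-1}(x)\zeta=0$ there; and the two horizontal faces leave exactly $Q(z_2)-Q(z_1)$. Here it is essential that $\p Y_S\cap\p Y=\emptyset$, so that the inclusions never meet the lateral walls, the cross-sections at integer heights are full intervals, and the perforated domain is connected with a clean boundary decomposition. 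Hence $Q(z_1)=Q(z_2)$, and since $\zeta\in H^2_{\mathrm{loc}}(Z)^2\hookrightarrow \mathcal{C}^0$ (Lemma \ref{lem:mauxreg}) the map $z\mapsto Q(z)$ is continuous, so $Q$ is constant on $(-\infty,0]$.

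Next I would pin down this constant using the exponential decay proved above. Proposition \ref{prop:zeta} together with its corollary gives $\nx{\zeta}{L^2(Z^-(k))^2}\to 0$ as $k\to-\infty$. Writing $\nx{\zeta}{L^2(Z^-(k))^2}^2=\int_{-\infty}^k g(z)\ud z$ with $g(z):=\int_0^1|\zeta(y_1,z)|^2\ud y_1$, the vanishing of these tails forces a sequence $z_j\to-\infty$ with $g(z_j)\to0$; Cauchy--Schwarz on $[0,1]$ then yields $|Q(z_j)|\le \norm{F^{-1}(x)}\,g(z_j)^{1/2}\to0$. As $Q$ is constant, $Q\equiv0$. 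This proves the first two equalities, and taking $z\to0^-$ gives $\int_0^1\zeta(y_1,0)\cdot F^{-T}(x)e_2\ud y_1=Q(0)=0$; the final identity is then just $\int_S\zeta\cdot F^{-T}(x)e_2\ud\sigma_y=\int_0^1\zeta(y_1,0)\cdot F^{-T}(x)e_2\ud y_1=0$.

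The main obstacle is the careful justification of the flux balance on the perforated domain: one must check that the zero-extension across the inclusions is admissible, that the divergence theorem applies on the Lipschitz fluid region with its curved inclusion boundaries, and that the periodic lateral terms genuinely cancel. Once this geometric bookkeeping is settled---relying on $\p Y_S\cap\p Y=\emptyset$ and on $\zeta\in H^2_{\mathrm{loc}}$---the remainder is a routine conservation-plus-decay argument.
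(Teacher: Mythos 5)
Your proposal is correct and follows essentially the same route as the paper: integrate the constraint $\divy(F^{-1}(x)\zeta)=0$ over a horizontal slab, use periodicity in $y_1$ and $\zeta=0$ on the inclusion boundaries to see that the flux $\int_0^1\zeta(y_1,z)\cdot F^{-T}(x)e_2\ud y_1$ is independent of $z$, and then use $\zeta\in L^2(Z^-)^2$ to force that constant to vanish. The only cosmetic difference is that the paper concludes from the square-integrability of the constant flux over $(-\infty,0)$ (and reduces to smooth $\zeta$ by density), whereas you extract a vanishing sequence of cross-sections via the decay estimates, which is more than is needed but equally valid.
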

\begin{proof}
By density it is enough to show the claim for $\zeta\in W\cap \mathcal{C}^\infty_0(\ZBL)^2$.

Integration of the equation $\div_y(F^{-1}(x)\zeta(y))$ =0 over $[0,1]\times (z,0)$ and application of Stokes theorem yields due to the periodic boundary conditions
\[
\int_0^1 \zeta(y_1,0)\cdot F^{-T}(x)e_2\ud y_1 =  \int_0^1 \zeta(y_1,z)\cdot F^{-T}(x)e_2 \ud y_1 =:K_\zeta  \quad \forall z<0.
\]
Since $\zeta\in W$ it holds
\[
\int_{-\infty}^0 \biggl( \int_0^1 \zeta(y_1,t)\cdot F^{-T}(x)e_2 \ud y_1  \biggr)^2 \ud t \leq C \int_{-\infty}^0 \int_0^1 |\zeta(y_1,t)|^2  \ud y_1  \ud t  < \infty,
\]
and thus $K_\zeta=0$.
\end{proof}

\begin{lem}
\label{lem:constexpl}
For the constant $C_\zeta$ it holds $C_\zeta\cdot F^{-T}(x) e_2 =0$.
\end{lem}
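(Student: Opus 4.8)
The plan is to mimic the flux computation of Lemma~\ref{lem:blfunc}, but now integrating the incompressibility constraint \emph{upwards} into the free-fluid half-strip $Z^+$ and then letting the height tend to infinity, so that the decay of $\zeta$ towards $C_\zeta$ can be exploited. First I would integrate the identity $\divy(F^{-1}(x)\zeta(y))=0$ over the rectangle $[0,1]\times(0,z)$ for an arbitrary $z>0$. Applying the divergence theorem and using that the contributions on $\{y_1=0\}$ and $\{y_1=1\}$ cancel by the $1$-periodicity in $y_1$, only the two horizontal faces survive. Since $F^{-1}(x)v\cdot e_2 = v\cdot F^{-T}(x)e_2$, this yields
\[
\int_0^1 \zeta(y_1,z)\cdot F^{-T}(x)e_2 \ud y_1 = \int_0^1 \zeta(y_1,0)\cdot F^{-T}(x)e_2 \ud y_1 \qquad\text{for all } z>0,
\]
and the right-hand side is $0$ by Lemma~\ref{lem:blfunc} (recall that $\zeta$ is continuous across $S$, since $[\zeta]_S=0$). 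Hence the horizontal flux of $\zeta$ through every slice $\{y_2=z\}$, $z>0$, vanishes. As in Lemma~\ref{lem:blfunc}, the regularity needed to justify this integration by parts is obtained by a density argument, approximating $\zeta$ in $W$ by smooth, $y_1$-periodic functions vanishing on the solid boundaries.

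The second step is to pass to the limit $z\to\infty$. Because $C_\zeta$ is constant in $y$ (while $F$ depends only on the frozen parameter $x$), its slice average equals $C_\zeta\cdot F^{-T}(x)e_2$ itself, so subtracting the vanishing flux gives
\[
C_\zeta\cdot F^{-T}(x)e_2 = \int_0^1\bigl(C_\zeta - \zeta(y_1,z)\bigr)\cdot F^{-T}(x)e_2\ud y_1,
\]
whence $|C_\zeta\cdot F^{-T}(x)e_2| \le |F^{-T}(x)e_2|\,\nx{C_\zeta-\zeta(\cdot,z)}{L^1(0,1)}$. The pointwise decay estimate $|\zeta(y)-C_\zeta|\le Ce^{-\beta y_2}$ for $y_2>y^*$, established in the preceding proposition, bounds the right-hand side by $Ce^{-\beta z}$, which tends to $0$. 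Letting $z\to\infty$ then forces $C_\zeta\cdot F^{-T}(x)e_2=0$, as claimed.

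I do not expect a genuine obstacle here: the argument is essentially the $Z^+$-counterpart of Lemma~\ref{lem:blfunc}, and the only point requiring care is that the slice identity is proved \emph{before} taking limits (so that it holds for every finite $z$), with the exponential decay used solely at the very end to identify the limit of the slice with $C_\zeta$. If one preferred to avoid the pointwise bound, the same conclusion follows from the $L^2$ decay $\nx{\zeta-C_\zeta}{L^2(Z^+\cap([0,1]\times[k,\infty)))}\le Ce^{-\beta k}$, by selecting a sequence $z_k\uparrow\infty$ along which $\nx{\zeta(\cdot,z_k)-C_\zeta}{L^2(0,1)}\to 0$ and passing to the limit in the slice identity over that sequence.
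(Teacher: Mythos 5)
Your argument is correct and essentially the paper's own: the paper likewise integrates $\divy(F^{-1}(x)\zeta)=0$ over a vertical strip, uses periodicity to reduce to the horizontal fluxes, and lets the height tend to infinity so that the exponential stabilization identifies the limiting flux with $C_\zeta\cdot F^{-T}(x)e_2$. The only (immaterial) difference is that the paper integrates over $[0,1]\times(-k,k)$ and sends the lower flux to $0$ via the decay in $Z^-$, whereas you integrate over $[0,1]\times(0,z)$ and invoke Lemma~\ref{lem:blfunc} directly for the flux through $S$.
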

\begin{proof}
Arguing as in the proof of the above lemma, we obtain
\[
\int_0^1 \zeta(y_1,k) \cdot F^{-T}(x)e_2 \ud y_1=\int_0^1 \zeta(y_1,-k) \cdot F^{-T}(x)e_2  \ud y_1
\]
for all $k>0$. Now the left hand side converges exponentially to $\int_0^1 C_\zeta \cdot F^{-T}(x)e_2 \ud y_1= C_\zeta \cdot F^{-T}(x)e_2$, whereas the right hand side converges to $0$. 
\end{proof}


\subsubsection{Application to the Stokes Boundary Layer Problems}

We apply the results of the foregoing section to the problem: Find $(\bbl,\obl) \in V \times L^2_\mathrm{loc}(\ZBL)$ such that for fixed $x\in \O$ it holds 
\begin{subequations}
\label{eq:stbl}
\begin{empheq}[box=\widefbox]{align}
-\divy(F^{-1}(x)F^{-T}(x) \grady \bbl(y)) + F^{-T}(x)\grady\obl(y)=0 & \quad \text{ in $Z$} \label{subeq:wibl}\\
\divy(F^{-1}(x)\bbl(y))=0 &\quad \text{ in $Z$}  \label{subeq:wibldiv}\\
[ \bbl ]_S (y)= 0 & \quad\text{ on $S$} \\
[(F^{-1}(x)F^{-T}(x)\grady \bbl-F^{-1}(x)\obl)e_2]_S (y)  =K^\mathrm{bl}(x)& \quad\text{ on $S$} \\
\bbl(y)=0  \quad \text{ on }  {\textstyle \bigcup_{k=1}^\infty }&  {\textstyle \{  \p Y_S - \binom{0}{k} \} }\\
\bbl, \obl \text{ are $1$-periodic in $y_1$}
\end{empheq}
\end{subequations}
where $K^\mathrm{bl}(x)=F^{-1}(x)F^{-T}(x)\grad u^0(x) e_2$. This corresponds to the case $\rho = \rho_1=0$ and $\sigma=K^\mathrm{bl}(x)$.
Lemma \ref{lem:constexpl} shows that 
\begin{equation}
\label{eq:expstabvelocity}
\Cbl\cdot F^{-T}(x) e_2=0.
\end{equation}
Finally, we can obtain the complete information about the constants:
\begin{lem}
\label{lem:expstabpress}
For all $0<a<b$ it holds
\begin{align*}
\int_0^1 \obl(y_1,a) \ud y_1 = \int_0^1 \obl(y_1,b) \ud y_1.
\end{align*}
Thus the constant $\Cblo$ arising in the stabilization of the pressure is given by
\begin{equation}
\label{eq:Cblo}
\Cblo = \int_0^1 \obl(y_1,+0) \ud y_1.
\end{equation}
\end{lem}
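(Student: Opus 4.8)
The plan is to work entirely on the free-fluid side $Z^+=[0,1]\times(0,\infty)$, where there are no solid inclusions, so that by interior elliptic regularity the pair $(\bbl,\obl)$ is smooth and the strong forms of \eqref{subeq:wibl} and \eqref{subeq:wibldiv} hold pointwise there. Throughout, $x$ is a fixed parameter; since $g'=g'(x_1)$ does not depend on $y$, the matrix $A:=F^{-1}(x)F^{-T}(x)$ is constant in $y$, symmetric, with $A_{22}=1+g'(x_1)^2$, and one has the explicit form $F^{-T}\grady q=\binom{\p_{y_1}q-g'\p_{y_2}q}{\p_{y_2}q}$ for any scalar $q$. I abbreviate the slice averages $P(y_2):=\int_0^1\obl(y_1,y_2)\ud y_1$ and $B_k(y_2):=\int_0^1\bbl_k(y_1,y_2)\ud y_1$ for $k=1,2$; the goal is to show $P'\equiv0$ on $(0,\infty)$.

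First I would integrate each of the two scalar components of the momentum equation \eqref{subeq:wibl} over $y_1\in[0,1]$. Since $A$ is constant, $[\divy(A\grady\bbl)]_k=\sum_{i,j}A_{ij}\p_{y_i}\p_{y_j}\bbl_k$; integrating in $y_1$ and using $1$-periodicity in $y_1$ to discard every term carrying a $\p_{y_1}$-derivative leaves only $A_{22}B_k''$. Combined with the explicit form of $F^{-T}\grady\obl$ above (whose $\p_{y_1}\obl$ contribution again integrates to zero by periodicity) this yields
\begin{align*}
A_{22}B_2'' &= P', \\
A_{22}B_1'' &= -g'\,P'.
\end{align*}
Next I would integrate the incompressibility constraint \eqref{subeq:wibldiv}, namely $\p_{y_1}\bbl_1-g'\p_{y_2}\bbl_1+\p_{y_2}\bbl_2=0$, over $y_1\in[0,1]$; the periodic term drops, giving $B_2'=g'B_1'$, and differentiating once more in $y_2$ yields $B_2''=g'B_1''$.

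Combining the three identities eliminates $B_1$ and $B_2$: substituting $B_2''=g'B_1''$ into the first relation gives $P'=A_{22}g'B_1''$, while multiplying the second by $g'$ gives $A_{22}g'B_1''=-g'^2P'$; hence $P'=-g'^2P'$, i.e.\ $(1+g'^2)P'=0$. Since $1+g'(x_1)^2\ge1>0$, this forces $P'\equiv0$, so $P$ is constant on $(0,\infty)$, which is exactly $\int_0^1\obl(y_1,a)\ud y_1=\int_0^1\obl(y_1,b)\ud y_1$ for all $0<a<b$. To identify the constant I would let $b\to\infty$: by the exponential stabilization of $\obl$ towards $\Cblo$ established earlier, $P(y_2)\to\int_0^1\Cblo\ud y_1=\Cblo$, so $P\equiv\Cblo$; passing to the interface and using continuity of the trace from the $Z^+$-side then gives $\Cblo=\int_0^1\obl(y_1,+0)\ud y_1$, which is \eqref{eq:Cblo}.

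The computation is essentially a bookkeeping of cancellations, so the step I would treat most carefully is the combination of the three averaged relations: one must verify that they really collapse to the single identity $(1+g'^2)P'=0$ with a strictly positive coefficient. This nondegeneracy is automatic for a flat interface but here hinges on the precise off-diagonal structure of $A=F^{-1}F^{-T}$, which is the only place where the curved geometry enters; it is the genuine obstacle, whereas the remaining work (dropping $\p_{y_1}$-terms by periodicity, interchanging $\int_0^1$ with $\p_{y_2}$, and passing to the limit $y_2\to\infty$) is routine given the interior smoothness in $Z^+$ and the exponential decay estimates.
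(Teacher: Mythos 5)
Your proof is correct and follows essentially the same route as the paper: integrate the momentum equation over $y_1$, use periodicity to kill all tangential-derivative terms and the transformed incompressibility $\p_{y_1}\bbl_1-g'\p_{y_2}\bbl_1+\p_{y_2}\bbl_2=0$ to couple the slice averages of $\bbl_1$ and $\bbl_2$, which produces exactly the nondegenerate factor $1+g'(x_1)^2$ that the paper divides out. The only cosmetic difference is that the paper organizes the computation as a flux balance of the divergence-form equations over the rectangle $[0,1]\times[a,b]$ (taking the combination row~2 minus $g'$ times row~1 up front) rather than as second-order ODEs for the slice averages, and your identification of $\Cblo$ via $b\to\infty$ and $a\to+0$ matches the paper's concluding step.
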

\begin{proof}
Due to \eqref{subeq:wibldiv} and the actual entries of $F^{-1}(x)$ it holds 
\begin{equation}
\label{eq:wibldivfree}
\frac{\p}{\p y_1}\bbl_1 + \frac{\p}{\p y_2}\bbl_2 -g'(x_1) \frac{\p}{\p y_2} \bbl_1 =0.
\end{equation}
Note that $F^{-T}(x) \grady \obl(y) = \divy(F^{-1}(x)\obl(y))$ (cf.\ Lemma \ref{lem:transform2}), thus Equation~\eqref{subeq:wibl} reads column-wise
\begin{gather*}
\divy\Bigl(F^{-1}(x)\bigl(  F^{-T}(x)\grady \bbl_1(y) - \obl(y) e_1  \bigr)    \Bigr) =0 \\
 \intertext{and}
\divy\Bigl(F^{-1}(x)\bigl(  F^{-T}(x)\grady \bbl_2(y) - \obl(y) e_2  \bigr)    \Bigr) =0.
\end{gather*}
Let $0<a<b$. Now integration of the equation 
\begin{align*}
0 & =\divy\Bigl(F^{-1}(x)\bigl(  F^{-T}(x)\grady \bbl_2(y) - \obl(y)e_2  \bigr)    \Bigr) \\ &\qquad  - g'(x_1) \divy\Bigl(F^{-1}(x)\bigl(  F^{-T}(x)\grady \bbl_1(y) - \obl(y) e_1  \bigr)    \Bigr) 
\end{align*}
over the rectangle $[0,1]\times[a,b]$ yields due to Stokes' theorem and the periodicity of $\bbl$ and $\obl$ in $y_1$-direction
\begin{align*}
0& =\int_0^1 \bigl(F^{-T}(x)\grady \bbl_2 - \obl e_2 \bigr)(y_1,b) \cdot F^{-T}(x)e_2 \ud y_1 \\ & \qquad- \int_0^1 \bigl(F^{-T}(x)\grady \bbl_2 - \obl e_2 \bigr)(y_1,a)  \cdot F^{-T}(x)e_2\ud y_1 \\
&\qquad -g'(x_1) \int_0^1 \bigl(F^{-T}(x)\grady \bbl_1 - \obl e_1 \bigr) (y_1,b)\cdot F^{-T}(x)e_2 \ud y_1 \\
& \qquad +g'(x_1) \int_0^1 \bigl(F^{-T}(x)\grady \bbl_1 - \obl e_1 \bigr)(y_1,a) \cdot F^{-T}(x)e_2 \ud y_1.
\end{align*}

Now we have (by using the actual form of $F^{-T}(x)$)
\begin{align*}
\obl(y)e_2 \cdot F^{-T}(x) e_2 = \obl(y) \quad \text{and}\quad \obl(y)e_1 \cdot F^{-T}(x)e_2 = -g'(x_1) \obl(y)
\end{align*}
as well as
\begin{align*}
F^{-T}(x) \grady \bbl_1(y) \cdot F^{-T}(x)e_2& = (1+g'(x_1)^2) \frac{\p}{\p y_2} \bbl_1(y) -g'(x_1) \frac{\p}{\p y_1} \bbl_1(y)\\
F^{-T}(x) \grady \bbl_2(y) \cdot F^{-T}(x)e_2&= (1+g'(x_1)^2) \frac{\p}{\p y_2} \bbl_2(y) -g'(x_1) \frac{\p}{\p y_1} \bbl_2(y)\\
&=(1+g'(x_1)^2)\Bigl(   g'(x_1) \frac{\p}{\p y_2} \bbl_1(y) - \frac{\p}{\p y_1} \bbl_1  \Bigr) \\ & \qquad - g'(x_1) \frac{\p}{\p y_1} \bbl_2,
\end{align*}
where the identity \eqref{eq:wibldivfree} was used in the last equation.
Substituting these results in the above equation, one obtains 
\begin{align*}
0&= \int_0^1 \Bigl(  -g'(x_1) \frac{\p}{\p y_1} \bbl_2 - (1+g'(x_1)^2)\frac{\p}{\p y_1} \bbl_1 +g'(x_1)\frac{\p}{\p y_1}\bbl_1  \Bigr)(y_1,b) \ud y_1 \\
& \qquad - \int_0^1 \Bigl(  -g'(x_1) \frac{\p}{\p y_1} \bbl_2 - (1+g'(x_1)^2)\frac{\p}{\p y_1} \bbl_1 +g'(x_1)\frac{\p}{\p y_1}\bbl_1  \Bigr)(y_1,a) \ud y_1 \\
& \qquad - (1+g'(x_1)^2) \int_0^1 \obl(y_1,b) \ud y_1
 + (1+g'(x_1)^2) \int_0^1 \obl(y_1,a) \ud y_1.
\end{align*}
The first two integrals vanish due to the fundamental theorem of calculus and the periodic boundary conditions. We divide by $(1+g'(x_1)^2)$ to obtain
\[
\int_0^1 \obl(y_1,a) \ud y_1 = \int_0^1 \obl(y_1,b) \ud y_1 \qquad \forall\  0<a<b.
\]
This proves the first statement.

To obtain the second one, notice that for $k>0$ due to Jensen's inequality
\begin{align*}
\int_k^\infty \Bigl|  \int_0^1 \obl(y_1,y_2) \ud y_1 - \Cblo \Bigr|^2 \ud y_2 & \leq \int_{[0,1]\times (k,\infty)} |\obl(y_1,y_2)  - \Cblo|^2 \ud y \\
& \qquad \longrightarrow 0 \quad \text{for } k\rightarrow \infty
\end{align*}
because of the exponential stabilization of $\obl$; therefore $\int_0^1 \obl(y_1,b) \ud y_1$ converges to $\Cblo$ for $b \rightarrow \infty$. 
Now letting $a\rightarrow +0$ yields the result. 
\end{proof}

\begin{lem}
\label{lem:expstabveloc}
For the constant $\Cbl$ appearing in the exponential stabilization of the velocity $\bbl$ it holds
\begin{equation}
\Cbl \cdot F(x) e_1= \int_0^1 \bbl(y_1,+0) \cdot F(x) e_1\ud y_1.
\label{eq:expstabveloc}
\end{equation}
\end{lem}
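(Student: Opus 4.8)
The plan is to fix $x\in\O$ and work in the free-fluid half-strip $Z^+=[0,1]\times(0,\infty)$, where no solid inclusions are present and $(\bbl,\obl)$ solves the homogeneous transformed Stokes system \eqref{eq:stbl}, and to track the $y_2$-profile of the transformed tangential velocity mean
\[
m(t):=\int_0^1 \bbl(y_1,t)\cdot F(x)e_1\ud y_1=\int_0^1\bigl(\bbl_1+g'(x_1)\bbl_2\bigr)(y_1,t)\ud y_1,\qquad t>0 .
\]
The two endpoints of $m$ are exactly the two sides of \eqref{eq:expstabveloc}: letting $t\to 0^+$ the one-sided trace from $Z^+$ gives $m(0^+)=\int_0^1\bbl(y_1,+0)\cdot F(x)e_1\ud y_1$, while the exponential stabilization $\bbl\to\Cbl$ established in the preceding proposition (together with the pointwise decay estimate there) gives $m(t)\to\Cbl\cdot F(x)e_1$ as $t\to\infty$. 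Hence it suffices to show that $m$ is constant on $(0,\infty)$.

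First I would show that $m$ is affine in $t$. Differentiating under the integral sign and using the transformed incompressibility relation \eqref{eq:wibldivfree} to eliminate $\partial_{y_2}\bbl_2$, the resulting $\partial_{y_1}\bbl_1$-term integrates to zero by $1$-periodicity in $y_1$, leaving
\[
m'(t)=\bigl(1+g'(x_1)^2\bigr)\,\frac{d}{dt}\int_0^1\bbl_1(y_1,t)\ud y_1 .
\]
Next I would integrate the first column of the momentum equation, in the divergence form $\divy\bigl(F^{-1}(x)(F^{-T}(x)\grady\bbl_1-\obl e_1)\bigr)=0$, over a rectangle $[0,1]\times[a,b]\subset Z^+$. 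The vertical sides cancel by periodicity, so Stokes' theorem shows that the stress flux $\Phi_1(t):=\int_0^1(F^{-T}(x)\grady\bbl_1-\obl e_1)(y_1,t)\cdot F^{-T}(x)e_2\ud y_1$ is independent of $t$. Inserting the explicit entries of $F^{-T}(x)$ exactly as in the proof of Lemma \ref{lem:expstabpress} yields $\Phi_1(t)=\bigl(1+g'(x_1)^2\bigr)\frac{d}{dt}\int_0^1\bbl_1\ud y_1+g'(x_1)\int_0^1\obl(y_1,t)\ud y_1$, and comparison with the previous display gives $m'(t)=\Phi_1(t)-g'(x_1)\int_0^1\obl(y_1,t)\ud y_1$.

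Now $\Phi_1$ is constant, and by Lemma \ref{lem:expstabpress} the pressure mean $\int_0^1\obl(y_1,t)\ud y_1\equiv\Cblo$ is constant for $t>0$; hence $m'$ is constant and $m$ is affine. An affine function converging to the finite limit $\Cbl\cdot F(x)e_1$ as $t\to\infty$ must be constant, so $m'\equiv 0$ and $m(t)\equiv\Cbl\cdot F(x)e_1$ on $(0,\infty)$. Letting $t\to 0^+$ then gives $\Cbl\cdot F(x)e_1=\int_0^1\bbl(y_1,+0)\cdot F(x)e_1\ud y_1$, which is \eqref{eq:expstabveloc}. The main obstacle is the middle step: establishing that $m$ is \emph{affine} rather than merely bounded. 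This is what forces one to pair the integrated first-column momentum balance with the already-proved constancy of the pressure mean, and it relies on the explicit structure of $F$ to organise the cross terms into total $y_1$-derivatives that vanish under periodicity. Once affineness is secured, boundedness coming from the exponential decay closes the argument at once.
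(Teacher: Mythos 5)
Your proof is correct, and it closes the argument by a genuinely different move than the paper, even though the computational core is shared. Both arguments integrate the combination (first column) $+\,g'(x_1)\cdot$(second column) of the momentum balance over horizontal slabs of $Z^+$, kill the lateral and $\p_{y_1}$ contributions by $1$-periodicity, and exploit the explicit entries of $F^{-T}(x)$. The difference is in how one passes from the conserved cross-sectional flux to the velocity mean itself: the paper multiplies the divergence-form equation by the weight $y_2$ before integrating over $[0,1]\times[0,b]$, integrates by parts twice (in the course of which the pressure contributions cancel identically, so Lemma \ref{lem:expstabpress} is never invoked), and then sends $b\to\infty$; you instead differentiate $m(t)=\int_0^1\bbl(y_1,t)\cdot F(x)e_1\ud y_1$, identify $m'(t)$ with the constant flux $\Phi_1$ minus $g'(x_1)$ times the pressure mean, import the constancy of that pressure mean from Lemma \ref{lem:expstabpress}, and conclude that $m$ is affine and hence, being bounded at infinity by the exponential stabilization, constant. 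Your route buys a more transparent structure (``conserved flux $\Rightarrow$ affine profile $\Rightarrow$ constant by boundedness'') at the cost of an explicit dependence on the previously proved pressure lemma; the paper's weighted test function is more self-contained but hides the mechanism in a two-fold integration by parts. Both are valid; the only points that deserve a word of justification in a written version are the smoothness of $t\mapsto m(t)$ on $(0,\infty)$ (interior regularity for the homogeneous system \eqref{eq:stbl} in $Z^+$, so differentiation under the integral sign is legitimate) and the identification of $m(0^+)$ with the trace integral, both of which are routine.
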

\begin{proof} 
Let $b>0$. Similarly to the above lemma, we multiply the equation 
\begin{align*}
0 & =\divy\Bigl(F^{-1}(x)\bigl(  F^{-T}(x)\grady \bbl_1(y) - \obl e_1  \bigr)    \Bigr) \\ &\qquad  + g'(x_1) \divy\Bigl(F^{-1}(x)\bigl(  F^{-T}(x)\grady \bbl_2(y) - \obl e_2  \bigr)    \Bigr) 
\end{align*}
by $y_2$ and integrate over $[0,1]\times [0,b]$. Integration by parts then yields
\begin{align*}
0& =  - \int_{[0,1]\times [0,b]} F^{-1}(x) \bigl(  F^{-T}(x)\grady \bbl_1(y) - \obl(y) e_1  \bigr)\cdot e_2 \ud y \\ & \qquad  -g'(x_1)  \int_{[0,1]\times [0,b]} F^{-1}(x) \bigl(  F^{-T}(x)\grady \bbl_2(y) - \obl(y) e_2  \bigr) \cdot e_2 \ud y \\
& \qquad + \int_0^1 b\bigl(F^{-T}(x) \grady \bbl_1 - \obl e_1\bigr)(y_1,b) \cdot F^{-T}(x)e_2 \ud y_1 \\
& \qquad + g'(x_1) \int_0^1 b\bigl(F^{-T}(x) \grady \bbl_2 - \obl e_2\bigr)(y_1,b) \cdot F^{-T}(x)e_2 \ud y_1.
\end{align*}
As in the proof of the preceeding lemma we have 
$F^{-1}(x)\obl(y)e_2 \cdot e_2 = \obl(y)e_2 \cdot F^{-T}(x)e_2 = \obl(y)$
and
$\obl(y)e_1 \cdot F^{-T}(x)e_2 = -g'(x_1) \obl(y)$, 
thus the terms containing the pressure $\obl$ cancel out and we have 
\begin{align*}
0 &= - \int_{[0,1]\times [0,b]} F^{-1}(x)   F^{-T}(x)\grady \bbl_1(y) \cdot e_2 \ud y \\ & \qquad  -g'(x_1)  \int_{[0,1]\times [0,b]} F^{-1}(x)   F^{-T}(x)\grady \bbl_2(y) \cdot e_2 \ud y \\
& \qquad + \int_0^1 bF^{-T}(x) \grady \bbl_1(y_1,b) \cdot F^{-T}(x)e_2 \ud y_1 \\
& \qquad + g'(x_1) \int_0^1 bF^{-T}(x) \grady \bbl_2(y_1,b) \cdot F^{-T}(x)e_2 \ud y_1.
\end{align*}

Another integration by parts of the volume terms now yields 
\begin{align*}
0 &= - \int_0^1 \bbl_1(y_1,b) F^{-1}(x)F^{-T}(x) e_2 \cdot e_2 \ud y_1  \\ 
& \qquad + \int_0^1  \bbl_1(y_1,+0) F^{-1}(x)F^{-T}(x) e_2 \cdot e_2 \ud y_1 \\
&\qquad -g'(x_1) \int_0^1 \bbl_2(y_1,b)  F^{-1}(x)F^{-T}(x) e_2 \cdot e_2 \ud y_1 \displaybreak[0] \\
& \qquad +g'(x_1) \int_0^1 \bbl_2(y_1,+0)  F^{-1}(x)F^{-T}(x) e_2 \cdot e_2 \ud y_1 \displaybreak[0]\\
& \qquad + \int_0^1 bF^{-T}(x) \grady \bbl_1(y_1,b) \cdot F^{-T}(x)e_2 \ud y_1 \\
& \qquad - g'(x_1) \int_0^1 bF^{-T}(x) \grady \bbl_2(y_1,b) \cdot F^{-T}(x)e_2 \ud y_1.
\end{align*}
When passing to the limit $b\rightarrow \infty$,  the last two integrals vanish since $\grady \bbl$ decays exponentially to $0$. The terms $\bbl_1$ and $\bbl_2$ converge to $\Cbl_1$ and $\Cbl_2$, repectively. 
Thus
\begin{gather*}
\int_0^1 \Bigl( \Cbl_1 + g'(x_1) \Cbl_2\Bigr) F^{-1}(x)F^{-T}(x) e_2 \cdot e_2 \ud y_1 \\ =   \int_0^1 \Bigl( \bbl_1 + g'(x_1)\bbl_2 \Bigr)(y_1,+0)F^{-1}(x)F^{-T}(x) e_2 \cdot e_2 \ud y_1 .
\end{gather*}
Since $F^{-1}(x)F^{-T}(x) e_2 \cdot e_2 =1+g'(x_1)^2$ we can divide the above equation by $1+g'(x_1)^2$, leading to
\begin{align*}
\Cbl_1 + g'(x_1) \Cbl_2 = \int_0^1 (\bbl_1 + g'(x_1)\bbl_2)(y_1,+0) \ud y_1.
\end{align*}
This is equation \eqref{eq:expstabveloc}.
\end{proof}

\subsubsection{Dependence on the parameter $x$}

We summarize the results about the decay of the boundary layer function in the following proposition. Here, we take the dependence on the parameter $x$ explicitly into account.
\begin{prop}
\label{prop:const}
\label{rem:constants}
For the decay function it holds $\Cbl(x)= \int_0^1 \bbl(x,y_1,+0) \ud y_1$.
\end{prop}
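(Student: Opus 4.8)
The plan is to pin down the vector $\Cbl(x)\in\R^2$ by computing its two components with respect to a convenient basis and matching them against the corresponding components of the interface average $\int_0^1\bbl(x,y_1,+0)\ud y_1$. The key observation is that $F(x)e_1$ and $F^{-T}(x)e_2$ form a basis of $\R^2$: using the explicit form \eqref{eq:F} of $F$ together with $\det F=1$, one computes $F(x)e_1=(1,g'(x_1))^T$ and $F^{-T}(x)e_2=(-g'(x_1),1)^T$. These are orthogonal and nonzero (their determinant is $1+g'(x_1)^2>0$), hence any element of $\R^2$ is uniquely determined by its inner products with them. It therefore suffices to check that $\Cbl(x)$ and $\int_0^1\bbl(x,y_1,+0)\ud y_1$ agree when tested against each of these two vectors.

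For the component along $F(x)e_1$ I would invoke Lemma \ref{lem:expstabveloc} directly: equation \eqref{eq:expstabveloc} states that $\Cbl(x)\cdot F(x)e_1=\int_0^1\bbl(x,y_1,+0)\cdot F(x)e_1\ud y_1$, so the two vectors share the same $F(x)e_1$-component with no further work. For the component along $F^{-T}(x)e_2$ I would use the two vanishing-flux identities already at hand: on the one hand \eqref{eq:expstabvelocity} (which is Lemma \ref{lem:constexpl} applied to $\zeta=\bbl$) gives $\Cbl(x)\cdot F^{-T}(x)e_2=0$; on the other hand Lemma \ref{lem:blfunc} applied to $\zeta=\bbl$ gives $\int_0^1\bbl(x,y_1,+0)\cdot F^{-T}(x)e_2\ud y_1=0$, where continuity $[\bbl]_S=0$ ensures that the trace from above coincides with the trace used in that lemma. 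Thus both vectors have vanishing $F^{-T}(x)e_2$-component.

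Combining the two computations and using that $\{F(x)e_1,F^{-T}(x)e_2\}$ is a basis then forces $\Cbl(x)=\int_0^1\bbl(x,y_1,+0)\ud y_1$, which is the claim. The proof is essentially an assembly of earlier results, and its only substantive point — the place where care is needed — is the basis argument: one must verify that $F(x)e_1$ and $F^{-T}(x)e_2$ span $\R^2$ for every $x$, since it is precisely the noncollinearity of the \emph{tangential} direction $F(x)e_1$ and the \emph{transformed normal} direction $F^{-T}(x)e_2$ that lets the two flux identities genuinely determine the whole vector. No estimate or limit passage beyond what Lemmas \ref{lem:blfunc}, \ref{lem:constexpl} and \ref{lem:expstabveloc} already supply is required.
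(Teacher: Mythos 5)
Your proof is correct and is essentially the paper's own argument: the paper combines \eqref{eq:expstabveloc}, \eqref{eq:expstabvelocity} and Lemma~\ref{lem:blfunc} into a $2\times 2$ linear system whose coefficient matrix has rows $F(x)e_1$ and $F^{-T}(x)e_2$ and determinant $1+g'(x_1)^2\neq 0$, which is exactly your basis argument phrased as matrix invertibility. No gaps.
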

\begin{proof}
With the help of \eqref{eq:expstabvelocity}, \eqref{eq:expstabveloc} and Lemma~\ref{lem:blfunc} it is possible to obtain the value of $\Cbl$: Using the exact form of $F(x)$ and $F^{-T}(x)$, the above conditions read
\begin{equation*}
\begin{matrix}
\Cbl_1 &+& g'(x_1)\Cbl_2 &=&   \int_0^1 \bbl_1(y_1,+0) \ud y_1 + g'(x_1)\int_0^1 \bbl_2(y_1,+0) \ud y_1  \\
-g'(x_1) \Cbl_1 & +& \Cbl_2 &=&  -g'(x_1 )\int_0^1 \bbl_1(y_1,+0) \ud y_1 + \int_0^1 \bbl_2(y_1,+0) \ud y_1 =0.
\end{matrix}
\end{equation*}
Since the determinant of the coefficient matrix fulfills
\[
\det \begin{bmatrix}
1 & g'(x_1) \\
-g'(x_1) & 1
\end{bmatrix}=(1+g'(x_1)^2) \not=0,
\]
we can multiply both sides from the left by $\begin{bmatrix}
1 & g'(x_1) \\
-g'(x_1) & 1
\end{bmatrix}^{-1}$ to obtain the result.
\end{proof}

By using the implicit function theorem for Banach spaces, one can show that the boundary layer function $\bbl$ is continuously differentiable in the parameter $x$. By the Proposition above, this also leads to $\Cbl$ having the same property. Denote by $\beta^{\mathrm{bl},i}$ and $\omega^{\mathrm{bl},i}$ the derivatives $\frac{\p}{\p x_i} \bbl$ and $\frac{\p}{\p x_i} \obl$ for $i=1,2$. They fulfill the equations
\begin{subequations}
\begin{empheq}[box=\greybox]{align}
-\divy(F^{-1}(x)F^{-T}(x) \grady \bblo(x,y)) + F^{-T}(x)\grady\oblo(x,y)\qquad \qquad \notag\\
= \divy(\frac{\p}{\p x_1}[F^{-1}(x)F^{-T}(x)] \grady \bbl(x,y))  -  \frac{\p}{\p x_1}F^{-T}(x)\grady\obl(x,y) & \quad \text{ in $Z$} \\
\divy(F^{-1}(x)\bblo(x,y))= - \divy( \frac{\p}{\p x_1}F^{-1}(x) \bbl(x,y)  )&\quad \text{ in $Z$} \label{seq:auxdiffdiv} \\
[ \bble(x) ]_S (y)= 0 & \quad\text{ on $S$} \\
[(F^{-1}(x)F^{-T}(x)\grady \bblo(x)-F^{-1}(x)\oblo(x))e_2]_S (y)= \frac{\p}{\p x_1} K^\mathrm{bl}(x) \notag\\
    -[  \frac{\p}{\p x_1}[F^{-1}(x)F^{-T}(x)] \grady \bbl(x) -  \frac{\p}{\p x_1}F^{-1}(x) \obl(x) ]_S(y) e_2  & \quad\text{ on $S$} \\
\bblo(x,y)=0  \quad \text{ on }  {\textstyle \bigcup_{k=1}^\infty }&  {\textstyle \{  \p Y_S - \binom{0}{k} \} }\\
\bblo, \oblo \text{ are $1$-periodic in $y_1$}
\end{empheq}
\end{subequations}
as well as
\begin{subequations}
\begin{empheq}[box=\greybox]{align}
-\divy(F^{-1}(x)F^{-T}(x) \grady \bblz(x,y)) + F^{-T}(x)\grady\oblz(x,y)= 0& \quad \text{ in $Z$} \\
\divy(F^{-1}(x)\bblz(x,y))= 0 &\quad \text{ in $Z$} \\
[ \bblz(x) ]_S (y)= 0 & \quad\text{ on $S$} \\
[(F^{-1}(x)F^{-T}(x)\grady \bblz(x)-F^{-1}(x)\oblz(x))e_2]_S (y)  = \frac{\p}{\p x_2} K^\mathrm{bl}(x)    & \quad\text{ on $S$} \\
\bblz(x,y)=0  \quad \text{ on }  {\textstyle \bigcup_{k=1}^\infty }&  {\textstyle \{  \p Y_S - \binom{0}{k} \} }\\
\bblz, \oblz \text{ are $1$-periodic in $y_1$}
\end{empheq}
\end{subequations}
due to $\frac{\p}{\p x_2}F^{-1}(x)=0$.
Since $\frac{\p}{\p x_1} F^{-1}(x)= \begin{pmatrix} 0 &-g''(x_1)\\0&0 \end{pmatrix}$, we have that
\begin{gather*}
\int_{\ZBL} \divy( \frac{\p}{\p x_1}F^{-1}(x) \bbl   ) = -g''(x_1) \int_{\ZBL} \frac{\p}{\p y_1} \bbl_2 \ud y =0
\end{gather*}
due to the periodic boundary conditions for $\bbl$. Thus a variant of Proposition~\ref{prop:thetai} below shows that there exists a function $\theta^\beta \in H^2_\mathrm{loc}(\ZBL)$, $\grady \theta^\beta \in L^2(\ZBL)$ such that
\begin{empheq}[]{alignat=2}
\divy(F^{-1}(x)\theta^\beta(y)) &= - \divy( \frac{\p}{\p x_1}F^{-1}(x) \bbl(x,y)  )&\quad&  \text{ in $Z$} \\
\theta^\beta(y)&=0 &&  \text{ on }   {\textstyle \bigcup_{k=1}^\infty \{  \p Y_S - \binom{0}{k} \} }\\
[\theta^\beta]_S (y)& = 0  && \text{ on $ S$} \\
\theta^\beta & \text{ is $1$-periodic in $y_1$.}
\end{empheq}
Using $\theta^\beta$ to correct the divergence in equation \eqref{seq:auxdiffdiv}, one can use the results for the boundary layer functions to obtain an exponential decay towards constants $\Cbli$ (for $\beta^{\mathrm{bl},i}$) and $\Cbloi$ (for $\omega^{\mathrm{bl},i}$), and the identities
\begin{gather*}
\Cbli(x) = \int_0^1  \beta^{\mathrm{bl},i} (x,y_1,+0) \ud y_1 = 
\frac{\p}{\p x_i} \int_0^1  \beta^{\mathrm{bl}} (x,y_1,+0) \ud y_1  = \frac{\p}{\p x_i} \Cbl(x) \\
\Cbloi(x) = \int_0^1  \omega^{\mathrm{bl},i} (x,y_1,+0) \ud y_1 =
\frac{\p}{\p x_i} \int_0^1  \omega^{\mathrm{bl} }(x,y_1,+0) \ud y_1  = \frac{\p}{\p x_i} \Cblo(x) 
\end{gather*}
hold for $i=1,2$. This shows that the derivative in $x$-direction of $\bbl(x)$, $\obl(x)$ decays to the corresponding derivative of the decay function $\Cbl(x)$, $\Cblo(x)$, and terms like $\gradx(  \bbl(x,y) - H(y_2) \Cbl(x) )$ show the same decay behavior as $ \bbl(x,y) - H(y_2) \Cbl(x)$, leading to similar estimates.

%
%
%
%

\subsection{Functions for the Correction of the Divergence}
\label{sec:divcorr}

In this section we consider the auxiliary problems associated with the correction of the transformed divergence of $\tilde{\mathcal{U}}^\eps$. Fix $x\in \O$ and define
\[
\Cqb(x)=F(x) \Bigl( \int_{\ZBL} \divx\Bigl(F^{-1}(x)\bigl[ \bbl(x,y)-H(y_2)\Cbl(x) \bigr] \Bigr)\ud y \Bigr) e_2.
\]

\begin{prop}
\label{prop:thetai}
The problem: Find $\theta$ such that
\begin{empheq}[box=\widefbox]{align*}
\divy(F^{-1}(x)\theta(y)) &= \divx\Bigl(F^{-1}(x)\bigl[ \bbl(x,y) -H(y_2)\Cbl(x) \bigr] \Bigr) \quad  \text{ in $Z$} \\
\theta(y)&=0 \qquad\qquad\qquad\qquad\qquad\qquad\quad\quad\: \,  \text{ on }   {\textstyle \bigcup_{k=1}^\infty \{  \p Y_S - \binom{0}{k} \} }\\
[\theta]_S (y)& = \Cqb(x)  \qquad\qquad \qquad\qquad\qquad\qquad\qquad\qquad\quad\quad\,   \text{ on $ S$} \\
\theta & \text{ is $1$-periodic in $y_1$}
\end{empheq}
has at least one solution $\theta\in H^1(Z)^2 \cap \mathcal{C}^\infty_\mathrm{loc}(Z) $.
\end{prop}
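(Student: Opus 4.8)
The plan is to reduce the transformed divergence equation to a standard one by a linear change of the unknown, and then to solve the resulting divergence problem on the strip by a Bogovskii-type construction, exploiting the exponential decay of the right-hand side. First I would record that $Z=Z^+\cup Z^-$ does \emph{not} contain the interface $S$, so that a function in $H^1(Z)^2$ is permitted to jump across $S$ and possesses one-sided traces $\theta(\cdot,\pm 0)$; the condition $[\theta]_S=\Cqb(x)$ is therefore meaningful. Since $x$ is frozen, $F=F(x)$ is a constant matrix in $y$ with $\det F=1$, hence $\divy(F^{-1}(x)\theta)=\divy\psi$ for $\psi:=F^{-1}(x)\theta$. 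Writing $h(x,y):=\divx(F^{-1}(x)[\bbl(x,y)-H(y_2)\Cbl(x)])$, the problem becomes: find $\psi\in H^1(Z)^2$ with $\divy\psi=h$ in $Z$, $\psi=0$ on $\bigcup_{k=1}^\infty\{\p Y_S-\binom{0}{k}\}$, $\psi$ $1$-periodic in $y_1$, and $[\psi]_S=F^{-1}(x)\Cqb(x)$; then $\theta=F(x)\psi$ solves the original problem.

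Next I would verify that $h$ decays exponentially in $|y_2|$ on both $Z^+$ and $Z^-$. Indeed $\bbl(x,\cdot)-H(\cdot)\Cbl(x)$ stabilizes exponentially (towards $0$ as $y_2\to+\infty$, and towards $0$ as $y_2\to-\infty$) by the decay results of Section~\ref{sec:boundary}; by the parameter-differentiability of the boundary layer functions established above, the $x$-derivatives inherit the same exponential decay, so that $e^{\gamma|y_2|}h(x,\cdot)\in L^2(\ZBL)$ for some $\gamma>0$.

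Then I would isolate the compatibility condition. Integrating $\divy\psi=h$ over $\ZBL$ and applying the divergence theorem, the $y_1$-periodicity makes the lateral contributions cancel, the vanishing of $\psi$ on the solid boundaries and the exponential decay as $y_2\to\pm\infty$ kill the remaining contributions, and only the flux through $S$ survives. With the orientation of $S$ fixed as in the boundary-layer problems this reads
\begin{equation*}
\int_{\ZBL}h(x,y)\ud y = F^{-1}(x)\,[\psi]_S\cdot e_2 = F^{-1}(x)\Cqb(x)\cdot e_2 .
\end{equation*}
Using $F(x)e_2=e_2$, the right-hand side $F^{-1}(x)\Cqb(x)\cdot e_2$ equals $\int_{\ZBL}\divx(F^{-1}(x)[\bbl-H(y_2)\Cbl])\ud y=\int_{\ZBL}h\ud y$ by the very definition of $\Cqb(x)$, so the single scalar compatibility condition is satisfied automatically; this is exactly what the definition of $\Cqb$ encodes.

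Finally I would construct $\psi$. Lifting the ($y_1$-independent) jump by an explicit smooth field $\psi_0$ supported in a neighbourhood of $S$, vanishing on the solid boundaries, $1$-periodic in $y_1$ and with $[\psi_0]_S=F^{-1}\Cqb$, reduces the problem to solving $\divy\eta=h-\divy\psi_0=:\tilde h$ with $[\eta]_S=0$ (so $\eta\in H^1(Z)^2$ has no jump), $\eta=0$ on the solid boundaries, and $\int_{\ZBL}\tilde h\ud y=0$. This is a divergence equation on the perforated strip to which a Bogovskii-type right inverse applies: following Lemma~\ref{lem:divsurj} and the exhaustion argument of Proposition~\ref{prop:pressloc}, I would solve it on the increasing Lipschitz domains $Z_l^*$ with constants uniform in $l$ and pass to the limit, the exponential decay of $\tilde h$ controlling the tails and yielding a global $\eta\in H^1(Z)^2$ that itself decays exponentially. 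Interior elliptic regularity, together with the smoothness of $g$, $\bbl$ and $\Cbl$, then gives $\theta=F(\psi_0+\eta)\in\C^\infty_{\mathrm{loc}}(Z)$. The main obstacle is precisely this last step: producing a right inverse of the divergence on the unbounded perforated strip with a bound \emph{uniform} in the exhausting parameter, and summing the exponentially small tail contributions so as to remain in $H^1(Z)^2$; this is where the exponential decay of $h$ and the $k$-independent constants of the decay theory are indispensable.
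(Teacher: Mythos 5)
Your overall strategy --- freeze $x$, absorb the constant matrix $F(x)$ into the unknown, check the single scalar compatibility condition encoded in the definition of $\Cqb$, lift the jump across $S$, and then invert the divergence on the perforated strip --- is a legitimate and genuinely different route from the paper's. The paper instead makes the potential-theoretic ansatz $\theta=F(x)\grady\eta+F(x)\Curl_y\xi$ of Lemma \ref{lem:divsurj}: the jump $\Cqb$ is encoded as a prescribed jump of the Neumann data $[\grady\eta\cdot e_2]_S$, the Neumann problem for $\eta$ is solved in one stroke on the whole unbounded strip by Lax--Milgram in the homogeneous space $W_D/\R$ (the compatibility condition appearing as $\mathcal{L}(1)=0$), the tangential traces on the solid inclusions are then repaired cell by cell with $\Curl_y\xi$ via the inverse trace theorem, and the exponential decay is read off afterwards from Theorem \ref{thm:expdecay}. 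That construction never needs a right inverse of the divergence on an unbounded or exhausting domain, which is exactly the point where your argument is thinnest.

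The genuine gap is your claim that $\divy\eta=\tilde h$ can be solved ``on the increasing Lipschitz domains $Z_l^*$ with constants uniform in $l$''. For a domain of height of order $l$ with homogeneous Dirichlet data on its entire boundary, the norm of the Bogovskii right inverse (equivalently, the inf-sup constant) degenerates as $l\to\infty$ --- already for the unperforated rectangle $[0,1]\times[0,l]$ it grows linearly in $l$ --- so a naive exhaustion does not close. The standard repair, and the one your own appeal to ``$k$-independent constants'' is implicitly pointing at, is to decompose $\tilde h$ cell by cell over the $Z_k$, solve a mean-zero divergence problem on each cell with a constant that is uniform by translation invariance (as in Proposition \ref{prop:kappa}), and transport the nonzero cell means by explicit flux-carrying fields between consecutive cells; the cumulative flux through the level $y_2=k$ equals $\int_{\{y_2<k\}}\tilde h\ud y$, which the exponential decay of $h$ makes exponentially small and hence summable in $H^1$. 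With that substitution your argument goes through; as written, the uniform-constant claim is false and the construction of $\eta$ is not yet established. (The compatibility computation and the jump lifting are fine, up to the cosmetic point that $F^{-1}(x)\Cqb(x)=\bigl(\int_{\ZBL}h\ud y\bigr)e_2$ since $F^{-1}(x)e_2=e_2$, so the extra factors of $F^{-1}$ in your flux identity are harmless.)
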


\begin{proof}
We argue similarly to Lemma \ref{lem:divsurj} and carry out the following ansatz:
\[
\theta(y)=F(x)\grady \eta(y) + F(x)\Curl_y \xi(y),
\]
where for $\eta$ it holds
\begin{align*}
\Delta_y \eta(y) = \divx\Bigl(F^{-1}(x)\bigl[ \bbl(x,y) -H(y_2)\Cbl(x) \bigr] \Bigr) & \quad \text{ in } Z \\
\grady\eta(y)\cdot \nu= 0  \quad\text{ on } {\textstyle\bigcup_{k=1}^\infty \{  } &{\textstyle \p Y_S - \binom{0}{k} \} }\\
[\grady \eta(y) \cdot e_2]_S(y) =  \int_{\ZBL} \divx(F^{-1}(x)[ \bbl(x,y)-H(y_2)\Cbl(x) ] ) \ud y& \quad\text{ on } S\\
[\eta]_S(y)=0 &\quad \text{ on } S \\
\eta \text{ is $1$-periodic in $y_1$.}
\end{align*}
We investigate solvability in the space $W_D/\R$, with
\[
W_D= \bigl\{  z\in L^2_\mathrm{loc}(\ZBL)\ | \ \grad z \in L^2(\ZBL),\  z \text{ is $1$-periodic in $y_1$}  \bigr\}.
\] 
Define the linear functional
\begin{align*}
\mathcal{L}(\phi) &= \int_{\ZBL} \Bigl[\divx\bigl(F^{-1}(x) [ \bbl(x,y) -H(y_2)\Cbl(x) ] \bigr) \Bigr] \phi(y)\ud y \\ & \qquad - \int_0^1 \Bigl( \int_{\ZBL} \divx\bigl(F^{-1}(x) [ \bbl(x,y) -H(y_2)\Cbl(x) ] \bigr) \ud y  \Bigr)\phi(y_1,0) \ud y_1.
\end{align*}
Since $\mathcal{L}(1)=0$ the linear functional is well defined on $W_D/\R$, and by the properties of $\bbl(x,y) -H(y_2)\Cbl(x) $ it is continuous.
An integration by parts shows that the weak formulation of the above equation reads
\[
\int_{\ZBL} \grady \eta \cdot \grady \phi \ud y = \mathcal{L}(\phi).
\]
Thus we get a solution $\eta$, unique up to a constant.

Next, we search for $\xi$ satisfying
\begin{align*}
\Curl(\xi)\cdot \nu  = -\frac{\p \xi}{\p y_1}=0 & \quad\text{ on }  {\textstyle \bigcup_{k=1}^\infty \{  \p Y_S - \binom{0}{k} \} }\\
\Curl(\xi)\cdot \tau =\frac{\p \xi}{\p y_2}= - \grad \eta \cdot \tau &\quad \text{ on } {\textstyle \bigcup_{k=1}^\infty \{  \p Y_S - \binom{0}{k} \}}.
\end{align*}
Application of the inverse trace theorem \ref{thm:invtrace} to each cell $Y-{\binom{0}{k}}$ in $Z^-$ and setting $\xi=0$ in $Z^+$  yields the existence of $\xi$ similar to the proof of Lemma \ref{lem:divsurj}.
\end{proof}

Since the right hand side of the equation for $\eta$ decays exponentially, we can apply Theorem \ref{thm:expdecay} and obtain an exponential stabilization of $\eta$ towards some constant and a stabilization of $\grad \eta$ towards $0$. As the construction of $\xi$ is local, the decay carries over to this auxiliary function as well, and we obtain an exponential stabilization of $\theta$ to $0$ in $y$ for $|y_2|\longrightarrow \infty$.

Therefore we obtain
\begin{prop}
The above problem has at least one solution $\theta\in V$ such that there exists a $\gamma_0>0$ with
\[
e^{\gamma_0 |y_2|} \theta \in H^1(Z).
\]
\end{prop}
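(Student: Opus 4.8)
The plan is to take the explicit solution $\theta = F(x)\grady\eta(y) + F(x)\Curl_y\xi(y)$ produced in Proposition~\ref{prop:thetai} and to upgrade its qualitative decay into a quantitative exponential rate. Membership in $V$ is essentially already contained in that proposition: it gives $\theta\in H^1(Z)^2\cap\mathcal{C}^\infty_\mathrm{loc}(Z)$, the boundary conditions force $\theta=0$ on $\bigcup_{k=1}^\infty\{\p Y_S-\binom{0}{k}\}$, and $\theta$ is $1$-periodic in $y_1$ by construction, so in particular $\theta\in L^2(Z^-)^2$ and $\grady\theta\in L^2(\ZBL)^4$. Hence the genuinely new content is the existence of a $\gamma_0>0$ with $e^{\gamma_0|y_2|}\theta\in H^1(Z)$, and everything reduces to showing that both $\theta$ and $\grady\theta$ decay like $e^{-\gamma_0|y_2|}$.

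First I would record that the data driving $\eta$ decays exponentially. By the boundary layer decay results of Section~\ref{sec:boundary}, $\bbl(x,\cdot)$ stabilizes exponentially towards $\Cbl(x)$, and by the parameter-dependence analysis the derivatives $\beta^{\mathrm{bl},i}(x,\cdot)$ stabilize exponentially towards $\p_{x_i}\Cbl(x)$; consequently $\gradx\bigl(\bbl(x,y)-H(y_2)\Cbl(x)\bigr)$ decays exponentially in $|y_2|$, and therefore so does the right-hand side $\divx\bigl(F^{-1}(x)[\bbl(x,y)-H(y_2)\Cbl(x)]\bigr)$ of the equation for $\eta$. Thus there is a $\gamma_1>0$ such that $e^{\gamma_1|y_2|}$ times this right-hand side lies in $L^2(\ZBL)$, while the jump datum on $S$ is a fixed constant.

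Next I would transfer this decay first to $\eta$ and then to $\xi$. In the free part $Z^+$ the equation for $\eta$ is the scalar elliptic equation $-\Delta_y\eta = -\,\mathrm{(RHS)}$ with exponentially small right-hand side, so Theorem~\ref{thm:expdecay} (applied with $F=\Id$) yields exponential stabilization of $\eta$ to a constant together with $\nx{\grady\eta}{L^2(Z^+\cap([0,1]\times[k,\infty)))}\le Ce^{-q_1 k}$. In the porous part $Z^-$ I would re-run the recursive energy argument of Propositions~\ref{prop:kappa} and~\ref{prop:zeta} and their corollary; this is even simpler here, since $\eta$ solves a scalar problem rather than the full Stokes system, and it yields $\nx{\grady\eta}{L^2(Z^-(k))}\le Ce^{-\beta|k|}$ together with exponential stabilization of $\eta$. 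For $\xi$ one uses that its construction via the inverse trace theorem is carried out cell by cell on the translates $Y-\binom{0}{k}$, with tangential data $-\grady\eta\cdot\tau$ that already decays exponentially; the resulting local $H^2$-bounds give $\nx{\xi}{H^1(Z^-(k))}\le Ce^{-\beta|k|}$, while $\xi\equiv 0$ in $Z^+$. Combining the three contributions and choosing any $\gamma_0\in(0,\min\{\gamma_1,q_1,\beta\})$ gives $e^{\gamma_0|y_2|}\theta\in H^1(Z)$, which in particular re-confirms $\theta\in L^2(Z^-)^2$ and hence $\theta\in V$.

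The main obstacle I expect is the decay in the porous strip $Z^-$: Theorem~\ref{thm:expdecay} only covers $Z^+$, so one must adapt the cell-by-cell recursion of the boundary-layer section to the present scalar Poisson–Neumann problem and then verify that the locally built potential $\xi$ inherits exactly the same exponential rate. Only once these separate rates are matched can the two pieces $F\grady\eta$ and $F\Curl_y\xi$ be absorbed into a single exponential weight $e^{\gamma_0|y_2|}$.
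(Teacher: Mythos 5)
Your argument is correct and follows essentially the same route as the paper: both exploit the decomposition $\theta=F\grady\eta+F\Curl_y\xi$ from Proposition~\ref{prop:thetai}, feed the exponential decay of the right-hand side into Theorem~\ref{thm:expdecay} to stabilize $\eta$, and use the locality of the cell-by-cell construction of $\xi$ to transfer the rate. You are in fact somewhat more careful than the paper's two-line proof, which invokes Theorem~\ref{thm:expdecay} without remarking that it is stated only on $Z^+$, whereas you explicitly flag that the decay in $Z^-$ requires rerunning the energy recursion of Propositions~\ref{prop:kappa} and~\ref{prop:zeta} for the scalar problem.
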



\newpage
\bibliography{literatur}

\end{document}